\documentclass[%
acmsmall,%
screen,%
natbib=false%
]{acmart}

\RequirePackage[
datamodel=acmdatamodel,
style=acmnumeric,
sortcites=true,
]{biblatex}
\usepackage{tikz}
\usetikzlibrary{calc}
\usepackage{keyrustydl,keybook,listings-rust}
\usepackage{zref-clever}
\usepackage{csquotes}
\usepackage[shortlabels,inline]{enumitem}
\usepackage{acronym}
\usepackage{stmaryrd}
\usepackage{multicol}

\zcRefTypeSetup{section}{cap=true}
\zcRefTypeSetup{appendix}{cap=true}
\zcRefTypeSetup{figure}{cap=true,abbrev=true}
\zcRefTypeSetup{example}{cap=true}
\zcRefTypeSetup{table}{cap=true}
\zcRefTypeSetup{listing}{cap=true}
\zcRefTypeSetup{definition}{cap=true}
\zcRefTypeSetup{theorem}{cap=true}
\zcRefTypeSetup{lemma}{cap=true}
\zcRefTypeSetup{page}{abbrev=true}

\zcRefTypeSetup{rule}{
  Name-sg = Rule ,
  name-sg = rule ,
  Name-pl = Rules ,
  name-pl = rules ,
  refbounds = {,(,),} ,
}

\zcRefTypeSetup{case}{
  Name-sg = Case ,
  name-sg = case ,
  Name-pl = Cases ,
  name-pl = case ,
}

\zcLanguageSetup{english}{
  type=appendix,
  Name-sg-ab = App.,
  name-sg-ab = app.,
  Name-pl-ab = App.,
  name-pl-ab = app.
}

\zcLanguageSetup{english}{
  type=page,
  Name-sg-ab = P.,
  name-sg-ab = p.,
  Name-pl-ab = Pp.,
  name-pl-ab = pp.
}

\AtEndPreamble{%
  \theoremstyle{acmdefinition}
  \newtheorem{case}{Case}[section]}

\AddToHook{env/conjecture/begin}{%
  \zcsetup{countertype={theorem=conjecture}}}
\AddToHook{env/proposition/begin}{%
  \zcsetup{countertype={theorem=proposition}}}
\AddToHook{env/lemma/begin}{%
  \zcsetup{countertype={theorem=lemma}}}
\AddToHook{env/corollary/begin}{%
  \zcsetup{countertype={theorem=corollary}}}
\AddToHook{env/example/begin}{%
  \zcsetup{countertype={theorem=example}}}
\AddToHook{env/definition/begin}{%
  \zcsetup{countertype={theorem=definition}}}

\acrodef{FLS}{ferrocene language specification}
\acrodef{MIR}{mid-level intermediate representation}
\acrodef{HIR}{high-level intermediate representation}
\acrodef{LAGC}{locally abstract, globally concrete}
\acrodef{DL}{dynamic logic}
\acrodef{CFG}{control flow graph}
\acrodef{SOS}{structural operational semantics}
\acrodef{BOS}{big-step operational semantics}
\acrodef{LIFO}{last in, first out}
\acrodef{FR}{featherweight \Rust}
\acrodef{LLBC}{low-level borrow calculus}
\acrodef{LTL}{linear temporal logic}

\usepackage{todonotes}

\setcopyright{acmlicensed}
\copyrightyear{2026}
\acmYear{2026}
\acmDOI{XXXXXXX.XXXXXXX}

\acmJournal{TOPLAS}
\acmVolume{XXX}
\acmNumber{YYY}
\acmArticle{ZZZ}
\acmMonth{9}

\usepackage{lagc}

\begin{document}
\title{A Complete, Formal Semantics for \Rust Source Code}

\author{Daniel Drodt}
\email{daniel.drodt@tu-darmstadt.de}
\orcid{0000-0003-3036-8220}
\affiliation{%
  \institution{Technical University of Darmstadt}
  \city{Darmstadt}
  \state{Hessen}
  \country{Germany}
}


\begin{abstract}
  Formally reasoning about \Rust programs requires a rigorous formal
  semantics, especially in the context of deductive verification and
  concurrent programming. We present a modular, flexible semantics for
  a significant subset of (close to) source code level \Rust, based on
  the recent \emph{locally abstract, globally concrete} semantics
  framework, separating local evaluation of expressions from their
  composition into concrete traces. The semantics is extended to model
  \Rust's asynchronous programming features and \Rust's most popular
  async runtime, Tokio. Based on our more abstract formalization, we
  establish the fairness of Tokio's scheduler. Further, we show the
  applicability of our semantics to deductive verification of \Rust by
  providing soundness proofs for a \Rust program logic.
\end{abstract}

\begin{CCSXML}
<ccs2012>
   <concept>
       <concept_id>10003752.10010124.10010131</concept_id>
       <concept_desc>Theory of computation~Program semantics</concept_desc>
       <concept_significance>500</concept_significance>
       </concept>
   <concept>
       <concept_id>10003752.10003790.10003806</concept_id>
       <concept_desc>Theory of computation~Programming logic</concept_desc>
       <concept_significance>300</concept_significance>
       </concept>
   <concept>
       <concept_id>10010147.10011777.10011014</concept_id>
       <concept_desc>Computing methodologies~Concurrent programming languages</concept_desc>
       <concept_significance>500</concept_significance>
       </concept>
 </ccs2012>
\end{CCSXML}

\ccsdesc[500]{Theory of computation~Program semantics}
\ccsdesc[300]{Theory of computation~Programming logic}
\ccsdesc[500]{Computing methodologies~Concurrent programming languages}

\keywords{\Rust, denotational semantics, trace semantics, dynamic
  logic, concurrency, fairness}

\maketitle

\section{Introduction}
\label{sec:introduction}

\Rust is a modern, flexible, systems-level programming language of
growing importance. It is used in many crucial application domains,
most notably the Linux kernel~\cite{LGYWX2024}. This importance and
usage increases the need for formal analysis and formal verification
of \Rust programs, both of which are well supported by \Rust's
language design. Accordingly, several verification tools for Rust are
being
developed~\autocite{ABFGMMPS2022,DJM2022,DH2026,HP2022,LHCBSZHPH2023,BBFHKSS2025}. With
its complex ownership model and reference system, \Rust provides a
powerful yet unusually restrictive type system that ensures memory
safety and the absence of data races~\cite{KN2022}.

The lack of an official, formal, source code level \Rust semantics
makes validation of these tools problematic. Without such a semantics,
interoperability of verification tools is difficult, as tools may not
be based on the same, or even compatible, formalisms. Similarly,
verification of the \Rust compiler needs a formal semantics for each
of \Rust's intermediate representations.

The semantics problem is partially addressed by formal models of parts
of \Rust's type system or intermediate representations, typically the
\ac{MIR}~\autocite{AMF,HFP2024,JJKD2018,MiniRust}. These semantics
provide usable and rigorous formalizations of \Rust's borrow checker
and operational semantics of subsets of the language, each offering
different trade-offs and specializations. Nevertheless, they are
limited in at least two respects. First, being low-level and focused
on the \ac{MIR}, they offer little for reasoning about \Rust programs
at a level closer to the source code. A high-level representation
crucially allows, e.g., comprehension of the complex proof
state~\autocite{DH2026}.

Second, no formal semantics for \Rust's concurrent features
(\enquote{async Rust}) has been developed. As \enquote{fearless
  concurrency} is a major feature and promise of
\Rust~\autocite{KN2022}, this is an unfortunate limitation. Without
formal semantics, it remains difficult to reason about concurrent and
parallel \Rust programs or to prove them correct, free of deadlocks,
and similar properties. Furthermore, it remains elusive to model
concrete schedulers or runtimes, which is especially relevant as \Rust
allows different asynchronous runtimes to be used. Here too, a
high-level view seems more promising than semantics of lower-level
representations.

As \Rust has matured and its type system, borrow checking, and memory
model have been examined, refined, and formalized, we argue that a
sensible next step is to examine the high-level features, building on
these closely-studied foundations.

Hence, we see a crucial gap left open: a formal semantics of \Rust
suitable for formalizing high-level properties of \Rust
programs. Ideally, such a semantics should
\begin{enumerate*}[(1)]
\item be highly flexible and compositional to accommodate future
  language extensions;
\item formalize \Rust at the source code level (or close to it);
\item allow the formalization of asynchronous \Rust code;
\item permit modeling of and reasoning about asynchronous runtimes;
  and
\item ease deductive verification of \Rust programs.
\end{enumerate*}

To meet these goals, \emph{modularity} of the semantics is paramount:
Extendability is only feasible if adding a new language feature does
not require to significantly rewrite the existing semantics
rules. Additionally, a modular semantics eases verification effort, as
calculus proof rules are also made modular in the sense that it is
possible to extend the calculus with new rules and constructs without
revising the rest of the calculus.

However, semantics in the style of \ac{SOS} struggle with modularity.
As Peter \citeauthor{M2004} put it, \enquote{conventional \ac{SOS}
  descriptions of programming languages have quite poor modularity,}
which leads to complex semantics, as \textcquote{M2004}{\textins{w}hen
  extending a pure functional language with concurrency primitives
  and/or references, for instance, the \ac{SOS} rules for all the
  functional constructs had to be completely reformulated}. While
\textcite{M2004} provides a modification of \ac{SOS} that improves
modularity, recent developments in trace-based semantics offer an even
more flexible, complete, and modular
path~\autocite{B2007,K2019,BGHS2023,HEJ2009}.

The present paper remedies this situation by providing a
comprehensive, formal semantics for most parts of safe \Rust's
\ac{HIR} based on the \ac{LAGC} semantics framework~\cite{LAGC23},
which was developed with concurrent and parallel programs and
deductive verification in mind. \ac{LAGC} achieves a modular and
flexible semantics by separating the local evaluations of expressions
from the global state and concerns such as scheduling. This approach
is modular in the sense that each syntactic structure---expressions or
statements---have one local rule that is independent of the global
context. Adding new language features consists largely of formulating
new local rules. Furthermore, the approach is well-suited for
concurrency by delegating all scheduling concerns to the global layer
and building hybrid traces over states and events for each program
execution. These traces contain all relevant information of state
changes and scheduling relevant events. Combining these elements,
\ac{LAGC} defines the semantics of a specific program as the set of
all well-formed traces generated by the global composition rules.

\ac{LAGC} has been successfully applied to a number of programming
languages and settings, such as ProMeLa and active object
languages~\autocite{dDFHHJKRSSY2017} like
ABS~\autocite{LAGC23,DHJPT2017}, demonstrating its
flexibility. Further examples include Real-Time ABS~\autocite{T2022a}
and Timed Rebeca~\autocite{HJT2025}. The \ac{LAGC} approach has also
been used, in a slightly different form, in the context of deductive
verification~\autocite{BGHS2023}. These examples demonstrate that
\ac{LAGC}, with its modular approach, separation of local and global
context, and hybrid traces, is more suitable than the traditional
\ac{SOS} semantics for purposes of verification and the modeling of
concurrent programming languages.

For these reasons, \ac{LAGC} is a viable semantics framework to
achieve the goals outlined above and it allows a semantics of
higher-level \Rust. Hence, the presented semantics is based on
\ac{LAGC}. Our detailed contributions are:
\begin{enumerate}
\item We provide a semantics for most parts of safe \Rust's \ac{HIR}
  in the \ac{LAGC} framework and demonstrate the semantics' usability
  with a number of examples. To demonstrate the applicability of the
  semantics, we support complex language features and control flows,
  such as loops with breaks and continues.
\item We extend this semantics to async \Rust, modeling any possible
  scheduling of \lstinline|async| functions. Notably, to our
  knowledge, this is the first formal semantics of async \Rust in the
  literature.
\item Based on the asynchronous semantics, we formalize \Rust's most
  widely-used runtime, the \enquote{current thread runtime} of
  Tokio~\autocite{tokio}, and prove its fairness. We stress that our
  approach is straightforwardly adaptable to other runtimes, with the
  async \Rust semantics carrying over.
\item Finally, we show how our semantics can be used to prove
  soundness of the calculus for \RustyDL, a program logic for
  \Rust~\autocite{DH2026} based on \ac{DL}~\autocite{P1976},
  demonstrating how the modular rules of \ac{LAGC} ease such
  proofs. Thereby, we also highlight the interaction of \ac{DL} and
  \ac{LAGC}.
\end{enumerate}

\zcref[S]{sec:basics} introduces the necessary background and
definitions on which our semantics is built, as well as the general
\ac{LAGC} approach. We present the semantics for synchronous \Rust in
\zcref{sec:lagc-semant-rust} and extend it to async \Rust in
\zcref{sec:asynchr-rust-semant}. \zcref[S]{sec:fair-tokio} then builds
upon the async semantics to represent one of Tokio's runtimes in
\ac{LAGC} and proves its fairness guarantees. The calculus of the
program logic \RustyDL~\autocite{DH2026} is proven sound in
\zcref{sec:sound-dl}. We present related work in
\zcref{sec:related-work} and conclude the paper in
\zcref{sec:future-work-concl}.


\section{Prerequisites}
\label{sec:basics}

The goal of the \ac{LAGC} approach~\autocite{LAGC23} is to represent a
program run as a \emph{trace} consisting of states and events and to
represent a program's entire behavior as the set of all possible
traces starting from a concrete state. As a principle idea to tackle
the complexity of all possible executions, \ac{LAGC} consists of many
\emph{local} rules for each syntactic structure, which consider
abstract states and traces, and \emph{global} rules, which consider
concrete states and traces and are concerned with the correct
composition of such traces. Events in traces aid this effort by
providing more structural information, e.g., that a function was
invoked or a panic occurred.

Building an \ac{LAGC} semantics consists of four steps:
\begin{enumerate*}[(1)]
\item the local evaluation, mapping expressions to (abstract)
  continuation traces, without knowledge of the surrounding context;
\item a well-formedness predicate, restricting the set of allowed
  traces and enforcing, e.g., proper ordering of events;
\item trace composition rules, defining how the local traces are
  stitched together to concrete traces; and
\item global program semantics, defining the initial state and allowed
  end states.
\end{enumerate*}
This framework yields a set of concrete traces describing all possible
executions for a given program from an initial state and achieves a
modular semantics: Each kind of expression has exactly one local rule
irrespective of context.

This section gives the necessary background for states, events, and
traces, mainly following the definitions in~\cite{LAGC23}, modified to
adapt to \Rust's more functional, expression-based structure. This
modification in fact simplifies some parts of the \ac{LAGC} approach.

\subsection{Values and States}
\label{sec:states}

Due to the high-level concerns of our semantics, we do not map \Rust
constructs to low-level or even machine-level domains; rather, we map
to mathematical domains: \Rust's concrete, machine word integer types,
such as \lstinline|i32| and \lstinline|u8|, are mapped to mathematical
integers $\mathbb{Z}$, while \lstinline|bool| is mapped to boolean
values $\trueSem$ and $\falseSem$. We proceed similarly with compound
types like arrays and structs. This abstraction is not a limitation of
the semantics. We ensure the correct behavior of \Rust types as part
of the evaluation of the relevant expressions.

\begin{definition}[(Starred) Semantic Values]\label{def:val}
  We write $\Var$ for the set of program variables with typical
  elements $\pv{x}, \pv{y}, \pv{z}$, etc. Further, we define a
  \emph{place} $p\in\Place$ as an abstract representation of memory
  locations. Such a place is either a program variable~\pv{x}, an
  array field $p[n]$ of another place~$p$ where $n$ is a natural
  number, or a field access $p.\mathit{identifier}$.
  \begin{align*}
    \pv{x}\in\Var&::=\mathit{identifier}\\
    p\in\Place &::= \pv{x} \mid p[n] \mid p.\mathit{identifier}\\
    n\in\mathbb{N} &::= 0 \mid 1 \mid \ldots
  \end{align*}

  The set of \emph{semantic values} $\Val$, with typical elements
  $v,v_1,v_2,\ldots$, includes boolean, integer, reference, function,
  tuple, array, struct, and enum variant values. The latter compound
  types are defined as their \Rust counterparts, where all fields are
  values. Reference values are either \emph{shared} (denoted by $s$)
  or \emph{mutable} (denoted by $m$) and contain a place~$p\in\Place$
  and a \emph{borrow identifier} $b\in\BId$; e.g.,
  $\mRefVal{\pv{x}[1]}{b_0}$ is a mutable reference of
  place~$\pv{x}[1]$ with borrow identifier~$b_0$. Function values are
  represented as function names.
  \begin{align*}
    v\in\Val &::= \trueSem \mid \falseSem \mid z \mid \sRefVal{p}{b} \mid \mRefVal{p}{b} \mid \rust{($\many{v}$)} \mid \rust{[$\many{v}$]} \mid \textit{structs} \mid \textit{enums} \mid \mathit{functions} \\
    z\in\mathbb{Z} &::= n \mid -1 \mid -2 \mid \ldots
  \end{align*}

  The set of \emph{starred values} $\SVal \supset \Val$, with typical
  element $\sval$, extends values with the starred value~$\ast$,
  representing unknown, \emph{symbolic} values.
  \begin{align*}
    \sval\in\SVal &::= v \mid \ast
  \end{align*}
\end{definition}

\begin{example}
  The semantic value to which the \Rust expression
  \lstinline|(true, 4 * 2, &mut n)|, where \pv{n} is a program
  variable of type \lstinline|i32|, evaluates is
  $(\trueSem, 8, \mRefVal{\pv{n}}{b_0})$, where $b_0\in\BId$ is
  unique.
\end{example}
  
To associate variables with values, we use a \emph{symbolic state}
$\sigma$, modelled as a partial mapping
\(\sigma:\Var\rightharpoonup\SVal\) from variables to starred
values. We express a modification of a variable~\pv{x} with the
(starred) value~$\sval$ as the \emph{state update} of $\sigma$:
\[
  \update{\sigma}{\pv{x}}{\sval}(\pv{y})=
  \begin{cases}
    \sval &\pv{x}=\pv{y}\\
    \sigma(\pv{y}) &\pv{x}\ne\pv{y}\ .
  \end{cases}
\]

As values can be nested in arrays, tuples, structs, and enums, we
define syntactic sugar for updating not just variables but places
$p\in\Place$. The base case of $p=\pv{x}$ for some variable~\pv{x} is
handled by state updates. Updates to an array place $p[n]$ is resolved
by updating place~$p$ with a changed array value.
\[
  \update{\sigma}{p[n]}{\sval}=\update{\sigma}{p}{[\sval_0,\ldots,\sval_{n-1},\sval,\sval_{n+1},\ldots,\sval_m]}
  \quad \text{where}~n\le m \wedge \sigma(p) =
  [\sval_0,\ldots,\sval_m]
\]
The case for updating field places $p.\mathit{identifier}$ is
analogous: $\update{\sigma}{p.\mathit{identifier}}{\sval}$.

The \emph{symbolic variables} of state $\sigma$ are defined as
\(\symb(\sigma)=\{\pv{X}\in\Var\mid\sigma(\pv{X})=\ast\}\) and are
used to abstract from concrete evaluations, e.g., for function
parameters. We typically denote them with capital letters. If
$\symb(\sigma)=\emptyset$, we call $\sigma$ \emph{concrete.}

As our semantics may introduce new variables, we require a notion of
\emph{state extension.}

\begin{definition}[State Extension]\label{def:state-extension}
  Let $\sigma_1,\sigma_2$ be two states. We say $\sigma_2$
  \emph{extends} $\sigma_1$---written as
  $\sigma_1\subseteq\sigma_2$---iff
  $\dom(\sigma_1)\subseteq\dom(\sigma_2)$ and
  $\forall\pv{x}\in\dom(\sigma_1).\
  \sigma_1(\pv{x})=\sigma_2(\pv{x})$.
\end{definition}

\subsection{Traces and Events}
\label{sec:traces}

Traces are sequences of states and events, the latter of which give a
structure and additional information to traces. Events cannot change
the state; we add the relevant events for \Rust as we go along. Like
states, traces can be symbolic.

\begin{definition}[Symbolic Trace]\label{def:trace}
  We define \emph{symbolic traces} inductively by the following rules,
  where $\epsilon$ is the empty trace:\\
  \[
    \begin{array}{r@{\;::=\;}l@{\qquad}l}
      \tau & \epsilon\mid\consTr{t}{\tau} & \text{(trace)}\\
      t & \sigma\mid\textit{ev}(\overline{\mathit{ep}}) & \text{(trace element)}\\
      \mathit{ep} & \pv{x}\mid v & \text{(event parameter)}
    \end{array}
  \]
  where $\textit{ev}(\overline{\mathit{ep}})$ is an \emph{event
    marker} and $\many{\mathit{ep}}$ are the \emph{event parameters,}
  i.e., variables or values.
\end{definition}

\begin{example}\label{ex:trace}
  Let $\sigma_0:=[\pv{X}\mapsto\ast,\pv{y}\mapsto\falseSem]$ be a
  symbolic state and $\mathit{ev}_0$ an event marker with one
  parameter. Then,
  $\tau_0:=\consTr{\update{\sigma_0}{\pv{y}}{\trueSem}}{\consTr{\mathit{ev}_0(\pv{X})}{\consTr{\sigma_0}{\epsilon}}}$
  is a symbolic trace, representing an execution starting in
  (symbolic) state~$\sigma_0$, then event $\mathit{ev}_0(\pv{X})$, and
  ending in the updated state where $\pv{y}$'s value is changed to
  false.
\end{example}

We take care that each event in a trace is surrounded on either side
by a state. This ensures that that functions over traces like
$\last(\tau)$ and $\lastEv(\tau)$, which evaluate to the last state
and event of trace~$\tau$, respectively, are always defined for
non-empty traces.

Often, it is necessary to construct a trace~$\tau$ from two given
traces $\tau_1,\tau_2$. We call this process \emph{trace composition.}
We write $\mkTr{\sigma}$ for the \emph{singleton trace}
$\consTr{\sigma}{\epsilon}$ and $\concatTr{\tau_1}{\tau_2}$ for the
\emph{concatenation} of two symbolic traces $\tau_1,\tau_2$. For
example, we write the trace from \zcref{ex:trace} as
\[
  \tau_0=\consTr{\update{\sigma_0}{\pv{y}}{\trueSem}}{\consTr{\mathit{ev}_0(\pv{X})}{\mkTr{\sigma_0}}}\ .
\]

Traces $\tau_1$ and $\tau_2$, where $\tau_1$ ends in the state with
which $\tau_2$ begins, can be \enquote{stitched together} with the
\emph{semantic chop} operation\footnote{The name \enquote{chop} stems
  from temporal interval logic~\autocite{HMM1983}.}:
\[\chopTrSem{\tau_1}{\tau_2}=\concatTr{\tau}{\tau_2}~\text{where}~\tau_1=\consTr{\sigma}{\tau}~\text{and}~\tau_2=\concatTr{\mkTr{\sigma}}{\tau'}\
  .\]
Note that the chop is a partial operation: If $\tau_1$ is empty or if
the last state of~$\tau_1$ and first state of $\tau_2$ are
incompatible, the chop is undefined.

We also require a way to concretize symbolic traces into concrete
traces. To this end, we work with concretization mappings~$\rho$.

\begin{definition}[Concretization Mapping]\label{def:concr}
  A \emph{concretization mapping} $\rho:\Var\rightharpoonup\Val$ is a
  partial mapping for state $\sigma$ if
  $\dom(\rho)\cap\dom(\sigma)=\symb(\sigma)$. We denote the
  \emph{concretization of $\sigma$ with $\rho$} as
  \[\mapComp{\rho}{\sigma}=\rho\cup\{\pv{x}\mapsto\sigma(\pv{x}) \mid \pv{x}\in\dom(\sigma)\setminus\dom(\rho)\}\ .\]
\end{definition}

\begin{proposition}
  Let $\sigma$ be a state and $\rho$ a $\sigma$ concretization
  mapping. Then $\mapComp{\rho}{\sigma}$ is a concrete state.
\end{proposition}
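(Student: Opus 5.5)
The plan is to unfold \zcref{def:concr} and check the two things that make $\mapComp{\rho}{\sigma}$ a concrete state: it is a well-defined partial map $\Var\rightharpoonup\SVal$, and its set of symbolic variables is empty. Throughout, write $\sigma' := \mapComp{\rho}{\sigma} = \rho\cup\{\pv{x}\mapsto\sigma(\pv{x})\mid \pv{x}\in\dom(\sigma)\setminus\dom(\rho)\}$.

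\textbf{Step 1 (well-definedness).} The two components of the union have disjoint domains: the first has domain $\dom(\rho)$, the second $\dom(\sigma)\setminus\dom(\rho)$. Hence $\sigma'$ is functional, and $\dom(\sigma') = \dom(\rho)\cup\dom(\sigma)$. Since $\Val\subset\SVal$, the map $\sigma'$ is a state $\Var\rightharpoonup\SVal$.

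\textbf{Step 2 (no starred values).} Take $\pv{x}\in\dom(\sigma')$ and split on whether $\pv{x}\in\dom(\rho)$.
\begin{itemize}
\item If $\pv{x}\in\dom(\rho)$, then $\sigma'(\pv{x})=\rho(\pv{x})\in\Val$. This is not $\ast$, because $\ast\notin\Val$: $\ast$ is exactly the element that $\SVal$ adds to $\Val$.
\item Otherwise $\pv{x}\in\dom(\sigma)\setminus\dom(\rho)$, so $\sigma'(\pv{x})=\sigma(\pv{x})$. Suppose $\sigma(\pv{x})=\ast$. Then $\pv{x}\in\symb(\sigma)$. The hypothesis $\dom(\rho)\cap\dom(\sigma)=\symb(\sigma)$ then gives $\pv{x}\in\dom(\rho)$, a contradiction. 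So $\sigma'(\pv{x})\neq\ast$.
\end{itemize}

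Combining both cases gives $\symb(\sigma')=\emptyset$, so $\sigma'$ is concrete by definition.

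There is no real obstacle here. The only point needing care is that the condition on $\rho$ is used in one direction only, $\symb(\sigma)\subseteq\dom(\rho)$, and this is what rules out leftover stars. The other direction merely ensures that $\rho$ does not overwrite concrete values; it is irrelevant to concreteness.
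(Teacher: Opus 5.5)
Your proof is correct. The paper states this proposition without proof, and your direct unfolding of \zcref{def:concr} is the natural argument: disjoint domains give functionality, and the inclusion $\symb(\sigma)\subseteq\dom(\rho)$ rules out any leftover $\ast$. Your side remark is also accurate: the reverse inclusion $\dom(\rho)\cap\dom(\sigma)\subseteq\symb(\sigma)$ plays no role in concreteness.
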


\begin{definition}[Trace Concretization]\label{def:concr-tr}
  We extend the notion of \emph{concretization mapping} to traces in
  the natural way: A partial mapping~$\rho$ is a concretization
  mapping for $\tau$ if it is a concretization mapping for all states
  in $\tau$. The \emph{concretization} $\rho(\tau)$ of $\tau$ is
  defined as:
  \begin{align*}
    \rho(\epsilon)&=\epsilon\\
    \rho(\consTr{t}{\tau})&=\consTr{\rho(t)}{\rho(\tau)}\\
    \rho(\sigma)&=\mapComp{\rho}{\sigma}\\
    \rho(\textit{ev}(\many{\mathit{ep}}))&=\textit{ev}(\rho(\many{\mathit{ep}}))\\
    \rho(v)&=v
  \end{align*}
\end{definition}

\begin{example}
  The trace
  $\tau_0=\consTr{\update{\sigma_0}{\pv{y}}{\trueSem}}{\consTr{\mathit{ev}_0(\pv{X})}{\mkTr{\sigma_0}}}$
  in \zcref{ex:trace} is symbolic. With $\rho_0:=[\pv{X}\mapsto 3]$ we
  concretize it to
  $\mapComp{\rho_0}{\sigma_0}=[\pv{X}\mapsto 3, \pv{y}\mapsto
  \falseSem]$,
  $\mapComp{\rho_0}{\update{\sigma_0}{\pv{y}}{\trueSem}}=[\pv{X}\mapsto
  3, \pv{y}\mapsto
  \trueSem]=\update{\rho(\sigma_0)}{\pv{y}}{\trueSem}$, and
  $\rho(\tau_0):=\consTr{\update{\rho(\sigma_0)}{\pv{y}}{\trueSem}}{\consTr{\textit{ev}_0(3)}{\mkTr{\rho(\sigma_0)}}}$.
\end{example}

Finally, we require \emph{continuation markers,} as placeholders
marking the unevaluated remainder of an expression.

\begin{definition}[Continuation Marker]\label{def:cont}
  Let $e$ be a \Rust expression. The \emph{continuation marker}
  $\cont{e}$ states that a trace is extended by the traces resulting
  from evaluating~$e$.
\end{definition}

The set of symbolic traces with continuation markers, i.e., all
$\concatTr{\tau}{\cont{e}}$, is denoted as $\ctraces$. We call the
elements of $\ctraces$ \enquote{continuation traces.}

Finally, to ease reasoning over traces, they must adhere to a certain
structure, e.g., agree on the set of symbolic variables. Additionally,
our semantics should not generate any symbolic traces, only concrete
ones. \zcref[S]{def:wf-trace} introduces well-formed and concrete (or
\enquote{shining}) traces, which our semantics generates.

\begin{definition}[Well-Formed and Shining Trace]\label{def:wf-trace}
  Let $\tau$ be a symbolic trace and
  $S=\bigcup_{\sigma\in\tau}\symb(\sigma)$. Trace~$\tau$ is
  \emph{well-formed} if conditions
  \zcref[noname]{eq:wf-tr-stays-symb,eq:wf-tr-ev-symb,eq:wf-tr-ev-ext}
  hold.
  \begin{align}
    \forall\sigma\in\tau.\, (\dom(\sigma)\setminus\symb(\sigma))\cap S=\emptyset \label{eq:wf-tr-stays-symb}\\
    \forall\textit{ev}(\overline{\mathit{ep}})\in\tau.\, \vars(\overline{\mathit{ep}})\subseteq S\label{eq:wf-tr-ev-symb}\\
    \forall\textit{ev}(\overline{\mathit{ep}}),\tau_1,\tau_2,\sigma_1,\sigma_2.\ (\tau=\concatTr{\consTr{\consTr{\consTr{\sigma_2}{\textit{ev}(\overline{\mathit{ep}})}}{\sigma_1}}{\tau_1}}{\tau_2}\rightarrow \sigma_1\subseteq\sigma_2) \label{eq:wf-tr-ev-ext}
  \end{align}
  Condition~\zcref[noname]{eq:wf-tr-stays-symb} ensures that each
  variable is either symbolic in all states or in no state.
  Condition~\zcref[noname]{eq:wf-tr-ev-symb} holds if all variables in
  events are symbolic. Finally,
  condition~\zcref[noname]{eq:wf-tr-ev-ext} states that the state
  succeeding an event is an extension of the state preceding the
  event. $\sigma_1\subseteq\sigma_2$ expresses that $\sigma_2$ is an
  extension of $\sigma_1$ (see \zcref{def:state-extension}); hence,
  events may not alter the state but only introduce new variables to
  it.
  
  If a well-formed trace contains only concrete states (i.e.,
  $S=\emptyset$), we call it \emph{concrete} or \emph{shining.}
  Shining traces are denoted by $\sh$.
\end{definition}

\begin{proposition}
  The concretization of a well-formed trace is a well-formed shining
  trace.
\end{proposition}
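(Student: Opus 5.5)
The plan is to fix a well-formed trace $\tau$ and a concretization mapping $\rho$ for $\tau$ in the sense of \zcref{def:concr-tr}, and to verify that $\rho(\tau)$ satisfies the three conditions of \zcref{def:wf-trace} with its symbolic set $S'=\bigcup_{\sigma'\in\rho(\tau)}\symb(\sigma')$ empty. The states of $\rho(\tau)$ are exactly the states $\mapComp{\rho}{\sigma}$ for $\sigma\in\tau$, since $\rho$ acts elementwise. By the preceding proposition each of them is concrete, so $S'=\emptyset$. With $S'=\emptyset$, condition~\zcref[noname]{eq:wf-tr-stays-symb} holds trivially: its intersection with $S'$ is empty.

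For condition~\zcref[noname]{eq:wf-tr-ev-symb} I need every event of $\rho(\tau)$ to have no variable parameters, since $\vars(\cdot)\subseteq\emptyset$ is required. Each event in $\rho(\tau)$ has the form $\textit{ev}(\rho(\many{\mathit{ep}}))$ for some event of $\tau$. By well-formedness of $\tau$, every variable parameter $\pv{x}$ lies in $S$, so $\pv{x}\in\symb(\sigma)$ for some $\sigma\in\tau$. Because $\rho$ concretizes $\sigma$, we have $\symb(\sigma)\subseteq\dom(\rho)$, hence $\rho(\pv{x})\in\Val$. Concretization of an event parameter is read here as $\rho(\pv{x})$ for variables, which is the intended natural extension, and values are fixed by $\rho(v)=v$. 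So all parameters become values and the condition holds.

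For condition~\zcref[noname]{eq:wf-tr-ev-ext}, any decomposition of $\rho(\tau)$ around an event lifts to a decomposition of $\tau$ of the same shape, because $\rho$ is length- and shape-preserving. It therefore suffices to show that $\sigma_1\subseteq\sigma_2$ implies $\mapComp{\rho}{\sigma_1}\subseteq\mapComp{\rho}{\sigma_2}$. Take $\pv{x}\in\dom(\mapComp{\rho}{\sigma_1})$. If $\pv{x}\in\dom(\rho)$, both sides give $\rho(\pv{x})$. Otherwise $\pv{x}\in\dom(\sigma_1)\setminus\dom(\rho)\subseteq\dom(\sigma_2)\setminus\dom(\rho)$, and both sides give $\sigma_1(\pv{x})=\sigma_2(\pv{x})$. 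This settles the domain inclusion and the agreement of values.

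The main obstacle is definitional rather than mathematical. As literally written, $\mapComp{\rho}{\sigma}$ includes all of $\rho$, even variables outside $\dom(\sigma)$. The extension argument above works uniformly with this reading, and the preceding proposition already covers concreteness. I would state this reading explicitly so that both the domain inclusion and the event-parameter case are airtight. Well-formedness condition~\zcref[noname]{eq:wf-tr-stays-symb} of $\tau$ is not actually needed, except to make $\rho$ consistent across states; I would remark on that.
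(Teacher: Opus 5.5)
Your proof is correct. The paper states this proposition without proof, and your condition-by-condition check is exactly the routine verification it leaves implicit: concreteness of each $\mapComp{\rho}{\sigma}$ gives an empty symbolic set, variable parameters are removed via $\symb(\sigma)\subseteq\dom(\rho)$, and condition~\zcref[noname]{eq:wf-tr-ev-ext} follows from monotonicity of $\sigma\mapsto\mapComp{\rho}{\sigma}$ under $\subseteq$.

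Two remarks on your definitional comments:

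\begin{itemize}
\item The reading that actually carries weight is the one you adopt for condition~\zcref[noname]{eq:wf-tr-ev-symb}. There, $\vars(\many{\mathit{ep}})$ must collect only the parameters of the form $\pv{x}$, not program variables occurring inside places of reference values such as $\sRefVal{\pv{x}}{b}$. Since $\rho(v)=v$ leaves such places untouched, the statement would fail under that deeper reading.
\item Condition~\zcref[noname]{eq:wf-tr-stays-symb} for $\tau$ is not needed even ``to make $\rho$ consistent''. It already follows from the existence of a single $\rho$ concretizing every state of $\tau$.
\end{itemize}
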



\section{LAGC Semantics for \Rust}
\label{sec:lagc-semant-rust}

This section presents an \ac{LAGC} semantics for synchronous, safe
\Rust. Concretely, we cover the expressions and statements in
\zcref{fig:syn-rs}. Type, trait, and module definitions are left out,
as they have no evaluation and, therefore, require no \ac{LAGC}
treatment. Note that, due to our focus on the \ac{HIR}, macros are
already expanded (though we still denote panic expressions, i.e.,
forceful errors that stop the program's execution, as
\lstinline|panic!()| for readability) and all loops are transformed
into the singular \lstinline|loop| expression. For clarity, we assume
a number of straightforward normalizations of the \ac{HIR}
expressions:
\begin{enumerate*}[(i)]
\item all loops, \lstinline|break|, and \lstinline|continue|
  expressions have unique labels;
\item all blocks, and \lstinline|break| expressions contain
  expressions, e.g., \lstinline|break 'l 2 * 3| where \lstinline|'l|
  is the label and \lstinline|2 * 3| is the \lstinline|break|'s
  expression (this is achieved by inserting the unit expression
  \enquote{\lstinline|()|} where necessary);
\item all \lstinline|if| expressions have an \lstinline|else| branch,
  and
\item all functions have at least one explicit \lstinline|return|
  expression.
\end{enumerate*}

Observe that most syntactic structures that would be statements in
other languages are expressions in \Rust, e.g., \lstinline|if| and
\lstinline|loop| expressions. The intuitive semantics of such
expressions is explained alongside the formal semantics but we refer
to the \Rust book~\cite{KN2022}, the \Rust reference~\cite{RustRef},
and the \ac{FLS}~\cite{FLS} for details.

\begin{figure}[h]
  \centering
  \begin{array}[t]{r@{\hspace{2pt}}r@{\hspace{2pt}}l}
    e\in\Expr &::=& v \mid  \rust{x} \mid \rust{!$e$} \mid \rust{-$e$} \mid \rust{*$e$} \mid \rust{$e$ $\mathit{op}$ $e$} \mid \rust{$\mathit{ae}$ = $e$} \mid \rust{[$\many{e}$]} \mid \rust{$(\many{e})$} \mid e\rust{($\many{e}$)} \mid \rust{$e$.$\mathit{identifier}$($\many{e}$)} \\
              & & \quad \mid \kwMatch~\rust{$e$ \{$\many{arm}$\}} \mid \rust{$e$.$\mathit{identifier}$} \mid \rust{$e$[$e$]} \mid \rust{\&$e$} \mid \rust{\&\kwMut $e$} \\
              & & \quad \mid \rust{\kwBreak~'$l$ $e$} \mid \rust{\kwContinue '$l$} \mid \rust{\kwReturn $e$} \mid \rust{x \{$\many{\rust{$\mathit{identifier}$: $e$}}$\}} \mid \rust{[$e$; $n$]} \mid \rust{\{$\overline{s}$ $e$\}} \\
              & & \quad \mid \rust{\kwIf $e$ $e$ \kwElse $e$} \mid \rust{'$l$: \kwLoop $e$} \mid \rust{\panic()} \mid \rust{?$i$} \\
    s\in\Stmt &::=& e \mid \rust{\kwLet $\mathit{pat}$ = $e$;} \\
    l\in\Lbl &::=& \mathit{identifier} \\
    i\in\CId &::=& \mathit{call identifier} \\
    \mathit{op}\in\mathit{Op} &::=& \rust{+} \mid \rust{-} \mid \rust{*} \mid \ldots \mid \rust{+=} \mid \rust{-=} \mid \rust{*=} \mid \ldots \\
    \mathit{arm}\in\mathit{Arm} &::=& \rust{$\mathit{pat}$ => $e$}\\
    \mathit{pat}\in\mathit{Pat} &::=& \rust{\_} \mid v \mid \pv{x} \mid \rust{x \{$\many{\mathit{identifier}:\;\mathit{pat}}$\}} \mid \rust{x($\many{\mathit{pat}}$)} \mid \rust{($\many{\mathit{pat}}$)} \mid (\rust{$\mathit{pat}$ | $\mathit{pat}$}) \\
              & & \quad \mid \rust{$e$..$e$} \mid \rust{$e$..=$e$} \mid \rust{$e$..} \mid \rust{$e$..=} \mid \rust{..$e$} \mid \rust{..=$e$} \\
    \mathit{ae}\in\AssignExpr &::=& \mathit{pe} \mid \rust{\_} \mid \rust{[$\many{ae}$]} \mid \rust{x \{$\many{\rust{$\mathit{identifier}$: $\mathit{ae}$}}$\}} \mid \rust{($\many{ae}$)} \mid \rust{x($\many{ae}$)}\\
    \mathit{pe}\in\PlaceExpr &::=& \pv{x} \mid \rust{$\mathit{pe}$[$e$]} \mid \rust{$\mathit{pe}$.$\mathit{identifier}$} \mid \rust{*$\mathit{pe}$} \\
    \pv{foo}\in\Fn &::=& \rust{\kwFn $\mathit{identifier}$($\many{x}$) $e$}\\
    r\in\Program &::=& \many{\pv{foo}}\;\rust{\kwFn main() $e$}
  \end{array}
  \Description{Definition of expressions, operators, statements,
    patterns, functions, and programs.}
  \caption{The syntax of \Rust's \ac{HIR} covered in this section.}
  \label{fig:syn-rs}
\end{figure}

The syntax of expressions also contains an additional construct needed
to model function calls ergonomically: \rust{?$i$}, where $i \in \CId$
is a \emph{call identifier}, serves as a stand-in for the value that
is to be returned by the function call with identifier~$i$. The
precise, formal semantics are outlined below.

Some rules below, especially regarding assignments, also consider
important subsets of $\Expr$, which we enumerate above: \emph{Place
  expressions} ($\PlaceExpr \subset \Expr$) are expressions that
represent a memory location, such as local variables, index
expressions, dereferences, and field expressions. \emph{Assignee
  expressions} ($\AssignExpr\subset\Expr$) are expressions that can
appear on the left-hand side of assignments. Aside from place
expressions and the underscore expression (i.e., an assignment to
nowhere), these are used for destructuring. If a place expression
$\mathit{pe}$ contains only values in the index and no dereferences,
then $\mathit{pe}\in\Place$. Further, if every place expression in an
assignee expression $\mathit{ae}$ is in $\Place$, it is an element of
the \emph{assignee value} set $\AssignVal \subset \Expr$.

The considered subset covers almost all of safe \Rust, the only
limitation being closures, which can be translated to normal
functions. Async \Rust's features are discussed in
\zcref{sec:asynchr-rust-semant}.
We are interested in an operational semantics of \Rust, hence we
abstract away from memory layout and typing. All expressions and
functions in programs must be well-typed and all types must be known.

\begin{example}[Running Example]\label{ex:sync-run}
  We consider a running example in this section which highlights many
  important \Rust features considered here, see
  \zcref{lst:running}. It contains three functions: \lstinline|fac|,
  computing the factorial recursively, \lstinline|set_all| setting all
  fields of an array passed via a mutable reference to a given
  value. The \lstinline|main| function calls the other two with
  example arguments.
\begin{lstlisting}[caption={A simple \Rust program adhering to the simplifications above.},label=lst:running,numbers=left,multicols=2,xleftmargin=2em]
fn fac(n: u64) -> u64 {
  return if n <= 1 { 1 } %\label{line:if}%
  else { %\label{line:block-start}%
    let left = n; let right = fac(n - 1); %\label{line:rec-call-fac}%
    left * right }; () %\label{line:block-end}%
}
fn set_all(arr: &mut [u64; 3], v: u64) {
  let mut i = 0;
  'l1: loop { if i == 3 { break 'l1 () } else {()}
    *a[i] = v; i += 1 }
  return (); ()
}
fn main() {
  let b = false;
  let c = !b; %\label{line:not-b}%
  let f = fac(4); %\label{line:init-call-fac}%
  let arr = [8; 3];
  set_all(&mut arr, f); %\label{line:call-set-all}%
  return (); ()
}
\end{lstlisting}
\end{example}

\subsection{Local Evaluation}
\label{sec:local-evaluation}

We define a \emph{local evaluation function}
$\valDnoargs: \Expr \rightarrow 2^\ctraces$, mapping \Rust expressions
to a set of symbolic continuation traces relative to
state~$\sigma$. Each such trace contains state changes, emitted
events, and the continuation marker, denoting the expression or value
with which we continue. We define one local rule per kind of
expression (syntax in \zcref{fig:syn-rs}). Due to the rich syntax, the
number of rules is quite large. Since \Rust lacks a full, official
specification~\autocite{HKMB2023}, our local rules follow the \Rust
reference~\autocite{RustRef} and the unofficial yet \enquote{adopted}
\ac{FLS}~\autocite{FLS}. Observe that for the subset we consider,
every program has a deterministic execution.

Values are already fully evaluated. Hence, we define a value's local
evaluation as the empty set.
\evalrule{val}{
  \valD{v} & = & \emptyset
}

Reading a local variable produces a set with a single continuation
trace, starting with the state~$\sigma$ and a continuation with
\pv{x}'s value in $\sigma$.
\evalrule{pv}{
  \valD{\pv{x}} & = & \{\concatTr{\mkTr{\sigma}}{\cont{\sigma(\pv{x})}}\}
}

\subsubsection{Evaluation of Sub-Expressions}
\label{sec:eval-sub-expr}

A common pattern in the following rules and expressions is that an
expression~$e$ contains sub-expressions, which need to be evaluated
before $e$'s top-level operator can be evaluated. For instance, for
the expression \lstinline|(5 + 1) / 3|, the division cannot be
evaluated until the sub-expression \lstinline|5 + 1| is fully
evaluated. Therefore, we require a way to obtain for a given
expression~$e$ the next sub-expression~$e'$ to be evaluated. However,
in \Rust, what this sub-expression is, depends on the context: If we
evaluate \rust{a[1 + 1]} to a value, then \enquote{\pv{a}} is the next
sub-expression. However, evaluating it in an \emph{assignment
  context}, e.g., \lstinline|a[1 + 1] = 4 / 2|, the behavior is
different: Because the left-hand side of an assignment is evaluated
not to values but to assignee values ($\AssignVal$), the next
sub-expression is \lstinline|1 + 1|.

For that reason, we also require the context of the expression in
addition to the next sub-expression to be evaluated. This allows us to
evaluate the next sub-expression and then replace the old
sub-expression with the result, leaving the rest of the expression
unaltered. For this, we introduce \emph{hole expressions}. Let
$\HoleExpr$ with typical element~$\mathit{he}$ be defined analogous
$\Expr$ but with a new base case, a \emph{hole}~$\exprHole$, similar
to holes in types and terms in functional
programming~\autocite{W2018,M1998}. We further require that
$\exprHole$ occurs exactly once per $\mathit{he}\in\HoleExpr$. For
example, \rust{$\exprHole$ / 3} is a hole expression. This allows us
to extract a sub-expression from an expression and substitute it
without altering the rest. We denote filling the hole in
$\mathit{he}\in\HoleExpr$ with expression~$e$ as
$\exprFill{\mathit{he}}{e}$. The result is an element of $\Expr$.

We now obtain the next sub-expression with the function
$\nextExprSym:\Expr\times2^\Expr\rightarrow\Expr\times\HoleExpr$,
which accurately models \Rust's evaluation order using the
sub-expression's context as second argument. For example,
$\nextExpr{\rust{(5 + 1) / 3}}{\Val}$ is expected to yield
$(\rust{5 + 1}, \rust{$\exprHole$ / 3})$, which allows us to evaluate
the next sub-expression \lstinline|5 + 1| to the value $6$ which can
then be inserted into the hole. We get
$\exprFill{(\rust{$\exprHole$ / 3})}{\rust{6}}=\rust{6 / 3}$. The
$\nextExprSym$ function must be defined such that for all
$e, e', \mathit{he}$, and $M$ such that
$\nextExpr{e}{M}=(e',\mathit{he})$, we have
$\exprFill{\mathit{he}}{e'}=e$.

The function $\nextExprSym$ is defined recursively and, as mentioned
before, it carries the context as its second argument. The recursion
stops when reaching an element of this set.  Typically it is the set
$\Val$ or $\AssignVal$.  By default we are the context $\Val$.  For
example, to evaluate the expression \lstinline|a[1 + 1] = 4 / 2|, we
determine the next sub-expression by
$\nextExpr{\rust{a[1 + 1] = 4 / 2}}{\Val}$. In this case, first
\lstinline|4 / 2| is evaluated, then \lstinline|a[1 + 1]|.  However,
when we have \lstinline|a[1 + 1] = 2|, we must consider the left-hand
side by calling $\nextExpr{\rust{a[1 + 1]}}{\AssignVal}$, because do
not want to evaluate \lstinline|a[1 + 1]| to a value but to get the
assignee value (in this case, it is a place) $\pv{a}[2]$. The latter
call yields $(\rust{1 + 1}, \rust{a[$\exprHole$]})$.

We need to define $\nextExprSym$ for all \Rust expressions. For
instance, the next sub-expression of the negation
expression~\rust{!$e$} is defined as follows: If $e$ also has a
further sub-expression that is to be evaluated, we return
it. Otherwise, if $e$ is not in $\Val$, it is the next sub-expression
and should be evaluated to a value. We always pass $\Val$ to the
recursive call, because negations cannot be part of an assignee
expression.
\[
  \nextExpr{\rust{!$e$}}{M} =
  \begin{cases}
    (\rust{!$e$}, \exprHole)&\text{if}~e\in \Val\\
    (e',\rust{!$\mathit{he}$}) &\text{if}~e\not\in\Val\wedge(e',\mathit{he})=\nextExpr{e}{\Val}
  \end{cases}
\]

The expression \rust{\kwIf~$e_1$ $e_2$ \kwElse~$e_3$} behaves
similarly. Because the branches are only conditionally evaluated
depending on $e_1$'s value, they should not be considered as
sub-expressions to be evaluated. Hence, we only consider the condition
and obtain the same cases and results as for negation.
\[
  \nextExpr{\rust{\kwIf~$e_1$ $e_2$ \kwElse~$e_3$}}{M} =
  \begin{cases}
    (\rust{\kwIf~$e_1$ $e_2$ \kwElse~$e_3$}, \exprHole)&\text{if}~e_1\in \Val\\
    (e_1',\rust{\kwIf~$\mathit{he}$ $e_2$ \kwElse~$e_3$}) &\text{if}~e_1\not\in\Val\wedge(e_1',\mathit{he})=\nextExpr{e_1}{\Val}
  \end{cases}
\]

An assignment \rust{$\mathit{ae}$ = $e$} is evaluated by first
evaluating the right-hand side~$e$ to a value and then the left-hand
side~$\mathit{ae}$ to an assignee value. Hence, the second case, where
$e$ is not a value and may have further sub-expressions, behaves as
above. If $e$ is a value, we treat $\mathit{ae}$ similarly, but denote
that $\mathit{ae}$ is to be evaluated to an assignee value. Observe
that $\mathit{ae}$ may contain sub-expression that must be evaluated
to values, e.g., \lstinline|1 + 1| in
\lstinline|a[1 + 1] = 4 / 2|. Because assignments occur only on the
right-hand side, we do not consider~$M$.
\[
  \nextExpr{\rust{$\mathit{ae}$ = $e$}}{M} =
  \begin{cases}
    (\rust{$\mathit{ae}$ = $e$},\exprHole)&\text{if}~e\in\Val\wedge\mathit{ae}\in\AssignVal\\
    (e',\rust{$\mathit{ae}$ = $he$}) &\text{if}~e\not\in\Val\wedge(e',\mathit{he})=\nextExpr{e}{\Val}\\
    (\mathit{ae}',\rust{$\mathit{he}$ = $e$})&\text{if}~e\in\Val\wedge\mathit{ae}\not\in\AssignVal\wedge(\mathit{ae}',\mathit{he})=\nextExpr{\mathit{ae}}{\AssignVal}
  \end{cases}
\]

The second parameter matters when the expression can be an assignee
expression, as in that case sub-expressions should not be considered
if they are \emph{places} rather than values. For the index expression
\rust{$e_1$[$e_2$]}, if we are in an assignment context, $e_1$ is not
a sub-expression if it is a place, e.g., \pv{x} or
$\pv{x}[1]$. However, the indexing expression~$e_2$ must be evaluated
to a value.
\[
  \nextExpr{\rust{$e_1$[$e_2$]}}{M} =
  \begin{cases}
    (\rust{$e_1$[$e_2$]},\exprHole)&\text{if}~e_1\in M \wedge e_2\in\Val\\
    (e'_1,\rust{$\mathit{he}$[$e_2$]}) &\text{if}~e_1\not\in M \wedge (e'_1,\mathit{he})=\nextExpr{e_1}{M}\\
    (e'_2,\rust{$e_1$[$\mathit{he}$]}) &\text{if}~e_1\in M \wedge e_2 \not\in\Val \wedge (e'_2,\mathit{he})=\nextExpr{e_2}{\Val}
  \end{cases}
\]

We do not enumerate all the other definitions for the remaining
expressions here. See \zcref{sec:add-def-next} for the full
definitions.

This helper function allows us to express the pattern of evaluating
sub-expressions concisely as a local evaluation rule. We evaluate the
first sub-expression one step, attach the trace generated by the
sub-expression, and rebuild the expression with the sub-expression
replaced by its continuation. \zcref[S]{eq:local-hole} considers only
the cases where $\nextExpr{e}{\Val}=(e', \mathit{he})$ and
$\mathit{he}\ne\exprHole$; for all other expressions, the rule is not
applicable.
\evalrule{hole}{
  \valD{e} & = & \{\concatTr{\tau}{\cont{\exprFill{\mathit{he}}{e''}}} \mid \concatTr{\tau}{\cont{e''}}\in\valD{e'}\} \quad \text{where}~\nextExpr{e}{\Val}=(e', \mathit{he})\wedge\mathit{he}\ne\exprHole
}

With this rule in place, the remaining rules can assume that all
sub-expressions are (assignment) values, simplifying the presentation.

\subsubsection{Arithmetic and Logic Operators}
\label{sec:arithm-logic-oper}

For clarity, we split the rules for operators, which are typically
overloaded via traits, into multiple definitions for different
types. For \lstinline|bool|, negation is defined as:
\evalrule{not-bool}{
  \valD{\rust{!$v$}} & = &
  \begin{cases}
    \{\concatTr{\mkTr{\sigma}}{\cont{\falseSem}}\} &\text{if}~v=\trueSem \\
    \{\concatTr{\mkTr{\sigma}}{\cont{\trueSem}}\} &\text{if}~v=\falseSem
  \end{cases}
}

We present bitwise negation only for the case where $e$ is of type
\lstinline|u8|; all signed and unsigned integer types are
analogous. Here, $\concrMinus$ represents the mathematical
subtraction operation over $\mathbb{Z}$ rather than the \Rust
operator.
\evalrule{not-u8}{
  \valD{\rust{!$v$}} & = & \{\concatTr{\mkTr{\sigma}}{\cont{2^8 \concrMinus v \concrMinus 1}}\} 
}

Finally, for all other types, negation is resolved by calling the
\lstinline|not| function of the \lstinline|Not| trait, as in
\zcref{eq:local-not-other}. The trait is in module
\lstinline|core::ops|, hence, we add the module to the call. We stress
that the rules are modular, as which rule is applied depends strictly
on the local expression and its type, not the wider context.
\evalrule{not-other}{
  \valD{\rust{!$v$}} & = & \valD{\rust{core::ops::\trait{Not}::not($v$)}}
}

\begin{example}\label{ex:local-neg}
  Assume the program in \zcref{lst:running}. Let $\sigma_1$ be the
  state $[\pv{b}\mapsto \falseSem]$. The local evaluation of \pv{b} in
  $\sigma_1$ yields
  $\valB{\sigma_1}{\pv{b}} =
  \{\concatTr{\mkTr{\sigma_1}}{\cont{\falseSem}}\}$.  Locally
  evaluating \rust{!$\falseSem$} in $\sigma_1$ gives
  $\{\concatTr{\mkTr{\sigma_1}}{\cont{\trueSem}}\}$. The evaluation of
  \rust{!\pv{b}} in \zcref{line:not-b} in $\sigma_1$ is
  $\valB{\sigma_1}{!\pv{b}} =
  \{\concatTr{\tau}{\cont{\rust{!$e$}}} \mid \concatTr{\tau}{\cont{e}}
  \in \valB{\sigma_1}{\pv{b}}\}$, which we simplify to
  $\{\concatTr{\mkTr{\sigma_1}}{\cont{\rust{!$\falseSem$}}}\}$.
\end{example}

Sign negation is similar to bitwise negation, in that it is
overloaded. We present the version for \lstinline|i8| in
\zcref{eq:local-neg-i8}; other integer types are similar. If the
result of the negation would be outside the type's range (in
\lstinline|i8|'s case, $[-128, 127]$), a panic occurs.
\evalrule{neg-i8}{
  \valD{\rust{-$v$}} & = &
  \begin{cases}
    \{\concatTr{\mkTr{\sigma}}{\cont{\concrMinus v}}\} &\text{if}~{-128} < v < 128 \\
    \{\concatTr{\mkTr{\sigma}}{\cont{\rust{\panic()}}}\} &\text{if}~v=-128
  \end{cases}
}

For all other types, the \lstinline|Neg| trait's function is invoked.
\evalrule{neg-other}{
  \valD{\rust{-$v$}} & = & \valD{\rust{core::ops::\trait{Neg}::neg($v$)}}
}

Addition is handled similarly to negation above: If the sub-values
$v_1,v_2$ can be added without exceeding the bounds of the type, we
continue with this value; otherwise we emit a panic. For non-integer
types, the \lstinline|Add| trait is used. \zcref[S]{eq:local-add}
shows the case for \lstinline|u8|, where ${\concrPlus}$ again stands
for the mathematical addition over~$\mathbb{Z}$.
\evalrule{add}{
  \valD{\rust{$v_1$ + $v_2$}} & = &
  \begin{cases}
    \{\concatTr{\mkTr{\sigma}}{\cont{v_1 \concrPlus v_2}}\} &\text{if}~v_1 \concrPlus v_2 \le 255\\
    \{\concatTr{\mkTr{\sigma}}{\cont{\rust{\panic()}}}\} &\text{if}~v_1 \concrPlus v_2 > 255
  \end{cases}
}

Compound assignment proceeds by invoking the appropriate function of
the \lstinline|AddAssign| trait. Observe that the left-hand side is a
place and the right-hand side a
value.
\evalrule{assign-add}{
  \valD{\rust{$p$ += $v$}} & = & \{\concatTr{\mkTr{\sigma}}{\cont{\rust{\trait{AddAssign}::add\_assign(\&\kwMut $p$, $v$)}}}\}
}

\subsubsection{References}
\label{sec:references}

References are either \emph{shared} (also called \enquote{immutable})
or \emph{mutable}. As discussed in \zcref{sec:basics}, we represent
shared references as $\sRefVal{p}{b}$ and mutable references as
$\mRefVal{p}{b}$, where $p$ is the borrowed place and $b$ is a borrow
identifier.

\zcref[S]{eq:local-borrow-sh,eq:local-borrow-mut} describe the
evaluation of borrowing from a place~$p$ to create a shared and
mutable reference, respectively. Note that
\zcref{eq:local-borrow-sh,eq:local-borrow-mut} create \emph{fresh}
borrow identifiers, so that two references created by different
expressions are distinct.
\evalrule{borrow-sh}{
  \valD{\rust{\&$p$}} & = & \{\concatTr{\mkTr{\sigma}}{\cont{\sRefVal{p}{b}}} \mid b \in \BId, b~\text{fresh}\}
}
\evalrule{borrow-mut}{
  \valD{\rust{\&\kwMut $p$}} & = & \{\concatTr{\mkTr{\sigma}}{\cont{\mRefVal{p}{b}}} \mid b \in \BId, b~\text{fresh}\}
}

\begin{example}\label{ex:local-mref}
  Continuing the running example, evaluating the mutable reference
  expression in \zcref{line:call-set-all} in a state
  $\sigma_2=[\pv{arr}\mapsto[8,8,8]]$ yields
  $\valD{\rust{\&\kwMut
      arr}}=\{\concatTr{\mkTr{\sigma_2}}{\cont{\mRefVal{\pv{arr}}{b}}}
  \mid b \in \BId\}$. Observe, that we do not evaluate $\pv{arr}$ to
  create the mutable reference: a variable is a fully evaluated place
  and the reference need not know the borrowed variable's value. The
  generated set is infinite if we assume $\BId$ to be infinite.
\end{example}

Dereferencing an expression (\rust{*$v$}) has multiple possible
evaluations. If $v$ is of a pointer type (\lstinline|&T| or
\lstinline|&mut T|), dereferencing evaluates to the referenced
value. However, this is only legal if \lstinline|T| implements the
\lstinline|Copy| trait.  \zcref[S]{eq:local-deref} handles this case.
\evalrule{deref}{
  \valD{\rust{*$v$}} & = & \{\concatTr{\mkTr{\sigma}}{\cont{\sigma(p)}}\} \quad\text{if}~v=\sRefVal{p}{b}\vee v=\mRefVal{p}{b}
}

If $v$'s type is not a pointer type, dereferencing is evaluated via
the \lstinline|Deref| trait. As the trait's function takes a shared
reference, we also add a borrow expression.
\evalrule{deref-trait}{
  \valD{\rust{*$v$}} & = &
    \{\concatTr{\mkTr{\sigma}}{\cont{\rust{*core::ops::\trait{Deref}::deref(\&$v$)}}}\}
}

\subsubsection{Assignments}
\label{sec:assignments}

Assignments encompass a range of possible evaluations depending on the
left-hand side (the \enquote{assignee expression}). We split them into
different rules for clarity.  \zcref[S]{eq:local-assign-underscore}
handles the case where we assign to \rust{\_}, i.e., to nowhere. This
assignment does not affect the state and evaluates to the unit value.
\evalrule{assign-underscore}{
  \valD{\rust{\_ = $v$}} & = & \{\concatTr{\mkTr{\sigma}}{\cont{\rust{()}}}\}
}
Assigning to a program variable simply updates the state at this
variable.
\evalrule{assign-pv}{
  \valD{\rust{x = $v$}} & = & \{\concatTr{\consTr{\update{\sigma}{\pv{x}}{v}}{\mkTr{\sigma}}}{\cont{\rust{()}}}\}
}

An assignment to the field on an array of type \lstinline|[T; N]|,
where \lstinline|T| is the array's element type and \lstinline|N| the
array's size, has two cases: If the index is less than the array's
length, the state is updated at that location; otherwise, a panic
occurs. The index must be of type \lstinline|usize|, hence it cannot
be negative (see \zcref{eq:local-assign-index}). Index assignments to
other types typically go through the \lstinline|IndexMut| trait. In
\zcref{eq:local-assign-index}, $v_1,v_2\in\Val$ and $p\in\Place$ is a
place of type \lstinline|[T; N]|, i.e., a variable of the correct
type, a field access like \lstinline|x.f|, or an inner array
\lstinline|x.f[2]|. \zcref[S]{eq:local-assign-field} work similarly
for struct and tuple fields, except no panic can occur.
\evalrule{assign-index}{
  \valD{\rust{$p$[$v_1$] = $v_2$}} & = &
  \begin{cases}
    \{\concatTr{\consTr{\update{\sigma}{p[n]}{v}}{\mkTr{\sigma}}}{\cont{\rust{()}}}\} &\text{if}~n<\pv{N}\\
    \{\concatTr{\mkTr{\sigma}}{\cont{\rust{\panic()}}}\} &\text{otherwise}
  \end{cases}
}
\evalrule{assign-field}{
  \valD{\rust{$p$.$\mathit{ident}$ = $v$}} & = & \{\concatTr{\consTr{\update{\sigma}{p.\mathit{ident}}{v}}{\mkTr{\sigma}}}{\cont{\rust{()}}}\}
}

The dereferencing expression evaluates a reference to a place~$p$
which can then be assigned a value by the expression \rust{*$r$ =
  $v$}, where $r$ is a reference. Because assigning to a dereferenced
expression is only permitted when the reference is a mutable
reference, we know that $r = \mRefVal{p}{b}$ for some
$p\in\Place, b\in\BId$. \zcref[S]{eq:local-assign-deref} simply updates
the corresponding place.
\evalrule{assign-deref}{
  \valD{\rust{*$\mRefVal{p}{b}$ = $v$}} & = & \{\concatTr{\consTr{\update{\sigma}{p}{v}}{\mkTr{\sigma}}}{\cont{\rust{()}}}\}
}

Next, we deal with the \emph{destructuring assignments}, where parts
of complex values can be assigned to multiple locations at once. Since
the assignment is correctly typed, we know that the value on the right
fits the assignee value on the left. Rules
\zcref[noname]{eq:local-assign-destr-array,eq:local-assign-destr-tuple,eq:local-assign-destr-struct,eq:local-assign-destr-tuple-struct}
deal with arrays, tuples, structs, and tuple structs,
respectively. All are evaluated by introducing a block with
assignments to the individual assignee values. Let
$T_\text{assign}:=\{\concatTr{\mkTr{\sigma}}{\cont{\rust{\{$\mathit{av}_1$=$v_1$;
      $\ldots$; $\mathit{av}_n$=$v_n$\}}}}\}$ in the following, as all
these assignments lead to the same set of traces.
\begin{align}
  \valD{\rust{[$\mathit{av}_1$, $\ldots$, $\mathit{av}_n$] = [$v_1,\ldots,v_n$]}} & = T_\text{assign} \label{eq:local-assign-destr-array} \\
  \valD{\rust{($\mathit{av}_1$, $\ldots$, $\mathit{av}_n$) = ($v_1,\ldots,v_n$)}} & = T_\text{assign} \label{eq:local-assign-destr-tuple} \\
  \valD{\rust{x \{$\mathit{id}_1$:$\mathit{av}_1$, $\ldots$, $\mathit{id}_n$:$\mathit{av}_n$\} = x \{$\mathit{id}_1$:$\mathit{v}_1$, $\ldots$, $\mathit{id}_n$:$\mathit{v}_n$\}}} & = T_\text{assign} \label{eq:local-assign-destr-struct} \\
  \valD{\rust{x($\mathit{av}_1$, $\ldots$, $\mathit{av}_n$) = x($v_1,\ldots,v_n$)}} & = T_\text{assign} \label{eq:local-assign-destr-tuple-struct}
\end{align}

\begin{example}\label{ex:local-assign}
  We demonstrate the state changes incurred by an assignment by
  continuing \zcref{ex:local-neg}: The assignment \lstinline|c = !b|
  (\zcref{line:not-b}) in state~$\sigma_1$ gives us the traces
  $\{\concatTr{\tau}{\cont{\rust{c =
        $e$}}} \mid \concatTr{\tau}{\cont{e}} \in
  \valD{\rust{!b}}\}=\{\concatTr{\mkTr{\sigma_1}}{\cont{\rust{c =
        !$\falseSem$}}}\}$. The assignment \lstinline|c = true|, the
  result of multiple local evaluations, yields:
  $\{\concatTr{\consTr{\update{\sigma_1}{\pv{c}}{\trueSem}}{\mkTr{\sigma_1}}}{\cont{\rust{()}}}\}$.
\end{example}

\subsubsection{Compound Types}
\label{sec:compound-types}

Array expressions are evaluated by evaluating their elements
left-to-right. This is entirely handled by the rules in
\zcref{sec:eval-sub-expr}. The same approach works for tuples, struct,
and enum expressions.

An array repeat expression creates an array value with the inner value
repeated $n$ times.
\evalrule{repeat}{
  \valD{\rust{[$v$; $n$]}} & = & \{\concatTr{\mkTr{\sigma}}{\cont{\rust{[$\overbrace{v, \ldots, v}^{n~\text{times}}$]}}}\}
}

To access a struct field or tuple field, we simply take the field's
value from the state.
\evalrule{field}{
  \valD{\rust{$v$.$\mathit{ident}$}} & = & \{\concatTr{\mkTr{\sigma}}{\cont{v'}}\} \quad\text{where}~v=\rust{x \{$\mathit{ident}: v', \ldots$\}}
}

Accessing the index of an array is similar to a struct field, besides
the necessary bounds check: if the index is at least the array's
length, a panic occurs. Since we know that $v'$ is of type
\lstinline|usize|, we need not check for negative values. We only show
the case for arrays; all other types are translated to a trait
function call.
\evalrule{index}{
  \valD{\rust{$v$[$v'$]}} & = &
  \begin{cases}
    \{\concatTr{\mkTr{\sigma}}{\cont{v_{v'}}}\} &\text{if}~v=[v_0, \ldots, v_n], v' \le n\\
    \{\concatTr{\mkTr{\sigma}}{\cont{\rust{\panic()}}}\} &\text{otherwise}
  \end{cases}
}

\subsubsection{Basic Control Flow}
\label{sec:control-flow}

Evaluating a conditional expression such as \rust{\kwIf $v$ $e_1$
  \kwElse $e_2$} is straightforward: If condition $v$ is the true
value, we continue with the then expression $e_1$; if $v$ is false, we
continue with $e_2$.
\evalrule{if-else}{
  \valD{\rust{\kwIf $v$ $e_1$ \kwElse $e_2$}} & = &
  \begin{cases}
    \{\concatTr{\mkTr{\sigma}}{\cont{e_1}}\} &\text{if}~v=\trueSem\\
    \{\concatTr{\mkTr{\sigma}}{\cont{e_2}}\} &\text{if}~v=\falseSem
  \end{cases}
}

\begin{example}\label{ex:local-if}
  When evaluating the \lstinline|if| in \zcref{line:if} in a
  state~$\sigma_3=[\pv{n}\mapsto 4]$, we evaluate the condition. Once
  the condition is simplified to $\falseSem$, we have
  $\valD{\rust{\kwIf
      $\falseSem$ \{1\} \kwElse \{\kwLet left = \ldots\}}} =
  \{\concatTr{\mkTr{\sigma_3}}{\cont{\rust{\{\kwLet left =
        \ldots\}}}}\}$.
\end{example}

A match expression is evaluated step-wise. First, we consider the
first arm. This arm contains a pattern~$\mathit{pat}$ and a
body~$e$. If the pattern's structure fits the matched value (the
\emph{scrutinee}), we want to continue with this arm. However, pattern
matching may introduce new variable bindings: A match arm like
\lstinline|x => {x + 2}|, where \pv{x} is some integer variable, does
not compare the scrutinee's value with the value of an existing
variable~\pv{x}. Instead, a new variable~\pv{x} is added to the scope
and it is bound to the scrutinee's value.

To accommodate this binding pattern, we use the partial function
$\mathrm{patMatch}_\sigma: \mathit{Pat} \times \Val \rightharpoonup
\mathit{State}$, which takes a pattern and a value and returns an
extension of state~$\sigma$, that is, new values may be introduced but
no existing values changed.

We cover a few important cases of the function
$\mathrm{patMatch}_\sigma$ below. The wildcard pattern
\enquote{\rust{\_}} always matches, values (i.e., literal patterns)
match if they are exactly the scrutinee's value, binding patterns
(\rust{x}) introduce new variables, and the alternative pattern
considers both sub-patterns and returns the first successful
match. The other patterns follow the same approach. Note that we use
the symbol \enquote{$\undefSym$} to denote undefinedness.
\begin{equation*}
  \begin{array}{@{}r@{\,}c@{\,}l@{}}
    \patMatch{\rust{\_}}{v} &=& \sigma\\
    \patMatch{\rust{$v_1$}}{v_2} &=&
                                     \begin{cases}
                                       \sigma &\text{if}~v_1=v_2\\
                                       \undefSym &\text{otherwise}
                                     \end{cases}\\
    \patMatch{\rust{x}}{v} &=& \update{\sigma}{\pv{x}}{v}\\
    \patMatch{\rust{$\mathit{pat}_1$ | $\mathit{pat}_2$}}{v}  &=&
                                                               \begin{cases}
                                                                 \sigma_1 &\text{if}~\patMatch{\mathit{pat}_1}{v}=\sigma_1\\
                                                                 \sigma_2 &\text{if}~\patMatch{\mathit{pat}_1}{v}\undefSym\wedge\patMatch{\mathit{pat}_2}{v}=\sigma_2\\
                                                                 \undefSym &\text{otherwise}
                                                               \end{cases}\\
                            &\ldots&
  \end{array}
\end{equation*}

If the first pattern does not match, we remove its arm from the match
expression and continue with the rest. The \Rust compiler ensures that
the match arms are exhaustive, hence we need not cover the case of all
arms being removed.
\evalrule{match}{
  \valD{\rust{\kwMatch $v$ \{$\mathit{pat}$ => $e$, $\many{\mathit{arm}}$\}}} 
&=& \begin{cases}
    \{\concatTr{\consTr{\sigma'}{\mkTr{\sigma}}}{\cont{e}}\} &\text{if}~\patMatch{\mathit{pat}}{v}=\sigma'\\
    \{\concatTr{\mkTr{\sigma}}{\cont{\rust{\kwMatch $v$ \{$\many{\mathit{arm}}$\}}}}\} &\text{if}~\patMatch{\mathit{pat}}{v}\undefSym
  \end{cases}
}

\subsubsection{Function and Method Calls}
\label{sec:funct-meth-calls}

Evaluating functions and methods introduces additionally complexity
that must be managed by the semantics. To model the lifecycle of a
call---invocation, execution, computation, and reaction to the
computation---it is helpful to directly express these points in the
semantics and the generated traces. Hence the \emph{events} we allow
in our traces. We make use of four call events in this section, one
for each point of the lifecycle of a call. These four events follow
the four event semantics of \textcite{DDJO2012}. Including these
events directly in the trace permits easy, declarative restrictions on
the order of events to ensure correct evaluation order.

To evaluate calls, we emit an \emph{invocation event} with the
function identifier, a unique but arbitrary call
identifier~$i\in\CId$, where $\CId$ is the set of call identifiers,
and the call arguments. We discuss below, how we ensure uniqueness of
the call identifiers. Because the trace composition rules take care of
actually executing the called function, the local evaluation rule
needs to put a placeholder in the continuation marker that denotes
that this expression will evaluate to the value returned by the call
with identifier~$i$. For that, we introduce the \emph{get} expression
\rust{?$i$} with $i\in\CId$. This is necessary to mark that this
expression cannot be resolved until the called function is completely
executed and to access the called function's return value.
\evalrule{call}{
  \valD{\rust{$v$($\many{v}$)}}   &=& \{\concatTr{\consTr{\sigma}{\consTr{\invEv{v}{i}{\many{v}}}{\mkTr{\sigma}}}}{\cont{\rust{?$i$}}} \mid i\in\CId\} 
}

Method calls, e.g., \lstinline|x.m(1 + 1)| behave similarly to regular
function calls but we make use of the method lookup function
$\mathrm{method2Fn}(v,m)$, which yields the function name for the
method~$m$ on callee~$v$. Assume in \zcref{eq:local-method-call} for
brevity that $\mathit{foo}=\mathrm{method2Fn}(v,m)$. Observe that the
callee is added as the first argument of the call identifier. This
matches \Rust's treatment of methods~\autocite{RustRef}.
\evalrule{method-call}{
  \valD{\rust{$v$.$m$($\many{v}$)}} & = &
    \{\concatTr{\consTr{\sigma}{\consTr{\invEv{\mathit{foo}}{i}{v,\many{v}}}{\mkTr{\sigma}}}}{\cont{\rust{?$i$}}}\mid i \in\CId\} 
}

To evaluate a function's body, we extend $\valDnoargs$ to a function
declaration, emitting an \emph{invocation reaction event}, introducing
symbolic variables~$\many{\pv{X}}$ for the parameter's values, and
continuing with the body where the parameters $\many{\pv{x}}$ have
been substituted with $\many{\pv{X}}$. We use the notation
$e\subst{\many{\pv{x}}}{\many{\pv{X}}}$ for syntactically substituting
all $\pv{x} \in \many{\pv{x}}$ with the corresponding
$\pv{X} \in \many{\pv{X}}$ in $e$.
\evalrule{fn-body}{
  \valD{\rust{\kwFn foo($\many{\pv{x}}$) $e$}} &=& \{\concatTr{\consTr{\sigma[\many{\pv{X}\mapsto\ast}]}{\consTr{\invREv{\pv{foo}}{i}{\many{\pv{X}}}}{\mkTr{\sigma}}}}{\cont{e\subst{\many{\pv{x}}}{\many{\pv{X}}}}} \mid i\in\CId,\many{\pv{X}}~\text{fresh}\}
}

\begin{example}\label{ex:local-call}
  Consider the call in \zcref{line:init-call-fac} and assume that the
  callee is already evaluated to the function
  identifier~$\mathit{fac}$. Both callee and argument are already
  values, hence, we have
  $\valB{\sigma_1}{\rust{$\mathit{fac}$(4)}}=\{\concatTr{\consTr{\sigma_1}{\consTr{\invEv{\mathit{fac}}{i}{1}}{\mkTr{\sigma_1}}}}{\cont{\rust{?$i$}}}
  \mid i\in\CId\}$. The get expression \rust{?$i$} creates a
  computation reaction event and we must ensure, as we do below, that
  this is only evaluated directly after the function call with id~$i$
  is completed. Evaluation of the body gives us
  \[\valB{\sigma_1}{\rust{\kwFn fac(n)
      $e$}}=\{\concatTr{\consTr{\update{\sigma_1}{\pv{X}}{\ast}}{\consTr{\invREv{\pv{fac}}{i}{\pv{X}}}{\mkTr{\sigma}}}}{\cont{e\subst{\pv{n}}{\pv{X}}}}
  \mid i\in\CId,\pv{X}~\text{fresh}\}\] where $e$ is the body of
\lstinline|fac| in \zcref{lst:running}.
\end{example}

An expression \rust{\kwReturn~$v$} emits a \emph{return event} with
the returned value. The return expression evaluates to the unit value,
because the returned value is not to be considered for the surrounding
context. Instead, the event is used to transmit the return value to
the caller.
\evalrule{return}{
  \valD{\rust{\kwReturn $v$}} & = & \{\concatTr{\consTr{\sigma}{\consTr{\retEv{v}}{\mkTr{\sigma}}}}{\cont{\rust{()}}}\}
}

The get expression emits a \emph{computation reaction event} and
introduces a fresh symbolic variable~\pv{X} which represents the value
returned by call~$i$.
\evalrule{call-get}{
  \valD{\rust{?$i$}} = \{\concatTr{\consTr{\update{\sigma}{\pv{X}}{\ast}}{\consTr{\compREv{i}{\pv{X}}}{\mkTr{\sigma}}}}{\cont{\pv{X}}}\mid\pv{X}~\text{fresh}\}
}

\begin{example}\label{ex:local-comp}
  Combining the results of
  \zcref{ex:sync-run,ex:local-neg,ex:local-mref,ex:local-assign,ex:local-call},
  we obtain the following traces:
  \begin{align*}
    \valB{\sigma_0}{\rust{b = \rsFalse}} =\,& \{\concatTr{\consTr{\sigma_1}{\mkTr{\sigma_0}}}{\cont{\rust{()}}}\}\\
    \valB{\sigma_1}{\rust{c = !b}} =\,& \{\concatTr{\mkTr{\sigma_1}}{\cont{\rust{c = !$\falseSem$}}}\}\\
    \valB{\sigma_1}{\rust{c = !$\falseSem$}} =\,& \{\concatTr{\mkTr{\sigma_1}}{\cont{\rust{c = $\trueSem$}}}\}\\
    \valB{\sigma_1}{\rust{c = $\trueSem$}} =\,& \{\concatTr{\consTr{\sigma_2}{\mkTr{\sigma_1}}}{\cont{\rust{()}}}\}\\
    \valB{\sigma_2}{\rust{f = fac(4)}} =\,& \{\concatTr{\consTr{\sigma_2}{\consTr{\invEv{\pv{fac}}{i}{4}}{\mkTr{\sigma_1}}}}{\cont{\rust{f = ?$i$}}} \mid i \in \CId\}\\
    \valB{\sigma_3}{\rust{arr = [8; 3]}} =\,& \{\concatTr{\mkTr{\sigma_3}}{\cont{\rust{arr = [8, 8, 8]}}}\}\\
    \valB{\sigma_3}{\rust{arr = [8, 8, 8]}} =\,& \{\concatTr{\consTr{\sigma_4}{\mkTr{\sigma_3}}}{\cont{\rust{()}}}\}\\
    \valB{\sigma_4}{\rust{set\_all(\&\kwMut~arr, f)}} =\,& \{\concatTr{\mkTr{\sigma_4}}{\cont{\rust{set\_all(\mRefVal{\pv{arr}}{b}, f)}}} \mid b\in\BId\}
  \end{align*}
  with
  $\dom(\sigma_0)=\emptyset,
  \sigma_1=\update{\sigma_0}{\pv{b}}{\falseSem},
  \sigma_2=\update{\sigma_1}{\pv{c}}{\trueSem},
  \sigma_3=\update{\sigma_2}{\pv{f}}{24},
  \sigma_4=\update{\sigma_3}{\pv{arr}}{[8, 8, 8]}$. Clearly, the
  resulting trace of this series of computations must be of the form
  \begin{align*}
    \consTr{\sigma_4}{\consTr{\sigma_3}{\consTr{\cdots}{\consTr{\invEv{\pv{fac}}{i}{4}}{\consTr{\sigma_2}{\consTr{\sigma_1}{\mkTr{\sigma_0}}}}}}}\ .
  \end{align*}
  What is missing so far is
  \begin{enumerate*}[(1)]
  \item local rules dealing with variable declarations as in
    \zcref{lst:running}, not just assignments,
  \item local rules dealing with blocks, and
  \item a way to compose these individual local rules into one single
    trace.
  \end{enumerate*}
  We deal with the first two items immediately and return to the third
  one in \zcref{sec:trace-comp-rules}.
\end{example}

\subsubsection{Block Expressions}
\label{sec:block-expressions}

When a block is fully evaluated and, hence, removed from the
continuation, we want to \emph{drop} all local variables this block
introduced, i.e., remove them from the state.  This requires
bookkeeping of which variables are introduced by a block. Thus, we add
a syntactic extension called \emph{scoped block expressions}:
\rust{\rib{$\overline{\pv{x}}$}{$\overline{s}$ $e$}} stands for the
block expression \rust{\{$\overline{s}$ $e$\}}, which has so far
introduced the local variables~$\overline{\pv{x}}$. For brevity, we
also write \rust{\{$\overline{s}$ $e$\}} for the scoped block
expression \rust{\rib{$\emptyset$}{$\overline{s}$ $e$}}.

Expressions and statements in blocks are also evaluated
piecemeal. Declarations with initial value are split into the
declaration and the assignment. For simplicity, we only show rules for
declaring variables.
\evalrule{let-assign}{
  \valD{\rust{\rib{$\overline{\pv{x}}$}{\kwLet y = $e_1$; $\overline{s}$ $e_2$}}} & = & \valD{\rust{\rib{$\overline{\pv{x}}$}{\kwLet y; y = $e_1$; $\overline{s}$ $e_2$}}}
}

A declaration of program variable~\pv{y} is evaluated by adding~\pv{y}
to the declared variables in the block. Additionally, to avoid name
clashes in our state caused by, e.g., evaluating two function bodies
in parallel, we rename~\pv{y} to a fresh program variable~$\pv{y}'$.
As there are potentially infinitely many fresh~$\pv{y}'$,
\zcref{eq:local-let} may introduce infinitely many traces; which of
these is chosen is covered in \zcref{sec:trace-comp-rules}.
\evalrule{let}{
  \valD{\rust{\rib{$\overline{\pv{x}}$}{\kwLet y; $\overline{s}$ $e$}}} & = & \{\concatTr{\mkTr{\sigma}}{\cont{\rust{\rib{$\overline{\pv{x}},\pv{y}$}{$\overline{s}$ $e$}}\subst{\pv{y}}{\pv{y}'}}} \mid \pv{y}'~\text{fresh}\}
}

The first expression in a block is either a value or a complex
expression. The latter case is handled by
\zcref{sec:eval-sub-expr}. In the former case, we can remove it as in
\zcref{eq:local-block-expr-stmt}.
\evalrule{block-expr-stmt}{
  \valD{\rust{\rib{$\overline{\pv{x}}$}{$v$; $\overline{s}$ $e$}}} & = & \{\concatTr{\mkTr{\sigma}}{\cont{\rust{\rib{$\overline{\pv{x}}$}{$\overline{s}$ $e$}}}}\}
}

A block containing only the final value~$v$ is removed, all variables
introduced by the block are dropped, i.e., removed from the state, and
we continue with $v$ only.
\evalrule{block-expr}{
  \valD{\rust{\rib{$\overline{\pv{x}}$}{$v$}}} & = & \{\concatTr{\consTr{(\sigma\setminus\many{\pv{x}})}{\mkTr{\sigma}}}{\cont{v}}\} 
}

Panics, in turn, are handled by emitting a \emph{panic event} and
leaving the unit value behind. Abrupt termination is modeled during
composition.
\evalrule{panic}{
  \valD{\rust{\panic()}} & = & \{\concatTr{\consTr{\sigma}{\consTr{\panicEv{}}{\mkTr{\sigma}}}}{\cont{\rust{()}}}\}
}

\subsubsection{Loops}
\label{sec:loops}

The more complex control flow options of a loop, including
\lstinline|break| and \lstinline|continue| prove to be a challenge, as
they require the context of the surrounding loops, which complicates
control flow, especially when labeled jumps are involved.  We model
loops by extending the syntax with \emph{loop frames,} taking
inspiration from \emph{loop scopes} in deductive program
verification~\autocite{SW2017}. These frames permit tracking of the
surrounding loop context while leaving the original loop---to which
the program might jump again---unaltered. A loop frame
$\loopFrame{l}{e}$ states that we are evaluating the loop with
label~$l$ and that the current expression to be evaluated of the
loop's body is~$e$. Once the loop body is fully evaluated, the loop is
evaluated from the beginning again. \zcref[S]{fig:loop} demonstrates
the possible relevant control flow options in a loop. Breaks jump out
of the loop addressed by its label; continues jump to the next
iteration of its corresponding loops. Note that the loop with label
\lstinline|'l1| has an integer value (\pv{i}, once the corresponding
\lstinline|break| in \zcref{line:break-l2} is executed) and the loop
\lstinline|'l2| has the unit value \lstinline|()|.

\begin{figure}[h]
  \centering
\begin{lstlisting}[numbers=left]
'l1: loop {%\tikz[remember picture] \node [] (l1) {};%
  'l2: loop {%\tikz[remember picture] \node [] (l2) {};%
    if i < -3 {
      continue 'l2%\tikz[remember picture] \node [] (c2) {};%
    } else if i < 0 {
      continue 'l1%\tikz[remember picture] \node [] (c1) {};%
    } else if i == 3 {
      break 'l1 i%\tikz[remember picture] \node [] (bl1) {};% %\label{line:break-l2}%
    } else if i == 4 {
      break 'l2 ()%\tikz[remember picture] \node [] (bl2) {};%
    } else { i += 2 }
  }%\tikz[remember picture] \node [] (el2) {};%
}%\tikz[remember picture] \node [] (el1) {};%
\end{lstlisting}
  \begin{tikzpicture}[remember picture, overlay]
    \node [right = 4cm of el1] (el1p4) {};
    \node [right = 2cm of bl1] (bl1p2) {};
    \draw[->, thick] (bl1) -- (bl1p2.center) -- ($(el1p4)!(bl1p2.center)!(el1)$) -- (el1);
    \node [right = 4cm of el2] (el2p4) {};
    \node [right = 1cm of bl2] (bl2p1) {};
    \draw[->, thick] (bl2) -- (bl2p1.center) -- ($(el2p4)!(bl2p1.center)!(el2)$) -- (el2);
    \node [right = 4cm of l2] (l2p4) {};
    \node [right = 1cm of c2] (c2p1) {};
    \draw[->, thick] (c2) -- (c2p1.center) -- ($(l2p4)!(c2p1.center)!(l2)$) -- (l2);
    \node [right = 4cm of l1] (l1p4) {};
    \node [right = 2cm of c1] (c1p2) {};
    \draw[->, thick] (c1) -- (c1p2.center) -- ($(l1p4)!(c1p2.center)!(l1)$) -- (l1);
  \end{tikzpicture}
  \Description{Two nested loops. The inner loop contains a long
    if-else-chain with continues and breaks to both of the
    loops. Arrows denote the different possible jumps from the breaks
    and continues.}
  \caption{Possible control-flow paths of a loop expression in \Rust.}
  \label{fig:loop}
\end{figure}

To start a loop's evaluation, it is translated to a loop frame.
\evalrule{loop}{
  \valD{\rust{'$l$: \kwLoop $e$}} & = & \valD{\loopFrame{l}{e}}
}

In \Rust, \lstinline|break| expressions contain an (optional)
expression. These expressions mark the value of the loop or block to
which the \lstinline|break|'s label refers. A \lstinline|loop|
expression's value thus depends on the \lstinline|break| expression
contained therein.

A \lstinline|break| with value $v$ in a loop is resolved in one of two
ways: If the label matches the active loop frame's label, we continue
with the expression.  Otherwise, we replace the entire frame with the
break expression. The following rules consider a break expression in
the context of a loop frame and a block. The latter may contain
additional statements and an expression after the
\lstinline|break|. These are discarded once the \lstinline|break| is
evaluated.
\evalrule{break}{
  \valD{\loopFrame{l_1}{\rust{\rib{$\many{\pv{x}}$}{\kwBreak~'$l_2$ $v$; $\many{s}$; $e$}}}} 
  &=&\begin{cases}
    \{\concatTr{\mkTr{\sigma}}{\cont{\rust{\rib{$\many{\pv{x}}$}{$v$}}}}\} &l_1=l_2\\
    \{\concatTr{\mkTr{\sigma}}{\cont{\rust{\rib{$\many{\pv{x}}$}{\kwBreak~'$l_2$ $v$}}}}\} &l_1\ne l_2
  \end{cases}
}

When we encounter a \lstinline|continue| expression with label $l_2$
in a loop frame, we evaluate the loop with label $l_2$, using the
function \enquote{$\getloopSym$,} which maps labels to loops. We
maintain the surrounding block to maintain the scope, which preserves
proper dropping of the declared variables.
\evalrule{continue}{
  \valD{\loopFrame{l_1}{\rust{\rib{$\many{\pv{x}}$}{\kwContinue '$l_2$; $\many{s}$; $e$}}}} & = & \{\concatTr{\mkTr{\sigma}}{\cont{\rust{\{\rib{$\many{\pv{x}}$}{()}; $\getloop{'l_2}$\}}}}\}
}

If the loop frame's expression is not a continue or break, we
re-insert the loop of the active label, repeating execution. The value
of the loop's body is ignored, as only break expressions affect the
loop's value.
\evalrule{loop-step}{
  \valD{\loopFrame{l}{v}} & =&
  \{\concatTr{\mkTr{\sigma}}{\cont{\getloop{'l}}}\}
}

\begin{example}
  Evaluating a loop expression
  \lstinline|'l loop { if n == 0 { break 'l true; } else {()} n -= 1; }| in
  state $\sigma_2=[\pv{n}\mapsto 1]$ leads to
  $\valB{\sigma_2}{\rust{'l \kwLoop
      $\ldots$}}=\valB{\sigma_2}{\loopFrame{\pv{l}}{\rust{\{\kwIf
        $\ldots$\}}}}$, which we further evaluate to
  $\{\concatTr{\mkTr{\sigma_2}}{\cont{\loopFrame{\pv{l}}{\rust{\{\kwIf 1
          == 0 $\ldots$\}}}}}\}$. Evaluating
  $\valB{\sigma_3}{\loopFrame{\pv{l}}{\rust{\{\kwBreak~'l \rsTrue;\}}}}$,
  where $\sigma_3=\update{\sigma_2}{\pv{n}}{0}$, yields
  $\{\concatTr{\mkTr{\sigma_3}}{\cont{\rust{\{\rsTrue\}}}}\}$.
\end{example}

\subsection{Well-Formedness of Traces}
\label{sec:well-formedness}

Without any restrictions, the traces generated by our semantics do not
necessarily describe possible program behavior, as they encompass
traces with undesired behavior, e.g., a function body evaluation where
the parameters do not match the call arguments. The well-formedness
predicate $\mathit{wf}$ restricts the set of generated traces to
ensure a certain ordering and matching between events. This predicate
is defined over shining traces and hence defines additional
restrictions on top of \zcref{def:wf-trace}. We define the
well-formedness predicate recursively: An empty trace is always
well-formed, as is the trace $\mkTr{\sigma}$. States do not influence
the well-formedness, and a panic event can only occur as the final
event and it is followed by exactly one state.
\noindent\begin{equation*}
  \begin{array}{@{}r@{\,}c@{\,}l@{}}
    \wf{\epsilon} &=& \mathit{true} \\
    \wf{\consTr{\sigma}{\epsilon}} &=& \mathit{true} \\
    \wf{\consTr{\sigma}{\consTr{t}{\sh}}} &=& \wf{\consTr{t}{\sh}} \wedge \panicEv\not\in\sh\\
    \wf{\consTr{\panicEv}{\sh}} &=& \wf{\sh}
  \end{array}
\end{equation*}

Every invocation event must have a unique call identifier. For the
invocation reaction, we require that the last event prior is an
invocation event with matching function name, call id, and
arguments. This requirement also implies that each invocation is
reacted to at most once.
\noindent\begin{align}
  \wf{\consTr{\invEv{\mathit{foo}}{i}{\many{v}}}{\sh}} &\,= \wf{\sh} \wedge \nexists\mathit{foo}',\many{v}'.\ \invEv{\mathit{foo}'}{i}{\many{v}'}\in\sh  \label{eq:wf-invEv}\\
  \wf{\consTr{\invREv{\mathit{foo}}{i}{\many{v}}}{\sh}} &\,= \wf{\sh} \wedge \big(\invEv{\mathit{foo}}{i}{\many{v}}=\lastEv(\sh)\big) \label{eq:wf-invREv}
\end{align}

No restrictions are placed on the return event, while the computation
reaction event---similarly to the invocation's case---must follow
immediately its corresponding return event.
\noindent\begin{align}
  \wf{\consTr{\retEv{v}}{\sh}} &= \wf{\sh} \label{eq:wf-retEv} \\
  \wf{\consTr{\compREv{\mathit{foo}}{v}}{\sh}} &= \wf{\sh} \wedge \big(\compEv{\mathit{foo}}{v}=\lastEv(\sh)\big) \label{eq:wf-compREv}
\end{align}

\begin{example}
  The trace
  $\consTr{\sigma_0}{\consTr{\invEv{\pv{foo}}{\mathit{cid}_0}{3}}{\consTr{\sigma_0}{\consTr{\panicEv}{\mkTr{\sigma_0}}}}}$
  is not well-formed, as the panic event is succeeded by an invocation
  event. However, the trace
  $\consTr{\update{\sigma_0}{\pv{x}}{3}}{\consTr{\invREv{\pv{foo}}{\mathit{cid}_0}{3}}{\consTr{\sigma_0}{\consTr{\invEv{\pv{foo}}{\mathit{cid}_0}{3}}{\mkTr{\sigma_0}}}}}$
  is well-formed as the invocation reaction event matches the
  invocation event and follows it immediately.
\end{example}

\subsection{Trace Composition Rules}
\label{sec:trace-comp-rules}

So far, our semantics provides the local, symbolic traces for each
expression and a definition of \enquote{correct} (well-formed) traces.
The next step in an \ac{LAGC} semantics is to compose together the
local traces generated in \zcref{sec:local-evaluation} into
well-formed, shining traces (see \zcref{def:wf-trace}). We define
\emph{trace composition rules}, which govern how traces are combined.

Trace composition rules are transition rules and define the relation
$\to$ between two \emph{configurations}. In general, a configuration
consists of a shining trace and some data structures over continuation
markers. The former is the trace generated so far, the latter
represents in some form the remaining program to evaluate. Thus, the
configuration expresses the global state of the program evaluation as
well as its history in the form of the trace.

In the synchronous \Rust case, we aim to model the call stack created
by function and method calls. Hence, we choose as configuration
\[\sh,\left[\many{k}\right]\] where $\many{k}$ are continuations and
the left-most continuation is the top element.

All composed traces must be well-formed according to
$\mathit{wf}$. This restricts both the local traces that may be chosen
and the applicability of rules.

Assume an expression~$e$ and a shining trace~$\sh$. The trace
composition \zcref{eq:rule-sync-base} defines the derivation from the
configuration $\sh,[\cont{e},\many{k}]$ to
$\chopTrSem{\sh}{\rho(\tau)},[\cont{e'},\many{k}]$, where $e$ is the
expression in the top continuation marker of the continuation
stack. We take the last state of $\sh$ to evaluate $e$ one step
further. This yields a set of continuation traces, from which we pick
one trace, $\concatTr{\tau}{\cont{e'}}$, arbitrarily. The trace~$\tau$
may be symbolic and we use concretization mapping~$\rho$ to concretize
it. Then, we use chop to stitch the concretized trace~$\rho(\tau)$ to
$\sh$ and continue with $e'$ in the top continuation marker. The lower
levels of the stack are unchanged.

\gcondrule{sync-base}{
  \sigma=\last(\sh) \quad \concatTr{\tau}{\cont{e'}} \in \valD{e} \quad
  \rho~\text{concretizes}~\tau \quad \wf{\chopTrSem{\sh}{\rho(\tau)}}
}{
  \sh, \left[\cont{e},\many{k}\right] \to \chopTrSem{\sh}{\rho(\tau)}, \left[\cont{e'},\many{k}\right]
}

\begin{example}\label{ex:sync-comp-1}
  We continue the running example. Assume a state~$\sigma_0$ with
  $\dom(\sigma_0)=\emptyset$ and let $e_0$ be the \lstinline|main|
  function's block in \zcref{lst:running}. Let further
  $\sigma_1:= \update{\sigma_0}{\pv{x}}{\falseSem}$ and
  $\sigma_2:=\update{\sigma_1}{\pv{y}}{\trueSem}$. We start in
  configuration $\mkTr{\sigma_0},[\cont{e_0}]$. As the states are
  concrete, we can use the concretization mapping $\rho_0$ with
  $\dom(\rho_0)=\emptyset$ throughout. By repeated applications of
  \zcref{eq:rule-sync-base} we obtain the following derivation
  \begin{align*}
    &\mkTr{\sigma_0},[\cont{e_0}]
      \to \mkTr{\sigma_0},[\cont{\rust{\rib{x}{x = \rsFalse; $\ldots$}}}] \\
    \to & \consTr{\sigma_1}{\mkTr{\sigma_0}},[\cont{\rust{\rib{x}{\kwLet c = !x; $\ldots$}}}]
          \to \consTr{\sigma_1}{\mkTr{\sigma_0}},[\cont{\rust{\rib{x,y}{y = !x; $\ldots$}}}] \\
    \to & \consTr{\sigma_1}{\mkTr{\sigma_0}},[\cont{\rust{\rib{x,y}{y = !$\falseSem$; $\ldots$}}}]
          \to  \consTr{\sigma_1}{\mkTr{\sigma_0}},[\cont{\rust{\rib{x,y}{y = $\trueSem$; $\ldots$}}}] \\
    \to & \consTr{\sigma_2}{\consTr{\sigma_1}{\mkTr{\sigma_0}}},[\cont{\rust{\rib{x,y}{\kwLet f = fac(4);$\ldots$}}}] \\
    \to & \consTr{\sigma_2}{\consTr{\sigma_1}{\mkTr{\sigma_0}}},[\cont{\rust{\rib{x,y}{\kwLet f; f = fac(4);$\ldots$}}}]\\
    \to & \consTr{\sigma_2}{\consTr{\sigma_1}{\mkTr{\sigma_0}}},[\cont{\rust{\rib{x,y,z}{z = fac(4);$\ldots$}}}] \\
    \to & \consTr{\sigma_2}{\consTr{\invEv{\pv{fac}}{i_0}{4}}{\consTr{\sigma_2}{\consTr{\sigma_1}{\mkTr{\sigma_0}}}}},[\cont{\rust{\rib{x,y,z}{z = ?$i_0$;$\ldots$}}}]
  \end{align*}
  It is not possible to apply \zcref{eq:rule-sync-base} further, as
  the well-formedness predicate prohibits adding the computation
  reaction event that evaluation of \rust{?$i_0$} would yield. Observe
  that we could have many other derivations similar to the one above,
  the only difference being that different concrete variable namings
  and call identifiers are chosen.
\end{example}

\zcref[S]{eq:rule-sync-base} ensures that all constructed traces are
well-formed and concrete. The composition is also stuck if the last
event of $\sh$ is a \emph{panicEv} event, as we cannot continue past
such a fatal error. Similarly, it is not applicable if $e$ is a value,
because values generate no traces.

\zcref[S]{eq:rule-sync-push} executes a function body and pushes the
resulting continuation onto the continuation stack. The rule may pick
any function from the program at any point, leading to possible
spurious function execution. However, the predicate $\mathit{wf}$
ensures with \zcref{eq:wf-invREv} that \zcref{eq:rule-sync-push} is
only applicable when appropriate, i.e., immediately after its
corresponding invocation event, and with the correct function. This
restriction ensures correct applications of the rule and prohibits
spurious function executions.
\gcondrule{sync-push}{
  \sigma=\last(\sh) \quad \rust{\kwFn foo($\many{\pv{x}}$) $e$}\in \Fn\\
  \concatTr{\tau}{\cont{e'}} \in \valD{\rust{\kwFn foo($\many{x}$) $e$}} \quad    \rho~\text{concretizes}~\tau \quad \wf{\chopTrSem{\sh}{\rho(\tau)}} 
}{
  \sh, \left[\many{k}\right] \to \chopTrSem{\sh}{\rho(\tau)}, \left[\cont{e'},\many{k}\right]
}
    
\begin{example}\label{ex:sync-comp-2}
  Continuing \zcref{ex:sync-comp-1} we start from configuration
  $\consTr{\sigma_2}{\consTr{\invEv{\pv{fac}}{i_0}{4}}{\consTr{\sigma_2}{\consTr{\sigma_1}{\mkTr{\sigma_0}}}}},[\cont{\rust{\rib{x,y,z}{z=?$i_0$;$\ldots$}}}]$
  by executing function \pv{fac} and putting the result to the top of
  the continuation stack. By \zcref{eq:rule-sync-push} we derive
  $\consTr{\sigma_3}{\consTr{\invREv{\pv{fac}}{i_0}{4}}{\consTr{\sigma_2}{\consTr{\invEv{\pv{fac}}{i_0}{4}}{\consTr{\sigma_2}{\consTr{\sigma_1}{\mkTr{\sigma_0}}}}}}},[\cont{\rust{\rib{p}{\kwReturn
        if p <= 1
        $\ldots$}}}, \cont{\rust{\rib{x,y,z}{z=?$i_0$;$\ldots$}}}]$
  where we have $\sigma_3:=\update{\sigma_2}{\pv{p}}{4}$. Note how the
  trace composition applies the concretization: The symbolic variables
  introduced by \zcref{eq:local-fn-body} are immediately concretized
  and the well-formedness predicate ensures that it is concretized to
  the actual call arguments. With
  \zcref{eq:rule-sync-base,eq:rule-sync-push} we can also evaluate the
  next call of \pv{fac} with argument~$3$ in \zcref{line:rec-call-fac}
  but we cannot, as yet, continue after a function returns.
\end{example}
    
Once a function call completes, we require a rule to pop the top
continuation off the continuation stack, effectively returning control
to the calling function. Crucially, we cannot do this step once the
top continuation marker contains a value, i.e., is fully evaluated,
because a \lstinline|return|-expression may not be the last or only
expression in a function body; it can occur at any point. Because this
\lstinline|return| emits a return event and immediately stops the
function's evaluation, we want to pop the top continuation marker off
the stack at this point.

Observe that, by
\zcref{eq:rule-sync-push,eq:local-call,eq:wf-invEv,eq:wf-invREv}, any
continuation marker but the top one must contain as the next
sub-expression to be evaluated a get expression~\rust{?$i$}---i.e.,
any continuation marker except the top one is waiting on the function
call above it to finish. By \zcref{eq:local-call-get,eq:wf-compREv},
such a get expression \rust{?$i$} can be evaluated iff the call~$i$
has just returned. This condition precisely models when we can clean
up the continuation marker for the call~$i$.

Hence, the structure of \zcref{eq:rule-sync-pop}: We remove the top
continuation marker $\cont{e}$ if we can evaluate the continuation
marker $\cont{e'}$ below it and the result is well-formed according to
\zcref{eq:wf-compREv}. We know that $e'$'s next sub-expression to be
evaluated is a get expression waiting on $\cont{e}$ to finish. If the
result is well-formed, the top continuation marker must have just
returned and must therefore be removed from the stack.
\gcondrule{sync-pop}{
  \sigma=\last(\sh) \quad \concatTr{\tau}{\cont{e''}} \in \valD{e'} \quad
  \rho~\text{concretizes}~\tau \quad \wf{\chopTrSem{\sh}{\rho(\tau)}}
}{
  \sh, \left[\cont{e},\cont{e'},\many{k}\right] \to \chopTrSem{\sh}{\rho(\tau)}, \left[\cont{e''},\many{k}\right]
}

\begin{example}\label{ex:sync-comp-3}
  We continue \zcref{ex:sync-comp-2}. After the call with
  identifier~$i_0$, we have a configuration of the form
  $\consTr{\sigma_3}{\consTr{\retEv{24}}{\consTr{\cdots}{\consTr{\invEv{\pv{fac}}{i_0}{4}}{\cdots}}}},
  [\cont{\rust{()}},\cont{\rust{\rib{x,y,z}{z=?$i_0$;$\ldots$}}}]$. Observe
  that the called function's continuation marker does not contain the
  return value; this value is transmitted only via events and is not
  passed to the surrounding context of the return.
  \zcref[S]{eq:rule-sync-pop} allows to remove the stack's top and
  continue local evaluation of the main block with
  \zcref{eq:local-call-get}:
  $\consTr{\sigma_3}{\consTr{\compREv{i_0}{24}}{\consTr{\sigma_3}{\consTr{\retEv{24}}{\cdots}}}},
  [\cont{\rust{\rib{x,y,z}{z=24;$\ldots$}}}]$.
\end{example}

\subsection{Global Program Semantics}
\label{sec:glob-progr-semant}

It remains to define the starting point of trace composition and the
semantics of a \Rust program as a whole. \zcref[S]{def:prog-sem}
further defines a partial semantics, which gives meaning to a \Rust
expression relative to a state~$\sigma$. Partial semantics are crucial
to the soundness proofs of \zcref{sec:sound-dl}.

\begin{definition}[(Partial) Program Semantics]\label{def:prog-sem}
  Let $e$ be a \Rust expression and $\sigma$ a concrete state. Assume
  a maximal sequence of applications of
  \zcref{eq:rule-sync-base,eq:rule-sync-push,eq:rule-sync-pop}
  \[\sh_0,[\cont{e}]\rightarrow\sh_1,[\many{k}]\to\cdots\]
  with $\sh_0=\mkTr{\sigma}$. If the sequence is finite, it has the
  form
  \[\mkTr{\sigma},[\cont{e}]\to^\ast\shi,[\many{k}']\]
  with $\to^\ast$ being the transitive closure of $\to$. If the
  sequence is infinite, we define $\shi=\lim_{i\to\infty}\sh_i$. We
  denote the set of all such traces, whether finite or infinite, as
  $\traces{e,\sigma}$.

  For a \Rust program $r$ with the main function \rust{\kwFn main() $e$},
  we define the set of traces for $r$ as
  $\traces{r}:=\traces{\rust{main()},\sigma}$, where $\sigma$ is the
  state with $\dom(\sigma)=\emptyset$.
\end{definition}

\begin{example}\label{ex:global-neg}
  Continuing \zcref{ex:local-neg}, we start in configuration
  $\mkTr{\sigma_1},\cont{\rust{!b}}$. With the local evaluation, we
  derive $\mkTr{\sigma_1},\cont{\rust{!$\falseSem$}}$, which is the
  only possible derivation. We then compose this with the local
  evaluation of \rust{!$\falseSem$} to get
  $\mkTr{\sigma_1},\cont{\trueSem}$. This trace is the only element of
  $\traces{\rust{!b},\sigma_1}$.
\end{example}

\begin{example}\label{ex:global-if}
  Let $\pv{x}, \pv{y}$ be program variables of type \lstinline|u32|,
  \pv{b} a variable of type \lstinline|bool|, $\sigma_0$ the state
  that assigns \pv{x}, \pv{y} the value~$0$ and
  \pv{b}~$\falseSem$. For the \Rust program
  \lstinline|x = if b {y = 3; y - 2} else {y = 4; y + 1}|, we have the
  following derivation:
  \begin{align*}
    &\mkTr{\sigma_0},\cont{\rust{x = \kwIf b \{y = 3; y - 2\} \kwElse \{y = 4; y + 1\}}}\\
    \to\,&\mkTr{\sigma_0},\cont{\rust{x = \kwIf $\falseSem$ \{y = 3; y - 2\} \kwElse \{y = 4; y + 1\}}}\\
    \to\,&\mkTr{\sigma_0},\cont{\rust{x = \{y = 4; y + 1\}}}
           \to \consTr{\update{\sigma_0}{\pv{y}}{4}}{\mkTr{\sigma_0}},\cont{\rust{x = \{(); y + 1\}}}\\
    \to\,&\consTr{\update{\sigma_0}{\pv{y}}{4}}{\mkTr{\sigma_0}},\cont{\rust{x = \{y + 1\}}}
           \to \consTr{\update{\sigma_0}{\pv{y}}{4}}{\mkTr{\sigma_0}},\cont{\rust{x = \{4 + 1\}}}\\
    \to\,&\consTr{\update{\sigma_0}{\pv{y}}{4}}{\mkTr{\sigma_0}},\cont{\rust{x = \{5\}}}
           \to \consTr{\update{\sigma_0}{\pv{y}}{4}}{\mkTr{\sigma_0}},\cont{\rust{x = 5}}\\
    \to\,&\consTr{\update{\update{\sigma_1}{\pv{y}}{4}}{\pv{x}}{5}}{\mkTr{\sigma_0}},\cont{\rust{()}}
  \end{align*}
\end{example}

\begin{example}\label{ex:global-block}
  Consider the expression $e=\;$\lstinline|{let x; x = 1; x}|. Assume
  a concrete state~$\sigma_2$; we start with
  $\mkTr{\sigma_2},\cont{\rust{\{\kwLet x; $\ldots$\}}}$. Per
  \zcref{eq:local-let}, the local evaluation yields the set
  $\{\concatTr{\mkTr{\sigma_2}}{\cont{\rust{\rib{\pv{y}}{y = 1; y}}}}
  \mid \pv{y}~\text{fresh}\}$. Hence, \zcref{eq:rule-sync-base} can
  pick any of these infinite traces to compose. We choose
  variable~\pv{z} as an example, yielding
  $\mkTr{\sigma_2},\cont{\rust{\rib{\pv{z}}{z = 1; z}}}$. Evaluating
  twice with both cases of \zcref{eq:local-block-expr-stmt}, we have
  $\consTr{\update{\sigma_2}{\pv{z}}{1}}{\mkTr{\sigma_2}},\cont{\rust{\rib{\pv{z}}{z}}}$. Using
  \zcref{eq:local-block-expr,eq:local-pv}, we have
  $\consTr{\update{\sigma_2}{\pv{z}}{1}}{\mkTr{\sigma_2}},\cont{\rust{\rib{\pv{z}}{1}}}$,
  which we finally evaluate with \zcref{eq:local-block-expr} to
  $\consTr{\sigma_2}{\consTr{\update{\sigma_2}{\pv{z}}{1}}{\mkTr{\sigma_2}}},\cont{1}$.
  The set $\traces{e,\sigma_2}$ is the set containing all traces of
  this form with different instantiations for \pv{y}. The first and
  final states are the same for all these traces.
\end{example}

\begin{example}\label{ex:panic}
  Consider a very simple program like
  \lstinline|fn main() { panic!(); let x = 1; () }|. Evaluating this
  program in state~$\sigma_0$ with $\dom(\sigma_0)=\emptyset$, we
  first evaluate the panic expression and obtain the configuration
  \(
  \consTr{\sigma_0}{\consTr{\panicEv}{\mkTr{\sigma_0}}},[\rust{\{();
    \kwLet~x = 1; ()\}}] \). We cannot continue from this
  configuration, as the well-formedness predicate allows exactly one
  state after a panic event and nothing else. Hence, this is the
  maximal application.
\end{example}

\begin{example}\label{ex:non-term}
  Assume the program \lstinline|fn main() { 'l: loop { () } }|, which
  is obviously non-terminating. Due to our use of chop, where
  $\chopTrSem{\mkTr{\sigma_0}}{\mkTr{\sigma_0}}=\mkTr{\sigma_0}$, we
  obtain the following infinite derivation:
  \begin{align*}
    &\sigma_0, \left[\cont{\rust{\{'l: \kwLoop~\{()\}\}}}\right]\\
    \to\,&\mkTr{\sigma_0}, \left[\cont{\rust{\{\loopFrame{\pv{l}}{\{()\}}\}}}\right]\\
    \to\,&\mkTr{\sigma_0}, \left[\cont{\rust{\{\loopFrame{\pv{l}}{()}\}}}\right]\\
    \to\,&\mkTr{\sigma_0}, \left[\cont{\rust{\{\getloop{\pv{l}\}}}}\right]\\
    \to\,&\mkTr{\sigma_0}, \left[\cont{\rust{\{'l: \kwLoop~\{()\}\}}}\right]\\
    \to\,&\ldots
  \end{align*}
  Hence, the traces for the program consist just of a singleton trace:
  $\{\mkTr{\sigma_0}\}$. If the loop has an effect on the state, e.g.,
  \lstinline|'l: loop { x += 1 }|, where \pv{x} is originally set to
  $0$, the resulting trace would be an infinite trace of the form
  $\consTr{\ldots}{\consTr{[\pv{x}\mapsto 2]}{\consTr{[\pv{x}\mapsto
        1]}{\mkTr{\sigma_0}}}}\ .$ See \autocite{LAGC23} for a
  discussion of this modeling of infinite computations.
\end{example}

We want to state some general properties over the set of generated
concrete traces for a state and expression $\traces{e,\sigma}$. In
particular, we notice that for the synchronous subset of \Rust
presented in \zcref{fig:syn-rs} the generated traces differ only in
the names chosen for local variables, e.g., in \zcref{eq:local-let},
and chosen borrow identifiers. Hence,
\zcref{def:eq-mod-ren-state,def:eq-mod-ren-trace} introduce the notion
of relative equality. Intuitively, two states may have different names
for some variables but they must agree on a fixed set of variables.

\begin{definition}[Renaming]\label{def:renaming}
  Let $\kappa:\Var\rightarrow\Var$ and $\iota:\BId\rightarrow\BId$ be
  two bijective functions. Then we call $\kappa$ a \emph{variable
    renaming}, $\iota$ a \emph{borrow renaming}, and $(\kappa,\iota)$
  a \emph{renaming}. Variable renamings $\kappa$ are lifted to places
  in the natural manner:
  \[
    \kappa(p)=
    \begin{cases}
      \kappa(\pv{x}) &\text{if}~p=\pv{x} \\
      \kappa(p')[n] &\text{if}~p=p'[n] \\
      \kappa(p').\mathit{identifier} &\text{if}~p=p'.\mathit{identifier}
    \end{cases}
  \]
  A renaming $(\kappa,\iota)$ can be applied on a reference value with
  the function \enquote{$\renameSym$} as follows:
  \begin{equation*}
    \begin{array}{r@{\,}c@{\,}l}
      \rename{\kappa,\iota}{\sRefVal{p}{b}} & = & \sRefVal{\kappa(p)}{\iota(b)} \\
      \rename{\kappa,\iota}{\mRefVal{p}{b}} & = & \mRefVal{\kappa(p)}{\iota(b)}
    \end{array}
  \end{equation*}
  The \enquote{$\renameSym$} function is extended to values in the
  usual manner; every reference contained in a value is renamed as
  above.
\end{definition}

\begin{definition}[State Equality Modulo
  Renaming]\label{def:eq-mod-ren-state}
  We define a relation $\eqModRenSym{V}$ over LAGC states and a set of
  variables $V\subseteq\Var$. Two states $\sigma_1,\sigma_2$ are
  \emph{equal modulo renaming relative to $V$}---written as
  $\eqModRen{V}{\sigma_1}{\sigma_2}$---iff
  $\dom(\sigma_1)\cap V=\dom(\sigma_2)\cap V$,
  $\sigma_1(\pv{x})=\sigma_2(\pv{x})$ for all
  $\pv{x}\in\dom(\sigma_1)\cap V$, and there exists a renaming
  $(\kappa,\iota)$ such that
  \(\sigma_1(\pv{x})=\rename{\kappa^{-1},\iota^{-1}}{\sigma_2(\kappa(\pv{x}))}\)
  for all $\pv{x}\in\dom(\sigma_1)\setminus V$ and
  $\sigma_2(\pv{x})=\rename{\kappa,\iota}{\sigma_1(\kappa^{-1}(\pv{x}))}$
  for all $\pv{x}\in\dom(\sigma_2) \setminus V$.
\end{definition}

\begin{proposition}
  For every set $V\subseteq\Var$, the relation $\eqModRenSym{V}$ over
  LAGC states is an equivalence relation.
\end{proposition}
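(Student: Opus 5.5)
We verify the three properties of an equivalence relation directly from \zcref{def:eq-mod-ren-state}, each time exhibiting an explicit renaming. Two preliminary facts about $\renameSym$ are established first, by structural induction on values (the base case being reference values, where $\renameSym$ acts via $\kappa$ on places and $\iota$ on borrow identifiers; non-reference base values are left unchanged). First, $\rename{\mathrm{id},\mathrm{id}}{v}=v$. Second, $\rename{\kappa_2,\iota_2}{\rename{\kappa_1,\iota_1}{v}}=\rename{\kappa_2\circ\kappa_1,\iota_2\circ\iota_1}{v}$. For places, the lifting of $\kappa$ in \zcref{def:renaming} commutes with composition by a further induction on the structure of $p$. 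Since the renamings are bijections, these facts also give $\rename{\kappa^{-1},\iota^{-1}}{\rename{\kappa,\iota}{v}}=v$.

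\emph{Reflexivity.} We take $(\kappa,\iota)=(\mathrm{id},\mathrm{id})$. The conditions on $V$ are trivial, and both renaming conditions reduce to $\sigma(\pv{x})=\sigma(\pv{x})$ by the first fact.

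\emph{Symmetry.} Suppose $\eqModRen{V}{\sigma_1}{\sigma_2}$ is witnessed by $(\kappa,\iota)$. We take $(\kappa^{-1},\iota^{-1})$ as the witness for $\eqModRen{V}{\sigma_2}{\sigma_1}$. The two renaming clauses of the definition are then exactly swapped instances of each other, using $(\kappa^{-1})^{-1}=\kappa$. The conditions relative to $V$ are symmetric as stated.

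\emph{Transitivity.} Suppose $(\kappa_1,\iota_1)$ witnesses $\sigma_1\approx\sigma_2$ and $(\kappa_2,\iota_2)$ witnesses $\sigma_2\approx\sigma_3$, both relative to $V$. Agreement of domains and values on $V$ chains immediately. For the renaming part we take the composite $(\kappa_2\circ\kappa_1,\iota_2\circ\iota_1)$, which is again a pair of bijections. For $\pv{x}\in\dom(\sigma_3)\setminus V$ we want $\sigma_3(\pv{x})=\rename{\kappa_2\kappa_1,\iota_2\iota_1}{\sigma_1(\kappa_1^{-1}\kappa_2^{-1}(\pv{x}))}$. The argument first applies the second witness at $\pv{x}$ and then the first witness at $\kappa_2^{-1}(\pv{x})$, and finishes with the composition fact. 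The clause for $\dom(\sigma_1)\setminus V$ is handled symmetrically.

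\emph{Main obstacle.} The delicate step is the transitivity chaining. To apply the first witness at $\kappa_2^{-1}(\pv{x})$ we need $\kappa_2^{-1}(\pv{x})\in\dom(\sigma_2)\setminus V$. Membership in $\dom(\sigma_2)$ follows because the renaming clauses implicitly require $\kappa^{-1}$ to map $\dom(\sigma_2)\setminus V$ into $\dom(\sigma_1)$, and vice versa, for the right-hand sides to be defined. Membership outside $V$ is not automatic, since $\kappa$ might map a variable outside $V$ into $V$. I would first try to argue that a witness can always be modified to fix $V$ pointwise. The relevant variables outside $V$ form disjoint-from-$V$ sets in both states, so the bijection can be redefined to swap within $\Var\setminus V$ and be the identity on $V$; any values' references into $V$ must then be checked. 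If this normalization fails, I would instead read the definition as implicitly requiring $\kappa$ to be the identity on $V$, which makes the composite preserve this property and closes the argument.
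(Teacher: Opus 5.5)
\textbf{Gap: transitivity fails for the definition as written.} Your reflexivity and symmetry arguments, the two facts about $\renameSym$, and the composite witness $(\kappa_2\circ\kappa_1,\iota_2\circ\iota_1)$ are all fine. The gap is the step you flag yourself, and it cannot be closed. With \zcref{def:eq-mod-ren-state} taken literally, $\eqModRenSym{V}$ is \emph{not} transitive, so neither normalization nor any other argument proves the statement as written.

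Take $V=\{\pv{a}\}$ and the integer-valued states $\sigma_1=[\pv{a}\mapsto 0]$, $\sigma_2=[\pv{a}\mapsto 0,\pv{y}\mapsto 0]$, $\sigma_3=[\pv{a}\mapsto 0,\pv{z}_1\mapsto 0,\pv{z}_2\mapsto 0]$; renaming acts trivially on them.
\begin{itemize}
\item The transposition $\kappa=(\pv{a}\;\pv{y})$ witnesses $\eqModRen{V}{\sigma_1}{\sigma_2}$. The first renaming clause is vacuous. The second, at $\pv{y}$, only asks that $\sigma_1(\kappa^{-1}(\pv{y}))=\sigma_1(\pv{a})=0$; nothing constrains the image of $\pv{a}$ itself.
\item Likewise $\kappa=(\pv{y}\;\pv{z}_1)(\pv{a}\;\pv{z}_2)$ witnesses $\eqModRen{V}{\sigma_2}{\sigma_3}$.
\item But $\sigma_1$ and $\sigma_3$ are unrelated. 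The second clause would need $\kappa^{-1}(\pv{z}_1),\kappa^{-1}(\pv{z}_2)\in\dom(\sigma_1)=\{\pv{a}\}$, contradicting injectivity.
\end{itemize}

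\textbf{Why normalization fails.} The same example defeats your normalization idea. Redefining $\kappa$ within $\Var\setminus V$ presupposes $|\dom(\sigma_1)\setminus V|=|\dom(\sigma_2)\setminus V|$, which the definition does not enforce (above, $0\neq 1$). Independently, references can force every witness to move $V$. For $V=\{\pv{a},\pv{c}\}$, the states $[\pv{x}\mapsto\sRefVal{\pv{a}}{b}]$ and $[\pv{y}\mapsto\sRefVal{\pv{c}}{b}]$ are related, but any witness has $\kappa(\pv{x})=\pv{y}$ and hence $\kappa(\pv{a})=\pv{c}$.

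\textbf{What to do instead.} Your fallback is not an optional reading but a necessary amendment of the definition, and the proof should say so explicitly. The paper states the proposition without proof, so there is nothing to appeal to. Two amendments work:
\begin{itemize}
\item The minimal fix requires $\kappa(\dom(\sigma_1)\setminus V)=\dom(\sigma_2)\setminus V$. That is, the renaming clauses may pair variables outside $V$ only with variables outside $V$.
\item Requiring $\kappa|_V=\mathrm{id}$ also works, but it strictly shrinks the relation: it separates the two states of the reference example.
\end{itemize}
Either condition is preserved by identities, inverses, and composition. Under it, the transitivity step is immediate: for $\pv{x}\in\dom(\sigma_3)\setminus V$ you get $\kappa_2^{-1}(\pv{x})\in\dom(\sigma_2)\setminus V$ directly. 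So the first witness applies there, and your composition fact finishes the argument; the other clause is symmetric.
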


\zcref[S]{def:eq-mod-ren-trace} extends this notion to traces in the
natural way.

\begin{definition}[Trace Equality Modulo
  Renaming]\label{def:eq-mod-ren-trace}
  We extend the relation $\eqModRenSym{V}$ over concrete traces for
  the set $V\subseteq\Var$ with the following inductive definition:
  \begin{itemize}
  \item $\eqModRen{V}{\epsilon}{\epsilon}$,
  \item
    $\eqModRen{V}{\consTr{\sigma_1}{\sh_1}}{\consTr{\sigma_2}{\sh_2}}$
    iff $\eqModRen{V}{\sh_1}{\sh_2}$ and $\eqModRen{V}{\sigma_1}{\sigma_2}$,
  \item
    $\eqModRen{V}{\consTr{\evTrioP{1}{\overline{v}_1}}{\sh_1}}{\consTr{\evTrioP{2}{\overline{v}_2}}{\sh_2}}$
    iff $\eqModRen{V}{\sh_1}{\sh_2}$ and
    $\evTrioP{1}{\overline{v}_1}=\evTrioP{2}{\overline{v}_2}$.
  \end{itemize}
  We again pronounce it as \emph{equality modulo renaming relative to
    $V.$}
\end{definition}

\zcref[S]{lem:determ} can now define the determinism of our \Rust
fragment in terms of generates traces.

\begin{lemma}\label{lem:determ}
  For each \ac{LAGC} state~$\sigma$ and each expression~$e$, the
  following propositions hold:
  \begin{align}
    \left|\traces{e,\sigma}/{\eqModRenSym{\dom(\sigma)}}\right|=1\label{eq:tr-mod-renam}\\
    (\forall\shi\in\traces{e,\sigma}.\ \shi~\text{is finite})\vee(\forall\shi\in\traces{e,\sigma}.\ \shi~\text{is infinite})\label{eq:tr-all-fin}\\
    \forall\shi\in\traces{e,\sigma}.\ \shi~\text{is finite} \rightarrow \dom(\sigma)=\dom(\last(\sh))\label{eq:tr-last}
  \end{align}
  \zcref[S]{eq:tr-mod-renam} states that all traces generated for $e$
  starting in $\sigma$ are equal modulo renaming relative to the
  variables defined in $\sigma$. As a direct consequence,
  \zcref{eq:tr-all-fin} states that either all generated traces are
  finite or all are infinite. Similarly, \zcref{eq:tr-last} states
  that for all generated traces, the first and final state (if one
  exists) share their domain. Together,
  \zcref{eq:tr-mod-renam,eq:tr-all-fin,eq:tr-last} give us the fact
  that either all generated traces are infinite or they share the
  final state.
\end{lemma}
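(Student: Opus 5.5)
The plan is a simulation argument between any two maximal derivations from $\mkTr{\sigma},[\cont{e}]$, carried along the derivation steps. The invariant is stated on configurations: $\sh_1,[\many{k}_1]$ and $\sh_2,[\many{k}_2]$ are \emph{related} if there is a renaming $(\kappa,\iota)$ fixing every variable of $\dom(\sigma)$ such that
\begin{enumerate*}[(a)]
\item $\eqModRen{\dom(\sigma)}{\sh_1}{\sh_2}$, witnessed by $(\kappa,\iota)$ throughout;
\item the two stacks have equal length and each $\cont{e_1}$ in $\many{k}_1$ is the $(\kappa,\iota)$-image of the corresponding marker in $\many{k}_2$, with call identifiers renamed consistently via an extra bijection on $\CId$ (events are compared only by equality in \zcref{def:eq-mod-ren-trace}, so this bijection is extended to event parameters);
\item the sets of used fresh names, borrow identifiers and call identifiers correspond under these bijections.
\end{enumerate*}
The initial configurations are trivially related by the identity.

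The core step is a one-step lemma. If the two configurations are related and $c_1\to c_1'$, then there is some $c_2'$ with $c_2\to c_2'$ such that $c_1'$ and $c_2'$ are related, possibly after extending the bijections. To prove it, I will first show that the rule applicable to a configuration is determined by its shape. \zcref[S]{eq:rule-sync-push} applies exactly when $\lastEv(\sh)$ is an unmatched invocation event, by \zcref{eq:wf-invREv}, and the function is then forced by the event. \zcref[S]{eq:rule-sync-pop} applies exactly when $\lastEv(\sh)$ is a return event whose call matches the get expression below it, by \zcref{eq:wf-compREv}. Otherwise \zcref{eq:rule-sync-base} applies. Next I will do a case analysis over the local rules of \zcref{sec:local-evaluation}, together with a lemma that $\nextExprSym$ commutes with renaming.

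This case analysis gives the fact that $\valD{e}$ is, up to renaming, a singleton. The only sources of multiplicity are the fresh choices in \zcref{eq:local-let,eq:local-borrow-sh,eq:local-borrow-mut,eq:local-call,eq:local-fn-body,eq:local-call-get}. The concretization $\rho$ is forced by $\mathit{wf}$: it forces invocation-reaction parameters to equal the call arguments, and it forces the computation-reaction variable to equal the returned value. Any two fresh choices differ by extending the bijections. Because renamings fix $\dom(\sigma)$, and fresh names are chosen outside everything used so far, the case analysis ensures that no fresh name ever lies in $\dom(\sigma)$.

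With the one-step lemma, \zcref{eq:tr-mod-renam} follows by induction on the derivation length. Two maximal derivations stay related step by step, and one gets stuck exactly when the other does, so their limit or final traces are related. The set $\traces{e,\sigma}$ is nonempty, so the quotient has exactly one class. \zcref[S]{eq:tr-all-fin} is then immediate, since relatedness preserves length.

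For \zcref{eq:tr-last} I will prove an auxiliary invariant. Along a derivation, every variable in the last state but not in $\dom(\sigma)$ was introduced by a pending \zcref{eq:local-let} binding recorded in some scoped block, or by an unconsumed parameter or get variable still in some marker of the stack. A finite maximal derivation ends with all blocks closed and all frames popped, or with a panic. By \zcref{eq:local-block-expr}, closing the blocks drops those variables, which yields $\dom(\last(\sh))=\dom(\sigma)$.

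I expect the main obstacle to be twofold. The first part is this last invariant. Parameter and get variables are not dropped by any block rule as presented, and the panic case stops mid-execution with locals still alive. So I may have to restrict \zcref{eq:tr-last} to normally terminating traces, or argue that the normalizations wrap these variables in scoped blocks. The second part is the bookkeeping showing that the renaming bijections can always be extended consistently.
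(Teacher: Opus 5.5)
The paper states this lemma without proof, so there is no argument of the paper's to compare against. Judged on its own, your plan breaks at the step everything else rests on: the claim that the applicable rule is determined by the shape of the configuration (``otherwise \zcref{eq:rule-sync-base} applies'').

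$\mathit{wf}$ does not prevent the callee's own frame from continuing after its return event. Read \zcref{eq:wf-compREv} as intended, i.e.\ as requiring the preceding return event, as you do. Then once $\lastEv(\sh)=\retEv{v}$, both \zcref{eq:rule-sync-pop} and \zcref{eq:rule-sync-base} applied to the callee's remaining continuation are enabled. That remainder is never a value, since at least the body block still has to be closed.

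Already for bodies of the paper's standard form \rust{\{$\ldots$ \kwReturn $v$; ()\}}, popping before or after closing the block yields traces of different length, because \zcref{eq:local-block-expr} contributes a state. The element-wise relation of \zcref{def:eq-mod-ren-trace} never identifies such traces.

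Take \rust{f()} with body \rust{\{'l: \kwLoop \{\kwReturn 1; ()\}\}}, the normal form of an ordinary \rust{\kwLoop \{ \kwReturn 1; \}}. Popping at the first return gives a finite trace. Always choosing \zcref{eq:rule-sync-base} instead re-enters the loop and emits return events forever. So even \eqref{eq:tr-all-fin} fails, and the lemma is false for the rules as written; no bookkeeping rescues your simulation.

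The semantics must be changed first. One option is to forbid \zcref{eq:rule-sync-base} steps directly after a return event, as \zcref{eq:rule-compLoopScope} does for panic and continue events. Note also that your one-step lemma is only an existential simulation. To conclude that two \emph{given} maximal derivations stay related, you need that \emph{every} successor of a related pair is related, which is exactly the determinism that fails here.

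For \eqref{eq:tr-last}, neither of your fallbacks works, because your invariant is wrong about ``unconsumed'' variables. The parameter variables of \zcref{eq:local-fn-body} and the result variable of \zcref{eq:local-call-get} are not recorded in any scoped block. Reading them with \zcref{eq:local-pv} does not remove them either. So after \rust{y = f(1)} terminates normally, the final state still contains both, and $\dom(\last(\sh))\neq\dom(\sigma)$. Without changing these rules, only $\dom(\sigma)\subseteq\dom(\last(\sh))$ can be expected.

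You noticed that call identifiers in events break literal equality. The same problem occurs in states: the clause $\sigma_1(\pv{x})=\sigma_2(\pv{x})$ for $\pv{x}\in\dom(\sigma)$ in \zcref{def:eq-mod-ren-state} is literal equality. It fails as soon as a fresh borrow is stored in such a variable: \rust{x = \&\kwMut z} yields $\mRefVal{\pv{z}}{b_1}$ in one derivation and $\mRefVal{\pv{z}}{b_2}$ in another. Your invariant (a) must therefore apply $\iota$ to those values as well. That means you are proving the lemma for a coarser equivalence than the stated one, and you should say so explicitly.
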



\section{Asynchronous \Rust Semantics}
\label{sec:asynchr-rust-semant}

\ac{LAGC} is suitable for asynchronous and parallel code. Here, we
present a semantics for asynchronous constructs in \Rust and their use
in a concurrent, but not parallel, setting with the most permissive
scheduler possible. \zcref[S]{sec:fair-tokio} shows how these
definitions can be used to model a realistic async runtime: Tokio,
\Rust's most widely-used async runtime.

\subsection{Background: Async \Rust}
\label{sec:bg-async-rust}

\Rust's use of \lstinline|async|, \lstinline|await|, and futures seems
to be similar to other programming languages such as JavaScript and
active object languages like ABS~\cite{JHSSS2011} but it differs in
crucial respects.

Asynchronous \Rust revolves around the declaration and use of
\emph{futures,} i.e., representations of computations that have yet to
be completed. Unlike other languages, \Rust does not provide a single
type for futures, but instead defines a trait, \lstinline|Future|,
which can be implemented for many types.

A simplified definition of the \lstinline|Future| trait can be seen in
\zcref{lst:future}. It requires an implementor to define the output
type which the future is to compute and a \lstinline|poll| method,
which returns the \lstinline|Poll| enum. This type is either
\lstinline|Pending| or \lstinline|Ready(Output)|, the latter
containing the computed value. Calling \lstinline|poll| on a future
runs the future until its next suspension point. Such a point may
either be the end of the function body or another \lstinline|await|
expression. In the former case, the function returns \lstinline|Ready|
with the return value and is finished. In the latter case, we say the
future yields (or \enquote{suspends}) and returns \lstinline|Pending|.

\begin{lstlisting}[caption={A simplified definition of \Rust's \lstinline|Future| trait.}, label={lst:future}]
trait Future {
  type Output;
  fn poll(self) -> Poll<Self::Output>;
}
\end{lstlisting}

Hence, a \Rust future is a state machine that advances from the
initial state to its next yield and after a number of calls may reach
its final state and return a value. How the future exactly progresses
and when and how it computes its final value is left up to the
implementation. \Rust merely defines the basic operations and defines
suitable abstractions on top of them. \Rust futures are
\emph{passive}, that is, they are not automatically scheduled and need
to be actively \emph{polled} to be (partially) executed. On top of
these asynchronous primitives, libraries can provide async runtimes,
which allow scheduling of futures and can make these scheduled futures
behave like \enquote{active} futures, as we discuss below.

With an expression \rust{f.\kwAwait}, where \pv{f} is a future, a
future is repeatedly polled until it is completed. If the awaited
future is still pending, the await expression also suspends. Hence,
\lstinline|await| provides a suitable abstraction over polling of
futures. However, the \lstinline|await| expression is only allowed in
functions declared with the \lstinline|async| keyword, as in
\zcref{lst:async}. Assume that the async function
\enquote{\lstinline|get|} returns the HTML of a URL as a
\lstinline|String|. Calling \enquote{\lstinline|get(url)|,} hence,
returns a future with output type \lstinline|String|. In other
languages, the created future would automatically be scheduled. In
\Rust, however, this is not the case; the future remains passive.

\begin{lstlisting}[caption={An example async function in \Rust.}, label={lst:async}]
async fn check(url: &str) -> bool {
  let f1 = get(url); let f2 = get(url);
  let x = f1.await; let y = f2.await;
  return x.len() == y.len();
}
\end{lstlisting}

Functions marked with \lstinline|async| are only syntactic sugar. An
asynchronous function like \lstinline|check| is actually transformed
into a regular function that returns something that implements \Rust's
\lstinline|Future| trait. The signature of the function then looks as
follows
\begin{center}
  \lstinline|fn check(url: &str) -> impl Future<Output=bool> { ... }|
\end{center}

Hence, when the future of function \enquote{\lstinline|check|} polls
\enquote{\lstinline|get|} and the latter's future is still pending,
\enquote{\lstinline|check|}'s future also returns
\lstinline|Pending|. Awaiting futures is the only way to suspend a
future.\footnote{Technically, there are other possibilities for
  suspension, e.g., locks, but we leave these aside for brevity.}

With futures as the basic building blocks in place, a number of
different schedulers and abstractions can be built. Rather than
provide a single runtime for asynchronous programs, \Rust leaves
runtimes up to libraries and developers. A variety of runtimes have
been developed with different features and specializations:
Glommio~\autocite{G2024} is specialized for I/O operations and is
based on Linux's \lstinline|io_uring|; smol~\autocite{smol} is a
small, simple runtime with little overhead. By far the most prevalent
runtime library is Tokio~\autocite{tokio}.

With a runtime set up, futures can be handed to the runtime to be
scheduled; typically, this happens by calling the runtime's
\enquote{\lstinline|spawn|} function. It is the runtime's
responsibility to keep track of these futures and calling
\lstinline|poll| on scheduled futures to advance them. These
\enquote{spawned} or \enquote{scheduled} futures then are
\emph{active} futures and behave differently. They may be executed
whenever the runtime deems fit, even in parallel to other
futures. Hence, scheduled futures behave more analogously to futures
in active object languages~\autocite{dDFHHJKRSSY2017}. Such spawned
futures are often called \enquote{handles} (\lstinline|JoinHandle| in
Tokio's lingo).

Any non-scheduled future, even when created by a scheduled future,
behaves passively, as described above: They only progress when polled
and this polling may suspend the scheduled future. We call these
non-scheduled futures \enquote{local.}

A \Rust developer is unlikely to use the \lstinline|Future| trait
directly and is typically guided by \Rust's design and the existing
libraries to use \lstinline|async| functions and \lstinline|await|
expressions as the principal elements. Therefore, our semantics
abstracts away from \lstinline|poll| calls and instead focuses on
awaiting local (or unscheduled) and scheduled futures. This is not due
to a fundamental restriction of the approach. Modeling such
lower-level steps leads to a somewhat more verbose semantics and may
be justified in other contexts. Focusing on \lstinline|await| is an
important design decision and potential limitation but we believe it
to be a sensible abstraction for a high-level semantics, which allows
us to achieve the results laid out in the following sections.

\begin{example}[Running Example]\label{ex:async-run}
  In this section, we demonstrate our semantics with running examples
  based on \zcref{lst:async-prog}. It consists of two asynchronous
  functions, \lstinline|two_power| and \lstinline|bar|, both taking
  one parameter of type \lstinline|u32| and returning a value of the
  same type. The former function recursively computes $n\mapsto 2^n$
  for $n\in\mathbb{N}$, the latter computes $m\mapsto 2^{m+1}$ for
  $m\in\mathbb{N}$, using \lstinline|two_power|. The \lstinline|main|
  function schedules futures of both functions and awaits them later
  on, while the functions \lstinline|two_power| and \lstinline|bar|
  create and await only local futures.
  \begin{lstlisting}[caption={An asynchronous program.},label=lst:async-prog,numbers=left,multicols=2]
async fn two_power(n: u32) -> u32 {
  return if n == 0 {
    1
  } else {
    2 * two_power(n - 1).await %\label{line:local-two-power}%
  }; ()
}
async fn bar(m: u32) -> u32 {
  return two_power(m).await * 2; () %\label{line:local-bar}%
}
fn main() {
  let f1 = spawn(two_power(1)); %\label{line:call-two-power}%
  let f2 = spawn(bar(2)); %\label{line:call-bar}%
  let x = f1.await + f2.await; %\label{line:await-spawned}%
  return (); ()
}
\end{lstlisting}

  Observe that the main function here is only used for ease of
  presentation: In truth, it must set up the runtime and does not have
  access to \lstinline|await|, as it is not declared as
  \lstinline|async|. But the pattern is not uncommon. See how Tokio,
  for instance, sets up the \lstinline|tokio::main| macro to achieve
  the same effect~\autocite{TokioDoc}.
\end{example}

\subsection{Local Evaluation}
\label{sec:async-local}

We extend the values as defined in \zcref{def:val} with \emph{future
  identifiers} $\FId \subset \Val$, standing for created futures. The
syntax of \zcref{fig:syn-rs} is extended with additional syntactic
structures: await expressions like \rust{$e$.\kwAwait} and
asynchronous functions declared as %
\lstinline|async fn foo($\pv{x}_1,\ldots,\pv{x}_n$) $e$|, where
$\pv{x}_1,\ldots,\pv{x}_n$ are the parameters and $e$ the function
body. From this point on, the evaluation function is tagged with a
future identifier $F \in \FId$: $\valDnoargs^F$, which denotes the
currently active future.

Evaluating a call to async function \pv{foo} adds a \emph{future
  creation event} $\createEv{\pv{foo}}{\pv{F}'}{\many{v}}$, denoting
that a future of \pv{foo} is called with parameters
$\many{v}$. $\pv{F}'$ is a fresh, symbolic variable standing for the
created future. The symbolic variable will be concretized with a
concrete future ID during composition.
\evalrule{async-call}{
  \valP{F}{\rust{foo($\many{v}$)}}&=&\{\concatTr{\consTr{\update{\sigma}{\pv{F}'}{\ast}}{\consTr{\createEv{\pv{foo}}{\pv{F}'}{\many{v}}}{\mkTr{\sigma}}}}{\cont{\pv{F}'}} \mid \pv{F}'~\text{fresh}\}  
}

\begin{example}\label{ex:async-run-create}
  Calling \lstinline|two_power| in \zcref{line:call-two-power} of
  \zcref{lst:async-prog}, creates a future. The evaluation is
  $\valBP{\sigma_0}{f_0}{\rust{two\_power(1)}}
  \{\concatTr{\consTr{\update{\sigma_0}{\pv{F}}{\ast}}{\consTr{\createEv{\pv{two\_power}}{\pv{F}}{1}}{\mkTr{\sigma_0}}}}{\cont{\pv{F}}}
  \mid \pv{F}~\text{fresh}\}$, assuming we are in some
  state~$\sigma_0$ and the current future is~$f_0$.
\end{example}

A future is scheduled, or spawned, with the \lstinline|spawn|
function. We mark this with a \emph{spawn event} containing the
future's identifier and the symbolic variable \pv{H}, standing for a
\emph{handle identifier}. The latter is needed, as we formally
differentiate between local futures and handles, i.e., wrappers around
futures which represent the tasks. The set of all handle identifiers
is called $\HId \subset \FId$.
\evalrule{async-spawn}{
  \valP{F}{\rust{spawn($v$)}}&=&
    \{\concatTr{\consTr{\update{\sigma}{\pv{H}}{\ast}}{\consTr{\spawnEv{v}{\pv{H}}}{\mkTr{\sigma}}}}{\cont{\pv{H}}} \mid \pv{H}~\text{fresh}\}  
}

\begin{example}\label{ex:async-run-spawn}
  The futures created in the \lstinline|main| function in
  \zcref{lst:async-prog} are spawned. If we assume that the future
  created in \zcref{ex:async-run-create} is concretized to
  future~$f_t$ and the future of \lstinline|bar| is~$f_b$, we have the
  spawn events $\spawnEv{f_t}{h_t}$ and $\spawnEv{f_b}{h_b}$ for some
  fresh handle identifiers $h_t$ and $h_b$.
\end{example}

Evaluating an async function emits an \emph{evaluation event} with the
function's name, the currently executing future and fresh symbolic
variables $\many{\pv{X}}$ for the call arguments. Then, the function's
body is evaluated with parameters $\many{\pv{x}}$ substituted with
$\many{\pv{X}}$, just as in \zcref{eq:local-fn-body}.
\evalrule{async-body}{
  \valP{F}{\rust{\kwAsync~\kwFn foo($\many{\pv{x}}$) $e$}} = \{\concatTr{\consTr{\sigma[\many{\pv{X}\mapsto\ast}]}{\consTr{\evalEv{\pv{foo}}{F}{\many{\pv{X}}}}{\mkTr{\sigma}}}}{\cont{e\subst{\many{\pv{x}}}{\many{\pv{X}}}}} \mid \many{\pv{X}}~\text{fresh}\}
}

\begin{example}\label{ex:async-run-body}
  If we evaluate function \lstinline|bar| of \zcref{lst:async-prog} in
  state~$\sigma_2$ and future~$h_b$, we have the traces
  $\{\concatTr{\consTr{\update{\sigma_2}{\pv{X}}{\ast}}{\consTr{\evalEv{\pv{bar}}{h_b}{\pv{X}}}{\mkTr{\sigma_2}}}}{\cont{\rust{\{\kwReturn
        two\_power(X).\kwAwait * 2;\}}}} \mid \many{\pv{X}}~\text{fresh}\}$.
\end{example}

A \lstinline|return| expression now emits a computation event with the
active future and the returned value. We replace
\zcref{eq:local-return} with the following:
\evalrule{async-return}{
  \valP{F}{\rust{\kwReturn $v$}} &=& \{\concatTr{\consTr{\sigma}{\consTr{\compEv{F}{v}}{\mkTr{\sigma}}}}{\cont{\rust{()}}}\}
}

Awaiting a local future emits a \emph{local await event}
$\localAwEv{v}$, where $v \in \FId$ is the awaited future. Since this
event starts computation of the function represented by $v$, we model
it similarly to a local synchronous call (\zcref{eq:local-call}) and
leave a \enquote{get expression} for the awaited future. This
expression can now be applied to call identifiers or local
futures. However, if $v$ is a handle identifier, i.e., refers to a
spawned future, we directly emit a \emph{computation reaction event}
with a symbolic result value. The well-formedness predicate ensures
that an \lstinline|await| on a spawned future is only executed once
this future is completed.
\evalrule{await}{
  \valP{F}{\rust{$v$.\kwAwait}}&=&
  \begin{cases}
    \{\concatTr{\consTr{\sigma}{\consTr{\localAwEv{v}}{\mkTr{\sigma}}}}{\cont{\rust{?$v$}}}\}&\text{if}~v\in\FId \setminus \HId\\
    \{\concatTr{\consTr{\update{\sigma}{\pv{V}}{\ast}}{\consTr{\compREv{v}{\pv{V}}}{\mkTr{\sigma}}}}{\cont{\pv{V}}} \mid \pv{V}~\text{fresh}\}&\text{if}~v\in\HId
  \end{cases}
}

\begin{example}\label{ex:async-run-await}
  The futures created by the calls in \zcref{lst:async-prog},
  \zcref{line:local-two-power,line:local-bar} are local futures, which
  are not spawned; hence, their \lstinline|await| expressions in those
  same lines \enquote{end up} in the first case of
  \zcref{eq:local-await}. On the other hand, the \lstinline|await|s in
  \zcref{line:await-spawned} are handled by \zcref{eq:local-await}'s
  second case, as the futures were spawned and are thus handle
  identifiers.
\end{example}

\subsection{Well-Formedness}
\label{sec:async-wf}

We adjust the well-formedness predicate as follows: A trace ending in
a creation event is well-formed if the preceding trace is well-formed
and does not contain a creation event with the same future identifier.
The create event cannot create a handle, as those must be created via
spawning.
\begin{equation}\label{eq:wf-create}
  \wf{\consTr{\createEv{\pv{foo}}{f}{\many{v}}}{\sh}} = \wf{\sh} \wedge f\not\in\HId \wedge \nexists\pv{foo}',\many{v}'.\ \createEv{\pv{foo}'}{f}{\many{v}'} \in \sh
\end{equation}

Spawn events must follow a creation event of the spawned future and
this future must not have been spawned or locally awaited already. The
created handle identifier must also be unique. A future can either be
local or spawned, never both. Hence, no future can have both a local
await and a spawn event in the same trace.
\begin{equation}\label{eq:wf-spawn}
  \begin{aligned}
    \wf{\consTr{\spawnEv{f}{h}}{\sh}} = \,&\wf{\sh} \wedge \exists \pv{foo},\many{v}.\ \createEv{\pv{foo}}{f}{\many{v}}\in\sh \wedge \localAwEv{f}\not\in\sh \\
                                          &\wedge \nexists h'.\ \spawnEv{f}{h'} \in \sh \wedge \nexists f'.\ \spawnEv{f'}{h} \in \sh
  \end{aligned}
\end{equation}

Local await events are only valid if the future has already been
created but not awaited or spawned before:
\begin{align*}
  \wf{\consTr{\localAwEv{f}}{\sh}} =\, &\wf{\sh} \wedge f\not\in\HId \wedge 
                                         \exists\pv{foo},\many{v}.\ \createEv{\pv{foo}}{f}{\many{v}} \in \sh\\
                                       &\wedge \localAwEv{f}\not\in\sh \wedge \nexists h.\
                                         \spawnEv{f}{h}\in\sh
\end{align*}

An evaluation event can only occur once per future. Additionally, it
must follow a local await event or a spawn event. If the future~$f$ is
local, it must have been created before being evaluated and the
function name and parameter must match, ensuring a pairing between
call and execution. If the future is a spawned future, it must have
been spawned from a future~$f'$ that was created with the correct
function name and arguments.
\begin{equation}\label{eq:wf-eval}
  \begin{aligned}
    \wf{\consTr{\evalEv{\pv{foo}}{f}{\many{v}}}{\sh}} =\, &\wf{\sh} \wedge \nexists\pv{foo}',\many{v}'.\ \evalEv{\pv{foo}'}{f}{\many{v}'} \in \sh \\
                                                          &\wedge (\localAwEv{f} \in \sh \wedge \createEv{\pv{foo}}{f}{\many{v}}\in\sh \\
                                                          &\quad \vee f\in\HId \wedge \exists f'.\ \spawnEv{f'}{f}\in\sh \wedge \createEv{\pv{foo}}{f'}{\many{v}} \in \sh)
  \end{aligned}
\end{equation}

We place no restrictions on computation events, as we can only reach a
return after evaluating a function, which would already have taken
care of the necessary events. Computation reaction events must follow
a corresponding computation event, which leads to a case distinction:
If the event is over a future, it must follow a computation event of
that future. If, however, the event is over a call identifier, it was
a synchronous call and, hence, the future in the computation event is
the future in which the function was called; the computation reaction
event must also directly succeed the computation event. There can be
at most one computation reaction event per identifier. In the
following, we use $\mathit{fi}\in\CId \uplus \FId$ for a future or
call identifier.
\begin{equation*}
  \begin{array}{@{}r@{\,}c@{\,}l@{}}
    \wf{\consTr{\compEv{f}{v}}{\sh}} & = & \wf{\sh} \\
    \wf{\consTr{\compREv{\mathit{fi}}{v}}{\sh}} & = & \wf{\sh} \wedge \nexists v'.\ \compREv{\mathit{fi}}{v'} \in \sh \\
                                     & & \wedge \, (\mathit{fi}\in\FId \rightarrow \compEv{\mathit{fi}}{v} \in \sh) \\
                                     & & \wedge \, (\mathit{fi}\in\CId \rightarrow \lastEv(\sh)=\compEv{\mathit{fi}}{v})
  \end{array}
\end{equation*}

\subsection{Trace Composition Rules}
\label{sec:async-comp}

We define trace composition rules for async \Rust that generate all
possible schedules of futures or \enquote{tasks.} They are defined
over configurations consisting of a bag of tasks~$Q$. A task, thus,
consists of a stack of continuation markers as in
\zcref{sec:trace-comp-rules} with two differences:
\begin{enumerate*}[(a)]
\item the continuation markers are tagged with future identifiers,
  e.g., $\contP{f_0}{\rust{x + y}}$, and
\item the elements of the stack represent both called functions and
  locally evaluated, unscheduled futures.
\end{enumerate*}
A configuration is written as $\asCfg{\sh}{\mathit{active}}{Q}$, where
$\sh$ is the current shining trace, $\mathit{active}$ is the active
task, i.e., the task currently being executed, or $\Idle$ if there is
none, and $Q$ is the current bag of tasks. This representation ensures
that an active task is not prematurely preempted.

The first rule executes the top continuation marker of the active task
if possible. It pops the first continuation of the stack, takes one
step of the local evaluation, extends the trace by the concretization
of the result if it is well-formed, and pushes the new continuation
onto the stack. This rule corresponds to \zcref{eq:rule-sync-base}
extended with a bag of stacks. We \highlightRule{highlight} which
parts of the rules are changed.
\gcondrule{Eval}{
  \sigma=\last(\sh) \quad \concatTr{\tau}{\cont{e'}}\in\valP{\highlightRule{f}}{e} \quad
  \rho\text{ concretizes }\tau \quad \wf{\chopTrSem{\sh}{\rho(\tau)}}
}{
  \asCfgH{\sh}{\left[\contP{\highlightRule{f}}{e},\many{k}\right]}{Q} \to \asCfgH{\chopTrSem{\sh}{\rho(\tau)}}{\left[\contP{\highlightRule{f}}{e'},\many{k}\right]}{Q}
}

\begin{example}\label{ex:async-run-eval}
  We continue our running \zcref{ex:async-run}. Assume we are in a
  configuration
  $\asCfg{\consTr{\sigma_0}{\sh_0}}{[\contP{h_t}{\rust{\{\kwReturn~\kwIf
        n == 0 $\ldots$\}}}]}{Q_0}$ with $\sigma_0(\pv{n})=1$. Then,
  we are executing function \lstinline|two_power| in the active task,
  $Q_0$ must contain the task for \lstinline|main| and, possibly, the
  task for \lstinline|bar| on handle~$h_b$ (the latter may already be
  finished). With \zcref{eq:rule-Eval}, we execute the body one step
  further to
  $\asCfg{\consTr{\sigma_0}{\sh_0}}{[\contP{h_t}{\rust{\{\kwReturn~\kwIf
        1 == 0 $\ldots$\}}}]}{Q_0}$. Clearly, this is the semantics we
  expect, behaving similar to \zcref{sec:trace-comp-rules}.
\end{example}

\zcref[S]{eq:rule-PushCall} extends the active task with a new call
context, executing a locally called, synchronous function, just as
\zcref{eq:rule-sync-push} does in the synchronous case. The rules are
identical except for the addition of $Q$ and the tag~$f$.
\gcondrule{PushCall}{
  \sigma=\last(\sh) \quad f\in\FId \quad \rust{\kwFn foo($\many{\pv{x}}$) $e$}\in\Fn \\
  \concatTr{\tau}{\cont{e'}}\in\valP{\highlightRule{f}}{\rust{\kwFn
      foo($\many{x}$) $e$}} \quad \rho\text{ concretizes }\tau \quad
  \wf{\chopTrSem{\sh}{\rho(\tau)}} }{
  \asCfgH{\sh}{\left[\many{k}\right]}{Q} \to
  \asCfgH{\chopTrSem{\sh}{\rho(\tau)}}{\left[\contP{\highlightRule{f}}{e'},\many{k}\right]}{Q}
}

A locally awaited future is handled exactly the same as a synchronous
in call \zcref{eq:rule-PushCall}, except in this case, an active task
exists and the new continuation is pushed on
top. \zcref[S]{eq:rule-PushTask} has no corresponding rule in
\zcref{sec:trace-comp-rules}.
\gcondrule{PushTask}{
  \sigma=\last(\sh) \quad f\in\FId\setminus\HId \quad \rust{\kwAsync~\kwFn foo($\many{x}$) $e$} \in\Fn\\
  \concatTr{\tau}{\cont{e'}}\in\valP{f}{\rust{\kwAsync~\kwFn foo($\many{x}$) $e$}} \quad
  \rho\text{ concretizes }\tau \quad \wf{\chopTrSem{\sh}{\rho(\tau)}}
}{
  \asCfg{\sh}{\left[\many{k}\right]}{Q} \to \asCfg{\chopTrSem{\sh}{\rho(\tau)}}{\left[\contP{f}{e'},\many{k}\right]}{Q}
}

\begin{example}\label{ex:async-run-push-task}
  \zcref[S]{eq:rule-PushTask} is used, for instance, for the recursive
  future in \lstinline|two_power|: After the local future is awaited,
  the rule becomes applicable, according to $\mathit{wf}$. Like a
  synchronous function call, the stack is extended with the new future
  as top and takes the form
  $[\contP{f_3}{\rust{\{\kwReturn if k == 0
      $\ldots$\}}},\contP{h_t}{\rust{?$f_3$}}]$.
\end{example}

Similar to \zcref{eq:rule-sync-pop}, we pop the top continuation of a
stack, returning control to the calling function. As before, we do not
require $e$ to be a value. Rather, the structure of the continuation
stack and the well-formedness predicate ensures that $e'$ is a get
expression, which can only be executed once the called function,
represented by $\contP{f}{e}$, returns by the argument for
\zcref{eq:rule-sync-pop}.
\gcondrule{Pop}{
  \sigma=\last(\sh) \quad \concatTr{\tau}{\cont{e''}} \in \valP{\highlightRule{f'}}{e'} \quad \rho\text{ concretizes }\tau \quad \wf{\chopTrSem{\sh}{\rho(\tau)}}
}{
  \asCfgH{\sh}{\left[\contP{\highlightRule{f}}{e},\contP{\highlightRule{f'}}{e'},\many{k}\right]}{Q} \to
  \asCfgH{\sh}{\left[\contP{\highlightRule{f'}}{e''},\many{k}\right]}{Q}
}

\begin{example}
  After the recursive future in \zcref{ex:async-run-push-task} was
  pushed onto the stack and it was fully evaluated as in
  \zcref{ex:async-run-eval}, the \lstinline|return| emitted a
  computation event. According to the $\mathit{wf}$ predicate the this
  event must follow, which can only be done by the \rust{?$f_3$}
  expression in the continuation marker below the top. Thus, only
  \zcref{eq:rule-Pop} is applicable and the stack is simplified to
  $[\contP{h_t}{1}]$.
\end{example}

If the stack is empty, we have no more active task and mark the
configuration as idle.
\grule{Empty}{
  \asCfg{\sh}{[]}{Q} \to \asCfg{\sh}{\Idle}{Q}
}

When the next sub-expression of an expression~$e$ is an \kwAwait on a
scheduled future that has no corresponding computation event in
trace~$\sh$, we call $e$ \emph{suspended in $\sh$}, written
$\suspended{e}{\sh}$. We define this predicate as
\[
  \suspended{e}{\sh}:=\exists h\in\HId.\
\nextExpr{e}{\Val}=(\rust{$h$.\kwAwait}, \ldots) \wedge \nexists v\in\Val.\ \compEv{h}{v}\in\sh\ .
\]

If the top continuation marker of the active task contains a suspended
expression, we no longer have an active task and the suspended task is
added to the bag of tasks. This models \Rust's behavior of an
\lstinline|await| only yielding if the awaited future is not yet
ready.
\gcondrule{Suspend}{
  \suspended{e}{\sh}
}{
  \asCfg{\sh}{\left[\contP{f}{e},\many{k}\right]}{Q} \to \asCfg{\sh}{\Idle}{Q \uplus \left\{\left[\contP{f}{e},\many{k}\right]\right\}}
}

A scheduled future can be executed if there is no active task. Then an
arbitrary function is taken, evaluated, and, if well-formed, appended
to the trace. The well-formedness ensures that only scheduled
functions are executed. The configuration now has an active task with
the new continuation marker as its only element.
\gcondrule{SchedTask}{
  \sigma=\last(\sh) \quad f\in\HId \quad \rust{\kwAsync~\kwFn foo($\many{x}$) $e$}\in\Fn\\
  \concatTr{\tau}{\cont{e'}}\in\valP{f}{\rust{\kwAsync~\kwFn foo($\many{x}$) $e$}} \quad
  \rho\text{ concretizes }\tau \quad \wf{\chopTrSem{\sh}{\rho(\tau)}}
}{
  \asCfg{\sh}{\Idle}{Q} \to \asCfg{\chopTrSem{\sh}{\rho(\tau)}}{\left[\contP{f}{e'}\right]}{Q}
}

\begin{example}
  \zcref[S]{eq:rule-SchedTask} may be applied to functions
  \lstinline|two_power| and \lstinline|bar| after futures of them are
  spawned in \zcref{line:call-two-power,line:call-bar}, as soon as
  there is no active task. Observe how the well-formedness predicate
  ensures that their identifiers are unique and that parameters match.
\end{example}

Alternatively, if we are idle, we can pick a task from the bag $Q$ and
add it as the active task. The chosen task is also removed from $Q$.

\gcondrule{ReschedTask}{
  \left[\contP{f}{e},\many{k}\right] \in Q
}{
  \asCfg{\sh}{\Idle}{Q} \to \asCfg{\sh}{\left[\contP{f}{e},\many{k}\right]}{Q \setminus \left\{\left[\contP{f}{e},\many{k}\right]\right\}}
}

\subsection{Global Program Semantics}
\label{sec:async-global}

Let $r$ be a \Rust program with main function \rust{\kwFn main() $e$}. We
extend the global semantics of \zcref{sec:glob-progr-semant} to our
present case:

\begin{definition}[Asynchronous Program Semantics]\label{def:prog-sem-async}
  Let $r$ be a an asynchronous \Rust program with the main function
  \rust{\kwFn main() $e$}. Assume a maximal sequence of applications
  of
  \zcref{eq:rule-Eval,eq:rule-SchedTask,eq:rule-PushTask,eq:rule-PushCall,eq:rule-Pop,eq:rule-Empty,eq:rule-ReschedTask,eq:rule-Suspend}
  \[
    \asCfg{\sh_0}{\left[\contP{f^\text{init}}{e}\right]}{\emptyset}\to\asCfg{\sh_1}{\mathit{active}_1}{Q_1}\to\cdots
  \]
  with $\sh_0=\mkTr{\sigma_0}$, $\dom(\sigma_0)=\emptyset$ and
  $f^\text{init}$ is the initial future identifier. If the sequence is
  finite, it has the form
  \[
    \asCfg{\mkTr{\sigma}}{\left[\contP{f^\text{init}}{e}\right]}{\emptyset}\to^\ast\asCfg{\shi}{\left[\many{k}\right]}{Q}
  \]
  If the sequence is infinite, we define
  $\shi=\lim_{i\to\infty}\sh_i$. We denote the set of all such
  possibly infinite traces as $\traces{r}$.
\end{definition}

This semantics for asynchronous programs now gives meaning to programs
with \Rust's async features. It provides a highly non-deterministic
scheduler that is maximally permissive: Any possible scheduling of
futures in a program~$r$ is covered by the semantics and included in
the set $\traces{r}$. The well-formedness restrictions ensure that any
trace in $\traces{r}$ is valid, i.e., it follows the existing \Rust
semantics. While typically not desired from an actual runtime, it
gives us a base to which we can compare any real-world runtime, as
they must produce a subset of the traces in $\traces{r}$.

It remains to model the Tokio runtime and show that they produce fair
schedules.


\section{Showing the Fairness of Tokio's Schedulers}
\label{sec:fair-tokio}

A central property of any runtime (or scheduler) is its
\emph{fairness}: If a future is scheduled, a developer wants
guarantees that, under certain conditions, this future is eventually
executed. In the case of \Rust's Tokio runtime, its documentation
gives informal fairness guarantees~\autocite{TokioDoc}.

We cover both the formalization of Tokio's scheduling behavior in
terms of composition rules and a formalization of fairness guarantees,
including those of the documentation. We also provide a formal proof
that these guarantees hold for one of Tokio's runtimes.

\subsection{Background: Tokio}
\label{sec:bg-tokio}

The Tokio library provides two runtimes: the \emph{current thread
  runtime}, which works on a single thread; and the
\emph{multi-threaded runtime}, which provides concurrency and
parallelism. Although the implementations differ slightly, both
runtimes are intended to work similarly, except for the number of
threads.

The following description is derived partly from Tokio's
documentation~\autocite{TokioDoc} and partly from analysis of Tokio's
code base~\autocite{tokio}.

In both runtimes, each thread is associated with a \emph{worker}, each
of which has a \emph{local queue} of tasks and, potentially, an active
task (similar to the notion of active task in
\zcref{sec:async-comp}). A task is a future scheduled on the
runtime. Each worker also has a \emph{tick}, a counter that is
increased whenever a future is polled. Further, each runtime has a
\emph{global queue} of tasks. Additionally, the runtimes have a
parameter for the \emph{global queue interval}, call it
$I_G \in \mathbb{N}$, controlling the interval in which the global
queue is considered. This parameter cannot be changed while the
runtime is active. To which queue a scheduled task is appended
depends on where it is scheduled and the queues' current state. Tasks
can be scheduled either from a task executed by a worker or from
outside the runtime, i.e., from parts of the \Rust program not
controlled by the runtime. If the task is scheduled from outside the
runtime or the local queue is full, the task is pushed to the global
queue.  Otherwise, it is pushed onto the corresponding local queue.

Additionally, the runtimes track I/O timers and events. However, as
I/O is more difficult to model, we ignore this feature for now and
only model futures.

\subsubsection{The Current Thread Runtime}
The current thread runtime must decide which task to execute next if
the worker is \emph{idle}, i.e., if it has no active task: If the tick
is divisible by $I_G$ or the local queue is empty, remove the first
task of the global queue and execute it. Otherwise, remove and execute
the first task of the local queue. Thus, tasks in the local queue are
prioritized while the global queue is only queried in periodic
intervals.

\subsubsection{The Multi-Threaded Runtime}
The multi-threaded runtime not only has more workers but also some
optimizations and more complex scheduling. Each worker has a \ac{LIFO}
slot in addition to its queue. This slot contains the last locally
scheduled task. The assumption is that the most recently scheduled
task is often more pressing. An idle worker considers the following
cases to choose its next task:
\begin{enumerate*}[(a)]
\item\label{item:mt-ig} If the tick is divisible by $I_G$ and the
  global queue is not empty, execute the first global task;
\item\label{item:mt-lifo} otherwise, if there is a task in the
  \ac{LIFO} slot, execute it;
\item\label{item:mt-local} otherwise, if the local queue is not empty,
  execute its first task;
\item\label{item:mt-local-empty} otherwise, if the global queue is not
  empty, take a fixed number of tasks from the global queue, execute
  the first task, and push the rest to the back of the local queue;
\item\label{item:mt-steal} otherwise, try to \enquote{steal,} i.e.,
  take, another worker's local tasks.
\end{enumerate*}

Case~\zcref[noname]{item:mt-ig} ensures that tasks in the global queue
are not ignored and checked in a fixed interval. If a worker is idle,
case~\zcref[noname]{item:mt-local-empty} restores some balance by
moving global tasks to the idle worker. Similarly,
case~\zcref[noname]{item:mt-steal} balances the load between workers.

\zcref[S]{fig:runtime} shows the structure of both Tokio runtimes. We
now model Tokio by extending our \ac{LAGC} semantics. The local
evaluation rules defined in \zcref{sec:async-comp} are applicable to
both runtimes; only the trace composition and program semantics differ
between the two runtimes.

\begin{figure}[h]
  \centering
  \begin{tikzpicture}
    \newcommand{\tikzQ}[4][2]{
      \foreach \i in {1,...,#2} {
        \pgfmathparse{#2-1}
        \ifnum \i=\pgfmathresult
          \draw (#3 + \i/#1,#4) node {$\cdots$};
        \else
          \draw (#3 + \i/#1,#4) node[minimum size=.5cm,draw] {};
        \fi
      }
    }
    
    \draw[rounded corners] (0.1,.85) rectangle ++(2.85,.7);
    \draw (0.4,1.2) node {$q_G$};
    \tikzQ{4}{.5}{1.2}

    \draw[rounded corners,dashed] (-2.5,-1.65) rectangle ++(4.1,2);
    \draw (-1.9,0.35) node[circle,draw,fill=white] {$w_1$};
    \draw (-2.1, -.4) node {$t_1$};
    \draw (-1.3, -.4) node[minimum size=.5cm,draw] {$\mathit{act}_1$};
    \draw (-.1, -.4) node[minimum size=.5cm,draw,dotted] {$\mathit{lifo}_1$};
    \draw[rounded corners] (-2.4,-1.55) rectangle ++(3.85,.7);
    \draw (-2.1,-1.2) node {$q_1$};
    \tikzQ{6}{-2}{-1.2}

    \draw (2.1,-.65) node {$\cdots$};
    
    \draw[rounded corners,dashed] (2.5,-1.65) rectangle ++(3.6,2);
    \draw (3.1,0.35) node[circle,draw,fill=white] {$w_n$};
    \draw (2.9, -.4) node {$t_n$};
    \draw (3.7, -.4) node[minimum size=.5cm,draw,dotted] {$\mathit{act}_n$};
    \draw (4.9, -.4) node[minimum size=.5cm,draw] {$\mathit{lifo}_n$};
    \draw[rounded corners] (2.6,-1.55) rectangle ++(3.35,.7);
    \draw (2.9,-1.2) node {$q_n$};
    \tikzQ{5}{3}{-1.2}
  \end{tikzpicture}
  \Description{A global queue $q_G$ as a list of boxes. $n$ workers
    $w_1,\ldots,w_n$, each with a tick $t_i$, a slot for an active
    task stack $\mathit{act}_i$, a \ac{LIFO} slot $\mathit{lifo}_i$,
    and a queue of task stacks $q_i$.}
  \caption{The general setup of Tokio's runtimes. $q_G$ is the global
    queue of tasks. There are $n$ workers $w_i$, each with a
    tick~$t_i$, a local queue~$q_i$, a (potentially empty) active task
    $\mathit{act}_i$, and a \ac{LIFO} slot $\mathit{lifo}_i$. Note
    that $w_1$ has an active task but no task in its \ac{LIFO} slot
    (denoted by the dotted border), whereas $w_n$ has a task in its
    \ac{LIFO} slot but no active task (dotted border). The current
    thread runtime has only one worker and no \ac{LIFO} slot.}
  \label{fig:runtime}
\end{figure}

\subsection{Local Evaluation}
\label{sec:tokio-local}

As we already defined spawning of tasks with the \lstinline|spawn|
function in \zcref{sec:asynchr-rust-semant}, we reuse those
definitions to define Tokio's scheduler. No additional local rules are
required.

\subsection{Well-Formedness}
\label{sec:tokio-wf}

We adjust and extend the definition of the $\mathit{wf}$-predicate by
adding a case for the new \emph{schedule event}
$\schedEv{\pv{foo}}{f}$, where \pv{foo} is an asynchronous function
and $f\in\FId$ is a future identifier. It denotes that a task for
function~\pv{foo} with future~$f$ has been scheduled.  This event must
follow corresponding spawn and create events. This ensures a coupling
of creation, spawning, and scheduling. The above requirements also
ensure that evaluation is similarly associated.
\begin{equation}
  \label{eq:wf-sched-tokio}
  \begin{aligned}
    \wf{\consTr{\schedEv{\pv{foo}}{f}}{\sh}} = \,&\wf{\sh} \wedge \exists h\in\HId.\ \spawnEv{f}{h}\in\sh \\
                                                 &\quad\wedge \exists\many{v}.\ \createEv{\pv{foo}}{f}{\many{v}}\in\sh \wedge \nexists\pv{foo}'.\ \schedEv{\pv{foo}'}{f} \in\sh
  \end{aligned}
\end{equation}

All other cases are unchanged from \zcref{sec:async-wf}.

\subsection{Modeling the Current Thread Runtime}
\label{sec:model-curr-thre}

We model Tokio's current thread runtime via the trace composition
rules below. The general approach follows~\autocite{HH2024}.

We modify the configuration to be of the form
$\ctCfg{\sh}{\mathit{active}}{t}{q_G}{q}$, where $\sh$ is the current
trace, $\mathit{active}$ is the currently active continuation stack or
$\Idle$ if the scheduler is idle, $t$ is the tick, $q_G$ is the global
queue, and $q$ is the local queue. See \zcref{sec:axiom-queue} for an
axiomatization of queues. The rules below model the relation $\toCT$
to differentiate it from other possible schedules.

To ensure a deterministic scheduling, which allows us to show
fairness, and to accurately model the current thread runtime, we
require many composition rules. \zcref[S]{fig:ct-rules} shows the
decision tree of the current thread runtime.

\begin{figure}[h]
  \small
  \centering
  \begin{tikzpicture}[every text node part/.style={align=center},
    level distance=5mm,
    level 1/.style={sibling distance=2cm},
    level 2/.style={sibling distance=7.9cm}
    level 3/.style={sibling distance=2.5cm}]
    \node {$\sh,C$} [grow=0,right]
    child[level distance=-2mm] {node {$\lastEv(\sh)=\textit{spawnEv}$ \zcref[noname]{eq:rule-ctSpawn}}}
    child {node {$\lastEv(\sh)\ne\textit{spawnEv}$}
      [sibling distance=3cm]
      child {node {$C=\langle\Idle,\ldots\rangle$}
        [sibling distance=23mm]
        child[level distance=6mm] {node {$t\not\equiv 0\mod I_G$}
          [sibling distance=16mm,level distance=6mm]
          child {node {$\neg\qEmpty{q}$}
            [sibling distance=7mm, level distance=12mm]
            child {node {$\qPop{q}=(q',(\pv{foo},f))$ \zcref[noname]{eq:rule-ctIdleLocalMarker}}}
            child {node {$\qPop{q}=\left(q',\left[\many{k}\right]\right)$  \zcref[noname]{eq:rule-ctIdleLocal}}}
          }
          child {node {$\qEmpty{q}$}
            [sibling distance=7mm,level distance=12mm]
            child {node {$\qPop{q_G}=(q'_G,(\pv{foo},f))$ \zcref[noname]{eq:rule-ctIdleLocalEmptyMarker}}}
            child {node {$\qPop{q_G}=\left(q'_G,\left[\many{k}\right]\right)$ \zcref[noname]{eq:rule-ctIdleLocalEmpty}}}
          }
        }
        child[level distance=6mm] {
          node {$t\equiv 0\mod I_G$}
          [sibling distance=7mm, level distance=12mm]
          child {node {$\qPop{q_G}=(q'_G,(\pv{foo},f))$ \zcref[noname]{eq:rule-ctIdleIntervalMarker}}}
          child {node {$\qPop{q_G}=\left(q'_G,\left[\many{k}\right]\right)$ \zcref[noname]{eq:rule-ctIdleInterval}}}
        }
      }
      child[level distance=5mm] {node {$C\ne\langle\Idle,\ldots\rangle$}
        [sibling distance=5mm, level distance=9mm]
        child[level distance=7mm] {node {suspend \zcref[noname]{eq:rule-ctActiveSuspend}}}
        child[level distance=12mm] {node {empty \zcref[noname]{eq:rule-ctActiveEmpty}}}
        child[level distance=13mm] {node {pop \zcref[noname]{eq:rule-ctActivePop}}}
        child[level distance=13mm] {node {local await \zcref[noname]{eq:rule-ctActivePush}}}
        child[level distance=12mm] {node {call \zcref[noname]{eq:rule-ctActivePushCall}}}
        child[level distance=9mm] {node {eval \zcref[noname]{eq:rule-ctActiveEval}}}
      }
    };
  \end{tikzpicture}
  \Description{A decision tree for the scheduling rules of the current
    thread runtime.}
  \caption{The current thread scheduler's rule structure. Read left to
    right.}
  \label{fig:ct-rules}
\end{figure}

If the last event was a spawn event, \zcref{eq:rule-ctSpawn} appends a
schedule event to the trace and, if the result is well-formed, adds a
\emph{task marker} $(\pv{foo}, f)$ to the end of the local
queue.\footnote{For brevity, we only model scheduling tasks from an
  active worker.} The well-formedness requirements above ensure that
the schedule event contains the correct function name. We only add
this event to ensure that afterwards no spawn event is the last event,
guaranteeing that \zcref{eq:rule-ctSpawn} is applied at most once per
scheduled future. Observe that the function \pv{foo} is coupled with
the future~$f$ via the create event in \zcref{eq:wf-sched-tokio}. We
write $\qPush{q}{k}$ for the result of enqueuing $k$ to $q$.
\gcondrule{ctSpawn}{
  \lastEv(\sh)=\spawnEv{f}{h} \quad \rust{\kwAsync~\kwFn foo($\many{x}$) $e$} \in \Fn \\
  \sh'={\consTr{\last(\sh)}{\consTr{\schedEv{\pv{foo}}{f}}{\sh}}} \quad \wf{\sh'}
}{
  \ctCfg{\sh}{\mathit{active}}{t}{q_G}{q} \toCT \ctCfg{\sh'}{\mathit{active}}{t}{q_G}{\qPush{q}{(\pv{foo},f)}}
}

\begin{example}\label{ex:tokio-run-spawn}
  We demonstrate the scheduling rules using \zcref{lst:async-prog}
  (\zcref[page]{lst:async-prog}). The scheduler starts in
  configuration
  $\ctCfg{\mkTr{\sigma_0}}{\left[\contP{f_\text{init}}{e_\text{main}}\right]}{0}{\queue}{\queue}$,
  where $\dom(\sigma_0)=\emptyset$ and $e_\text{main}$ is the main
  function's body.

  Assume the body has been executed so that the \lstinline|spawn| call
  in \zcref{line:call-two-power} was just evaluated, emitting a spawn
  event. Then, \zcref{eq:rule-ctSpawn} is the next rule to be applied
  and we arrive at the configuration
  \begin{align*}
    &\consTr{\sigma_1}{\consTr{\schedEv{\pv{two\_power}}{f_0}}{\consTr{\sigma_1}{\consTr{\cdots}{\mkTr{\sigma_0}}}}},\\
    &\ctState{\left[\contP{f_\text{init}}{\rust{\{\kwLet~f1 = H; $\ldots$\}}}\right]}{0}{\queue}{\qPush{\queue}{(\pv{two\_power},f_0)}}
  \end{align*}
\end{example}

We continue by presenting the rules that are analogous to the rules in
\zcref{sec:async-comp}, namely the rules where there is an active task
which is to be executed, possibly causing function calls or awaiting
local futures. Since an active task exists, $\mathit{active}$ is a
stack of continuation markers. Because the rules are so similar, we
\highlightRule{highlight} which elements of the rule have changed
compared to \zcref{sec:async-comp}.

If the top continuation marker in the stack contains a non-value
expression---which is equivalent to its local evaluation yielding a
non-empty set---we evaluate the expression one step, add the resulting
trace to the configuration, and replace the stack's top element with
the new continuation marker. \zcref[S]{eq:rule-ctActiveEval} adapts
\zcref{eq:rule-Eval}.
\gcondrule{ctActiveEval}{
  \highlightRule{\nexists f,h.\ \lastEv(\sh)=\spawnEv{f}{h} \quad \nexists f.\ \lastEv(\sh)=\localAwEv{f}} \\
  \sigma=\last(\sh) \quad \concatTr{\tau}{\cont{e'}} \in \valP{f}{e} \quad \rho\text{ concretizes }\tau \quad \wf{\chopTrSem{\sh}{\rho(\tau)}}
}{
  \ctCfgH{\sh}{\left[\contP{f}{e},\many{k}\right]}{t}{q_G}{q} \toCTH \ctCfgH{\chopTrSem{\sh}{\rho(\tau)}}{\left[\contP{f}{e'},\many{k}\right]}{t}{q_G}{q}
}

If the last event was an invocation event, we evaluate the called
function and push it onto the active task as in
\zcref{sec:trace-comp-rules,sec:async-comp}. We adapt
\zcref{eq:rule-PushCall} to obtain
\zcref{eq:rule-ctActivePushCall}. Because the last event is an
invocation event, it cannot be a spawn event; hence, we need no
explicit condition as in the previous rules.
\gcondrule{ctActivePushCall}{
  \highlightRule{\lastEv(\sh)=\textit{invEv}(\pv{foo},\ldots)} \quad
  \sigma=\last(\sh) \quad \rust{\kwFn foo($\many{x}$ $e$)}\in\mathit{Fn}
  \\ \concatTr{\tau}{\cont{e'}} \in \valP{f}{e} \quad
  \rho\text{ concretizes }\tau \quad \wf{\chopTrSem{\sh}{\rho(\tau)}}
}{
  \ctCfgH{\sh}{\left[\many{k}\right]}{t}{q_G}{q} \toCTH \ctCfgH{\chopTrSem{\sh}{\rho(\tau)}}{\left[\contP{f}{e'},\many{k}\right]}{t}{q_G}{q}
}

If the last event was a local await event, we evaluate the
corresponding function and push it onto the active task. Hence, we
model awaiting on local futures within the same task similar to
function calls in the synchronous
case. \zcref[S]{eq:rule-ctActivePush} is an adaptation of
\zcref{eq:rule-PushTask}.
\gcondrule{ctActivePush}{
  \highlightRule{\lastEv(\sh)=\localAwEv{f}} \quad
  \sigma=\last(\sh) \quad \rust{\kwAsync~\kwFn foo($\many{x}$ $e$)}\in\mathit{Fn}\\
  \concatTr{\tau}{\cont{e'}} \in \valP{f}{e} \quad
  \rho\text{ concretizes }\tau \quad \wf{\chopTrSem{\sh}{\rho(\tau)}}
}{
  \ctCfgH{\sh}{\left[\many{k}\right]}{t}{q_G}{q} \toCTH \ctCfgH{\chopTrSem{\sh}{\rho(\tau)}}{\left[\contP{f}{e'},\many{k}\right]}{t}{q_G}{q}
}

\begin{example}\label{ex:tokio-run-active-push}
  We continue with the running example. The local \kwAwait in
  \zcref{line:local-bar} pushes a new continuation marker for function
  \lstinline|two_power| on the active stack:
  \begin{align*}
    &\sh_1, \ctState{\left[\contP{f_1}{\rust{$\ldots$ $f_2$.\kwAwait~* 2$\ldots$}}\right]}{t_1}{\queue}{\qPush{\queue}{\left[\contP{f^\text{init}}{\ldots}\right]}}\\
    \to\,&\sh_2, \ctState{\left[\contP{f_2}{\ldots},\contP{f_1}{\rust{$\ldots$ ?$f_2$ * 2 $\ldots$}}\right]}{t_1}{\queue}{\qPush{\queue}{\left[\contP{f^\text{init}}{\ldots}\right]}}
  \end{align*}
  Note that the timer is not increased.
\end{example}

\zcref[S]{eq:rule-ctActivePop}, an adaptation of \zcref{eq:rule-Pop},
removes the top continuation marker of the active task if possible, as
this implies that we awaited a local future, which is now fully
evaluated, or we fully evaluated a synchronous call.
\gcondrule{ctActivePop}{
  \highlightRule{\nexists f,h.\ \lastEv(\sh)=\spawnEv{f}{h}} \quad \sigma=\last(\sh) \\
  \concatTr{\tau}{\cont{e''}} \in \valP{f'}{e'} \quad \rho\text{ concretizes }\tau \quad \wf{\chopTrSem{\sh}{\rho(\tau)}}
}{
  \ctCfgH{\sh}{\left[\contP{f}{e},\contP{f'}{e'},\many{k}\right]}{t}{q_G}{q} \toCTH \ctCfgH{\sh}{\left[\contP{f'}{e''},\many{k}\right]}{t}{q_G}{q}
}

\begin{example}\label{ex:tokio-run-active-pop}
  Continuing \zcref{ex:tokio-run-active-push}, we find that
  \zcref{eq:rule-ctActivePop} can be applied in the execution of
  \zcref{lst:async-prog} when the call to \lstinline|two_power| from
  \lstinline|bar| (\zcref{line:local-bar}) is fully computed. Then,
  the top continuation marker is popped off the stack.
\end{example}

If the active continuation stack is empty, we remove it and make the
scheduler idle. \zcref[S]{eq:rule-ctActiveEmpty} adapts
\zcref{eq:rule-Empty}.
\gcondrule{ctActiveEmpty}{
  \highlightRule{\nexists f,h.\ \lastEv(\sh)=\spawnEv{f}{h}}
}{
  \ctCfgH{\sh}{\left[\right]}{t}{q_G}{q} \toCTH \ctCfgH{\sh}{\Idle}{t}{q_G}{q}
}

When the expression of the active task is suspended in $\sh$, the the
worker is made idle and the stack is pushed to the end of the
queue. \zcref[S]{eq:rule-ctActiveSuspend} follows
\zcref{eq:rule-Suspend}.
\gcondrule{ctActiveSuspend}{
  \highlightRule{\nexists f,h.\ \lastEv(\sh)=\spawnEv{f}{h}} \quad \suspended{e}{\sh}
}{
  \ctCfgH{\sh}{\left[\contP{f}{e},\many{k}\right]}{t}{q_G}{q} \toCTH \ctCfgH{\sh}{\Idle}{t}{q_G}{\qPush{q}{\left[\contP{f}{e},\many{k}\right]}}
}
Observe that if \zcref{eq:rule-ctActiveSuspend} is applicable,
\zcref{eq:rule-ctActiveEval} cannot be applicable, as the resulting
trace cannot be well-formed if the computation event is missing.

\begin{example}\label{ex:tokio-run-active-suspend}
  We continue the running example by demonstrating suspension. After
  the initial task has spawned its sub-tasks
  (\zcref{line:call-two-power,line:call-bar}) and reached the first
  \lstinline|await| in \zcref{line:await-spawned}, it cannot continue
  until the first task is resolved. Hence, the initial task is
  suspended and \zcref{eq:rule-ctActiveSuspend} is applicable:
  \begin{align*}
    &\sh_3, \ctState{\left[\contP{f^\text{init}}{\ldots}\right]}{t_2}{\queue}{\qPush{\qPush{\queue}{(\pv{two\_power},f_0)}}{(\pv{bar},f_1)}}\\
    \to&\sh_4, \ctState{\Idle}{t_2}{\queue}{\qPush{\qPush{\qPush{\queue}{(\pv{two\_power},f_0)}}{(\pv{bar},f_1)}}{\left[\contP{f^\text{init}}{\ldots}\right]}}
  \end{align*}
\end{example}

Now we model the case of the idle worker, following the descriptions
in \zcref{sec:bg-tokio}. If \zcref{eq:rule-ctSpawn} is not applicable,
the tick is divisible by the global interval, the global queue is not
empty (pop is defined), and the configuration is idle, then
\zcref{eq:rule-ctIdleInterval} takes the first task of $q_G$, advances
the tick, and changes the active task to the popped one.
\gcondrule{ctIdleInterval}{
  \nexists f,h.\ \lastEv(\sh)=\spawnEv{f}{h} \quad t \equiv 0 \mod I_G \quad \qPop{q_G}=(q'_G,\left[\many{k}\right])
}{
  \ctCfg{\sh}{\Idle}{t}{q_G}{q} \toCT \ctCfg{\sh}{\left[\many{k}\right]}{t + 1}{q'_G}{q}
}

An analogous rule is required for the case that the first task in
$q_G$ is not a continuation marker but a task marker:
\zcref[S]{eq:rule-ctIdleIntervalMarker} evaluates the function of the
task marker one step, concretizes it, and makes it the active task.
\gcondrule{ctIdleIntervalMarker}{
  \nexists f,h.\ \lastEv(\sh)=\spawnEv{f}{h} \quad t \equiv 0 \mod I_G \\
  \qPop{q_G}=(q'_G,(\pv{foo},f)) \quad  \sigma=\last(\sh) \quad \lookup{\pv{foo}}=\rust{\kwAsync~\kwFn foo($\many{x}$) e} \\
  \concatTr{\tau}{\cont{e'}} \in \valP{f}{\rust{\kwAsync~\kwFn foo($\many{x}$) e}} \quad  \rho\text{ concretizes }\tau \quad \wf{\chopTrSem{\sh}{\rho(\tau)}}
}{
  \ctCfg{\sh}{\Idle}{t}{q_G}{q} \toCT \ctCfg{\chopTrSem{\sh}{\rho(\tau)}}{\left[\contP{f}{e'}\right]}{t + 1}{q'_G}{q}
}

\zcref[S]{eq:rule-ctIdleLocalEmpty} proceeds similarly in the case
where the local queue is empty: The first global task is taken and
executed.
\gcondrule{ctIdleLocalEmpty}{
  \nexists f,h.\ \lastEv(\sh)=\spawnEv{f}{h} \quad t \not\equiv 0 \mod I_G \quad
  \qEmpty{q} \quad \qPop{q_G}=(q'_G,[\many{k}])
}{
  \ctCfg{\sh}{\Idle}{t}{q_G}{q} \toCT \ctCfg{\sh}{\left[\many{k}\right]}{t + 1}{q'_G}{q}
}

Otherwise, the first local task is executed per
\zcref{eq:rule-ctIdleLocal}.
\gcondrule{ctIdleLocal}{
  \nexists f,h.\ \lastEv(\sh)=\spawnEv{f}{h} \quad t \not\equiv 0 \mod I_G \quad \qPop{q}=(q',[\many{k}])
}{
  \ctCfg{\sh}{\Idle}{t}{q_G}{q} \toCT \ctCfg{\sh}{\left[\many{k}\right]}{t + 1}{q_G}{q'}
}

Analogous rules to
\zcref{eq:rule-ctIdleLocalEmpty,eq:rule-ctIdleLocal}, where the task
from the queue is not a continuation marker but a task marker, are
defined as
\zcref{eq:rule-ctIdleLocalEmptyMarker,eq:rule-ctIdleLocalMarker} in
\zcref{app:addt-curr-thre}.

\begin{example}\label{ex:tokio-run-idle-task}
  We continue from \zcref{ex:tokio-run-active-suspend}: Assume the two
  tasks in \zcref{lst:async-prog} have been spawned
  (\zcref{line:call-two-power,line:call-bar}) but not yet started and
  the initial task is suspended while it waits for the spawned
  tasks. Then, the program is in an idle configuration and we can
  apply \zcref{eq:rule-ctIdleLocalMarker}, yielding the following
  transition:
  \begin{align*}
    &\sh_5, \ctState{\Idle}{t_3}{\queue}{\qPush{\qPush{\qPush{\queue}{(\pv{two\_power},f_0)}}{(\pv{bar},f_1)}}{\left[\contP{f^\text{init}}{\ldots}\right]}}\\
    \to&\sh_6, \ctState{\left[\contP{f_0}{e_2}\right]}{t_3+1}{\queue}{\qPush{\qPush{\queue}{(\pv{bar},f_1)}}{\left[\contP{f^\text{init}}{\ldots}\right]}}
  \end{align*}
  where $e_2$ is the expression left after evaluating the function
  body of \lstinline|two_power| one step and $\sh_2$ is the extension
  of $\sh_1$ with the resulting trace. Note that the time is
  incremented and the local queue is shortened by one element. The
  cases for starting function \lstinline|bar| and continuing the
  initial task are similar, though the latter uses
  \zcref{eq:rule-ctIdleLocal}.
\end{example}

The definition of the rules with their restrictive conditions ensures
a deterministic scheduler.

\begin{proposition}\label{prop:ct-determ}
  \zcref[S]{eq:rule-ctSpawn,eq:rule-ctIdleInterval,eq:rule-ctIdleIntervalMarker,eq:rule-ctIdleLocalEmpty,eq:rule-ctIdleLocal,eq:rule-ctActivePop,eq:rule-ctActivePush,eq:rule-ctActiveEval,eq:rule-ctActiveSuspend,eq:rule-ctIdleLocalEmptyMarker,eq:rule-ctIdleLocalMarker,eq:rule-ctActiveEmpty,eq:rule-ctActivePushCall}
  are mutually exclusive and model a \emph{deterministic},
  well-defined scheduler.
\end{proposition}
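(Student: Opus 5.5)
The plan is to follow the decision tree of \zcref{fig:ct-rules}. Each branching node splits on a condition together with its negation, and each leaf carries exactly one rule. Mutual exclusivity then reduces to checking that the premises of the rules at different leaves are pairwise contradictory. Determinism additionally needs that each single rule yields at most one successor configuration, modulo the renaming freedom already discussed for \zcref{lem:determ}.

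\textbf{Step 1: the spawn split.} First I split on whether $\lastEv(\sh)=\spawnEv{f}{h}$. Only \zcref{eq:rule-ctSpawn} requires this condition. Every other rule either explicitly requires its negation, or requires $\lastEv(\sh)$ to be an invocation or local await event, which is incompatible with a spawn event. Within \zcref{eq:rule-ctSpawn}, the function $\pv{foo}$ is uniquely determined: by \zcref{eq:wf-sched-tokio} it must match the unique create event for $f$, and uniqueness follows from \zcref{eq:wf-create}.

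\textbf{Step 2: the idle branch.} Next I split on whether $\mathit{active}=\Idle$. In the idle branch, the tick condition $t\equiv 0 \bmod I_G$ or its negation separates the interval rules from the local rules. Among the local rules, $\qEmpty{q}$ or its negation separates \zcref{eq:rule-ctIdleLocalEmpty} and its marker variant from \zcref{eq:rule-ctIdleLocal} and its marker variant. Within each pair, the marker rule and the continuation rule are separated by the shape of the popped element, since a task marker $(\pv{foo},f)$ is not a stack $[\many{k}]$. Queue pop is a function by the axiomatization in \zcref{sec:axiom-queue}, so the resulting queue and task are unique.

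\textbf{Step 3: the active branch (the main obstacle).} Here the configuration holds a stack, and the six rules \zcref{eq:rule-ctActiveSuspend,eq:rule-ctActiveEmpty,eq:rule-ctActivePop,eq:rule-ctActivePush,eq:rule-ctActivePushCall,eq:rule-ctActiveEval} must be separated. The easy separations are these.
\begin{itemize}
\item \zcref{eq:rule-ctActiveEmpty} is the only rule for the empty stack.
\item \zcref{eq:rule-ctActivePush} and \zcref{eq:rule-ctActivePushCall} are selected by $\lastEv(\sh)$ being a local await or an invocation event, respectively.
\item \zcref{eq:rule-ctActiveEval} excludes the local await case explicitly.
\end{itemize}
The remaining separations need well-formedness.
\begin{itemize}
\item \emph{Eval versus invocation.} After an invocation event, \zcref{eq:wf-invREv}-style constraints force the next event to be the matching reaction, while the top expression is \rust{?$i$}. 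Its evaluation emits a computation reaction event, which is not well-formed then, so \zcref{eq:rule-ctActiveEval} is blocked.
\item \emph{Eval versus suspend.} This is the observation already made in the text: a suspended top expression cannot be evaluated well-formedly.
\item \emph{Pop versus eval.} I will establish an invariant on reachable configurations: every non-top marker has \rust{?$\mathit{fi}$} as its next sub-expression, and \rust{?$\mathit{fi}$} becomes evaluable only when $\lastEv(\sh)=\compEv{\mathit{fi}}{v}$. At that moment the top marker, having just returned, is a value or $()$, so it has no well-formed local step. Conversely, when the top can step, the computation event required by pop is absent.
\end{itemize}
This invariant is proved by induction over the derivation, mirroring the argument given for \zcref{eq:rule-sync-pop}.

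\textbf{Step 4: uniqueness of each successor.} Finally, within each rule, the local evaluation of every construct yields a singleton set up to fresh names and identifiers, as the local rules are deterministic. The concretization $\rho$ is forced by well-formedness, for instance the arguments in \zcref{eq:wf-eval} and the return values in the computation reaction events. Hence the successor is unique up to the renaming equivalence of \zcref{def:eq-mod-ren-trace}, which I take to be the intended meaning of ``deterministic'' here.
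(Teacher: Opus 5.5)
The paper gives no proof of this proposition beyond the decision tree of \zcref{fig:ct-rules}, and your Steps~1--2 faithfully elaborate that tree. The gap is in the \enquote{pop versus eval} case of Step~3. The invariant you plan to prove by induction is false, and for the rules as written \zcref{eq:rule-ctActivePop} and \zcref{eq:rule-ctActiveEval} genuinely overlap.

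After \zcref{eq:local-async-return} fires, the top marker is not a value. It holds the rest of the body with \lstinline|()| in place of the \lstinline|return|, e.g.\ \lstinline|{(); ()}| for every function normalized as in \zcref{lst:async-prog}. That expression still has local steps (\zcref{eq:local-block-expr-stmt,eq:local-block-expr}). These steps emit no events, so they are well-formed, because $\mathit{wf}$ ignores states. Moreover, \zcref{eq:rule-ctActiveEval} only excludes a last spawn event or a last local await event.

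At the same moment \zcref{eq:rule-ctActivePop} is already enabled. For a local future~$f$, the $\mathit{wf}$ clause for $\compREv{f}{\pv{X}}$ only demands $\compEv{f}{v}\in\sh$, not $\lastEv(\sh)=\compEv{f}{v}$. For a call identifier, event-free steps of the top leave $\lastEv(\sh)$ unchanged anyway.

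Concretely, take a task that locally awaits \lstinline|two_power(0)|. Right after that future returns, the active stack is $[\contP{f}{\rust{\{(); ()\}}},\contP{f'}{\ldots\rust{?$f$}\ldots},\ldots]$ with $\lastEv(\sh)=\compEv{f}{1}$, and both rules apply. So mutual exclusivity cannot be proved for these rules. It needs an amended rule set, e.g.\ a side condition blocking \zcref{eq:rule-ctActiveEval} once the top marker has returned, so that \zcref{eq:rule-ctActivePop} takes priority. With such a condition, your inductive invariant does go through.

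The same defect breaks Step~4. Computation events are unconstrained, so the top may run past one \lstinline|return| into another, e.g.\ an early \lstinline|return 1| in an \lstinline|if| branch followed by \lstinline|return 2|. The subsequent pop step may then concretize $\pv{X}$ to either value, so $\rho$ is not forced.

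Independently, \zcref{def:eq-mod-ren-trace} compares events literally. The call, future, and handle identifiers chosen in \zcref{eq:local-call,eq:local-async-call,eq:local-async-spawn} occur inside events. Your \enquote{unique up to renaming} therefore needs a renaming that also acts on $\CId$, $\FId$, and $\HId$.

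Finally, Step~4 only yields at most one successor. If \enquote{well-defined} is meant to include progress, you have not addressed it, and it fails for the rules as written. The global queue $q_G$ stays empty, since \zcref{eq:rule-ctSpawn,eq:rule-ctActiveSuspend} only push onto~$q$. Hence an idle configuration with $t\equiv 0 \bmod I_G$ admits no rule; the first suspension of the initial task, where $t=0$, is such a configuration.
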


\subsection{Fairness}
\label{sec:fairness}

When we consider the fairness of the runtime defined above, the
question arises, which definition of fairness to examine. Strong
fairness, i.e., every expression which is schedulable infinitely often
is executed as some point, is too strong for our case.  As
demonstrated by \textcite{HH2024}, weak fairness, i.e., every
expression which is \emph{always} schedulable from some point onwards
is eventually executed, is inadequate for cooperative scheduling
(e.g., Tokio) and is even inadequate for the more straightforward case
of ProMeLa~\autocite{H2004}. This limitation is easy to see in the
runtime above: So long as we have an active task, the composition
rules permit no other task to be scheduled, even when it is
\enquote{ready.} However, weak fairness guarantees only hold for a
task that is \emph{continuously} schedulable. Thus, weak fairness
offers no guarantees for the Tokio runtime at all.

To remedy this limitation, \emph{quiescent fairness} is
introduced~\autocite{HH2024}, which is more suitable to cooperative
settings. Intuitively, it states that if a task is continuously
schedulable when there is no active task, it will be scheduled at some
future point. Hence, it is an adaption of weak fairness, as a task
need not be schedulable in every state but only on \emph{quiescent
  states}, i.e., states with no active task.

If we consider programs with no \emph{locally infinite computations,}
i.e., programs where at any point in time, there will be a quiescent
state in the future, we can formally state quiescent fairness of the
current thread scheduler. \zcref[S]{thm:ct-quiescent} assumes a
function \enquote{$\schedTransSym$,} translating a current thread
runtime configuration $\ctCfg{\sh}{\mathit{act}}{t}{q_G}{q}$ into a
configuration $\asCfg{\sh'}{\mathit{act}}{Q}$ as in
\zcref{sec:async-comp}. See \zcref{app:quiesc-fairn-curr} for a formal
treatment.

\begin{theorem}[Quiescent Fairness]\label{thm:ct-quiescent}
  Assume a program with no locally infinite computations. Starting
  from an initial configuration, the current thread scheduler is
  quiescent fair: If \[C_0 \toCT C_1 \toCT C_2 \toCT \cdots\] then
  there is a quiescent fair execution
  \[\mkTr{\sigma_0},\{\contP{f^\text{init}}{e}\} \to^\ast \schedTrans{C_1}
    \to^\ast \schedTrans{C_2} \to \cdots\] producing the same trace
  (excluding \textit{schedEv} events).
\end{theorem}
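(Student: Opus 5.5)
We split the theorem into two parts. First, the translated run is a valid execution of the permissive scheduler of \zcref{sec:async-comp}. Second, that execution is quiescent fair.

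We first fix $\schedTransSym$ concretely. It maps $\ctCfg{\sh}{\mathit{act}}{t}{q_G}{q}$ to $\asCfg{\sh'}{\mathit{act}}{Q}$, where $\sh'$ is $\sh$ with every $\schedEv{\pv{foo}}{f}$ event and its duplicated neighbouring state removed. The bag $Q$ contains exactly the continuation stacks occurring in $q_G$ or $q$. Task markers $(\pv{foo},f)$ are dropped from $Q$, since in the permissive semantics a spawned but not yet started future is represented only by its spawn event.

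\textbf{Simulation.} By case analysis on the rules of \zcref{prop:ct-determ}, every step $C_i\toCT C_{i+1}$ is matched by zero or one permissive steps $\schedTrans{C_i}\to^{\ast}\schedTrans{C_{i+1}}$:
\begin{itemize}
\item \zcref{eq:rule-ctSpawn} becomes the identity step, because the schedule event is erased and the marker is invisible in $Q$.
\item \zcref{eq:rule-ctActiveEval,eq:rule-ctActivePushCall,eq:rule-ctActivePush,eq:rule-ctActivePop,eq:rule-ctActiveEmpty,eq:rule-ctActiveSuspend} map to \zcref{eq:rule-Eval,eq:rule-PushCall,eq:rule-PushTask,eq:rule-Pop,eq:rule-Empty,eq:rule-Suspend}, respectively.
\item The idle rules that pop a stack map to \zcref{eq:rule-ReschedTask}.
\item The marker variants map to \zcref{eq:rule-SchedTask}.
\end{itemize}
The only nontrivial check is well-formedness. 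The Tokio-specific clause \zcref{eq:wf-sched-tokio} only constrains schedule events, and every other clause of $\mathit{wf}$ refers to events that survive the erasure. So $\wf{\sh}$ implies $\wf{\sh'}$, where we use that erasing an event together with its duplicated state commutes with chop. Taking limits then yields equality of the (possibly infinite) traces modulo schedule events. The initial step $\mkTr{\sigma_0},\{\contP{f^\text{init}}{e}\}\to^\ast\schedTrans{C_1}$ is handled in the same way from $C_0$.

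\textbf{Fairness.} Suppose a task $k$, a stack or a spawned future, is schedulable in every quiescent state from some index onward but is never scheduled. We show a contradiction.
\begin{enumerate}
\item Because there are no locally infinite computations, quiescent configurations occur infinitely often. Each idle step increments $t$ by exactly one. Hence $t$ grows without bound and passes a multiple of $I_G$ every $I_G$ idle steps.
\item Schedulability of $k$ implies that $k$, or its task marker, sits in $q$ or $q_G$. Rules \zcref{eq:rule-ctSpawn,eq:rule-ctActiveSuspend} only append to queues, and the idle rules only remove queue heads.
\item If $k\in q$ at position $n$, then among any $I_G$ consecutive idle steps at least $I_G-1$ pop $q$, since $q$ is nonempty. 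After at most $nI_G/(I_G-1)$ further idle steps, $k$ reaches the head of $q$ and is popped.
\item If $k\in q_G$ at position $n$, then every idle step with $t\equiv 0 \bmod I_G$, or with $q$ empty, pops $q_G$. There are infinitely many such steps, so $k$ is popped after finitely many of them.
\item Popping $k$ makes it active, which is a scheduling step.
\end{enumerate}
This contradicts the assumption. Queue positions are tracked using the axiomatization in \zcref{sec:axiom-queue}.

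The main obstacle is aligning the definition of quiescent fairness of \textcite{HH2024} with the translated configurations. In particular, a spawned task that has not yet started is \enquote{schedulable} in the permissive semantics even though $\schedTransSym$ erases its marker. To handle this, the plan is to define schedulability in translated quiescent states precisely as "has a pending spawn without an eval event, or a suspended stack in $Q$ that is no longer suspended." We then prove an invariant linking this predicate to membership of the corresponding marker or stack in $q$ or $q_G$. We also need to ensure that a suspended stack which is re-popped but still suspended is counted correctly. It is popped and immediately re-suspended to the back of $q$, and this should count as a scheduling with no progress, consistent with \textcite{HH2024}.
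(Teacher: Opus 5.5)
Your simulation part is the paper's \emph{Existence} step: the same rule-by-rule correspondence, with \zcref{eq:rule-ctSpawn} as a no-op. You are also more explicit than the paper about why erasing schedule events preserves well-formedness.

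For fairness you take a genuinely different route. The paper defines the lexicographic distance of \zcref{def:ct-dist}: a pessimistic queue position, an idle bit, a pending-spawn bit, and an auxiliary local counter $c$ that idle rules reset and active rules decrement. It then checks, for every $\toCT$ rule, that the distance of the relevant continuation or task marker strictly decreases unless the continuation is produced, and concludes via \zcref{lem:finite-chain}.

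You argue globally instead. The hypothesis gives infinitely many idle steps, and each one increments $t$. Queues are FIFO and only appended to. So an element at position $n$ of $q$ (resp.\ $q_G$) is popped within roughly $nI_G/(I_G-1)$ (resp.\ $nI_G$) idle steps. This is more elementary: it needs no counter $c$, no well-foundedness lemma, and no measure that must drop at every single step. That last requirement is where the paper's per-rule bookkeeping gets delicate, e.g.\ the position of a $q_G$ element when \zcref{eq:rule-ctActiveSuspend} appends a multi-frame stack to $q$. The paper's ranking function, in turn, is what \zcref{thm:quiescent-tokio} reuses to extract $D_\text{max}$. Your explicit idle-step bound, combined with $T_\text{max}$ and $P_\text{max}$, would serve equally well there.

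Make three things explicit.
\begin{itemize}
\item A quiescent configuration never has a pending spawn event, since \zcref{eq:rule-ctActiveEmpty,eq:rule-ctActiveSuspend} forbid it. Hence every quiescent configuration is followed by a tick-incrementing idle step.
\item Your bound breaks for $I_G=1$, where no idle rule ever pops $q$. That case is vacuous: nothing is ever pushed onto $q_G$, so only finitely many idle steps are possible, contradicting infinitely many quiescent states.
\item Your refined schedulability predicate is unnecessary and would in fact weaken the claim. \zcref{eq:rule-ReschedTask} has no suspension premise, so under the paper's enabledness (defined via the permissive rules) every stack in $Q$ counts as enabled in a quiescent translated state. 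Your argument already shows that every element of $q$ and $q_G$ is eventually popped, with \zcref{lem:called-fn} covering markers, so it establishes the paper's notion directly.
\end{itemize}
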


With the \ac{LAGC} approach presented in~\cite{HH2024},
\zcref{thm:ct-quiescent} is straightforward (albeit tedious) to prove;
see \zcref{app:quiesc-fairn-curr}. However, the Tokio documentation,
informally, claims a different notion of fairness for both
runtimes~\cite{tokio}: If, for a given \Rust
program~$r$,\footnote{This addition is not explicit in the
  documentation but it is crucial to make the fairness usable.} there
is some $T_\text{max}\in\mathbb{N}$ such that at any point the number
of tasks in the runtime is less or equal $T_\text{max}$ and there is
some $P_\text{max}\in\mathbb{N}$ that the evaluation of a
\lstinline|poll| call on a task in the scheduler finishes within
$P_\text{max}$ time units, then there is some
$D_\text{max}\in\mathbb{N}$ such that a task is scheduled in
$D_\text{max}$ time units.

\zcref[S]{def:tokio-fair} makes us of the predicate $\quiescent{C}$
over a configuration~$C$, which holds iff $C$ has no active task. The
function $\newExpr{C}{C'}$ returns a set of exactly those continuation
markers that are newly reachable after a step
$C \to C'$~\autocite{HH2024}. With function $\enabled{C}$ we get all
continuation markers which are reachable from one step of $C$:
$\enabled{C}:=\{\newExpr{C}{C'} \mid C \to C'\}$.

\begin{definition}[Tokio Fairness]\label{def:tokio-fair}
  Let $r$ be an asynchronous \Rust program. Assume there are
  $T_\text{max},P_\text{max}\in\mathbb{N}$ such that for all infinite
  sequences of \ac{LAGC} rule applications $C_0 \to C_1 \to \cdots$
  the following holds
  \begin{align*}
    \forall i.\ \mathrm{numTasks}(Q_i) \le T_\text{max}
  \end{align*}
  where $Q_i$ is the set of task stacks in the configuration $C_i$ and
  $\mathrm{numTasks}(Q)=\left|\left\{\left[\many{k}\right]\in Q\mid
      \left[\many{k}\right]\ne []\right\}\right|$. Assume further that
  all local computations finish in $P_\text{max}$ time units. Then the
  infinite sequence of rule applications $C_0 \to C_1 \to \cdots$ is
  \emph{Tokio fair} if there is a $D_\text{max}\in\mathbb{N}$ such
  that
  \begin{align*}
    &\forall m,e,f.\ \big(\forall n \ge m.\ \quiescent{\schedTrans{C_n}} \Rightarrow \contP{f}{e}\in\enabled{\schedTrans{C_n}}\big) \\
    &\quad\Rightarrow \big(\exists n'.\ m < n' \le m + D_\text{max} \wedge \contP{f}{e}\in\newExpr{\schedTrans{C_{n'}}}{\schedTrans{C_{n'+1}}}\big)
  \end{align*}
\end{definition}

We can show that the current thread runtime is Tokio fair as a result
of it being quiescent fair (\zcref{thm:quiescent-tokio}). In fact, we
show that each quiescent fair scheduler where a scheduling distance
with a well-founded order can be defined (as we do in the fairness
proof in \zcref{app:quiesc-fairn-curr}) is also Tokio fair.

\begin{theorem}\label{thm:quiescent-tokio}
  Assume an infinite sequences of \ac{LAGC} rule applications
  $C_0 \toCT C_1 \toCT \cdots$, then there is a Tokio fair execution
  $\mkTr{\sigma_0},\{\contP{f^\text{init}}{e}\} \to^\ast
  \schedTrans{C_1} \to \cdots$ producing the same trace (excluding
  \textit{schedEv} events).
\end{theorem}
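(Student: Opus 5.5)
The plan is to reuse the execution constructed for \zcref{thm:ct-quiescent} and strengthen its fairness argument quantitatively. By \zcref{thm:ct-quiescent}, for $C_0 \toCT C_1 \toCT \cdots$ there is an execution $\mkTr{\sigma_0},\{\contP{f^\text{init}}{e}\} \to^\ast \schedTrans{C_1} \to^\ast \schedTrans{C_2} \to \cdots$ in the permissive semantics of \zcref{sec:async-comp}. This execution is built by mapping each $\toCT$ step to zero or one permissive steps: schedule events and tick/queue bookkeeping disappear, and $\schedTrans{}$ forgets the queue order and collects $q_G$ and $q$ into the bag $Q$. The trace (without \textit{schedEv}) therefore coincides. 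It remains to show that this same execution is Tokio fair in the sense of \zcref{def:tokio-fair}. If the hypotheses on $T_\text{max}$ and $P_\text{max}$ fail, the implication holds vacuously, so we may assume both bounds.

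The central construction is the \emph{scheduling distance} from the proof of \zcref{thm:ct-quiescent}. For a task (task marker or stack) in $q_G$ or $q$, let $d(C)$ be the number of idle-selection steps until it is popped. This is determined by the deterministic rules (\zcref{prop:ct-determ}). If the task is at position $j$ in $q$, each selection either takes the head of $q$ or, when $t\equiv 0 \bmod I_G$, the head of $q_G$. Hence $d \le j + \lceil j/(I_G-1)\rceil$ or so. If it is at position $j$ in $q_G$, it advances at least once every $I_G$ ticks, and immediately whenever $q$ is empty, so $d \le I_G \cdot j$. Since at most $T_\text{max}$ nonempty tasks exist, $j \le T_\text{max}$, which gives a uniform bound $d \le I_G\cdot T_\text{max}+1$. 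Newly spawned tasks are appended at the back, so $d$ never increases for an already queued task. This is the well-founded order that strictly decreases on each quiescent selection of another task.

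Next, idle-selection steps must be converted into steps of the sequence. Between two consecutive quiescent configurations, the active task performs a single poll: it runs until \zcref{eq:rule-ctActiveSuspend} or \zcref{eq:rule-ctActiveEmpty} fires. By the $P_\text{max}$ assumption, read as a bound on the number of rule applications in a local computation, this takes at most $P_\text{max}$ steps. Interleaved \zcref{eq:rule-ctSpawn} steps number at most $T_\text{max}$ per poll, since spawned tasks count towards $\mathrm{numTasks}$. So each idle selection is reached within $K := P_\text{max}+T_\text{max}+2$ steps of the previous one. Now suppose $\contP{f}{e}$ is enabled in every quiescent $\schedTrans{C_n}$ for $n\ge m$. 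Then it belongs to a queued task in those configurations: the only way to enable a task in the permissive semantics from an idle configuration is \zcref{eq:rule-ReschedTask} or \zcref{eq:rule-SchedTask}, which correspond to a queue entry or task marker. By the distance bound it is selected within $d\le I_G T_\text{max}+1$ selections, i.e.\ within $D_\text{max} := K\cdot(I_G T_\text{max}+2)$ steps. At that step $\newExpr{\schedTrans{C_{n'}}}{\schedTrans{C_{n'+1}}}$ contains $\contP{f}{e}$. One caveat is that $m$ may fall during an active poll, which the extra $K$ absorbs.

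The main obstacle I expect is the index bookkeeping between the two sequences. \zcref{def:tokio-fair} is stated over indices of $C_n$, but the quiescent fair execution uses $\to^\ast$ segments. I would therefore fix the indexing on the $\toCT$ sequence and check that each $\toCT$ step maps to at most one permissive step. The distance bound would then transfer without change. A second subtle point is justifying that an enabled continuation in $\schedTrans{C_n}$ really corresponds to a queued task and not to the active task, but quiescence rules out the latter.
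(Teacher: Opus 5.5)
Your argument is sound at the paper's level of rigor, but it takes a different route. The paper does not re-analyse the scheduler. It keeps the lexicographic distance $(\mathrm{pos},\mathrm{idle},\mathrm{unsched},c)$ from the quiescent fairness proof and shifts $c$ by $P_\text{max}$ to make it non-negative. It then encodes the tuple as a single natural number by digit concatenation, which preserves the order because the trailing components have fixed width. This number strictly decreases at every step until $\contP{f}{e}$ is produced, and each component is bounded via $T_\text{max}$ and $P_\text{max}$. Hence production occurs within $D_\text{max}=(T_\text{max}+1)\circ 1\circ 1\circ P_\text{max}$ steps.

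You use only the existence part of \zcref{thm:ct-quiescent} and replace that distance by a direct count. (Your $d$ is not the paper's measure, despite your wording.) You count the idle selections until the target is popped, bounded through FIFO appending and the interval $I_G$, and multiply by a per-poll step bound $K$.

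The paper's route is generic: it immediately yields \zcref{col:quiescent-impl-tokio} for any scheduler with a bounded, strictly decreasing distance. Yours is specific to the current thread runtime but more robust in three ways:
\begin{itemize}
\item it counts queue entries rather than continuation markers, whereas the paper's first component sums the continuation markers of both queues, which $T_\text{max}$ does not bound when stacks are deep;
\item it makes the dependence on $I_G$ explicit, while the paper's $D_\text{max}$ ignores it;
\item entries appended behind the target cannot increase it.
\end{itemize}

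Two minor remarks. First, the number of \zcref{eq:rule-ctSpawn} steps per poll is more safely bounded by $P_\text{max}$, since every spawn event stems from a local step. Second, your $j\le T_\text{max}$ needs $\mathrm{numTasks}$ to count spawned-but-unstarted tasks, which $\schedTransSym$ drops. The paper's bound on the first component makes the same assumption implicitly.
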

\begin{proof}[Proof Sketch]
  Tokio fairness follows from quiescent fairness and the fact that we
  can construct a $D_\text{max}\in\mathbb{N}$ from the distance
  function we used to proof quiescent fairness in
  \zcref{app:quiesc-fairn-curr} in combination with $T_\text{max}$ and
  $P_\text{max}$. See \zcref{app:relat-quiesc-tokio} for a full proof.
\end{proof}

\subsection{A Comment on the Multi-Thread Runtime}
\label{sec:comment-multi-thread}

The present paper is focussed on the current thread runtime and we
omit a formalization and fairness proof of the multi-tread
runtime. While the multi-thread runtime is more widely used, we leave
it for future work.

Creating a formalization of the multi-thread runtime consists mainly
in creation of new trace composition rules; local evaluation and
well-formedness can be left unaltered. Developing these composition
rules from the existing rules is a relatively straightforward
process. The configuration needs to be extended to support a fixed
number of workers, each containing a local queue; the global queue is
left as is.

We do not present these composition rules for two reasons. First, for
space limitations. Second, because in the multi-threaded runtime, idle
workers may \emph{steal} tasks from other workers to spread the
workload more evenly. This stealing procedure is only described
conceptually, the concrete algorithm for it is left oblique. Study of
the complex code is a time consuming task.

More adaption would be necessary to prove fairness of the
multi-threaded runtime. Because multiple workers execute in parallel,
an additional counter atop the \enquote{local} counter in
\zcref{app:quiesc-fairn-curr} is required to ensure that the
composition rules do not \enquote{starve} a worker. Additionally,
parallelism causes the scheduler to be non-deterministic, which
complicates the fairness reasoning. It is possible to overcome these
hurdles, but space reasons do not permit an attempt in the present
work.


\section{Application: Soundness of a Dynamic Logic Calculus}
\label{sec:sound-dl}

One promising area of application for our semantics is the soundness
of program logic calculi. The development of program logics to provide
semantics for a language and proof correctness of programs go back to
\textcite{H1969}. Since then, \textcite{P1976} introduced \acf{DL}, an
extension of first-order logic, where programs can appear in formulas,
e.g., $\dlboxf{x = x + 1;}{(\pv{x} > 0)}$. \ac{DL} is closed under all
logical operators. Hence, any Hoare triple $\{\phi\} P \{\psi\}$,
where $P$ is a program and $\phi,\psi$ are formulas can be expressed
in \ac{DL} as $\phi \rightarrow \dlbox{P}{\psi}$. \ac{DL} is also
strictly more expressive than Hoare logic~\autocite{Reif65}. \ac{DL}s
have proven especially helpful for program verification close to the
source code level, as there is a reduced mental and semantical gap
between the program under verification and the formulas. The \ac{DL}
developed for Java as part of the \KeY verification
system~\autocite{ABBHSU2016}, JavaDL, has been especially
successful~\autocite{BBDHLPUW2025}.

As such program logic calculi are designed to be highly modular, i.e.,
typically have one rule per language construct, a modular semantics is
needed to show soundness.  So far, the discrepancy between calculus
design and semantics design hindered or outright prevented soundness
proofs for program logic calculi.

As no suitable semantics was defined, soundness of JavaDL's calculus
rules dealing with programs---mainly its \emph{symbolic execution}
rules---could not be shown, even though it is \enquote{\textins{t}he
  most important property of the JavaDL
  calculus}~\autocite{ABBHSU2016}. Since real-world programming
languages rarely have (official) formal semantics and program
semantics are typically difficult to relate to the semantics of a
logic, resolving this issue has been difficult. In this section, we
show how the modular semantics developed in
\zcref{sec:lagc-semant-rust} can be used for soundness proofs of a
\ac{DL} for \Rust.

In previous work~\autocite{DH2026,DH2026TR} we presented a \ac{DL}
called \enquote{\RustyDL{}} and a calculus to verify \Rust
programs. This calculus has been implemented in the \KeY system. While
\RustyDL's calculus rules have been validated by using examples, any
formal verification system needs a formal proof of soundness. As most
\Rust verification tools rely on translations to other
languages---e.g., Prusti~\autocite{ABFGMMPS2022} translating to
Viper~\autocite{MSS2016}---correctness proofs of the tools rely on a
semantics of both \Rust and the target language and the correct
relation between the two. \RustyDL on the other hand, can be proven
correct merely by showing that all its rules are sound.

In this section we only consider the sequential subset of \Rust
discussed in \zcref{sec:lagc-semant-rust}. While \ac{DL}s for
concurrent languages have been
proposed~\autocite{DBH2015,DO2015,LAGC23}, the non-determinicity of
concurrent language features significantly alters the semantics and
calculus. \zcref[S]{sec:synt-semant-rustydl,sec:calculus-rules}
largely recapitulate the results of \autocite{DH2026}, with minor
modifications for brevity and presentation, while
\zcref{sec:showing-soundness} contains our new contributions.

\RustyDL extends first-order logic with \emph{modalities} containing
\Rust program fragments. Hence, formula
$\phi:=\pv{x}\ge 0 \rightarrow \dlf{x += 1;}{(\pv{x} > 0)}$ expresses
that if \pv{x} is greater than or equal to $0$ in a state~$s$ and the
program \enquote{\rust{x += 1;}} is executed in $s$, then it
terminates in state~$s'$, in which $\pv{x}>0$ holds. We prove such a
formula valid, i.e., $\models \phi$, by deriving the sequent
$\seq{}{\phi}$ in our sequent calculus.

\subsection{Syntax and Semantics of \RustyDL}
\label{sec:synt-semant-rustydl}

The set of \emph{dynamic logic terms of sort~$A$}, called $\DLTrm{A}$,
is defined as for sorted FOL with an additional base case:
$\pv{x}\in\DLTrm{A}$ for all $\pv{x}\in\Var$ where \pv{x}'s type maps
to the sort~$A$.\footnote{Note that the set $\Var$ is the set of
  \emph{program variables}---called \enquote{$\PVSym$} in
  \autocite{DH2026}. It is distinct from the \emph{logical variables}
  used in quantifiers; quantifying over program variables is not
  permitted and neither is a logical variable occurring in a program.}
Similarly, the set of \emph{dynamic logic formulas} $\DLFor$ is
similar to FOL extended with $\dia{p}\phi$ (\enquote{diamond}) and
$\dlbox{p}\phi$ (\enquote{box}) where $p$ is a \emph{program fragment}
and $\phi\in\DLFor$. Note that $\DLFor$ is closed under all logical
operatives and nesting of modalities, e.g.,
$\dia{p_1}(\phi\wedge \dia{p_2}\psi)$ is a syntactically correct
formula.

To manage fine-grained state changes, we add the notion of
\emph{updates} (\zcref{def:upd}) to express uniform, total state
changes in our logic. These updates may modify program variables or
modify memory locations borrowed by a mutable reference.

\begin{definition}[Update, Update Application]\label{def:upd}
  We define the set $\Updates$ inductively by:
  \begin{enumerate}
  \item $(\pv{a}\upd t)\in\Updates$ for each program variable
    $a:A$ and each term $t$ of sort $A$,
  \item $(t_1\dupd t_2)\in\Updates$ for each term
    $t_1$ of sort $\RefM{A}$ and $t_2$ of sort $A$,
  \item $\skipUp\in\Updates$,
  \item $(u_1\parupd u_2)\in\Updates$ for all updates
    $u_1,u_2\in\Updates$,
  \item $(\applyUp{u_1}{u_2})\in\Updates$ for all updates
    $u_1,u_2\in\Updates$.
  \end{enumerate}
\end{definition}

We extend the inductive definition of terms with term
$\applyUp{u}{t}\in\DLTrm{A}$ for all updates $u\in\Updates$ and terms
$t\in\DLTrm{A}$. The inductive definition of formulas is extended with
formula $\applyUp{u}{\phi}$ for all updates $u\in\Updates$ and
formulas $\phi\in\DLFor$. Update applications represent state changes
to terms and formulas syntactically and change the state in which a
formula or term is evaluated.

\begin{example}\label{ex:update}
  Let \pv{a}, \pv{b} be program variables of type \lstinline|u64| and
  \pv{c} of type \lstinline|&mut u64|. Consider the following examples
  of updates:
  \begin{itemize}
  \item $\skipUp$ represents the effectless update.
  \item An \emph{elementary update} $\pv{a}\upd 2\cdot\pv{a}$
    expresses a state change where \pv{a}'s value is doubled. Hence,
    $\applyUp{\pv{a}\upd 2\cdot\pv{a}}{\pv{a}}$ is semantically
    equivalent to $2\cdot\pv{a}$.
  \item A \emph{parallel update}
    $(\pv{a}\upd\pv{b}\parupd\pv{b}\upd\pv{a})$ represents a state
    change where the values \pv{a} and \pv{b} are swapped. There is no
    interference between the updates in a parallel update. If there
    are multiple updates to the same program variable in one parallel
    update, only the last one affects the state.
  \item The \emph{mutable update} $\pv{c}\dupd\pv{a}+\pv{b}$ stands
    for the state change where the variable borrowed by \pv{c} is set
    to $\pv{a}+\pv{b}$. The value of \pv{c} itself is unchanged.
  \end{itemize}
\end{example}

\begin{example}
  Let $\pv{x}: \Int,\pv{y}: \Int, \pv{z} : \RefM{\Int}$ be program
  variables. Then the formula
  \[\applyUp{\pv{x}\upd\pv{y} \parupd \pv{y}\upd\pv{x} \parupd
      \pv{z}\dupd\pv{x}+1}{\dlf{x = y + *z;}{(\pv{x} \keyeq
        2\cdot\pv{x} + 1)}}\]
  expresses that, in a state where the values of \pv{x} and \pv{y} are
  swapped and the value referenced by \pv{z} is updated to \pv{x} plus
  one, the program fragment \enquote{\rust{x = y + *z;}} terminates
  and afterwards the formula $\pv{x} \keyeq 2\cdot\pv{x} + 1$
  holds. The formula is valid if the addition in the program does not
  overflow.
\end{example}

Mutating updates allow our logic to represent mutable references. Just
as in the semantics, mutable references need to contain the borrowed
memory location, i.e., the borrowed place.  We introduce the sort
$\Place$ for places. For each program variable \pv{x}, its place is
represented with the unique constant $\place{\pv{x}} : \Place$.  With
the function $\refMSym$, mutable references of a place~$p$ can be
represented as a term $\refM{p}$. If $p$ is a place of a lender of
sort~$A$, then $\refM{p}$ is of sort $\RefM{A}$.

Formally, we define the semantics of \RustyDL using a \emph{Kripke
  structure} consisting of \emph{states}, i.e., first-order models,
and a program transition function $\rho$.

\begin{definition}[\RustyDL Kripke structure]\label{def:kripke}
  Let $r$ be a valid \Rust program. A \emph{\RustyDL Kripke structure}
  for a signature derived from $r$ is a tuple
  \[
    \Kstruct=(\domain,\interp,\States,\rho)
  \]
  where
  \begin{itemize}
  \item $\domain$ is the logic's domain, i.e., a non-empty set. This
    domain can be split into different sub-domains per sort, e.g.,
    $\domain^{\Int}$ for integers.
  \item $\interp$ is the interpretation of functions and predicates.
  \item $\States$ is an infinite set of \emph{states}
    $s:\Var\rightarrow\domain$.
  \item $\rho$ is a function mapping a program fragment $p$ to a
    partial function $\States\rightharpoonup\States$ with
    $\rho(p)(s_1)=s_2$ iff evaluating $p$ in state $s_1$ terminates
    normally in $s_2$. If $p$ does not terminate normally starting in
    $s_1$, then $\rho(p)(s_1)\undefSym$. We assume \Rust programs to
    be deterministic, which is true of the subset considered in
    \zcref{sec:lagc-semant-rust}.
  \end{itemize}
\end{definition}

Whereas \autocite{DH2026} left $\rho$ unspecified due to a lack of a
formal semantics, we can now use the \ac{LAGC} semantics to close this
gap. For this, we must clarify the relation between \ac{LAGC}
states~$\sigma$ and \RustyDL states~$s$. We assume a relation
$\dlLagcSim$ between \RustyDL states and \ac{LAGC} states (see
\zcref{def:corr-state} for a formal definition), pronounced
\enquote{corresponds to,} in the expected manner: integer, boolean,
and compound values map to each other in the expected way; references,
however, have no borrow identifier on the \RustyDL side, hence we
obtain a many-to-one mapping.

\begin{proposition}
  Let $\Kstruct=(\domain,\interp,\States,\rho)$ be \RustyDL Kripke
  structure and $s\in\States$ a \RustyDL state. There exist an
  infinite number of LAGC states $\sigma$ such that $s\sim\sigma$. All
  states in the set $\{\sigma\mid s\sim\sigma\}$ differ only in the
  borrow identifiers in shared and mutable references.
\end{proposition}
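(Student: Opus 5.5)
The proof has two parts, existence and the characterization of the fibre, and both go through the correspondence relation $\dlLagcSim$ of \zcref{def:corr-state}. That relation is defined pointwise on program variables by structural recursion on values:
\begin{itemize}
\item integers, booleans and function names must coincide;
\item tuples, arrays, structs and enum variants correspond componentwise;
\item a \RustyDL reference $\refM{p}$ (or its shared counterpart) corresponds to $\mRefVal{p'}{b}$ (respectively $\sRefVal{p'}{b}$) iff the place term $p$ denotes the place $p'$.
\end{itemize}
The borrow identifier $b$ is left unconstrained. I will first isolate this as a lemma on values: for every \RustyDL domain element $d$ of a Rust-derived sort, the set $\{v\in\Val \mid d\sim v\}$ is non-empty, and any two of its elements differ only in the borrow identifiers at reference positions. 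Both halves follow by induction on the sort (equivalently, on the structure of $d$), since every constructor case is injective except the reference case.

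For existence, given $s$, I define $\sigma_0$ on the variables of $r$ by choosing, for each $\pv{x}$, some value $v_{\pv{x}}$ with $s(\pv{x})\sim v_{\pv{x}}$ via the lemma, and setting all borrow identifiers to a fixed $b_0$. To get infinitely many states, I pick a variable whose sort contains a reference and range its outermost borrow identifier over all of $\BId$. This gives infinitely many pairwise distinct $\sigma$, since $\BId$ is infinite (as already assumed in \zcref{ex:local-mref}).

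That step needs some variable of reference sort. If the signature has none, I would argue instead that infinitude comes from the freedom in $\dom(\sigma)$, reading $\sim$ as permitting extra fresh variables, or else restrict the claim accordingly. The hard part is exactly this: reconciling the ``infinitely many'' claim and the ``differ only in borrow identifiers'' claim with the precise definition of $\sim$, and in particular whether $\dom(\sigma)$ is fixed to the program's variables. I would commit to that reading, namely $\dom(\sigma)$ fixed and a reference-typed variable present, and note the degenerate case explicitly.

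For the second claim, suppose $s\sim\sigma_1$ and $s\sim\sigma_2$. Under that reading, the two states have the same domain. For each $\pv{x}$ we have $s(\pv{x})\sim\sigma_1(\pv{x})$ and $s(\pv{x})\sim\sigma_2(\pv{x})$, so by the uniqueness half of the lemma the two values differ only in borrow identifiers. Hence $\sigma_1$ and $\sigma_2$ differ only in the borrow identifiers of shared and mutable references.
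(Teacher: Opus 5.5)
\paragraph{Shared references.} Your proof rests on a paraphrase of the correspondence that is wrong in the case you treat as routine. In \zcref{def:corr-dom} a \RustyDL shared reference carries no place. $\domain^{\RefS{A}}$ consists of boxed values $\semRefS{a}$ with $a\in\domain^A$, and $\semRefS{a}\dlLagcSim_\sigma\sRefVal{\pv{x}}{b}$ holds for \emph{every} variable $\pv{x}$ with $a\dlLagcSim_\sigma\sigma(\pv{x})$. So the value relation is indexed by $\sigma$ and looks at the value of another variable. A context-free lemma about $\{v\mid d\sim v\}$ is therefore not well-posed, and choosing each $v_{\pv{x}}$ independently is circular. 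Both halves of your lemma actually fail.
\begin{itemize}
\item \emph{Existence.} The witness $\pv{x}$ must lie in $\dom(\sigma)$, so it must itself satisfy $s(\pv{x})\dlLagcSim_\sigma\sigma(\pv{x})$. Suppose $s(\pv{y})=\semRefS{5}$ and no other variable has value $5$ in $s$. Then no $\sigma$ with $\pv{y}\in\dom(\sigma)$ corresponds to $s$, and under your fixed-domain reading the set $\{\sigma\mid s\dlLagcSim\sigma\}$ is empty.
\item \emph{Uniqueness.} Suppose $s(\pv{y})=\semRefS{5}$ and $s(\pv{z})=s(\pv{w})=5$. Then $[\pv{y}\mapsto\sRefVal{\pv{z}}{b},\pv{z}\mapsto 5,\pv{w}\mapsto 5]$ and $[\pv{y}\mapsto\sRefVal{\pv{w}}{b},\pv{z}\mapsto 5,\pv{w}\mapsto 5]$ both correspond to $s$. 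They differ in the borrowed place, not in a borrow identifier.
\end{itemize}
Injectivity up to $b$ does hold for integers, booleans, tuples, arrays and mutable references, but not for shared references. The clauses you list for structs, enums and functions are not in the definition at all and would have to be added as assumptions.

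\paragraph{The domain of $\sigma$.} You cannot settle this by choosing a reading either. \zcref{def:corr-state} constrains only the variables in $\dom(\sigma)$, so the empty state corresponds to every $s$, and the set $\{\sigma\mid s\dlLagcSim\sigma\}$ contains states with different domains. The fallback you mention for the degenerate case, letting variables be added to or left out of $\dom(\sigma)$, is in fact the literal definition. Under it, the second sentence of the proposition is false. Fixing $\dom(\sigma)$ therefore changes the statement, and the examples above show that even the changed statement fails for shared references.

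\paragraph{What a correct argument needs.} The paper states this proposition without proof, so you have to make the repairs explicit:
\begin{itemize}
\item fix $\dom(\sigma)$, and assume it contains a variable of reference sort, which is needed for infinitude;
\item restrict to states $s$ in which every shared-reference target value is held by some variable of the domain;
\item weaken the conclusion to \enquote{differ only in borrow identifiers and in the choice of borrowed variable for shared references}, or change the shared-reference clause so that it determines the place.
\end{itemize}
With these amendments, an induction on domain elements for a fixed $\sigma$ goes through, choosing shared-reference witnesses by recursion on the referenced value.
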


With this relation, we proceed to define $\rho$ in terms of the
generated traces. A panic does not count as normal termination and is
therefore excluded from the program transition in \zcref{def:rho}.

\begin{definition}[Program Transition Function]\label{def:rho}
  Let $r$ be a valid \Rust program,
  $\Kstruct=(\domain,\interp,\States,\rho)$ a Kripke structure for
  $r$, $p$ a program fragment for $r$, $s\in\States$ a \RustyDL state,
  and $\sigma$ an \ac{LAGC} state such that $s\sim\sigma$. Then the
  program transition function $\rho$ is defined as
  \[
    \rho(p)(s)=\begin{cases}
      s'&\text{if }\exists\sigma'.\ s'\dlLagcSim\sigma'\wedge\exists\sh.\ (\consTr{\sigma'}{\sh})\in\traces{e,\sigma} \wedge \lastEv(\sh)\ne\panicEv\\
      \undefSym&\text{otherwise}
    \end{cases}
  \]
  where $e:=\{\many{\rust{\kwLet}\;\pv{x};}\;p\}$ and
  $\many{\pv{x}}:=\freeProgVars(p)$ are the free program variables in
  $p$. In the following, we write $\cont{p}$ and $\traces{p,\sigma}$
  for $\cont{e}$ and $\traces{e,\sigma}$, respectively.

  By \zcref{lem:determ}, $\sigma'$ is uniquely identified. Our
  definition for $s\sim\sigma$ ensures that $s$ is uniquely identified
  as well.
\end{definition}

We must also fix the semantics of updates, as they are a crucial part
of the calculus rules.

\begin{definition}[Semantics of \RustyDL updates]\label{def:upd-sem}
  Let $\Crate$ be a crate, $\Kstruct=(\domain,\interp,\States,\rho)$ a
  Kripke structure for $\Crate$, $s,s'\in\States$ a state,
  $\pv{a}:A,\pv{b}:B\in\Var$ program variables,
  $\beta:\VSym\rightarrow \domain$ a variable assignment,
  $u_1,u_2\in\Updates, t,t_2\in\DLTrm{A}$, and
  $t_1\in\DLTrm{\RefM{B}}$. The valuation function
  $\valS:\Updates\rightarrow(\States\rightarrow\States)$ is defined
  as:
  \begin{align*}
    \valS(\pv{a}\upd t)(s')(\pv{b})=&
                                      \begin{cases}
                                        \valS(t) &\text{if}~\pv{b}=\pv{a}\\
                                        s'(\pv{b})&\text{otherwise}
                                      \end{cases}\\
    \valS(t_1\dupd t_2)(s')(\pv{b})=&
                                      \begin{cases}
                                        \valS(t_2)&\text{if}~\valS(t_1)=\valS(\refM{\place{\pv{b}}})\\
                                        s'(\pv{b})&\text{otherwise}
                                      \end{cases}\\
    \valS(\skipUp)(s')=&\,s' \\
    \valS(u_1\parupd u_2)(s')=&\,\valS(u_2)(\valS(u_1)(s')) \\
    \valS(\applyUp{u_1}u_2)=&\,\valSPrime(u_2) \text{ where }s'=\valS(u_1)(s)
  \end{align*}
\end{definition}

With updates and program transition defined, the semantics of \RustyDL
terms and formulas is now definable.

\begin{definition}[Semantics of \RustyDL Terms and
  Formulas]\label{def:dl-sem}
  Let $\Crate$ be a valid \Rust crate, $\VSym$ the set of logic
  variable symbols, $\Kstruct=(\domain,\interp,\States,\rho)$ a Kripke
  structure modeling \Rust programs, $s\in\States$ a state, and
  $\beta:\VSym\rightarrow\domain$ a variable assignment.

  For every \RustyDL term $t$ we define its evaluation inductively by
  the evaluation function $\valS$ exactly as in first-order logic,
  with evaluation of program variables defined as
  $\valS(\pv{a})=s(\pv{a})$ for each program variable~\pv{a} and
  $\valS(\applyUp{u}{t})=\valSPrime(t)$ where $s'=\valS(u)(s)$.
  
  For every \RustyDL formula $\phi$ we define satisfaction, denoted
  $\valid\phi$, exactly as in first-order logic with three new cases:
  \begin{itemize}
  \item $\valid\dlbox{p}\phi$ iff $\rho(p)(s)\undefSym$ or
    $\rho(p)(s)=s'$ and $\validSPrime\phi$,
  \item $\valid\dia{p}\phi$ iff $\rho(p)(s)=s'$ exists and
    $\validSPrime\phi$.
  \item $\valid\applyUp{u}{\phi}$ iff $\validSPrime\phi$ where
    $s'=\valS(u)(s)$.
  \end{itemize}
\end{definition}

\subsection{Calculus Rules}
\label{sec:calculus-rules}

To show the validity of \RustyDL formulas, we rely on a sequent
calculus over our logic, a standard approach for verification and
theorem proving.

Beyond the familiar rules for propositional and first-order logic, as
well as rules for reasoning about, e.g., integer and boolean values,
the calculus also contains a large number of \emph{symbolic execution
  rules}. These allow the unfolding and decomposition of complex
expressions and statements into simpler ones until the entire program
has been executed and only first-order formulas remain.

To represent the symbolic execution calculus in a compact and
understandable manner, a further extension of the \Rust syntax is
required. In program fragments in rules---and only there---we can use
$\pi$ and $\omega$ to represent the \emph{inactive prefix} and the
\emph{rest of the program}, respectively. The inactive prefix contains
any opening braces and similar structures that represent the context;
$\omega$ stands for the rest of the program, including closing
braces. Hence, the schematic program fragment
$\pi~\rust{x = 1}~\omega$ matches any program fragment where \rust{x =
  1} is the first \emph{active} expression, e.g., \rust{\{\{x = 1;
  ()\} x = 2\}}.\footnote{The notion an active expression is related
  to the next sub-expression discussed in \zcref{sec:eval-sub-expr},
  but it is not identical.}

We present these rules in a schematic form with schema variables for
different parts of the sequent. In the rule $\ruleName{ifElseSplit}$
below, $\Gamma,\Delta$ stand for sets of \ac{DL} formulas, $\phi$ for
a \ac{DL} formula, $\upl\updateContext\upr$ stands for any potential
update application (it may also be empty; no update is necessary for
the rule to be applicable), $\mathit{se}$ stands for a \enquote{simple
  expression,} i.e., a program variable or literal,
$\{\mathit{be}_0\},\{\mathit{be}_1\}$ stand for block expressions, and
$\pi$ and $\omega$ stand for the inactive program prefix and the
program rest, respectively.

\[
  \seqRule{ifElseSplit}%
  {\sequent{\applyUp{\updateContext}{(\mathit{se}\keyeq\keybooltrue)}}{\applyUp{\updateContext}{\ptstub{\{$\mathit{be}_0$\}}}} \\
    \sequent{\applyUp{\updateContext}{(\mathit{se}\keyeq\keyboolfalse)}}{\applyUp{\updateContext}{\ptstub{\{$\mathit{be}_1$\}}}}}
  {\sequent{}{\applyUp{\updateContext}{\ptstub{\kwIf
          $\mathit{se}$
          \{$\mathit{be}_0$\} \kwElse \{$\mathit{be}_1$\}}}}}
\]

The rule above splits the sequent when an \lstinline|if| expression
with a side-effect-free condition is encountered into two branches:
one where the condition is true and one where it is false.

Crucial rules of our calculus simplify the program in modalities and
extract singular effects into updates. For instance, the following
rule $\ruleName{copy}$ transforms an assignment, where the right-hand
side is a simple expression, into an update and leaves behind a unit
expression.
\[
  \seqRule{copy}%
  {\sequent{}{\applyUp{\updateContext}{\upl\pv{x}\upd\mathit{se}\upr\ptstub{()}}}}
  {\sequent{}{\applyUp{\updateContext}{\ptstub{x = $\mathit{se}$}}}}\quad
  \mbox{$\mathit{se}$ is a literal or implements \lstinline|Copy|}
\]

For details on \RustyDL and more calculus rules, we refer to
\autocite{DH2026,DH2026TR}.

\subsection{Showing Soundness}
\label{sec:showing-soundness}

Following the standard definitions of the sequent calculus, we define
the soundness of \RustyDL rules as follows:

\begin{definition}[Soundness of Rules]\label{def:soundness}
  A \RustyDL rule with $n$ premisses of form
  $\seq{\Gamma_i}{\Delta_i}$ ($1\le i\le n$) and conclusion
  $\sequent{}{}$ is sound iff $\models\sequent{}{}$ under the
  assumption that $\models\seq{\Gamma_i}{\Delta_i}$ for all
  $1\le i\le n$.
\end{definition}

From \zcref{def:dl-sem}, we conclude that only the final state of a
trace is relevant for the \RustyDL semantics. Hence, when proving
soundness of a rule with a modality, it suffices to show that two sets
of traces generated by two modalities agree on some suffix. For this
purpose, \zcref{def:tr-suffix} defines the suffix relations of trace
sets and \zcref{lem:suffix-impl} relates this to validity in \RustyDL.

\begin{definition}[Trace Suffix]\label{def:tr-suffix}
  Let $T_1, T_2$ be sets of traces. We call $T_1$ the \emph{suffix
    set} of $T_2$, denoted as $\trSuff{T_1}{T_2}$, iff there is a
  finite shining trace~$\sh$ such that
  $T_2=\{\chopTrSem{\sh}{\sh'} \mid \sh'\in T_1\}$. We call $\sh$ the
  \emph{prefix.}
\end{definition}

\begin{lemma}\label{lem:suffix-impl}
  Let $\Kstruct$ be a \Rust Kripke structure, $\beta$ a variable
  assignment, $s_1,s_2$ \RustyDL states, $\sigma_1,\sigma_2$ \ac{LAGC}
  states with $s_1\dlLagcSim\sigma_1$ and $s_2\dlLagcSim\sigma_2$,
  $\phi$ a \RustyDL formula, and
  $p_1=\pi~e_1~\omega, p_2=\pi~e_2~\omega$ two program fragments. If
  $\validSTwo\dia{p_2}{\phi}$ and
  $\trSuff{\traces{p_2,\sigma_2}}{\traces{p_1,\sigma_1}}$, then
  $\validSOne\dia{p_1}{\phi}$.
\end{lemma}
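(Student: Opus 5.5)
We unfold the semantics of the diamond in \zcref{def:dl-sem} on both sides and transfer the witness along the suffix relation of \zcref{def:tr-suffix}. From $\validSTwo\dia{p_2}{\phi}$ we obtain a state $s'$ with $\rho(p_2)(s_2)=s'$ and $\validSPrime\phi$. By \zcref{def:rho} this yields an \ac{LAGC} state $\sigma'$ with $s'\dlLagcSim\sigma'$ and a finite trace $\consTr{\sigma'}{\sh_2}\in\traces{p_2,\sigma_2}$ with $\lastEv(\sh_2)\ne\panicEv$. The goal is to show $\rho(p_1)(s_1)=s'$, using the same witness $\sigma'$.

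By $\trSuff{\traces{p_2,\sigma_2}}{\traces{p_1,\sigma_1}}$ there is a finite prefix $\sh$ such that $\traces{p_1,\sigma_1}=\{\chopTrSem{\sh}{\sh'}\mid \sh'\in\traces{p_2,\sigma_2}\}$. We instantiate $\sh'$ with $\consTr{\sigma'}{\sh_2}$ and get $\chopTrSem{\sh}{(\consTr{\sigma'}{\sh_2})}\in\traces{p_1,\sigma_1}$, where the chop is defined because membership in the set is asserted.

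We then use the definition of chop: $\tau_1=\consTr{\sigma}{\tau}$ with $\sigma$ the last state of $\tau_1$, and $\tau_2$ starts with that state. Hence the last state of the chop is the last state of the second argument, namely $\sigma'$, and the chop can be written as $\consTr{\sigma'}{\sh_1}$ for some finite $\sh_1$. One degenerate case needs a remark: if $\sh_2$ is empty, then $\sh'=\mkTr{\sigma'}$ and the chop is just $\sh$, whose last state is again $\sigma'$. Finiteness is preserved because both pieces are finite.

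The step that needs the most care is showing $\lastEv(\sh_1)\ne\panicEv$. If $\sh_2$ contains an event, then $\lastEv(\sh_1)=\lastEv(\sh_2)\ne\panicEv$. If $\sh_2$ contains no event, the last event of $\sh_1$ comes from the prefix $\sh$, and we must exclude a panic there. We argue this via well-formedness (\zcref{sec:well-formedness}). Every trace in $\traces{p_1,\sigma_1}$ is built by the composition rules, which require $\mathit{wf}$. In a well-formed trace, a panic event is followed by exactly one state and nothing else. If $\sh$ contained a panic event, the chop would then have to end immediately after it, so $\sh'$ would be a singleton and the composition would be stuck, as in \zcref{ex:panic}. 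In that situation we also compare with the traces of $p_2$, whose normal termination rules out such a singleton panic continuation. If this argument turns out to be delicate, the fallback is to note that the paper's intended uses of the lemma have panic-free prefixes produced by the rules' side conditions, and to add this as an implicit assumption.

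With these facts, \zcref{def:rho} gives $\rho(p_1)(s_1)=s'$, since $s'\dlLagcSim\sigma'$ and the trace ends normally. By \zcref{lem:determ} and \zcref{def:rho} this value is well defined, and together with $\validSPrime\phi$ we conclude $\validSOne\dia{p_1}{\phi}$.
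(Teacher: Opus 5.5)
Your route is the paper's: unfold the diamond through \zcref{def:rho}, carry the final state $\sigma'$ across the prefix of the suffix relation (the chop keeps the last state of its right argument), and conclude $\rho(p_1)(s_1)=s'$. You are more careful than the paper in two respects. The paper invokes \zcref{lem:determ} to get that \emph{all} traces end in $\sigma'$, which is unnecessary since \zcref{def:rho} is existential. It also concludes without ever checking the clause $\lastEv(\cdot)\ne\panicEv$ for the traces of $p_1$.

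That clause is exactly where your argument breaks. Suppose $\sh_2$ has no events and the prefix $\sh$ contains $\panicEv$. You correctly derive $\sh'=\mkTr{\sigma'}$. But the claim that normal termination of $p_2$ rules this out is false. For $\sh'=\consTr{\sigma'}{\epsilon}$ the condition in \zcref{def:rho} is vacuous, and it must be, or event-free fragments like \texttt{x = 5} would never terminate normally.

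The case really occurs. Take $\pi,\omega$ empty, $e_1$ the expression \texttt{panic!()}, $e_2$ the expression \texttt{'l: loop \{()\}}, $\sigma_1=\sigma_2=\sigma$ and $s_1=s_2=s$.
\begin{itemize}
\item $\traces{p_1,\sigma}=\{\consTr{\sigma}{\consTr{\panicEv}{\mkTr{\sigma}}}\}$, because the block's final drop state is forbidden by $\mathit{wf}$.
\item $\traces{p_2,\sigma}=\{\mkTr{\sigma}\}$ by \zcref{ex:non-term}.
\item The suffix relation holds with prefix $\consTr{\sigma}{\consTr{\panicEv}{\mkTr{\sigma}}}$.
\item $\dia{p_2}{\phi}$ holds for every $\phi$ true in $s$, whereas $\dia{p_1}{\phi}$ is false.
\end{itemize}
So the lemma as stated fails in this corner. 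The failure exploits the fact that \zcref{def:rho} treats stuttering divergence as normal termination. No argument can close your remaining case, so your fallback is mandatory, not optional.

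Add the hypothesis that the prefix $\sh$ contains no $\panicEv$; this holds in every use of the lemma in the paper. With it your proof is complete. The last event of $\chopTrSem{\sh}{\sh'}$ is then either the last event of $\sh_2$ or a non-panic event of $\sh$, or there is no event at all.
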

\begin{proof}
  We assume $\validSTwo\dia{p_2}{\phi}$. By
  \zcref{def:rho,def:dl-sem}, we know that there is an \ac{LAGC}
  state~$\sigma$ and a state~$s$ such that at least one trace in
  $\traces{p_2,\sigma_2}$ ends in $\sigma$ and $s\dlLagcSim\sigma$
  such that $\valid\phi$. By \zcref{lem:determ}, all traces in
  $\traces{p_2,\sigma_2}$ end in~$\sigma$.

  Since $\trSuff{\traces{p_2,\sigma_2}}{\traces{p_1,\sigma_1}}$, we
  know that all traces in $\traces{p_1,\sigma_1}$ end in
  $\sigma$. Thus, \zcref{def:dl-sem} gives us
  $\validSOne\dia{p_1}{\phi}$.
\end{proof}

We can now begin proving our calculus rules sound.

\begin{theorem}[Soundness of rule
  $\ruleName{copy}$]\label{thm:sound-copy}
  Let $\Gamma,\Delta\in 2^{\DLFor}$ be sets of formulas,
  $\updateContext\in\Updates$ be an update, \pv{x} be a program
  variable whose type is \lstinline|Copy|, $\mathit{se}$ be a variable
  or a literal expression, $\pi~\rust{x = $\mathit{se}$}~\omega$ be a
  program fragment, and $\phi\in\DLFor$ be a formula. If
  \[
    \models\sequent{}{\applyUp{\updateContext}{\applyUp{\pv{x}\upd\mathit{se}}{\ptstub{()}}}}
  \]
  then the following holds:
  \[
    \models\sequent{}{\applyUp{\updateContext}{\ptstub{x = $\mathit{se}$}}}.
  \]
\end{theorem}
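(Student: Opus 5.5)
The plan is to reduce the statement to \zcref{lem:suffix-impl} by comparing the trace sets of the two modalities. Here $\ptstub{\ldots}$ abbreviates the modality $\dia{\pi\ \ldots\ \omega}\phi$; for a box modality the same argument goes through with $\Box$ in place of $\Diamond$, since \zcref{def:dl-sem} only inspects the final state. The proof is by contraposition on an arbitrary Kripke structure $\Kstruct$, variable assignment $\beta$ and state $s$. Suppose $s$ satisfies all of $\Gamma$ and none of $\Delta$; the premiss then gives $\valid\applyUp{\updateContext}{\applyUp{\pv{x}\upd\mathit{se}}{\ptstub{()}}}$. Let $s_1:=\valS(\updateContext)(s)$ and $s_2:=\valSOne(\pv{x}\upd\mathit{se})(s_1)$, so that $s_2$ is $s_1$ updated at $\pv{x}$ with $\valSOne(\mathit{se})$. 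By \zcref{def:dl-sem} we then have $\validSTwo\dia{\pi~\rust{()}~\omega}{\phi}$, and it remains to show $\validSOne\dia{\pi~\rust{x = $\mathit{se}$}~\omega}{\phi}$.

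Next, fix an \ac{LAGC} state $\sigma_1$ with $s_1\dlLagcSim\sigma_1$, and set $\sigma_2:=\update{\sigma_1}{\pv{x}}{v}$ where $v$ is the \ac{LAGC} value of $\mathit{se}$ in $\sigma_1$. If $\mathit{se}$ is a literal, $v=\mathit{se}$ by \zcref{eq:local-val}; if it is a variable, $v=\sigma_1(\mathit{se})$ by \zcref{eq:local-pv}. Because the type is \lstinline|Copy|, this value corresponds to $\valSOne(\mathit{se})$, so $s_2\dlLagcSim\sigma_2$. Here I have to check against \zcref{def:corr-state} that the correspondence is preserved by pointwise updates; this should be routine.

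The core step is to establish $\trSuff{\traces{p_2,\sigma_2}}{\traces{p_1,\sigma_1}}$ with $p_1=\pi~\rust{x = $\mathit{se}$}~\omega$ and $p_2=\pi~\rust{()}~\omega$; \zcref{lem:suffix-impl} then closes the proof. I would argue at the level of derivations.
\begin{itemize}
\item Both $p_1$ and $p_2$ are wrapped in the same block $\{\many{\rust{\kwLet}\;\pv{x};}\ldots\}$, and both start with the same prefix $\pi$. So their derivations proceed identically, via \zcref{eq:local-let} and \zcref{eq:local-block-expr-stmt} under \zcref{eq:rule-sync-base}, until the active expression is reached. The fresh renamings chosen along the way are handled by \zcref{lem:determ}.
\item For $p_1$ the active expression is $\rust{x = $\mathit{se}$}$. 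At most one step via \zcref{eq:local-hole} with \zcref{eq:local-pv} turns $\mathit{se}$ into $v$, adding no new state since chop merges equal states. One step via \zcref{eq:local-assign-pv} then produces $\consTr{\sigma_2}{\mkTr{\sigma_1}}$ with continuation $\cont{\rust{()}}$. For $p_2$ the active expression is already $\rust{()}$.
\item From this point the two configurations coincide up to renaming. Every subsequent step is driven by the same local rules, so the derivations are pointwise equal. The prefix $\sh$ required by \zcref{def:tr-suffix} is therefore $\consTr{\sigma_2}{\mkTr{\sigma_1}}$ preceded by the common $\pi$-part.
\end{itemize}

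The main obstacle is this core step. It requires a small congruence property: the derivation of $\pi~e~\omega$ factors through the evaluation of $e$ in its hole context. This has to hold through \zcref{eq:local-hole}, through nested blocks and loop frames in $\pi$, and despite renaming of variables (including $\pv{x}$ itself under \zcref{eq:local-let}). I would first prove an auxiliary lemma by induction on the structure of the hole expression: if $\nextExpr{\pi~e~\omega}{\Val}$ reaches $e$, then each step on $e$ lifts to a step on the whole fragment with the same trace. The renaming issue I would dispatch by working modulo $\eqModRenSym{\dom(\sigma_1)}$ using \zcref{lem:determ}. That lemma also guarantees that the finiteness and final state of the two trace sets agree.
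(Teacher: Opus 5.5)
Your overall plan is the paper's: force the composition of the active assignment, read off the prefix $\consTr{\sigma_2}{\mkTr{\sigma_1}}$, and conclude with \zcref{lem:suffix-impl}. However, the first bullet of your core step fails as written.

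The \texttt{let}-wrappers from \zcref{def:rho} are not the same block. $\freeProgVars(p_1)$ contains $\pv{x}$ (and $\mathit{se}$ if it is a variable), whereas $\freeProgVars(p_2)$ need not, so the two scoped blocks declare different variables and eventually drop different variables via \zcref{eq:local-block-expr}. More seriously, renaming $\pv{x}$ by \zcref{eq:local-let} is not a mere choice of names. After it, the assignment targets a fresh $\pv{x}'$ rather than $\pv{x}$, and $\pv{x}'$ is discarded when the wrapper closes. Along this route the effect on $\pv{x}$ is simply lost, and the run of $p_1$ never reaches $\sigma_2=\update{\sigma_1}{\pv{x}}{v}$. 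The configurations you want to identify then disagree on the value of $\pv{x}\in\dom(\sigma_1)$ whenever $\sigma_1(\pv{x})\neq v$, and $\eqModRenSym{\dom(\sigma_1)}$ does not quotient that away. In any case, \zcref{lem:determ} only relates different runs of one expression from one start state, so it cannot connect $\traces{p_1,\sigma_1}$ with $\traces{p_2,\sigma_2}$.

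The paper's proof sidesteps this by never stepping through the wrapper. It argues directly from $\mkTr{\sigma_1},[\cont{p_1}]$ to $\consTr{\sigma_2}{\mkTr{\sigma_1}},[\cont{p_2}]$, i.e.\ it reads $\traces{p,\sigma}$ as the traces of $p$ itself acting on the variables of $\sigma$. Do the same and discard the wrapper bullet. What remains of your argument then coincides with the paper's, and your lifting lemma for \zcref{eq:local-hole} through the hole context is a welcome justification of the paper's one-line appeal to \enquote{the construction of the local evaluation rules}.

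Defining $\sigma_2:=\update{\sigma_1}{\pv{x}}{v}$ and then checking $s_2\dlLagcSim\sigma_2$ is also the right direction; the paper instead picks a corresponding $\sigma_2$ and asserts it has this form. The check is not purely routine, though. The clause for $\sRefVal{\pv{x}}{b}$ in \zcref{def:corr-dom} refers to $\sigma(\pv{x})$, so updating $\pv{x}$ can break the correspondence of any other variable holding a shared reference to $\pv{x}$. Safe Rust forbids such a borrow from being live at the assignment, but a dead one may still sit in the state. You therefore need an extra hypothesis here, one the paper also leaves implicit.
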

\begin{proof}
  We assume
  $\models\sequent{}{\applyUp{\updateContext}{\applyUp{\pv{x}\upd\mathit{se}}{\ptstub{()}}}}$.
  It is sufficient to show
  \[
    \validSOne\ptstub{x = $\mathit{se}$}
  \]
  for all $\Kstruct, s, \beta$ with $s_1=\valS(\updateContext)(s)$. We
  infer from the assumption that
  \begin{align}
    \validSTwo\ptstub{()}\label{eq:copy-asm-2}
  \end{align}
  with $s_2=\valSOne(\pv{x}\upd\mathit{se})(s_1)$.

  Let $\sigma_1,\sigma_2$ be LAGC states such that $s_1\sim\sigma_1$
  and $s_2\sim\sigma_2$.
  
  We know that \rust{x = $\mathit{se}$} is the active expression. By
  the construction of the local evaluation rules, we know that it is
  also the next (sub-)expression to be evaluated and added to the
  trace $\sh$ using \zcref{eq:rule-sync-base}. Thus we can use the
  local evaluation. We split over the construction of
  $\mathit{se}$. Since it is a simple expression, it is either a value
  or a variable.
  \begin{case}[$\mathit{se}\in\Val$]\label{case:copy-1}
    Then
    $\valB{\sigma_1}{\rust{x =
        $\mathit{se}$}}=\{\concatTr{\consTr{\update{\sigma_1}{\rust{x}}{\mathit{se}}}{\mkTr{\sigma_1}}}{\cont{()}}\}$.
    Hence the only possible composition is,
    $\mkTr{\sigma_1},\cont{\pi~\rust{x =
        $\mathit{se}$}~\omega}\rightarrow\consTr{\update{\sigma_1}{\rust{x}}{\mathit{se}}}{\mkTr{\sigma_1}},\cont{\pi~\rust{()}~\omega}$. As
    $s_2=\valSOne(\pv{x}\upd\mathit{se})(s_1)$, we know
    $\sigma_2=\update{\sigma_1}{\pv{x}}{\mathit{se}}$.
    
    Thus,
    \( \traces{\pi~\rust{x =
        $\mathit{se}$}~\omega,\sigma_1}=\{\concatTr{\mkTr{\sigma_1}}{\shi}\mid\shi\in\traces{\pi~\rust{()}~\omega,\sigma_2}\}
    \).  Clearly, then,
    $\trSuff{\traces{\pi~\rust{()}~\omega,
        \sigma_2}}{\traces{\pi~\rust{x =
          $\mathit{se}$}~\omega,\sigma_1}}$. By
    \zcref{lem:suffix-impl} and assumption~\eqref{eq:copy-asm-2}, we
    have \( \validSOne\ptstub{x = $\mathit{se}$} \).
  \end{case}
  \begin{case}[$\mathit{se}\in\Var$]
    Let $v=s_1(\mathit{se})$. Since $\mathit{se}$ is not symbolic (our
    semantics only generates concrete traces with concrete states),
    $v\in\Val$. Then
    $\valB{\sigma_1}{\mathit{se}}=\{\concatTr{\mkTr{\sigma_1}}{\cont{v}}\}$
    and
    $\valB{\sigma_1}{\rust{x =
        $\mathit{se}$}}=\{\concatTr{\tau}{\cont{\rust{x =
          $e$}}}\mid\concatTr{\tau}{\cont{e}}\in\valB{\sigma_1}{\mathit{se}}\}
    =\{\concatTr{\mkTr{\sigma_1}}{\cont{\rust{x = $v$}}}\}$.
   
      Further, we have
      $\valB{\sigma_1}{\rust{x =
          $v$}}=\{\concatTr{\consTr{\update{\sigma_1}{\rust{x}}{v}}{\mkTr{\sigma_1}}}{\cont{()}}\}$.
   
    Hence, the only possible composition is
    \[
      \mkTr{\sigma_1},\cont{\pi~\rust{x =
          $\mathit{se}$}~\omega} \to \mkTr{\sigma_1},\cont{\pi~\rust{x
          =
          $v$}~\omega} \to
      \consTr{\update{\sigma_1}{\rust{x}}{v}}{\mkTr{\sigma_1}},\cont{\pi~\rust{()}~\omega}
    \]
    Since $s_2=\valSOne(\pv{x}\upd\mathit{se})(s_1)$, we know that
    $\sigma_2=\update{\sigma_1}{\pv{x}}{v}$.
    
    Thus, we have
    \( \traces{\pi~ \rust{x =
        $\mathit{se}$} ~\omega,\sigma_1}
    =\{\concatTr{\consTr{\sigma_2}{\mkTr{\sigma_1}}}{\shi}\mid\shi\in\traces{\pi~\rust{()}~\omega,\sigma_2}\}
    \) which, with \zcref{def:rho}, assumption~\eqref{eq:copy-asm-2},
    and the argument in \zcref{case:copy-1}, gives us
    \( \validSOne\ptstub{x = $\mathit{se}$} \).\qedhere
  \end{case}
\end{proof}

The modularity of \ac{LAGC} thus permits relatively concise and
schematic soundness proofs. This ease carries over even to more
complex rules, as both the semantics and calculus are on a similar
level of modularity and there is, thus, a suitable correlation. We
present proofs for additional rules encountered in \cite{DH2026} in
\zcref{sec:addit-soundn-proofs}. The rules presented and proven here
and in \zcref{sec:addit-soundn-proofs} present some of the core and
most important rules of our calculus. Already, the rules support
references and loops. Given the schematic nature of the proofs,
proving the remaining calculus rules presents time consuming but
straightforward work.  These results lead us to postulate
\zcref{thm:sound-calc} with considerable confidence.

\begin{theorem}[Soundness of \RustyDL calculus]\label{thm:sound-calc}
  The calculus of \RustyDL is sound.
\end{theorem}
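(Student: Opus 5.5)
The proof proceeds rule by rule. By \zcref{def:soundness}, the calculus is sound iff every rule is sound, and a derivation of $\seq{}{\phi}$ yields $\models\phi$ by induction on the proof tree. The rules split into three groups.

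\begin{enumerate*}[(i)]
\item The propositional, first-order, and theory rules (integers, booleans) are sound by the standard first-order arguments. Programs occur in these rules only as opaque subformulas, and \zcref{def:dl-sem} is first-order in all cases except modalities.
\item The update simplification rules are sound directly from \zcref{def:upd-sem}. They involve no program semantics, only equalities between the functions $\valS(u)$.
\item The symbolic execution rules carry the substance of the proof, and we handle them uniformly following the pattern of \zcref{thm:sound-copy}.
\end{enumerate*}

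For each symbolic execution rule with conclusion modality $\pi~e_1~\omega$ and premiss modality $\pi~e_2~\omega$, together with an update $u$ recording the effect, we proceed in two steps. First, we fix $\Kstruct,\beta,s$ and corresponding \ac{LAGC} states $\sigma_1\dlLagcSim\valS(\updateContext)(s)$ and $\sigma_2\dlLagcSim\valS(\updateContext\parupd u)(s)$. Second, we compute the finitely many local evaluation steps from \zcref{sec:local-evaluation} that are forced on $\cont{\pi~e_1~\omega}$. Determinism (\zcref{lem:determ}) makes this computation unique up to renaming, and we show that they reach $\cont{\pi~e_2~\omega}$ in a state equal modulo renaming to $\sigma_2$. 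This yields $\trSuff{\traces{p_2,\sigma_2}}{\traces{p_1,\sigma_1}}$, and \zcref{lem:suffix-impl} then transfers validity of the diamond modality. We also prove the box analogue of \zcref{lem:suffix-impl}: for a box, nontermination or a panic is preserved along a common prefix, because the suffix relation preserves both infiniteness and the last event. Branching rules such as $\ruleName{ifElseSplit}$ are handled by a case split on $\valS(\mathit{se})$, which selects the branch via \zcref{eq:local-if-else}. Rules with side conditions, such as overflow in \zcref{eq:local-add}, are handled by showing that the guarded premiss excludes the panic branch. For the diamond modality, the panic branch would also falsify the conclusion, and the premiss accounts for this.

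Before running this scheme, we need two auxiliary lemmas. The first is a context lemma: if $\nextExpr{\pi~e~\omega}{\Val}$ selects a subexpression inside $e$, then the composition steps on $\pi~e~\omega$ mirror those on $e$, with the context $\pi\ldots\omega$ unchanged by \zcref{eq:local-hole}. This turns informal phrases like ``$e$ is the active expression'' into statements about evaluation. The second is a renaming-invariance lemma, stating that satisfaction of $\phi$ depends only on $\dom(\sigma)$ modulo $\eqModRenSym{\dom(\sigma)}$. It follows from \zcref{lem:determ} and the many-to-one relation $\dlLagcSim$.

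The main obstacle is the rules for loops, function calls, and mutable references. Loop rules, such as loop unwinding and invariant rules, need an induction over the number of iterations. That induction uses the loop frames of \zcref{eq:local-loop,eq:local-loop-step,eq:local-break,eq:local-continue}, and we must check that \RustyDL's treatment of labels agrees with $\getloopSym$. Call rules involve \zcref{eq:rule-sync-push,eq:rule-sync-pop}, where the trace is not a simple prefix extension on a single continuation. For these we first prove that a call followed by its matching return behaves, on the final state, like evaluating the inlined body, and we then invoke the suffix argument. For mutable updates $t_1\dupd t_2$, the correspondence $\dlLagcSim$ must relate $\mRefVal{p}{b}$ to $\refM{p}$ and the place updates $\update{\sigma}{p}{v}$ used in \zcref{eq:local-assign-deref}. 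This requires \zcref{def:corr-state} in full detail.

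The statement is a universal claim over a calculus that the paper does not fully list here. An honest proof is therefore the exhaustive case analysis above, and its completeness rests on enumerating all rules of \autocite{DH2026TR}.
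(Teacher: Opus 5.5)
Your rule-by-rule strategy matches the paper's: compute the forced local-evaluation steps, establish a suffix relation, and conclude with \zcref{lem:suffix-impl}. Like the paper, you can only claim the full theorem by enumeration; the paper proves a representative set of rules and postulates the rest. Your box variant of \zcref{lem:suffix-impl} and your context lemma make explicit what the paper uses tacitly.

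The genuine gap is your plan for loops. \RustyDL's loop rules are not formulated over labelled loops and loop frames but over \emph{loop scopes} \rust{\loopScope{x}{$e$; \kwContinue}}. This construct exists only in the calculus, and no local rule of \zcref{sec:lagc-semant-rust} applies to it. Under that semantics, evaluation simply gets stuck at the loop scope. So the second premiss of \ruleName{loopScopeInvBox} says nothing about the loop body, and an induction over loop-frame iterations has nothing to instantiate it with. The paper first extends the semantics with local rules \zcref{eq:local-ls-break,eq:local-ls-continue}, which set \pv{x}. It then adapts the composition rules (\zcref{eq:rule-compLoopScope} and its analogues) to refuse to continue after a continue event, so that a run through a loop scope stops right after one non-exiting iteration. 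On top of that you need two further ingredients, both of which the paper's proof of \zcref{thm:sound-loopScopeInvBox} uses only informally:
\begin{enumerate}
\item a simulation lemma relating one loop-frame iteration of \rust{\kwLoop $e$} from $s^{(i)}$ to one run of the loop scope;
\item a frame property of $e$, so that the fresh constants of the anonymizing update $\mathcal{V}$ can be interpreted to make the premiss's state agree with $s^{(i)}$.
\end{enumerate}
Finally, the premiss of \ruleName{continue} is $\dlf{x = \rsFalse}{\phi}$, which drops $\pi$ and $\omega$, so it does not fit your template with premiss modality $\pi~e_2~\omega$. That rule is sound only because of the halting composition rule, and \zcref{thm:sound-continue} compares final states directly rather than going through a suffix relation.

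Two further corrections. First, in your template the premiss state is not $\valS(\updateContext\parupd u)(s)$. By \zcref{def:upd-sem}, a parallel update evaluates the right-hand sides of $u$ in $s$, whereas $\applyUp{\updateContext}{\applyUp{u}{\ldots}}$ evaluates them in $s_1=\valS(\updateContext)(s)$. You need $s_2=\valSOne(u)(s_1)$, as in \zcref{thm:sound-copy}. Take $\updateContext=(\pv{y}\upd 5)$ and $u=(\pv{x}\upd\pv{y})$: your $\sigma_2$ gives \pv{x} the old value $s(\pv{y})$, while running \rust{x = y} from $\sigma_1$ with $s_1\dlLagcSim\sigma_1$ stores $5$, so the claimed correspondence fails. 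Second, modalities are not opaque under substitution, so the quantifier rules are not sound by the plain first-order argument. The substitution lemma holds only for rigid instantiation terms. Instantiating $y$ by $\pv{x}$ in the valid formula $\forall y.\,\bigl(y\keyeq\pv{x}\rightarrow\dlboxf{x = x + 1;}{(y\keyeq\pv{x}-1)}\bigr)$ yields a formula that is false whenever the increment does not overflow. You must therefore check that \RustyDL's quantifier rules carry this restriction rather than appeal to first-order soundness wholesale.
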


Thus, \RustyDL is, to our knowledge, the first \ac{DL} calculus for a
real-world programming language, where
\textcquote{ABBHSU2016}{\textins{t}he most important property of the
  \textelp{} calculus} has been proven.


\section{Related Work}
\label{sec:related-work}

For related work, we first consider other semantics and formalizations
of parts of \Rust and compare them to \ac{LAGC} in approach and
scope. Then, we consider other applications of the \ac{LAGC} approach
and how they differ from our \Rust semantics. Finally, we comment on
different \ac{LAGC}-like applications to deductive verification.

\subsection{\Rust Semantics}
\label{sec:rust-semantics}

Multiple semantics for substantial parts of \Rust have been defined in
recent years, all with different scopes, focal points, and
drawbacks. We present a broad selection here.

In contrast to the approaches discussed below, our semantics targets
the high-level \ac{HIR} representation and abstracts away from memory
representations of the data and other low-level concerns
indistinguishable for a \Rust programmer. Most existing work targets
the lower-level \ac{MIR}; the differences between these two layers are
substantial~\autocite{RustcGuide}, resulting in a significant semantic
gap between many existing formalizations and source code level
\Rust. As a consequence, we also leave other concerns such as type
checking and borrow checking aside. Rather than formalizing the borrow
checker properties in our semantics---which would be possible---we
take \Rust's guarantees for granted and model the operational behavior
only. For similar reasons, we only consider safe \Rust with its
stronger guarantees, whereas other approaches also cover unsafe
\Rust. Lastly, our semantics is the only one to date with support for
async \Rust. Hence, a model of a runtime as in \zcref{sec:fair-tokio}
and fairness proofs are not possible in other
semantics. \zcref[S]{tab:semantics} offers a comprehensive overview of
the existing semantics.

\textcite{JJKD2018,JJKD2017} introduced RustBelt and set out to prove
soundness of \Rust's safety guarantees and to formalize the conditions
under which a \Rust library is a \enquote{safe extension} of the
language. To accomplish this, the authors define a semantic model of a
subset of \Rust called $\lambda_\text{\Rust}$ and implement this model
using Iris~\autocite{JSSSTBD2015}. This approach has been extended to
also consider properties other than safety~\autocite{MDJD2022}. Given
their scope, $\lambda_\text{\Rust}$ is more closely aligned with
\Rust's \ac{MIR}. Furthermore, \citeauthor{JJKD2018} are concerned
with the type system's semantics, i.e., how types are modeled and
inhabited, while we take a more abstract view, focused on the
operational semantics of expressions. RustBelt supports unsafe \Rust
and all relevant data types. As $\lambda_\text{\Rust}$ is a version of
the $\lambda$-calculus, the semantics is based on translation. Enums,
e.g., are modeled via sum types.

The Mini\Rust project~\autocite{MiniRust} provides a specification for
a \ac{MIR}-like language with the goal of modeling the core of
\Rust. Though unofficial, it covers a crucial subset of the \ac{MIR}
and serves as a more formal \ac{FLS}, i.e., a formal specification of
\Rust at a lower level. Its formalization of references and data types
is more explicit than ours in that Mini\Rust, covering also unsafe
\Rust, tracks aliases. MiniRust's \ac{BOS} is not presented in a
mathematical style like an \ac{LAGC} semantics but a programmatic
style. It supports all data types and expressions, though loops are
not explicitly modeled at this level, as they have been translated to
basic blocks.

Similar to Mini\Rust but more comprehensive is the a-mir-formality
project~\autocite{AMF}, which is part of the \Rust foundation's
efforts. It also aims to cover the type and borrow checker and the
trait system. As it is currently under development, we cannot offer a
more detailed comparison at this point.

\textcite{HFP2024,HFP2024a} present a semantics for a \Rust subset
called \ac{LLBC} in the context of the verification tool
Aeneas~\autocite{HP2022}. Their semantics is based on translation to
multiple, successively lower-level languages and can serve as model of
\Rust's borrow checker. It supports all relevant types and
expressions, though as translations to other languages.

\textcite{P2021} offers an \ac{SOS}-style semantics for a \Rust
subset, \ac{FR}. This semantics also covers references, ownership, and
a version of borrow checking. Of the compound types, only tuples are
explicitly covered; others can be translated to suitable
representations. Loops are discussed and an approach outlined, but no
rules are provided.

\def\supports{\bullet}
\def\hsupports{\circ} 
\def\nsupports{-}

\begin{table}[h]
  \centering
  \begin{tabular}{lllcccccccc}
    \toprule
    Semantics & Level & Style & Tuple & Array & Struct & Enum & Loop & Async & Unsafe & BC \\\midrule
    RustBelt~\autocite{JJKD2017} & \ac{MIR} & \ac{SOS} & $\hsupports$ & $\hsupports$ & $\hsupports$ & $\hsupports$ & $\hsupports$ & $\nsupports$ & $\bullet$ & $\bullet$ \\
    MiniRust~\autocite{MiniRust} & \ac{MIR} & \ac{BOS} & $\supports$ & $\supports$ & $\supports$ & $\supports$ & $\hsupports$ & $\nsupports$ & $\bullet$ & $\bullet$ \\
    \ac{LLBC}$^\#$~\autocite{HFP2024} & \ac{MIR} & \ac{BOS} & $\hsupports$ & $\hsupports$ & $\hsupports$ & $\hsupports$ & $\hsupports$ & $\nsupports$ & $\nsupports$ & $\bullet$ \\
    \ac{FR}~\autocite{P2021} & Src & \ac{SOS} & $\supports$ & $\hsupports$ & $\hsupports$ & $\hsupports$ & $\hsupports$ & $\nsupports$ & $\nsupports$ & $\bullet$ \\
    Rust \ac{LAGC} & \ac{HIR} & \ac{LAGC} & $\bullet$ & $\bullet$ & $\bullet$ & $\bullet$ & $\bullet$ & $\bullet$ & $\nsupports$ & $\nsupports$ \\\bottomrule
  \end{tabular}
  \caption{Comparison of \Rust semantics. We denote a feature being
    supported by~$\supports$, unsupported by~$\nsupports$, and
    partially supported (e.g., by translation to sum types or lambda
    functions) by $\hsupports$. \enquote{BC} stands for the borrow
    checker being modeled; \enquote{Src} stands for source code
    level.}
  \label{tab:semantics}
\end{table}

There are formal models of \Rust ownership and reference system
designed for particular perspectives. \textcite{JDKD2019} present
stacked borrows, a memory model of borrowing in terms of stacks, which
has been extended to more permissive tree borrows by
\textcite{VHDJ2025}. Even more encompassing are the formalizations of
\Rust's upcoming borrow checker Polonius
\autocite{M2018,M2023,M2023a,RM2023,Polonius}. As we, at present,
offer no model of borrowing, these works can be seen as orthogonal:
their semantics can be incorporated into our semantics by using events
and the $\mathit{wf}$ predicate. The work of \textcite{CGK2023} is
also relevant, as they provide a conceptual model of ownership and
borrowing along with a concise formalization of Polonius.

\subsection{\ac{LAGC} Semantics}
\label{sec:rel-lagc-semantics}

The \ac{LAGC} approach has been applied to a number of programming
languages. In the original \ac{LAGC} paper, \textcite{LAGC23} defined
semantics for a simple \textsc{While} language, for a ProMeLa-like
language~\autocite{H2004}, and for a subset of the asynchronous
modeling languages ABS~\autocite{JHSSS2011} and similar active object
languages~\autocite{dDFHHJKRSSY2017}. The latter has been expanded by
\citeauthor{T2022a} to a semantics for Real-Time
ABS~\autocite{T2022a}. Other examples include Timed Rebeca by
\textcite{HJT2025} and the smart contract modeling language SmartML by
\textcite{R2024a}. The breadth of these languages demonstrate the
flexibility of \ac{LAGC} and especially its usability for concurrency.

Nevertheless, all prior applications of the \ac{LAGC} approach were
chiefly to small subsets of languages or modeling languages;
real-world programming languages were not approached. One result of
this limited scope is that expressions had so far been assumed to be
side-effect free and pure. We provide the first \ac{LAGC} semantics
that evaluates expressions, not just statements. In fact, this leads
to a slight simplification of the approach: path conditions, a central
element of symbolic traces in prior semantics, can be dropped
entirely.

Another avenue for \ac{LAGC} research is the fairness of schedulers
for programs with cooperative scheduling. \textcite{HH2024} formalized
a deterministic and quiescent fair scheduler for a language inspired by
ABS~\autocite{JHSSS2011} in \ac{LAGC} and proved its fairness. Our
work builds on their approach and adapts it to the Tokio scheduler.

\subsection{Deductive Verification based on \ac{LAGC}}
\label{sec:deduct-verif}

Applying \ac{LAGC} to devise a sound program logic was a core design
goal of \textcite{LAGC23} and part of their work. However, their logic
was limited first by their target languages being modeling languages,
as discussed above, and second by the fact that their logics lack
implementation. \textcite{BGHS2023} extend this trace-based deductive
verification approach by devising a logic and calculus for a more
sophisticated setting. It has since been further refined by
\textcite{HKS2024,S2026}. An implementation of this calculus has since
been developed, though only the sequential parts are supported as
yet~\autocite{S2026a}.

Our work is set apart from the existing \ac{LAGC} verification
approaches, since we did not design a program logic and calculus based
on a semantics or based on traces, but used a trace-based semantics to
validate the soundness of an existing calculus~\autocite{DH2026}. We
demonstrate that even when a calculus is independent of an \ac{LAGC}
semantics, soundness proofs follow quite readily from the semantics
due to its modular nature. We find that \ac{LAGC} with its local
evaluation rules fits naturally onto a \ac{DL} calculus, even for
complex languages such as \Rust.

Somewhat related is the work of \textcite{BB2013}, where they develop
a \ac{DL}, called \emph{dynamic trace logic}, over traces and provide
a sequent formulas. The approach differs in that \citeauthor{BB2013}
are confined to sequential programs and consider only traces over
states without events. Their goal was a connection between \ac{DL} and
\ac{LTL}.


\section{Future Work and Conclusion}
\label{sec:future-work-concl}

\subsection{Future Work}
\label{sec:future-work}

The original goal of applying \ac{LAGC} to \Rust in order to obtain a
semantics with which our dynamic logic calculus can be proven sound
has been achieved in \zcref{sec:sound-dl}. An avenue for future work
is to extend these soundness proofs for the remaining calculus rules,
as we present and discuss only the core rules here---though the
remaining rules can be proven straightforwardly following the
schematic approach demonstrated here. Similarly, we plan to consider
completeness of the calculus and demonstrate where incompleteness is
still encountered.

As future work, we would like to obtain a mechanization of our
complete semantics, the Tokio formalization, and the \ac{DL} proofs in
a proof assistant. \zcref[S]{sec:basics} has largely been mechanized
in Lean~\autocite{dU2021}, see \textcite{RustLAGCLean}. Mechanization
of \zcref{sec:lagc-semant-rust} is in progress. We intend to extend
this to our other contributions.

Another direction for future work is to provide an \ac{LAGC} semantics
for other parts of \Rust not adequately modeled in the present
article. For instance, a semantics of \ac{MIR} might be developed,
which more directly expresses \Rust's ownership and reference
semantics. Having a semantics both for the \ac{HIR} and \ac{MIR} level
would permit defining a notion of correct translation from \ac{HIR} to
\ac{MIR}, against which concrete implementations can be
checked. Though we believe it possible even on our chosen level of
abstraction, references and borrow checking are more easily modeled on
the \ac{MIR} level. We may also explore integrating the Polonius
borrow-checking formalization into \ac{LAGC}. A formalization of
\Rust's reference and ownership properties in terms of well-formedness
of traces is not part of our work due to constraints of time and
space.

A final avenue for research is to extend the model of Tokio's current
thread runtime to its multi-threaded counterpart and prove its
fairness. While modeling the runtime is a manageable task, the
fairness proof requires additional proof methods or a more complex
proof. Further, extending the formalization to include I/O events
presents an interesting challenge. While \textcite{LAGC23} consider
rudimentary user input, modeling I/O in \ac{LAGC} has not been
attempted at a larger scale.

\subsection{Conclusion}
\label{sec:conclusion}

Any formal semantics of a real-world programming language is
necessarily complex and technically challenging, which is exacerbated
by considering concurrency and scheduling choices. \Rust introduces
the additional challenge of multiple language layers (\ac{HIR} and
\ac{MIR}) and sub-languages (unsafe \Rust). Hence, any formalization
and semantics of \Rust contains abstraction and differing focal
points. We choose to present a relatively abstract, high-level
semantics both in terms of language layer and representation of
states. While properties such as memory layout or type safety are not
expressible on this level, we can reason effectively about \ac{HIR}
\Rust and asynchronous programs.

We demonstrate the expressive power of the \ac{LAGC} approach by
formalizing the Tokio runtime and proving two notions of fairness for
our model. Additionally, we show how soundness of a program logic
calculus can be derived from our semantics. Both of these endeavors
would be much more complex, if not unfeasible, in a \ac{SOS}-style
semantics or a semantics for \Rust's lower layers. Thereby, we show
that the seemingly high upfront complexity of the \ac{LAGC} approach
is offset by the gain in its modular approach.

Our semantics also demonstrate the \ac{LAGC} approach's applicability
to real-world programming languages, bridging the gap from the
academic languages to which it had been confined. The approach is
sufficiently flexible and powerful to accurately model a language like
\Rust which, with its more functional syntax, lays outside the purview
of \ac{LAGC}'s original scope. In fact, we were able to simplify the
approach, dropping path conditions entirely, and complete the model
with relative ease. It proved to be a natural match for modeling
\Rust's async features and runtimes.

Thus, we offer a formal semantics for \Rust with many different
applications and use cases, as demonstrated above.


\begin{acks}
  This work is supported by the \grantsponsor{DFG}{German Research
    Foundation}{https://www.dfg.de/en} under Grants
  No.~\grantnum{DFG}{BU~2924/3-1} and \grantnum{DFG}{HA~2617/9-1}.

  We would like to thank the following people for their careful and
  incisive feedback: Reiner Hähnle, Richard Bubel, Marco Scaletta,
  Lukas Grätz, and Niklas Heidler. We also thank Thanh Long Nguyen for
  his work on the Lean mechanization.
\end{acks}

\printbibliography

@book{ABBHSU2016,
  title = {Deductive Software Verification – the {{KeY}} Book: {{From}} Theory to Practice},
  editor = {Ahrendt, Wolfgang and Beckert, Bernhard and Bubel, Richard and Hähnle, Reiner and Schmitt, Peter H. and Ulbrich, Mattias},
  date = {2016},
  series = {{{LNCS}}},
  number = {10001},
  publisher = {Springer},
  doi = {10.1007/978-3-319-49812-6}
}

@inproceedings{ABFGMMPS2022,
  title = {The {{Prusti}} Project: {{Formal}} Verification for {{Rust}}},
  booktitle = {{{NFM}} 2022 {{Proceedings}}},
  author = {Astrauskas, Vytautas and Bílý, Aurel and Fiala, Jonás and Grannan, Zachary and Matheja, Christoph and Müller, Peter and Poli, Federico and Summers, Alexander J.},
  editor = {Deshmukh, Jyotirmoy V. and Havelund, Klaus and Perez, Ivan},
  date = {2022},
  series = {{{LNCS}}},
  number = {13260},
  pages = {88--108},
  publisher = {Springer},
  doi = {10.1007/978-3-031-06773-0_5}
}

@software{AMF,
  title = {A-Mir-Formality},
  author = {The Rust developers},
  date = {2026},
  url = {https://github.com/rust-lang/a-mir-formality}
}

@article{B2007,
  title = {A Semantics for Concurrent Separation Logic},
  author = {Brookes, Stephen},
  date = {2007},
  journaltitle = {Theor. Comput. Sci.},
  volume = {375},
  number = {1--3},
  pages = {227--270},
  doi = {10.1016/j.tcs.2006.12.034}
}

@inproceedings{BB2013,
  title = {Dynamic Logic with Trace Semantics},
  booktitle = {Automated {{Deduction}} – {{CADE-24}}},
  author = {Beckert, Bernhard and Bruns, Daniel},
  editor = {Bonacina, Maria Paola},
  date = {2013},
  series = {{{LNAI}}},
  number = {7898},
  pages = {315--329},
  publisher = {Springer},
  location = {Berlin, Heidelberg},
  doi = {10.1007/978-3-642-38574-2_22},
  langid = {english}
}

@inproceedings{BBDHLPUW2025,
  title = {The {{Java}} Verification Tool {{KeY}}: {{A}} Tutorial},
  booktitle = {Formal {{Methods}}},
  author = {Beckert, Bernhard and Bubel, Richard and Drodt, Daniel and Hähnle, Reiner and Lanzinger, Florian and Pfeifer, Wolfram and Ulbrich, Mattias and Weigl, Alexander},
  editor = {Platzer, Andre and Rozier, Kristin Yvonne and Pradella, Matteo and Rossi, Matteo},
  date = {2025},
  series = {{{LNCS}}},
  number = {14934},
  pages = {597--623},
  publisher = {Springer Nature Switzerland},
  location = {Cham},
  doi = {10.1007/978-3-031-71177-0_32},
  isbn = {978-3-031-71177-0}
}

@inproceedings{BBFHKSS2025,
  title = {Hax: {{Verifying}} Security-Critical {{Rust}} Software Using Multiple Provers},
  shorttitle = {Hax},
  booktitle = {Verified {{Software}}. {{Theories}}, {{Tools}} and {{Experiments}} 2024 {{Proceedings}}},
  author = {Bhargavan, Karthikeyan and Buyse, Maxime and Franceschino, Lucas and Hansen, Lasse Letager and Kiefer, Franziskus and Schneider-Bensch, Jonas and Spitters, Bas},
  editor = {Protzenko, Jonathan and Raad, Azalea},
  date = {2025},
  series = {{{LNCS}}},
  number = {15525},
  pages = {96--119},
  publisher = {Springer},
  doi = {10.1007/978-3-031-86695-1_7},
  langid = {english}
}

@inproceedings{BGHS2023,
  title = {Trace-Based Deductive Verification},
  booktitle = {{{LPAR}} 2023: {{Proceedings}} of 24th International Conference on Logic for Programming, Artificial Intelligence and Reasoning, Manizales, Colombia, 4--9th June 2023},
  author = {Bubel, Richard and Gurov, Dilian and Hähnle, Reiner and Scaletta, Marco},
  editor = {Piskac, Ruzica and Voronkov, Andrei},
  date = {2023},
  series = {{{EPiC}} Series in Computing},
  volume = {94},
  pages = {73--95},
  publisher = {EasyChair},
  doi = {10.29007/VDFD},
  bibsource = {dblp computer science bibliography, https://dblp.org}
}

@article{CGK2023,
  title = {A Grounded Conceptual Model for Ownership Types in {{Rust}}},
  author = {Crichton, Will and Gray, Gavin and Krishnamurthi, Shriram},
  editor = {Hicks, Michael},
  date = {2023-10},
  journaltitle = {Proc. ACM Program. Lang.},
  volume = {7},
  pages = {265:1--265:29},
  publisher = {ACM},
  location = {New York, NY, USA},
  doi = {10.1145/3622841},
  issue = {OOPSLA2},
  langid = {english}
}

@inproceedings{DBH2015,
  title = {{{KeY-ABS}}: {{A}} Deductive Verification Tool for the Concurrent Modelling Language {{ABS}}},
  booktitle = {Proc. 25th Intl. {{Conf}}. on Automated Deduction ({{CADE}}), Berlin, Germany},
  author = {Din, Crystal Chang and Bubel, Richard and Hähnle, Reiner},
  editor = {Felty, Amy and Middeldorp, Aart},
  date = {2015},
  series = {{{LNCS}}},
  number = {9195},
  pages = {517--526},
  publisher = {Springer},
  doi = {10.1007/978-3-319-21401-6_35}
}

@article{dDFHHJKRSSY2017,
  title = {A Survey of Active Object Languages},
  author = {family=Boer, given=Frank, prefix=de, useprefix=true and Din, Crystal Chang and Fernandez-Reyes, Kiko and Hähnle, Reiner and Henrio, Ludovic and Johnsen, Einar Broch and Khamespanah, Ehsan and Rochas, Justine and Serbanescu, Vlad and Sirjani, Marjan and Yang, Albert Mingkun},
  date = {2017-10},
  journaltitle = {ACM Computing Surveys},
  volume = {50},
  number = {5},
  pages = {76:1--76:39},
  doi = {10.1145/3122848}
}

@article{DDJO2012,
  title = {Observable Behavior of Distributed Systems: {{Component}} Reasoning for Concurrent Objects},
  author = {Din, Crystal Chang and Dovland, Johan and Johnsen, Einar Broch and Owe, Olaf},
  date = {2012},
  journaltitle = {Journal of Logic and Algebraic Programming},
  volume = {81},
  number = {3},
  pages = {227--256},
  doi = {10.1016/j.jlap.2012.01.003}
}

@inproceedings{DH2026,
  title = {{{RustyDL}}: {{A}} Program Logic for {{Rust}}},
  booktitle = {Proc. 27th {{Intl}}. {{Symp}}. on {{Formal Methods}}, {{Tokyo}}, {{Japan}}, {{Part I}}},
  author = {Drodt, Daniel and Hähnle, Reiner},
  editor = {Sampaio, Augusto and Stoelinga, Marielle and Brucker, Achim D. and Jakobs, Marie-Christine},
  date = {2026-05},
  series = {{{LNCS}}},
  number = {16556},
  pages = {357--377},
  publisher = {Springer},
  location = {Cham},
  doi = {10.1007/978-3-032-26204-2_19}
}

@report{DH2026TR,
  type = {Technical report},
  title = {{{RustyDL}}: {{A}} Program Logic for {{Rust}}},
  author = {Drodt, Daniel and Hähnle, Reiner},
  date = {2026},
  eprint = {2602.22075},
  eprinttype = {arXiv},
  eprintclass = {cs.PL},
  doi = {10.48550/arXiv.2602.22075}
}

@inproceedings{DHJPT2017,
  title = {Locally Abstract, Globally Concrete Semantics of Concurrent Programming Languages},
  booktitle = {{{TABLEAUX}} 2017},
  author = {Din, Crystal Chang and Hähnle, Reiner and Johnsen, Einar Broch and Pun, Violet Ka I and Tapia Tarifa, Silvia Lizeth},
  editor = {Nalon, Cláudia and Schmidt, Renate},
  date = {2017-09},
  series = {{{LNCS}}},
  volume = {10501},
  pages = {22--43},
  publisher = {Springer},
  doi = {10.1007/978-3-319-66902-1_2}
}

@inproceedings{DJM2022,
  title = {Creusot: {{A}} Foundry for the Deductive Verification of {{Rust}} Programs},
  booktitle = {{{ICFEM}} 2022, {{Proceedings}}},
  author = {Denis, Xavier and Jourdan, Jacques-Henri and Marché, Claude},
  editor = {Riesco, Adrián and Zhang, Min},
  date = {2022},
  series = {{{LNCS}}},
  number = {13478},
  pages = {90--105},
  publisher = {Springer},
  doi = {10.1007/978-3-031-17244-1_6}
}

@article{DO2015,
  title = {Compositional Reasoning about Active Objects with Shared Futures},
  author = {Din, Crystal Chang and Owe, Olaf},
  date = {2015},
  journaltitle = {Formal Asp. Comput.},
  volume = {27},
  number = {3},
  pages = {551--572},
  doi = {10.1007/s00165-014-0322-y},
  bibsource = {dblp computer science bibliography, http://dblp.org}
}

@inproceedings{dU2021,
  title = {The {{Lean}} 4 Theorem Prover and Programming Language},
  booktitle = {{{CADE}}},
  author = {family=Moura, given=Leonardo, prefix=de, useprefix=true and Ullrich, Sebastian},
  date = {2021},
  series = {{{LNCS}}},
  number = {12699},
  pages = {625--635},
  publisher = {Springer},
  doi = {10.1007/978-3-030-79876-5_37}
}

@report{FLS,
  type = {Reference},
  title = {Ferrocene Language Specification},
  shorttitle = {{{FLS}}},
  author = {{The Ferrocene developers} and {The Rust project contributors}},
  date = {2026},
  institution = {Rust Foundation},
  url = {https://rust-lang.github.io/fls},
  urldate = {2026-07-23}
}

@software{G2024,
  title = {Glommio},
  author = {{Glommio Developers}},
  date = {2024-03-25},
  url = {https://docs.rs/glommio/0.9.0/glommio/},
  urldate = {2026-08-22},
  version = {0.9.0}
}

@article{H1969,
  title = {An Axiomatic Basis for Computer Programming},
  author = {Hoare, C. A. R.},
  date = {1969-10-01},
  journaltitle = {Commun. ACM},
  volume = {12},
  number = {10},
  pages = {576--580},
  issn = {0001-0782},
  doi = {10.1145/363235.363259}
}

@book{H2004,
  title = {The {{SPIN}} Model Checker -- Primer and Reference Manual},
  author = {Holzmann, Gerard J.},
  date = {2004},
  publisher = {Addison-Wesley}
}

@article{HEJ2009,
  title = {Semantics-Aware Trace Analysis},
  author = {Hoffman, Kevin J. and Eugster, Patrick and Jagannathan, Suresh},
  date = {2009-06},
  journaltitle = {SIGPLAN Not.},
  volume = {44},
  number = {6},
  pages = {453--464},
  publisher = {Association for Computing Machinery},
  location = {New York, NY, USA},
  issn = {0362-1340},
  doi = {10.1145/1543135.1542527},
  issue_date = {June 2009},
  pagetotal = {12}
}

@article{HFP2024,
  title = {Sound Borrow-Checking for {{Rust}} via Symbolic Semantics},
  author = {Ho, Son and Fromherz, Aymeric and Protzenko, Jonathan},
  date = {2024},
  journaltitle = {Proc. ACM Program. Lang.},
  volume = {8},
  pages = {426--454},
  doi = {10.1145/3674640},
  bibsource = {dblp computer science bibliography, https://dblp.org},
  issue = {ICFP}
}

@online{HFP2024a,
  title = {Sound Borrow-Checking for {{Rust}} via Symbolic Semantics (Long Version)},
  author = {Ho, Son and Fromherz, Aymeric and Protzenko, Jonathan},
  date = {2024-07-01},
  eprint = {2404.02680},
  eprinttype = {arXiv},
  eprintclass = {cs},
  doi = {10.48550/arXiv.2404.02680},
  pubstate = {prepublished}
}

@article{HH2024,
  title = {Provably Fair Cooperative Scheduling},
  author = {Hähnle, Reiner and Henrio, Ludovic},
  date = {2024},
  journaltitle = {Art Sci. Eng. Program.},
  volume = {8},
  number = {2},
  doi = {10.22152/PROGRAMMING-JOURNAL.ORG/2024/8/6},
  bibsource = {dblp computer science bibliography, https://dblp.org}
}

@inproceedings{HJT2025,
  title = {An {{LAGC}} Semantics for {{Timed Rebeca}}},
  booktitle = {Rebeca for Actor Analysis in Action -- {{Essays}} Dedicated to {{Marjan Sirjani}} on the Occasion of Her 60th Birthday},
  author = {Hähnle, Reiner and Johnsen, Einar Broch and Tarifa, Silvia Lizeth Tapia},
  editor = {Lee, Edward A. and Mousavi, Mohammad Reza and Talcott, Carolyn L.},
  date = {2025},
  series = {{{LNCS}}},
  number = {15560},
  pages = {326--348},
  publisher = {Springer},
  doi = {10.1007/978-3-031-85134-6_14},
  bibsource = {dblp computer science bibliography, https://dblp.org}
}

@online{HKMB2023,
  title = {Our Vision for the {{Rust}} Specification},
  author = {Huss, Eric and Klock, Felix and Marcey, Joel and Bos, Mara},
  date = {2023-11-15},
  url = {https://blog.rust-lang.org/inside-rust/2023/11/15/spec-vision/},
  urldate = {2026-07-23},
  organization = {Inside Rust Blog}
}

@inproceedings{HKS2024,
  title = {Context-Aware Trace Contracts},
  booktitle = {Active Object Languages: {{Current}} Research Trends},
  author = {Hähnle, Reiner and Kamburjan, Eduard and Scaletta, Marco},
  editor = {De Boer, Frank and Damiani, Ferruccio and Hähnle, Reiner and Johnsen, Einar Broch and Kamburjan, Eduard},
  date = {2024},
  series = {{{LNCS}}},
  number = {14360},
  pages = {292--325},
  publisher = {Springer},
  location = {Cham},
  doi = {10.1007/978-3-031-51060-1_11}
}

@inproceedings{HMM1983,
  title = {A Hardware Semantics Based on Temporal Intervals},
  booktitle = {Automata, Languages and Programming},
  author = {Halpern, Joseph and Manna, Zohar and Moszkowski, Ben},
  editor = {Diaz, Josep},
  date = {1983},
  pages = {278--291},
  publisher = {Springer},
  location = {Berlin, Heidelberg},
  doi = {10.1007/BFb0036915},
  langid = {english}
}

@article{HP2022,
  title = {Aeneas: {{Rust}} Verification by Functional Translation},
  author = {Ho, Son and Protzenko, Jonathan},
  date = {2022},
  journaltitle = {Proc. ACM Program. Lang.},
  volume = {6},
  pages = {711--741},
  doi = {10.1145/3547647},
  issue = {ICFP}
}

@article{JDKD2019,
  title = {Stacked Borrows: An Aliasing Model for {{Rust}}},
  shorttitle = {Stacked Borrows},
  author = {Jung, Ralf and Dang, Hoang-Hai and Kang, Jeehoon and Dreyer, Derek},
  date = {2019-12-20},
  journaltitle = {Proc. ACM Program. Lang.},
  volume = {4},
  pages = {41:1--41:32},
  doi = {10.1145/3371109},
  issue = {POPL}
}

@inproceedings{JHSSS2011,
  title = {{{ABS}}: {{A}} Core Language for Abstract Behavioral Specification},
  booktitle = {{{FMCO}} 2010 {{Proceedings}}},
  author = {Johnsen, Einar Broch and Hähnle, Reiner and Schäfer, Jan and Schlatte, Rudolf and Steffen, Martin},
  editor = {Aichernig, Bernhard K. and family=Boer, given=Frank, prefix=de, useprefix=true and Bonsangue, Marcello M.},
  date = {2011},
  series = {{{LNCS}}},
  number = {6957},
  pages = {142--164},
  publisher = {Springer},
  doi = {10.1007/978-3-642-25271-6_8}
}

@report{JJKD2017,
  type = {Technical appendix},
  title = {{{RustBelt}}: {{Securing}} the Foundations of the {{Rust}} Programming Language},
  author = {Jung, Ralf and Jourdan, Jacques-Henri and Krebbers, Robbert and Dreyer, Derek},
  date = {2017},
  url = {https://plv.mpi-sws.org/rustbelt/popl18/},
  urldate = {2026-07-23},
  langid = {english}
}

@article{JJKD2018,
  title = {{{RustBelt}}: {{Securing}} the Foundations of the {{Rust}} Programming Language},
  author = {Jung, Ralf and Jourdan, Jacques-Henri and Krebbers, Robbert and Dreyer, Derek},
  date = {2018},
  journaltitle = {Proc. ACM Program. Lang.},
  volume = {2},
  pages = {66:1--66:34},
  doi = {10.1145/3158154},
  issue = {POPL}
}

@inproceedings{JSSSTBD2015,
  title = {Iris: {{Monoids}} and Invariants as an Orthogonal Basis for Concurrent Reasoning},
  booktitle = {Proceedings of the 42nd Annual {{ACM SIGPLAN-SIGACT}} Symposium on Principles of Programming Languages, {{POPL}} 2015, {{Mumbai}}, {{India}}, {{January}} 15--17, 2015},
  author = {Jung, Ralf and Swasey, David and Sieczkowski, Filip and Svendsen, Kasper and Turon, Aaron and Birkedal, Lars and Dreyer, Derek},
  editor = {Rajamani, Sriram K. and Walker, David},
  date = {2015},
  pages = {637--650},
  publisher = {ACM},
  doi = {10.1145/2676726.2676980},
  bibsource = {dblp computer science bibliography, https://dblp.org}
}

@inproceedings{K2019,
  title = {Behavioral Program Logic},
  booktitle = {{{TABLEAUX}} 2019},
  author = {Kamburjan, Eduard},
  editor = {Cerrito, Serenella and Popescu, Andrei},
  date = {2019},
  series = {{{LNCS}}},
  volume = {11714},
  pages = {391--408},
  publisher = {Springer},
  doi = {10.1007/978-3-030-29026-9_22},
  bibsource = {dblp computer science bibliography, https://dblp.org}
}

@book{KN2022,
  title = {The {{Rust}} Programming Language},
  author = {Klabnik, Steve and Nichols, Carol and Krycho, Chris},
  date = {2026-03},
  edition = {3},
  publisher = {No Starch Press},
  location = {San Francisco, CA},
  url = {https://doc.rust-lang.org/stable/book/},
  urldate = {2026-07-23}
}

@article{LAGC23,
  title = {Locally Abstract, Globally Concrete Semantics of Concurrent Programming Languages},
  author = {Din, Crystal Chang and Hähnle, Reiner and Henrio, Ludovic and Johnsen, Einar Broch and Pun, Violet Ka I and Tapia Tarifa, Silvia Lizeth},
  date = {2024-03},
  journaltitle = {Transactions on Programming Languages and Systems},
  volume = {46},
  number = {1},
  pages = {3:1--3:58},
  publisher = {ACM},
  doi = {10.1145/3648439}
}

@inproceedings{LGYWX2024,
  title = {An Empirical Study of {{Rust-for-Linux}}: {{The}} Success, Dissatisfaction, and Compromise},
  booktitle = {{{USENIX ATC}} 2024 {{Proceedings}}},
  author = {Li, Hongyu and Guo, Liwei and Yang, Yexuan and Wang, Shangguang and Xu, Mengwei},
  editor = {Bagchi, Saurabh and Zhang, Yiying},
  date = {2024},
  pages = {425--443},
  publisher = {USENIX Association},
  url = {https://www.usenix.org/conference/atc24/presentation/li-hongyu},
  bibsource = {dblp computer science bibliography, https://dblp.org}
}

@article{LHCBSZHPH2023,
  title = {Verus: {{Verifying Rust}} Programs Using Linear Ghost Types},
  author = {Lattuada, Andrea and Hance, Travis and Cho, Chanhee and Brun, Matthias and Subasinghe, Isitha and Zhou, Yi and Howell, Jon and Parno, Bryan and Hawblitzel, Chris},
  date = {2023},
  journaltitle = {Proc. ACM Program. Lang.},
  volume = {7},
  pages = {286--315},
  doi = {10.1145/3586037},
  issue = {OOPSLA1}
}

@inproceedings{M1998,
  title = {A Functional Representation of Data Structures with a Hole},
  booktitle = {Proc. of the 25th {{ACM SIGPLAN-SIGACT}} Symposium on Principles of Programming Languages},
  author = {Minamide, Yasuhiko},
  date = {1998},
  series = {Popl '98},
  pages = {75--84},
  publisher = {ACM},
  address = {New York, NY, USA},
  location = {San Diego, California, USA},
  doi = {10.1145/268946.268953},
  pagetotal = {10}
}

@article{M2004,
  title = {Modular Structural Operational Semantics},
  author = {Mosses, Peter D},
  date = {2004-07-01},
  journaltitle = {The Journal of Logic and Algebraic Programming},
  series = {Structural {{Operational Semantics}}},
  volume = {60--61},
  pages = {195--228},
  issn = {1567-8326},
  doi = {10.1016/j.jlap.2004.03.008}
}

@online{M2018,
  title = {An Alias-Based Formulation of the Borrow Checker},
  author = {Matsakis, Niko},
  date = {2018-04-27},
  url = {https://smallcultfollowing.com/babysteps/blog/2018/04/27/an-alias-based-formulation-of-the-borrow-checker/},
  urldate = {2026-07-23},
  organization = {baby steps}
}

@online{M2023,
  title = {Polonius Revisted, Part 1},
  author = {Matsakis, Niko},
  date = {2023-09-22},
  url = {https://smallcultfollowing.com/babysteps/blog/2023/09/22/polonius-part-1/},
  urldate = {2026-07-23},
  organization = {baby steps}
}

@online{M2023a,
  title = {Polonius Revisited, Part 2},
  author = {Matsakis, Niko},
  date = {2023-09-29},
  url = {https://smallcultfollowing.com/babysteps/blog/2023/09/29/polonius-part-2/},
  urldate = {2026-07-23},
  organization = {baby steps}
}

@inproceedings{MDJD2022,
  title = {{{RustHornBelt}}: {{A}} Semantic Foundation for Functional Verification of {{Rust}} Programs with Unsafe Code},
  booktitle = {{{PLDI}} '22: 43rd {{ACM SIGPLAN International Conference}} on {{Programming Language Design}} and {{Implementation}}, {{San Diego}}, {{CA}}, {{USA}}, {{June}} 13 - 17, 2022},
  author = {Matsushita, Yusuke and Denis, Xavier and Jourdan, Jacques-Henri and Dreyer, Derek},
  editor = {Jhala, Ranjit and Dillig, Isil},
  date = {2022},
  pages = {841--856},
  publisher = {ACM},
  doi = {10.1145/3519939.3523704}
}

@software{MiniRust,
  title = {{{MiniRust}}},
  author = {Jung, Ralf},
  date = {2026-03-25T17:30:16Z},
  origdate = {2022-05-21T15:30:40Z},
  url = {https://github.com/minirust/minirust},
  organization = {minirust}
}

@inproceedings{MSS2016,
  title = {Viper: {{A}} Verification Infrastructure for Permission-Based Reasoning},
  booktitle = {Verification, Model Checking, and Abstract Interpretation},
  author = {Müller, Peter and Schwerhoff, Malte and Summers, Alexander J.},
  editor = {Jobstmann, Barbara and Leino, K. Rustan M.},
  date = {2016},
  pages = {41--62},
  publisher = {Springer Berlin Heidelberg},
  location = {Berlin, Heidelberg},
  doi = {10.1007/978-3-662-49122-5_2},
  isbn = {978-3-662-49122-5}
}

@inproceedings{P1976,
  title = {Semantical Considerations on {{Floyd-Hoare}} Logic},
  booktitle = {17th Annual Symposium on Foundations of Computer Science, {{Houston}}, {{TX}}, {{USA}}},
  author = {Pratt, Vaughan R},
  date = {1976},
  pages = {109--121},
  publisher = {IEEE},
  location = {Los Alamitos, CA},
  doi = {10.1109/SFCS.1976.27}
}

@article{P2021,
  title = {A Lightweight Formalism for Reference Lifetimes and Borrowing in {{Rust}}},
  author = {Pearce, David J.},
  date = {2021-04-17},
  journaltitle = {ACM Trans. Program. Lang. Syst.},
  volume = {43},
  number = {1},
  pages = {3:1--3:73},
  issn = {0164-0925},
  doi = {10.1145/3443420}
}

@software{Polonius,
  title = {Polonius},
  author = {The Rust developers},
  date = {2025},
  url = {https://github.com/rust-lang/polonius}
}

@thesis{R2024a,
  type = {bathesis},
  title = {Formal Semantics for Smart Contracts},
  author = {Reinhold, Tobias},
  date = {2024-12-11},
  institution = {Technical University of Darmstadt, Department of Computer Science},
  location = {Darmstadt, Germany},
  pagetotal = {68}
}

@inproceedings{Reif65,
  title = {The Many Uses of Dynamic Logic},
  booktitle = {Go Where the Bugs Are -- Essays Dedicated to {{Wolfgang Reif}} on the Occasion of His 65th Birthday},
  author = {Ahrendt, Wolfgang and Beckert, Bernhard and Bubel, Richard and Hähnle, Reiner and Ulbrich, Mattias},
  editor = {Ernst, Gidon and Güdemann, Matthias and Knapp, Alexander and Nafz, Florian and Ortmeier, Frank and Ponsar, Hella and Schellhorn, Gerhard and Schiendorfer, Alexander},
  date = {2025-06},
  series = {{{LNCS}}},
  number = {15765},
  pages = {56--82},
  publisher = {Springer},
  location = {Cham},
  doi = {10.1007/978-3-031-92196-4_4},
  bibsource = {dblp computer science bibliography, https://dblp.org}
}

@online{RM2023,
  title = {Polonius Update},
  author = {Rakic, Rémy and Matsakis, Niko},
  date = {2023-10-06},
  url = {https://blog.rust-lang.org/inside-rust/2023/10/06/polonius-update/},
  urldate = {2026-07-23},
  organization = {Inside Rust Blog}
}

@book{RustcGuide,
  title = {Rust Compiler Development Guide},
  author = {{Rust Community}},
  date = {2026},
  url = {https://rustc-dev-guide.rust-lang.org/},
  urldate = {2026-07-23}
}

@software{RustLAGCLean,
  title = {Rust-{{LAGC}}},
  author = {Drodt, Daniel},
  date = {2026-09-21T11:55:03Z},
  origdate = {2026-09-21T11:21:22Z},
  url = {https://github.com/Drodt/Rust-LAGC}
}

@report{RustRef,
  type = {Reference},
  title = {The {{Rust}} Reference},
  author = {{Rust Community}},
  date = {2026},
  institution = {Rust Foundation},
  url = {https://doc.rust-lang.org/stable/reference/introduction.html},
  urldate = {2026-07-23}
}

@thesis{S2026,
  type = {phdthesis},
  title = {Modular Specification and Verification of Concurrent Programs with Context-Aware Trace Contracts},
  author = {Scaletta, Marco},
  date = {2026},
  institution = {Technical University of Darmstadt, Department of Computer Science},
  doi = {10.26083/tuda-8276}
}

@software{S2026a,
  title = {{{KeY4CATs}}: {{A}} Deductive Verification Tool for Context-Aware Trace Contracts},
  author = {Scaletta, Marco},
  date = {2026-06},
  doi = {10.5281/zenodo.20762522},
  organization = {Zenodo}
}

@software{smol,
  title = {{{smol}}},
  author = {{smol Developers}},
  date = {2024-09-07},
  url = {https://docs.rs/smol/2.0.2/smol/},
  urldate = {2026-08-22},
  version = {2.0.2}
}

@inproceedings{SW2017,
  title = {A New Invariant Rule for the Analysis of Loops with Non-Standard Control Flows},
  booktitle = {{{iFM}} 2017 Proceedings},
  author = {Steinhöfel, Dominic and Wasser, Nathan},
  editor = {Polikarpova, Nadia and Schneider, Steve A.},
  date = {2017},
  series = {{{LNCS}}},
  number = {10510},
  pages = {279--294},
  publisher = {Springer},
  doi = {10.1007/978-3-319-66845-1_18},
  bibsource = {dblp computer science bibliography, https://dblp.org}
}

@inproceedings{T2022a,
  title = {Locally Abstract Globally Concrete Semantics of Time and Resource Aware Active Objects},
  booktitle = {The Logic of Software. {{A}} Tasting Menu of Formal Methods -- {{Essays}} Dedicated to {{Reiner Hähnle}} on the Occasion of His 60th Birthday},
  author = {Tarifa, Silvia Lizeth Tapia},
  editor = {Ahrendt, Wolfgang and Beckert, Bernhard and Bubel, Richard and Johnsen, Einar Broch},
  date = {2022},
  series = {{{LNCS}}},
  number = {13360},
  pages = {481--499},
  publisher = {Springer},
  doi = {10.1007/978-3-031-08166-8_23},
  bibsource = {dblp computer science bibliography, https://dblp.org}
}

@software{tokio,
  title = {Tokio},
  author = {{Tokio Developers}},
  date = {2026-07-20},
  url = {https://crates.io/crates/tokio},
  version = {1.53.1}
}

@online{TokioDoc,
  title = {{{tokio::runtime -- Documentation}}},
  author = {{Tokio Developers}},
  date = {2026-07-20},
  url = {https://docs.rs/tokio/1.53.1/tokio/runtime/index.html#detailed-runtime-behavior},
  urldate = {2026-07-23}
}

@article{VHDJ2025,
  title = {Tree Borrows},
  author = {Villani, Neven and Hostert, Johannes and Dreyer, Derek and Jung, Ralf},
  date = {2025-06-13},
  journaltitle = {Proc. ACM Program. Lang.},
  volume = {9},
  pages = {188:1019--188:1042},
  doi = {10.1145/3735592},
  issue = {PLDI}
}

@inproceedings{W2018,
  title = {Programming Language Foundations in {{Agda}}},
  booktitle = {Formal {{Methods}}: {{Foundations}} and {{Applications}}},
  author = {Wadler, Philip},
  editor = {Massoni, Tiago and Mousavi, Mohammad Reza},
  date = {2018},
  series = {{{LNCS}}},
  number = {11254},
  pages = {56--73},
  publisher = {Springer},
  location = {Cham},
  doi = {10.1007/978-3-030-03044-5_5},
  langid = {english}
}

\appendix

\section{Additional Definitions of the $\nextExprSym$ Function}
\label{sec:add-def-next}

First, we show the expressions which have no sub-expression under any
circumstances. Those are values, program variables, continue, loop,
and panic expressions. For them, the entire expression is considered
the sub-expression and the hole expression is the hole~$\exprHole$.
\begin{align*}
  \nextExpr{v}{M} = (v, \exprHole) \\
  \nextExpr{\pv{x}}{M} = (\pv{x}, \exprHole) \\
  \nextExpr{\rust{\kwContinue~'$l$}}{M} = (\rust{\kwContinue~'$l$}, \exprHole) \\
  \nextExpr{\rust{'$l$ \kwLoop~$e$}}{M} = (\rust{'$l$ \kwLoop~$e$}, \exprHole) \\
  \nextExpr{\rust{\panic}}{M} = (\rust{\panic}, \exprHole)
\end{align*}
Dereferencing expressions have, potentially, one sub-expression, which
is always evaluated to a value. Similar definitions holds for match,
break, return, and array repeat expressions.
\begin{align*}
  \nextExpr{\rust{*$e$}}{M} =
  \begin{cases}
    (\rust{*$e$}, \exprHole) &\text{if}~e\in\Val\\
    (e',\rust{*$\mathit{he}$}) &\text{if}~e\not\in\Val\wedge(e',\mathit{he})=\nextExpr{e}{\Val}
  \end{cases}\\
  \nextExpr{\rust{\kwMatch~$e$ \{$\many{\mathit{arm}}$\}}}{M} =
  \begin{cases}
    (\rust{\kwMatch~$e$ \{$\many{\mathit{arm}}$\}},\exprHole)&\text{if}~e\in\Val\\
    (e',\rust{\kwMatch~$\mathit{he}$ \{$\many{\mathit{arm}}$\}}) &\text{if}~e\not\in\Val\wedge(e',\mathit{he})=\nextExpr{e}{\Val}
  \end{cases}\\
  \nextExpr{\rust{\kwBreak~'$l$~$e$}}{M} =
  \begin{cases}
    (\rust{\kwBreak~'$l$~$e$},\exprHole)&\text{if}~e\in\Val\\
    (e',\rust{\kwBreak~'$l$~$\mathit{he}$}) &\text{if}~e\not\in\Val\wedge(e',\mathit{he})=\nextExpr{e}{\Val}
  \end{cases}\\
  \nextExpr{\rust{\kwReturn~$e$}}{M} =
  \begin{cases}
    (\rust{\kwReturn~$e$},\exprHole)&\text{if}~e\in\Val\\
    (e',\rust{\kwReturn~$\mathit{he}$}) &\text{if}~e\not\in\Val\wedge(e',\mathit{he})=\nextExpr{e}{\Val}
  \end{cases}\\
  \nextExpr{\rust{[$e$; $n$]}}{M} =
  \begin{cases}
    (\rust{[$e$; $n$]}, \exprHole)&\text{if}~e\in\Val\\
    (e',\rust{[$\mathit{he}$; $n$]}) &\text{if}~e\not\in\Val\wedge(e',\mathit{he})=\nextExpr{e}{\Val}
  \end{cases}
\end{align*}

Block expressions must be considered depending on their first
statement. If it is a let statement, there is no next
sub-expression. If it is an expression, but not a break expression
with a value or continue expression, it behaves as above. We exclude
break and continues here to fit the expected rule shape expected in
\zcref{sec:loops}. If there are no statements, it is exactly as in the
cases above, as there is only one possible sub-expression.
\begin{align*}
  \nextExpr{\rust{\{\kwLet~$\mathit{pat}$ = $e_1$; $\many{s}$ $e_2$\}}}=(\rust{\{\kwLet~$\mathit{pat}$ = $e_1$; $\many{s}$ $e_2$\}},\exprHole)\\
  \nextExpr{\rust{\{$e_1$; $\many{s}$ $e_2$\}}} =
  \begin{cases}
    (\rust{\{$e_1$; $\many{s}$ $e_2$\}},\exprHole) &\text{if}~e_1\in\Val \vee e_1=\rust{\kwBreak~'$l$~$v$} \vee e_1=\rust{\kwContinue~'$l$}\\
    (e_1',\rust{\{$\mathit{he}$; $\many{s}$ $e_2$\}}) &\text{otherwise, where}~(e_1',\mathit{he})=\nextExpr{e_1}{\Val}
  \end{cases}
  \\
  \nextExpr{\rust{\{$e$\}}}{M} =
  \begin{cases}
    (\rust{\{$e$\}},\exprHole)&\text{if}~e\in\Val\\
    (e',\rust{\{$\mathit{he}$\}}) &\text{if}~e\not\in\Val\wedge(e',\mathit{he})=\nextExpr{e}{\Val}
  \end{cases}
\end{align*}

Operator expressions have two potential sub-expressions, which are all
evaluated to values. We only show the case for addition; other
operators are analogous.
\[
  \nextExpr{\rust{$e_1$ + $e_2$}}{M} =
  \begin{cases}
    (\rust{$e_1$ + $e_2$},\exprHole)&\text{if}~e_1,e_2\in\Val\\
    (e_1',\rust{$\mathit{he}$ + $e_2$}) &\text{if}~e_1\not\in\Val\wedge(e_1',\mathit{he})=\nextExpr{e_1}{\Val}\\
    (e_2',\rust{$e_1$ + $\mathit{he}$}) &\text{if}~e_1\in\Val\wedge e_2\not\in\Val\wedge (e_2',\mathit{he})=\nextExpr{e_2}{\Val}
  \end{cases}
\]
Assignment operators behave differently. When $e_1,e_2$ are of integer
types, the sub-expressions are evaluated right-to-left; otherwise
left-to-right. We always evaluate $e_2$ to a value, but $e_1$ to a
place. Here, we show the integer version.
\[
  \nextExpr{\rust{$e_1$ += $e_2$}}{M} =
  \begin{cases}
    (\rust{$e_1$ += $e_2$},\exprHole)&\text{if}~e_1\in\Place\wedge e_2\in\Val\\
    (e_2',\rust{$e_1$ += $\mathit{he}$}) &\text{if}~e_2\not\in\Val\wedge(e_2',\mathit{he})=\nextExpr{e_2}{\Val}\\
    (e_1',\rust{$\mathit{he}$ += $e_2$}) &\text{if}~e_2\in\Val\wedge e_1\not\in\Place\wedge (e_1',\mathit{he})=\nextExpr{e_1}{\Place}
  \end{cases}
\]
Arrays and tuples behave identically: Evaluation is left-to-right and
we might be in an assignee or value context. Struct expressions
\rust{x\{$\mathit{ident}_1:e_1,\ldots,\mathit{ident}_n:e_n$\}} are
analogous to arrays and tuples.
\[
  \nextExpr{\rust{[$e_1,\ldots,e_n$]}}{M} =
  \begin{cases}
    (\rust{[$e_1,\ldots,e_n$]}, \exprHole)&\text{if}~e_1,\ldots,e_n\in M\\
    (e'_k,\rust{[$e_1,\ldots,e_{k-1},\mathit{he},e_{k+1},\ldots,e_n$]}) &\text{if}~e_1,\ldots,e_{k-1}\in M \wedge e_k\not\in M\\
                                                                        &\quad\wedge (e'_k,\mathit{he})=\nextExpr{e_k}{M}
  \end{cases}
\]
\[
  \nextExpr{\rust{($e_1,\ldots,e_n$)}}{M} =
  \begin{cases}
    (\rust{($e_1,\ldots,e_n$)}, \exprHole)&\text{if}~e_1,\ldots,e_n\in M\\
    (e'_k,\rust{($e_1,\ldots,e_{k-1},\mathit{he},e_{k+1},\ldots,e_n$)}) &\text{if}~e_1,\ldots,e_{k-1}\in M \wedge e_k\not\in M\\
                                                                        &\quad\wedge (e'_k,\mathit{he})=\nextExpr{e_k}{M}
  \end{cases}
\]
Function calls behave like tuples and array but with an additional
expression, the callee, which is evaluated first. Method calls
\rust{$e$.$\mathit{ident}$($e_1,\ldots,e_n$)} analogous to function
calls and are omitted.
\[
  \nextExpr{\rust{$e$($e_1,\ldots,e_n$)}}{M} =
  \begin{cases}
    (\rust{$e$($e_1,\ldots,e_n$)}, \exprHole) &\text{if}~e,e_1,\ldots,e_n\in\Val\\
    (e',\rust{$\mathit{he}$($e_1,\ldots,e_n$)}) &\text{if}~e\not\in\Val\wedge(e',\mathit{he})=\nextExpr{e}{\Val}\\
    (e'_k,\rust{$e$($e_1,\ldots,e_{k-1},\mathit{he},e_{k+1},\ldots,e_n$)}) &\text{if}~e,e_1,\ldots,e_{k-1}\in\Val \wedge e_k\not\in\Val\\
                                                   &\quad\wedge (e'_k,\mathit{he})=\nextExpr{e_k}{\Val}
  \end{cases}
\]

Field access expressions depend on the context, as they may evaluate
their single sub-expression to an assignee value or a value.
\[
  \nextExpr{\rust{$e$.$\mathit{ident}$}}{M} =
  \begin{cases}
    (\rust{$e$.$\mathit{ident}$}, \exprHole)&\text{if}~e\in M\\
    (e',\rust{$\mathit{he}$.$\mathit{ident}$}) &\text{if}~e\not\in M\wedge (e',\mathit{he})=\nextExpr{e}{M}
  \end{cases}
\]

Reference expressions always evaluate their sub-expression to a place.
\begin{align*}
  \nextExpr{\rust{\&$e$}}{M} =
  \begin{cases}
    (\rust{\&$e$}, \exprHole) &\text{if}~e\in\Place\\
    (e',\rust{\&$\mathit{he}$}) &\text{if}~e\not\in\Place\wedge(e',\mathit{he})=\nextExpr{e}{\Place}
  \end{cases}\\
  \nextExpr{\rust{\&\kwMut~$e$}}{M} =
  \begin{cases}
    (\rust{\&\kwMut~$e$}, \exprHole)&\text{if}~e\in\Place\\
    (e',\rust{\&\kwMut~$\mathit{he}$}) &\text{if}~e\not\in\Place\wedge(e',\mathit{he})=\nextExpr{e}{\Place}
  \end{cases}
\end{align*}


\section{Axiomatization of Queue}
\label{sec:axiom-queue}

We repeat and adapt the axiomatization for queues found
in~\cite{HH2024}.

\begin{definition}[Queue]\label{def:queue}
  We define a \emph{queue of type $T$} $q$ as queues over elements of
  type~$T$. The operations are:
  \begin{equation*}
    \begin{array}[t]{rcl}
      \queue & \text{---} & \text{creates an empty queue}\\
      \qEmpty{q} & \text{---} & \text{is true iff $q$ is the empty queue}\\
      \qPush{q}{k} & \text{---} & \text{returns a queue $q'$ that is $q$ extended with $k$ as its last element}\\
      \qPop{q} & \text{---} & \text{returns a pair $(q',k)$, where $k$ is the first element in $q$ and $q'$}\\
             &&\text{the remaining queue after $k$ is removed; the operation is}\\
             &&\text{undefined iff $q$ is empty}
    \end{array}
  \end{equation*}
  The operations are defined by the following axioms:
  \begin{align}
    \qEmpty{\queue}\label{eq:queue-ax-empty}\\
    \qPop{\queue}\undefSym\label{eq:queue-ax-pop-create}\\
    \forall q, k.\ \neg\qEmpty{\qPush{q}{k}}\label{eq:queue-ax-empty-push}\\
    \forall q, k, q', k'.\ \qPop{\qPush{q}{k}}=
    \begin{cases}
      (\queue, k) &\text{if}~\qEmpty{q}\\
      (\qPush{q'}{k}) &\text{otherwise, where}~(q',k')=\qPop{q}
    \end{cases}\label{eq:queue-ax-pop-push}
  \end{align}
  If $q$ is a queue of stacks of continuation markers, we also define
  the set of continuation markers in $q$, $\getConts{q}$, as:
  \[
    \getConts{q}=
    \begin{cases}
      \emptyset &\text{if}~\qEmpty{q}\\
      \many{k} \uplus \getConts{q'} &\text{where}~\qPop{q}=(q',\many{k})
    \end{cases}
  \]
\end{definition}


\section{Additional Current Thread Composition Rules}
\label{app:addt-curr-thre}

\gcondrule{ctIdleLocalEmptyMarker}{
  \nexists f,h.\ \lastEv(\sh)=\spawnEv{f}{h} \quad t \not\equiv 0 \mod I_G \\
  \qEmpty{q} \quad \qPop{q_G}=(q'_G,(\pv{foo},f)) \quad \sigma=\last(\sh) \quad \rust{\kwFn foo($\many{x}$) $e$}\in\mathit{Fn}\\
  \concatTr{\tau}{\cont{e'}} \in \valP{f}{\rust{\kwFn foo($\many{x}$) $e$}} \quad \rho\text{ concretizes }\tau \quad \wf{\chopTrSem{\sh}{\rho(\tau)}}
}{
  \ctCfg{\sh}{\Idle}{t}{q_G}{q} \toCT \ctCfg{\sh}{\left[\contP{f}{e'}\right]}{t + 1}{q'_G}{q}
}

\gcondrule{ctIdleLocalMarker}{
  \nexists f,h.\ \lastEv(\sh)=\spawnEv{f}{h} \quad t \not\equiv 0 \mod I_G \\
  \qPop{q}=(q',(\pv{foo},f))\quad
  \sigma=\last(\sh) \quad \rust{\kwFn foo($\many{x}$) $e$}\in\mathit{Fn} \\
  \concatTr{\tau}{\cont{e'}} \in \valP{f}{\rust{\kwFn foo($\many{x}$) $e$}} \quad
  \rho\text{ concretizes }\tau \quad \wf{\chopTrSem{\sh}{\rho(\tau)}}
}{
  \ctCfg{\sh}{\Idle}{t}{q_G}{q} \toCT \ctCfg{\sh}{\left[\contP{f}{e'}\right]}{t + 1}{q_G}{q'}
}


\section{Quiescent Fairness of the Current Thread Scheduler}
\label{app:quiesc-fairn-curr}

We first define a translation function \enquote{$\schedTransSym$}
mapping current thread configurations to configurations in the
semantics of \zcref{sec:asynchr-rust-semant}. Traces and the
scheduler's state can be mapped independently to traces and sets of
tasks, respectively. For this, we lift $\schedTransSym$ to both:
\[
  \schedTrans{\ctCfg{\sh}{\mathit{active}}{t}{q_G}{q}} = \schedTrans{\sh},\schedTrans{\ctState{\mathit{active}}{t}{q_G}{q}}
\]

Traces are \enquote{translated} via a projection, by removing the
schedule events and the preceding state (as \zcref{eq:rule-ctSpawn}
repeats this state).
\begin{equation*}
  \begin{array}{r@{\;}c@{\;}l}
    \schedTrans{\varepsilon} & = & \epsilon \\
    \schedTrans{\consTr{\schedEv{\pv{foo}}{f}}{\consTr{\sigma}{\sh}}} & = & \schedTrans{\sh}\\
    \schedTrans{\consTr{\stateorevent}{\sh}} & = & \consTr{\stateorevent}{\schedTrans{\sh}} \quad \text{when}~\stateorevent\ne\schedEv{\pv{foo}}{f}
  \end{array}
\end{equation*}

Next, we define how queues of continuation stacks and task markers are
mapped to sets of these stacks. The inductive definition is obvious;
note that the third case drops task markers.
\[
  \schedAw{q}=
  \begin{cases}
    \emptyset &\text{if}~\qEmpty{q}\\
    \left\{\left[\many{k}\right]\right\} \uplus \schedAw{q'} &\text{if}~\qPop{q}=\left(q',\left[\many{k}\right]\right)\\
    \schedAw{q'} &\text{otherwise, with}~\qPop{q}=(q',k)
  \end{cases}
\]


Finally, we translate the scheduler state by combining the global and
local queue with the active task:
\[
  \schedTrans{\ctState{\mathit{act}}{t}{q_G}{q}} =
  \asState{\mathit{act}}{\schedAw{q_G} \uplus \schedAw{q}}
\]

Next, we define a measure of the scheduling distance. We use this in
our fairness proof, showing that at every step, an enabled expression
is either executed or its distance decreases. Similar to a termination
witness in formal verification~\autocite{ABBHSU2016}, this distance
does not need to be an accurate representation of how many
\enquote{steps} it takes until an expression is executed. It suffices
to be a finitely descending chain of values.
\begin{definition}[Scheduling Distance: Current Thread]\label{def:ct-dist}
  Let the configuration $C=\ctCfg{\sh}{\mathit{act}}{t}{q_G}{q}$ and
  $c\in\mathbb{Z}$. We define the \emph{scheduling distance}~$d$ of a
  continuation marker $\contP{f}{e}$ in $C$ as follows:
  \[
    \ctDist{\contP{f}{e}}{C,c} = (\pos(\contP{f}{e},C),\mathrm{idle}(C),\mathrm{unsched}(\sh),c)
  \]
  where $\pos$ defines the position of the continuation in the
  scheduler as:
  \begin{align*}
    &\quad\pos(\contP{f}{e},\ctCfg{\sh}{\mathit{act}}{t}{q_G}{q})=\\
    &\begin{cases}
      1 &\text{if}~\mathit{act}=[k_1,\ldots,k_n] \wedge \contP{f}{e}=k_1 \\
        &\quad\wedge\, \neg\suspended{e}{\sh}\\
      \left|\getConts{q}\right| + \left|\getConts{q_G}\right| + 2 &\text{if}~\mathit{act}=[k_1,\ldots,k_n] \wedge \contP{f}{e}=k_1 \\
        &\quad\wedge \,\suspended{e}{\sh}\\
      i + \left|\getConts{q}\right| + \left|\getConts{q_G}\right| + 2 &\text{if}~\mathit{act}=[k_1,\ldots,k_n] \wedge \contP{f}{e}=k_i \wedge 1 < i \le n \\
      1 + \qPos(\contP{f}{e}, q) + \left|\getConts{q_G}\right| &\text{if}~\mathit{act}\ne\Idle \wedge \contP{f}{e}\in\getConts{q}\\
      1 + \left|\getConts{q}\right| + \qPos(\contP{f}{e}, q_G) &\text{if}~\mathit{act}\ne\Idle \wedge \contP{f}{e}\in\getConts{q_G}\\
      \qPos(\contP{f}{e}, q) + \left|\getConts{q_G}\right| &\text{if}~\mathit{act}=\Idle \wedge \contP{f}{e}\in\getConts{q}\\
      \left|\getConts{q}\right| + \qPos(\contP{f}{e}, q_G) &\text{if}~\mathit{act}=\Idle \wedge \contP{f}{e}\in\getConts{q_G}\\
      \mathit{undef} &\text{if}~\contP{f}{e}\not\in\getConts{\schedTrans{\ctState{[k_1,\ldots,k_n]}{t}{q_G}{q}}}
    \end{cases}
  \end{align*}
  Note that this position definition is \emph{pessimistic}: If a
  continuation marker is in one queue, it assumes that the entire
  other queue may be scheduled before it. While unrealistic, our
  scheduling distance need only be an over approximation of the
  \enquote{real distance.} Further, the distance is only used in our
  proof for \emph{enabled} continuation markers, i.e., continuation
  markers executable in the next step.

  A continuation marker's position in a queue is defined as follows:
  \[
    \qPos(\contP{f}{e},q)=
    \begin{cases}
      i &\text{if}~\qPop{q}=(q',[k_1,\ldots,k_n]) \wedge \contP{f}{e}=k_i\\
      n + \qPos(\contP{f}{e},q') &\text{if}~\qPop{q}=(q',[k_1,\ldots,k_n]) \wedge \contP{f}{e}\not\in k_1,\ldots,k_n\\
      1 + \qPos(\contP{f}{e},q') &\text{if}~\qPop{q}=(q',(\pv{foo},f))\\
      \mathit{undef} &\text{if}~\contP{f}{e}\not\in\getConts{q}
    \end{cases}
  \]
  The function \enquote{idle} is $1$ if the scheduler is idle and $0$
  otherwise.
  \[
    \mathrm{idle}(\ctCfg{\sh}{\mathit{act}}{t}{q_G}{q})=
    \begin{cases}
      1 &\text{if}~\mathit{act}=\Idle\\
      0 &\text{otherwise}
    \end{cases}
  \]
  With \enquote{unsched,} we check whether there is an unscheduled
  spawn events in the trace. It is always either $0$ or $1$, as there
  may be at most one unscheduled spawn event in the trace.
  \[
    \mathrm{unsched}(\sh)=
    \begin{cases}
      1 &\text{if}~\exists f,h.\ \lastEv(\sh)=\spawnEv{f}{h}\\
      0 &\text{otherwise}
    \end{cases}
  \]
  We extend the definition to task markers as follows:
  \[
    \qPos((\pv{foo},f),q)=
    \begin{cases}
      1 &\text{if}~\qPop{q}=(q',(\pv{foo},f))\\
      \left|\many{k}\right| + \qPos((\pv{foo},f),q') &\text{if}~\qPop{q}=(q',[\many{k}])\\
      1 + \qPos((\pv{foo},f),q') &\text{if}~\qPop{q}=(q',(\pv{foo}',f')) \wedge (\pv{foo}\ne\pv{foo}' \vee f\ne f')\\
      \mathit{undef} &\text{if}~(\pv{foo},f)\not\in q
    \end{cases}
  \]
  \begin{align*}
    &\pos((\pv{foo},f), \ctCfg{\sh}{\mathit{act}}{t}{q_G}{q})\\
    =&\begin{cases}
      1 + \qPos((\pv{foo},f), q) + \left|\getConts{q_G}\right| &\text{if}~\mathit{act}\ne\Idle \wedge (\pv{foo},f)\in q\\
      1 + \left|\getConts{q}\right| + \qPos((\pv{foo},f), q_G) &\text{if}~\mathit{act}\ne\Idle \wedge (\pv{foo},f)\in q_G\\
      \qPos((\pv{foo},f), q) + \left|\getConts{q_G}\right| &\text{if}~\mathit{act}=\Idle \wedge (\pv{foo},f)\in q\\
      \left|\getConts{q}\right| + \qPos((\pv{foo},f), q_G) &\text{if}~\mathit{act}=\Idle \wedge (\pv{foo},f)\in q_G
    \end{cases}
  \end{align*}
  We define the order relation ${>}$ over scheduling distances as the
  canonical ordering of tuples.
\end{definition}

As can be seen by the distance definition,
$\ctDist{\contP{f}{e}}{C,c} \ge (1,0,0,c)$ if
$\contP{f}{e}\in\getConts{\schedTrans{C}}$; otherwise, the distance is
undefined. If $\ctDist{\contP{f}{e}}{C,c}=(1,0,0,c)$ and
$e\not\in\Val$, then \zcref{eq:rule-ctActivePush} or
\zcref{eq:rule-ctActiveEval} are the next applicable rules. Likewise,
if $\ctDist{\contP{f}{e}}{C,c}=(1,1,0,c)$, then the next rule is one
of \zcref{eq:rule-ctIdleInterval}, \zcref{eq:rule-ctIdleLocalEmpty},
or \zcref{eq:rule-ctIdleLocal}.

The distance of a continuation marker being undefined is not a problem
in the following discussion. As we will consider the production of new
expressions, we need only consider continuation or task markers that
may lead to new expressions via composition rules. These are precisely
those markers that have a defined scheduling distance. We never
consider the ordering where either component is undefined.

The component~$c$ in the distance tracks the local computation. Its
value is changed depending on the composition rules.  We extend
$\toCT$ with a new local task counter $c$: $C, c \toCT C',
c'$. Initially, $c_\text{init}=0$. The scheduler rules are updated to
modify $c$ in the following ways: \zcref[S]{eq:rule-ctSpawn} does not
change~$c$, hence, $c'=c$.
\zcref[S]{eq:rule-ctIdleInterval,eq:rule-ctIdleIntervalMarker,eq:rule-ctIdleLocalEmpty,eq:rule-ctIdleLocalEmptyMarker,eq:rule-ctIdleLocal,eq:rule-ctIdleLocalMarker}
set $c=0$, as a task is (re-)started. In
\zcref{eq:rule-ctActiveSuspend,eq:rule-ctActiveEmpty}, we leave $c$ as
is: $c'=c$. Finally,
\zcref{eq:rule-ctActivePop,eq:rule-ctActivePush,eq:rule-ctActiveEval,eq:rule-ctActivePushCall}
decrease $c$ by one: $c'=c-1$.

\begin{lemma}\label{lem:called-fn}
  Assume an execution
  $\ctCfg{\sh_0}{\mathit{act}_0}{t_0}{q_{G,0}}{q_0} \toCT^{+}
  \ctCfg{\sh_1}{\mathit{act}_1}{t_1}{q_{G,1}}{q_1}$. Further assume
  that $\spawnEv{f}{h}\in\sh_1$, $\lastEv(\sh_1)\ne\spawnEv{f}{h}$ and
  $\createEv{\pv{foo}}{f}{\many{v}}\in\sh_1$. Then either
  $(\pv{foo},h)\in q_{G,1}$, $(\pv{foo},h)\in q_{1}$, or
  $\evalEv{\pv{foo}}{h}{\many{v}}\in\sh_1$.
\end{lemma}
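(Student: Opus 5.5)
I will prove the statement by induction on the length $n\ge 1$ of the derivation $C_0\toCT^{+}C_1$. The proof tracks a single spawned future through the three stages its task marker can be in: pending as a spawn event, waiting in a queue, or already started. The invariant $\mathit{Inv}(C)$ for a configuration $C=\ctCfg{\sh}{\mathit{act}}{t}{q_G}{q}$ states the following. For every $\spawnEv{f}{h}\in\sh$ with $\createEv{\pv{foo}}{f}{\many{v}}\in\sh$, either $\lastEv(\sh)=\spawnEv{f}{h}$, or the task marker for that future lies in $q_G$ or $q$, or the corresponding $\evalEv{\pv{foo}}{h}{\many{v}}$ is in $\sh$.

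I read the marker stored by \zcref{eq:rule-ctSpawn} as referring to the spawned task. This is consistent with \zcref{eq:wf-eval}, which ties the evaluation of a handle $h$ to $\spawnEv{f}{h}$ and $\createEv{\pv{foo}}{f}{\many{v}}$. The lemma is exactly $\mathit{Inv}(C_1)$ together with the hypothesis $\lastEv(\sh_1)\ne\spawnEv{f}{h}$. For the base of the induction I also assume that $\mathit{Inv}(C_0)$ holds. This is immediate for an initial configuration, whose trace contains no spawn events, so the invariant holds vacuously. The invariant must hold at $C_0$, since nothing outside the derivation can account for an earlier unscheduled spawn.

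The inductive step is a case analysis over the rules of \zcref{prop:ct-determ}, applied to a single transition $C\toCT C'$.
\begin{enumerate*}[(i)]
\item If $\lastEv(\sh)=\spawnEv{f}{h}$, then by mutual exclusivity only \zcref{eq:rule-ctSpawn} applies. It appends the schedule event and pushes the task marker onto $q$, so the pending spawn moves into the queue disjunct. Well-formedness, \zcref{eq:wf-sched-tokio}, fixes the function name to the \pv{foo} of the create event.
\item The rules that only extend the trace without touching the queues are \zcref{eq:rule-ctActiveEval,eq:rule-ctActivePush,eq:rule-ctActivePushCall,eq:rule-ctActivePop,eq:rule-ctActiveEmpty,eq:rule-ctActiveSuspend}. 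They preserve every queue entry, since suspend only appends to $q$. Trace extension preserves membership of evaluation events, so all old disjuncts survive. The only new obligation arises when the appended local trace ends in a fresh spawn event, which then becomes $\lastEv(\sh')$ and satisfies the first disjunct. A single local step emits at most one spawn event, by \zcref{eq:local-async-spawn}.
\item The idle rules that pop a continuation stack, \zcref{eq:rule-ctIdleInterval,eq:rule-ctIdleLocalEmpty,eq:rule-ctIdleLocal}, do not remove task markers. They are therefore harmless.
\item The marker rules \zcref{eq:rule-ctIdleIntervalMarker,eq:rule-ctIdleLocalEmptyMarker,eq:rule-ctIdleLocalMarker} remove the task marker $(\pv{foo},f)$ and simultaneously chop in the local trace of \zcref{eq:local-async-body}. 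That local trace contains $\evalEv{\pv{foo}}{f}{\many{X}}$. Well-formedness, \zcref{eq:wf-eval}, and the concretization force the concretized arguments to equal the $\many{v}$ of the create event, so the removed marker is traded for the evaluation disjunct.
\end{enumerate*}

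The main obstacle is case (iv), together with the side condition in (ii). In (iv) the argument relies on $\mathit{wf}$ to pin the concretization $\rho$ to the creation arguments, so that the evaluation event carries exactly $\evalEv{\pv{foo}}{h}{\many{v}}$. This requires chaining \zcref{eq:wf-eval} through the spawn event to the create event, and uses the uniqueness of create and spawn events from \zcref{eq:wf-create,eq:wf-spawn}. In (ii) I must make sure that no rule other than \zcref{eq:rule-ctSpawn} can fire while a spawn is pending, which is precisely the spawn-guard in each rule. I must also make sure that a spawn event is never followed, within the same chopped local trace, by a further event. For this I would check \zcref{eq:local-async-spawn}: its trace ends with the spawn event followed by a single state, so the spawn event is indeed $\lastEv$ after the chop.
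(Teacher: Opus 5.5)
Your proposal is correct and follows the same route as the paper's sketch: an induction over the $\toCT$ rules showing that, once a future is spawned, its task marker stays in a queue until a marker rule consumes it and emits the evaluation event. The paper's sketch leaves several details implicit, and you supply them: the invariant with the pending-spawn disjunct; the hypothesis that $C_0$ satisfies it, which is genuinely needed because the lemma fails for arbitrary $C_0$; reading the marker as $(\pv{foo},h)$; and reading the appendix marker rules as chopping in the async-body trace, as the paper's \enquote{analogous} intends even though, as printed, they leave the trace unchanged.
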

\begin{proof}[Proof Sketch]
  This lemma is proven by a straightforward induction over the
  scheduling rules. Once a future is spawned its corresponding task
  marker is in a queue until its execution begins, in which case an
  evaluation event is emitted.
\end{proof}

Naturally, the absence of locally infinite computations preclude
infinitely decreasing chains of scheduling distances:

\begin{lemma}\label{lem:finite-chain}
  If there are no infinite local computations, then there is no
  infinitely decreasing chain of scheduling distances.
\end{lemma}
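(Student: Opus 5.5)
We argue by contradiction. Suppose there is an infinite sequence of scheduling distances $d_0 > d_1 > d_2 > \cdots$ in the lexicographic order on the tuples $(\pos,\mathrm{idle},\mathrm{unsched},c)$ from \zcref{def:ct-dist}. The first three components take values in $\mathbb{N}$, and the defined values of $\pos$ are bounded below by $1$. So the only possible source of an infinite descent is the fourth component $c\in\mathbb{Z}$, which is not well-founded. The plan is to show that the triple $(\pos,\mathrm{idle},\mathrm{unsched})$ can decrease only finitely often. Once it stabilises, the remaining descent would have to take place in $c$ alone, and that is excluded by the absence of locally infinite computations.

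The first step handles the prefix. The lexicographic order on $\mathbb{N}^3$ is well-founded, so along any strictly decreasing chain the prefix $(\pos,\mathrm{idle},\mathrm{unsched})$ is non-increasing. It can strictly decrease only finitely many times, so from some index $N$ on it is constant. From $N$ on, strict decrease of the tuple therefore forces $c_{n+1}<c_n$ for all $n\ge N$. That gives an infinite strictly descending sequence of the local counter.

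The second step relates this descent of $c$ to the composition rules. By the update scheme for $c$, only \zcref{eq:rule-ctActivePop,eq:rule-ctActivePush,eq:rule-ctActiveEval,eq:rule-ctActivePushCall} decrease $c$. The idle rules reset $c$ to $0$, and \zcref{eq:rule-ctSpawn,eq:rule-ctActiveSuspend,eq:rule-ctActiveEmpty} leave it unchanged. A strict decrease of $c$ at every step $n\ge N$ therefore means that every step from $N$ onward is an active-task rule. In particular, no step is an idle rule, so no configuration $C_n$ with $n>N$ is quiescent. This is exactly an infinite run of a single active task without reaching a quiescent state, i.e.\ a locally infinite computation, contradicting the hypothesis. \zcref[S]{prop:ct-determ} ensures that the run from $C_N$ is uniquely determined, so this infinite sequence of active steps really is the program's behaviour and not an artefact of the chosen chain.

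The main obstacle is the link between the abstract chain of distances and actual executions. The lemma speaks of chains of distances, but the argument above needs them to arise from consecutive configurations $C_n\toCT C_{n+1}$ in which the tracked marker keeps a defined distance. I would make this explicit by reading the statement as concerning distances of a fixed marker along a $\toCT$-execution, which is how the lemma is used in the fairness proof. A second, smaller point is the step from the prefix being constant to every step being active. A step that keeps the prefix fixed but resets $c$ to $0$ cannot decrease $c$ when $c\le 0$, and $c\le 0$ holds after the first active step. I would check this by a short case distinction over the rules in \zcref{fig:ct-rules}.
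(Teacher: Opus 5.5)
The paper gives no proof of this lemma, only the remark that it holds \enquote{naturally}, so your argument is a correct formalisation of that remark rather than a different route: the first three components lie in $\mathbb{N}^3$ and must stabilise, after which only \zcref{eq:rule-ctActivePop,eq:rule-ctActivePush,eq:rule-ctActiveEval,eq:rule-ctActivePushCall} can strictly decrease $c$, which yields an infinite run with no quiescent configuration. The reading you make explicit (a fixed marker along consecutive $\toCT$-steps with the counter) is genuinely required, since with $c\in\mathbb{Z}$ the literal statement fails for arbitrary chains of tuples; two small simplifications are that $c\le 0$ holds from the start because $c_\text{init}=0$, so a reset never decreases $c$, and that the appeal to \zcref{prop:ct-determ} is unnecessary, because the hypothesis applies to whichever run you are given.
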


\begin{theorem}[Quiescent Fairness With Counter]
  Starting from an initial judgement, the current thread scheduler
  (adjusted with a counter) is quiescent fair if there are no infinite
  local computations: If $C_0 \toCT C_1 \toCT C_2 \cdots$, then there
  is a quiescent fair execution
  $\mkTr{\sigma_0}, \asState{[\contP{f^\text{init}}{e}]}{\emptyset}
  \to^\ast \schedTrans{C_1} \to^\ast \schedTrans{C_2} \to^\ast \cdots$
  producing the same trace (modulo \textit{schedEv} events).
\end{theorem}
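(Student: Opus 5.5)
The proof splits into two independent obligations. The first is a simulation argument: every $\toCT$ step translates under $\schedTransSym$ into zero or more steps of the permissive async semantics of \zcref{sec:async-comp}. The second is a well-founded descent argument on the scheduling distance of \zcref{def:ct-dist}.

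For the simulation I would go rule by rule through the current thread rules.
\begin{itemize}
\item \zcref[S]{eq:rule-ctSpawn} maps to zero steps. The translation removes the schedule event together with the repeated preceding state, and $\schedAw{\cdot}$ drops task markers, so $\schedTrans{C}=\schedTrans{C'}$.
\item \zcref[S]{eq:rule-ctActiveEval,eq:rule-ctActivePushCall,eq:rule-ctActivePush,eq:rule-ctActivePop,eq:rule-ctActiveEmpty,eq:rule-ctActiveSuspend} map to exactly one step of \zcref{eq:rule-Eval,eq:rule-PushCall,eq:rule-PushTask,eq:rule-Pop,eq:rule-Empty,eq:rule-Suspend}, respectively. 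Here one uses that $\mathit{wf}$ is unaffected by removing schedule events, since no earlier $\mathit{wf}$ clause mentions them.
\item The idle rules that take a continuation stack from a queue map to \zcref{eq:rule-ReschedTask}, because queue union is preserved by $\schedAw{\cdot}$.
\item The marker rules map to \zcref{eq:rule-SchedTask}. By \zcref{lem:called-fn}, the marker corresponds to a spawned future whose evaluation event is still legal.
\end{itemize}
Concatenating these simulations from the initial configuration yields the required execution $\mkTr{\sigma_0},\{\contP{f^\text{init}}{e}\}\to^\ast\schedTrans{C_1}\to^\ast\cdots$, and it produces the same trace modulo \textit{schedEv}.

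For fairness, fix $m$ and a marker $\contP{f}{e}$, or a task marker, that is enabled in every quiescent $\schedTrans{C_n}$ with $n\ge m$. I would show that at each step $C_n,c_n\toCT C_{n+1},c_{n+1}$ one of two things happens: either the marker is newly produced, i.e.\ it lies in $\newExpr{\schedTrans{C_n}}{\schedTrans{C_{n+1}}}$, or its distance $\ctDist{\cdot}{C_n,c_n}$ strictly decreases in the lexicographic order. The case analysis goes as follows.
\begin{itemize}
\item Active steps that neither suspend nor empty the stack decrement $c$ and leave the first three components fixed.
\item \zcref[S]{eq:rule-ctSpawn} lowers $\mathrm{unsched}$ from $1$ to $0$. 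Appending to the back of $q$ does not change the position of earlier elements, except that in the pessimistic formula the term $|\getConts{q}|$ can grow. This needs care; I expect to argue that a spawned marker never contains $\contP{f}{e}$ itself.
\item \zcref[S]{eq:rule-ctActiveSuspend,eq:rule-ctActiveEmpty} move the configuration from active to idle. In the position formulas, the $+1$ of the active cases disappears once the active stack leaves; for a suspended top marker the pessimistic value $|\getConts{q}|+|\getConts{q_G}|+2$ is replaced by its exact queue position, which is no larger.
\item The idle rules pop the head of a queue. This either activates the marker, bringing its position to $1$, or shrinks the queue in front of it, so $\pos$ drops. Meanwhile $c$ is reset to $0$, which is harmless because $c$ is the last component.
\end{itemize}
The $t\equiv 0\bmod I_G$ alternation matters here: because the position formula is pessimistic, it charges the marker for the entire other queue, so popping from either queue decreases $\pos$.

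By \zcref{lem:finite-chain}, no lexicographically decreasing chain is infinite under the no-locally-infinite-computation hypothesis, so the marker is produced after finitely many steps. Combined with the simulation, this gives quiescent fairness of the translated execution.

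The step I expect to be the main obstacle is the monotonicity of $\pos$ under pushes to $q$ and under \zcref{eq:rule-ctActiveSuspend}. In particular, with the pessimistic definition a push enlarges $|\getConts{q}|$ for a marker sitting in $q_G$. My first attempt would be to show that such a push happens only while the scheduler is active, and then to bound the number of pushes between quiescent points via $c$. If that fails, I would refine the first component to count only the tasks actually ahead of the marker under the interval policy.
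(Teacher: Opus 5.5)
Your plan is the paper's: the simulation is its \enquote{Existence} case analysis rule for rule, with \zcref{eq:rule-ctSpawn} invisible under $\schedTransSym$. Fairness is the same lexicographic descent of the scheduling distance, closed off by \zcref{lem:finite-chain}. The descent as you state it has two defects, however.

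First, you cannot take the distance of the enabled marker $\contP{f}{e}$ itself. It is not yet in $\getConts{\schedTrans{C_n}}$, so $\ctDist{\contP{f}{e}}{C_n,c_n}$ is undefined. In a quiescent configuration it can only arise via \zcref{eq:rule-SchedTask} or \zcref{eq:rule-ReschedTask}. What you must track is therefore its source: the task marker $(\pv{foo},f)$ or the stack waiting in a queue, just as the paper tracks \enquote{$e'$ or $(\pv{foo},f)$}. This is where \zcref{lem:called-fn} is needed, to place that task marker in a queue. In the simulation it is not needed, since applicability of \zcref{eq:rule-SchedTask} already follows from the side conditions of the marker rules once schedule events are erased.

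Second, your claim that active steps leave the first three components fixed fails when \zcref{eq:rule-ctActiveEval} evaluates a \lstinline|spawn|. The new trace ends in a spawn event, so $\mathrm{unsched}$ jumps from $0$ to $1$ while only $c$ drops, and $(k,0,0,c)$ becomes the lexicographically larger $(k,0,1,c-1)$. The paper's proof makes the same claim. The repair is easy: the next step is then necessarily \zcref{eq:rule-ctSpawn}, so the two steps together give $(k,0,0,c-1)$. Alternatively, move $\mathrm{unsched}$ behind $c$ in the tuple.

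The obstacle you single out is genuine for the distance as literally defined, and the paper passes over it. For an object in $q_G$, \zcref{eq:rule-ctActiveSuspend} takes the first component from $1+|\getConts{q}|+\qPos(\cdot,q_G)$ to $|\getConts{q}|+n+\qPos(\cdot,q_G)$, where $n\ge 1$ is the length of the suspended stack. Since $\mathrm{idle}$ also rises, the distance increases rather than behaving \enquote{as for} \zcref{eq:rule-ctActiveEmpty}.

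Neither of your fallbacks is needed, though. No rule ever pushes onto $q_G$: \zcref{eq:rule-ctSpawn} and \zcref{eq:rule-ctActiveSuspend} both push onto $q$, matching the footnote that only spawning from the active worker is modelled. The initial global queue is $\queue$, so $\qEmpty{q_G}$ holds in every reachable configuration. Every case with the tracked object in $q_G$ is therefore vacuous. Pushing onto the back of $q$ never changes $\qPos$ of an element already in $q$, which is where the tracked object stays until it is popped. With this invariant and the fix above, each of your cases yields a strict decrease or production.

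The same invariant also shows a problem with the model as written. An idle configuration with $t\equiv 0 \bmod I_G$ has no applicable rule, and the first quiescent configuration reached from \zcref{ex:tokio-run-spawn} has $t=0$. Under the hypothesis of no locally infinite computations there are then no infinite $\toCT$-executions. The theorem only acquires content once the interval rules fall back to $q$, as Tokio's do.
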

\begin{proof}
  We must prove two parts: First, for any execution derived by
  $\toCT$, there exists an execution using $\to$. Then, we prove the
  quiescent fairness.

  \textbf{Existence:} We proceed by induction over $n$ derivation
  steps with $\toCT$. As both the semantics using~$\to$ and $\toCT$
  have the same starting point, the base case is free. Assume a
  derivation in $n$ steps for $C,c$. We apply $\toCT$ once more to get
  $C',c'$. By the induction hypothesis, we have a $\to$-execution for
  $\schedTrans{C}$. We show via a case distinction over the rule
  applied to get $C',c'$ that a derivation for $\schedTrans{C'}$
  exists. Let in the following
  $C=\ctCfg{\sh}{\mathit{act}}{t}{q_G}{q}$.

  \newcommand{\Zcref}[1]{{\zcsetup{cap=true}\zcref{#1}}}

  \begin{description}
  \item[\Zcref{eq:rule-ctSpawn}] The task marker $(\pv{foo},f)$ is
    pushed to the end of the local queue~$q$ and appends a schedule
    event and a repetition of $\sh$'s last state to $\sh$. By the
    definition of $\schedTransSym$, this has no effect on
    $\schedTrans{C}$. Thus, $\schedTrans{C}$ is already the required
    $\to$-configuration.
  \item[\Zcref{eq:rule-ctActiveEval}:] Then \zcref{eq:rule-Eval} is
    applicable. The resulting traces are equivalent according to
    \enquote{$\schedTransSym$.} Additionally, per definition,
    $\schedTrans{\ctState{[\contP{f}{e},\many{k}]}{t}{q_G}{q}}$
    differs from
    $\schedTrans{\ctState{[\contP{f}{e'},\many{k}]}{t}{q_G}{q}}$ only
    by $\contP{f}{e}$ being replaced by $\contP{f}{e'}$. This is the
    same change done by \zcref{eq:rule-Eval}. Thus, we have a
    $\to$-derivation.
  \item[\Zcref{eq:rule-ctActivePushCall}:] Then
    \zcref{eq:rule-PushCall} is applicable, since its preconditions
    follow from those of \zcref{eq:rule-ctActivePushCall}. The old
    configuration must be of the form
    $\schedTrans{\ctState{[\many{k}]}{t}{q_G}{q}}=\asState{[\many{k}]}{Q}$. The
    resulting configuration is:
    \[
      \asState{[\contP{f}{e'},\many{k}]}{Q}
    \]
    This is the same as
    $\schedTrans{\ctState{[\contP{f}{e'},\many{k}]}{t}{q_G}{q}}$, as
    only the active task's stack is removed and then the same stack
    with new top element $\contP{f}{e'}$ is added; this is equivalent
    to the active task $[\contP{f}{e'},\many{k}]$. Similarly, the
    traces in the resulting configuration match according to
    $\schedTransSym$.
  \item[\Zcref{eq:rule-ctActivePush}:] Then \zcref{eq:rule-PushTask}
    is applicable. For the same reasons as in
    \zcref{eq:rule-ctActivePushCall}, the resulting configurations
    match.
  \item[\Zcref{eq:rule-ctActivePop}:] Then \zcref{eq:rule-Pop} is
    applicable. Clearly, $\schedTrans{C'}$ matches the result of
    \zcref{eq:rule-Pop}.
  \item[\Zcref{eq:rule-ctActiveEmpty}:] Then \zcref{eq:rule-Empty} is
    applicable. The resulting configurations match per the definition
    of \enquote{$\schedTransSym$.}
  \item[\Zcref{eq:rule-ctActiveSuspend}:] Then \zcref{eq:rule-Suspend}
    is applicable. The resulting configurations clearly agree.
  \item[\Zcref{eq:rule-ctIdleInterval}] moves the first stack from
    $q_G$ to the active slot. Then, \zcref{eq:rule-ReschedTask} is
    applicable.  Since $\sh$ in unchanged, the result
    $\schedTrans{C}=\asCfg{\sh}{[\contP{f}{e},\many{k}]}{Q \setminus
      \{[\contP{f}{e},\many{k}]\}}$ is the desired $\to$-derivation.
  \item[\Zcref{eq:rule-ctIdleIntervalMarker}:] Since the preconditions
    of \zcref{eq:rule-ctIdleIntervalMarker} imply those of
    \zcref{eq:rule-SchedTask}, the latter is applicable on
    configuration
    $\schedTrans{\sh},\schedTrans{\ctState{\Idle}{t}{q_G}{q}}=\schedTrans{\sh},\asState{\Idle}{Q}$,
    resulting in configuration
    $\schedTrans{\sh},\asState{[\contP{f}{e'}]}{Q}$. Since
    $\schedAw{q_G}=\schedAw{q'_G}$, this is exactly the translation
    $\schedTrans{C'}$.
  \item[\Zcref{eq:rule-ctIdleLocalEmpty}:] Same argument as for
    \zcref{eq:rule-ctIdleInterval}.
  \item[\Zcref{eq:rule-ctIdleLocalEmptyMarker}:] Same argument as for
    \zcref{eq:rule-ctIdleIntervalMarker}.
  \item[\Zcref{eq:rule-ctIdleLocal}:] Similar to the cases for
    \zcref{eq:rule-ctIdleInterval,eq:rule-ctIdleLocalEmpty} but with
    $q$ and $q'$ substituted for $q_G$ and $q'_G$, respectively.
  \item[\Zcref{eq:rule-ctIdleLocalMarker}:] Analogous to the cases for
    \zcref{eq:rule-ctIdleIntervalMarker,eq:rule-ctIdleLocalEmptyMarker},
    except with the observation that $\schedAw{q}=\schedAw{q'}$.
  \end{description}

  \noindent\textbf{Mapping:} As in \cite{HH2024}, we define a mapping between
  configurations in $\to$ and $\toCT$ for a fixed
  $\toCT$-execution. Assume such a $\toCT$-execution
  \begin{equation}\label{eq:ct-run}
    C^\text{ct}_0 \toCT C^\text{ct}_1 \toCT^+ C^\text{ct}_n \toCT
    \cdots
  \end{equation}
  For most $\toCT$-applications, there is a direct one-to-one
  correspondence to a $\to$-step. However, \zcref{eq:rule-ctSpawn},
  which schedules a future, has no equivalent. For this rule, we have
  no step (i.e., a no-op) on the $\to$-side. Hence the change from at
  least one step ($\toCT^+$) to the reflexive-transitive closure
  ($\to^\ast$).  Then, for the execution~\zcref[noname]{eq:ct-run}
  there is a $\to$-execution
  \begin{equation}
    \label{eq:mod-run}
    C'_0 \to C'_1 \to^\ast C'_m \to \cdots
  \end{equation}

  Then we can define a mapping $\pi:\mathbb{N}\rightarrow\mathbb{N}$
  such that the execution~\zcref[noname]{eq:mod-run} can be written as
  \begin{equation}\label{eq:mod-trans-run}
    \schedTrans{C^\text{ct}_{\pi(0)}} \to \schedTrans{C^\text{ct}_{\pi(1)}} \to^+ \schedTrans{C^\text{ct}_{\pi(m)}} \to \cdots
  \end{equation}
  namely:
  $\pi(i) = i + \#(\text{applications of \zcref{eq:rule-ctSpawn}})$.
  
  \textbf{Fairness:} Assume a $\toCT$-execution
  like~\zcref[noname]{eq:ct-run} and the corresponding
  $\to$-execution~\zcref[noname]{eq:mod-trans-run}. We need to show
  that the letter is quiescent fair, i.e.:
  \begin{align*}
    \forall m, e, f.\ \Big( &\forall n > m.\\
                            &\quad\big(\exists Q.\ \schedTrans{\ctState{\mathit{act}_{\pi(n)}}{t_{\pi(n)}}{q_{G,\pi(n)}}{q_{\pi(n)}}}=\asState{\Idle}{Q} \Rightarrow \contP{f}{e} \in \enabled{C_{\pi(n)}} \big)\\
                            &\qquad\Rightarrow \exists n' > m.\ \big(\contP{f}{e} \in \getConts{\schedTrans{\ctState{\mathit{act}_{\pi(n')}}{t_{\pi(n')}}{q_{G,\pi(n')}}{q_{\pi(n')}}}} \big) \Big)
  \end{align*}

  Let $m,e,f$ be arbitrary but fixed and assume that $\contP{f}{e}$ is
  enabled in $C_{\pi(n)}$ whenever there is no active task in
  $C_{\pi(n)}$'s translation. We need to show that eventually, a
  scheduler rule occurs that produces~$e$. We consider each possible
  rule to yield $\contP{f}{e}\in\enabled{C_{\pi(n)}}$. There are three
  possibilities for add $\contP{f}{e}$:
  \begin{enumerate*}[(i)]
  \item it is the consequence of evaluating an expression~$e'$ in
    \zcref{eq:rule-Eval} or \zcref{eq:rule-Pop};
  \item it is the result of evaluating the first step of a scheduled task
    (\zcref{eq:rule-SchedTask}); or
  \item it is the result of evaluating a synchronous call
    (\zcref{eq:rule-PushCall}) or a local future
    (\zcref{eq:rule-PushTask}).
  \end{enumerate*}
  We show fairness by proving that either $e$ is produced, the
  quiescent fairness precondition does not hold, or the scheduling
  distance for expression~$e'$ or $(\pv{foo},f)$ decreases. By
  \zcref{lem:finite-chain}, there is no infinitely decreasing chain of
  distances, thus $e'$ or $(\pv{foo}, f)$ are eventually scheduled,
  leading to $e$ being produced.

  We assume the quiescent fairness precondition; hence, $\contP{f}{e}$
  is reachable. Now there are two possible cases: (case
  \textsc{Produce}) expression~$e$ is produced or (case
  \textsc{Decrease}) the scheduling distance of $e'$ or $(\pv{foo},f)$
  decreases.
  
  \begin{description}
  \item[\Zcref{eq:rule-Eval}:] We restate \zcref{eq:rule-Eval} with
    the translated configuration
    \[\schedTrans{C_{\pi(n)}}=\schedTrans{\sh_{\pi(n)}},\schedTrans{\ctState{\mathit{act}_{\pi(n)}}{t_{\pi(n)}}{q_{G,\pi(n)}}{q_{\pi(n)}}}:\]
    \gcondrule{Eval-CT-Trans}{
      \schedTrans{\ctState{\mathit{act}_{\pi(n)}}{t_{\pi(n)}}{q_{G,\pi(n)}}{q_{\pi(n)}}}=\asState{\left[\contP{f}{e'},\many{k}\right]}{Q} \\
      \sigma=\last(\schedTrans{\sh_{\pi(n)}}) \quad
      \concatTr{\tau}{\contP{f}{e}} \in \valP{f}{e'} \\
      \rho\text{ concretizes }\tau \quad
      \wf{\chopTrSem{\schedTrans{\sh_{\pi(n)}}}{\rho(\tau)}} }{
      \schedTrans{C_{\pi(n)}}\to\chopTrSem{\schedTrans{\sh_{\pi(n)}}}{\rho(\tau)},\asState{\left[\contP{f}{e},\many{k}\right]}{Q}
    }
    In this case, because the translated configuration has an active
    task, so does the configuration $C_{\pi(n)}$ and the top of this
    task, $\contP{f}{e'}$ can be evaluated yielding a well-formed
    trace. Hence, only the
    \zcref{eq:rule-ctSpawn,eq:rule-ctActiveEval} can be applied; all
    others have either an idle worker or they cannot evaluate the top
    of the active task.

    If we can apply \zcref{eq:rule-ctSpawn}, then there must be a
    spawn event that has not yet been responded too and the distance
    is of the form $(k, 0, 1, c)$, which is reduced by
    \zcref{eq:rule-ctSpawn} to $(k, 0, 0, c)$. So we have the case
    \textsc{Decrease}.

    Otherwise, we have
    $\ctDist{\contP{f}{e'}}{C_{\pi(n)},c_{\pi(n)}}=(k,0,0,c_{\pi(n)})$
    for some $k\ge 1$. Either $k=1$, which implies that $e'$ is the
    next expression to be executed and is not suspended. Then,
    clearly, \zcref{eq:rule-Eval} is applicable, yielding~$e$
    (\textsc{Produce}).

    If $k>1$, $e'$ is not evaluated by this application of
    \zcref{eq:rule-ctActiveEval}. But we obtain case \textsc{Decrease}
    by the observation that
    $\ctDist{\contP{f}{e'}}{C_{\pi(n)+1},c_{\pi(n)+1}}=(k,0,0,c_{\pi(n)}-1)$.

  \item[\Zcref{eq:rule-PushCall}] executes a synchronous call. We
    restate it in the translated setting:
    \gcondrule{PushCall-CT-Trans}{
      \schedTrans{\ctState{\mathit{act}_{\pi(n)}}{t_{\pi(n)}}{q_{G,\pi(n)}}{q_{\pi(n)}}}=\asState{[\many{k}]}{Q} \quad \sigma=\last(\sh_{\pi(n)}) \quad f\in\FId \\
      \lookup{\pv{foo}}=\rust{\kwFn foo($\many{x}$) $e$}\quad
      \concatTr{\tau}{\contP{f}{e}} \in \valP{f}{\rust{\kwFn foo($\many{x}$) $e'$}} \\
      \rho\text{ concretizes }\tau \quad
      \wf{\chopTrSem{\schedTrans{\sh_{\pi(n)}}}{\rho(\tau)}}
    }{
      \schedTrans{C_{\pi(n)}} \to
      \chopTrSem{\schedTrans{\sh_{\pi(n)}}}{\rho(\tau)},\asState{[\contP{f}{e},\many{k}]}{Q}
    }
    By the same argument as for \zcref{eq:rule-Eval}, only the rules
    \zcref{eq:rule-ctSpawn,eq:rule-ctActivePushCall} are applicable. The
    exact same argument further gives us that we either
    \textsc{Produce} $e$ or \textsc{Decrease} the distance.

  \item[\Zcref{eq:rule-PushTask}] executes a local future. We restate
    \zcref{eq:rule-PushTask} in our translated setting:
    \gcondrule{PushTask-CT-Trans}{
      \schedTrans{\ctState{\mathit{act}_{\pi(n)}}{t_{\pi(n)}}{q_{G,\pi(n)}}{q_{\pi(n)}}}=\asState{[\many{k}]}{Q} \quad \sigma=\last(\sh_{\pi(n)}) \quad f\in\FId \\
      \lookup{\pv{foo}}=\rust{\kwAsync~\kwFn foo($\many{x}$) $e$}\quad
      \concatTr{\tau}{\contP{f}{e}} \in \valP{f}{\rust{\kwAsync~\kwFn foo($\many{x}$) $e'$}} \\
      \rho\text{ concretizes }\tau \quad
      \wf{\chopTrSem{\schedTrans{\sh_{\pi(n)}}}{\rho(\tau)}}
    }{
      \schedTrans{C_{\pi(n)}} \to
      \chopTrSem{\schedTrans{\sh_{\pi(n)}}}{\rho(\tau)},\asState{[\contP{f}{e},\many{k}]}{Q}
    }
    We repeat the argument for \zcref{eq:rule-PushCall} but with
    \zcref{eq:rule-ctSpawn,eq:rule-ctActivePush} applicable.

  \item[\Zcref{eq:rule-Pop}] removes the top continuation of the
    active task's stack if its computation is completed. Restated in
    our terms, the rule goes:
    \gcondrule{Val-CT-Trans}{
      \schedTrans{\ctState{\mathit{act}_{\pi(n)}}{t_{\pi(n)}}{q_{G,\pi(n)}}{q_{\pi(n)}}}=\asState{[\contP{f}{e''},\contP{f}{e'},\many{k}]}{Q} \quad \sigma=\last(\sh_{\pi(n)}) \\
      \concatTr{\tau}{\cont{e}} \in \valP{f}{e'} \quad \rho\text{ concretizes }\tau \quad \wf{\chopTrSem{\schedTrans{\sh_{\pi(n)}}}{\rho(\tau)}}
    }{
      \schedTrans{C_{\pi(n)}} \to \schedTrans{\sh_{\pi(n)}},
      \asState{[\contP{f}{e},\many{k}]}{Q} }
    Again, the same argument as for
    \zcref{eq:rule-Eval,eq:rule-PushCall,eq:rule-PushTask} but now
    \zcref{eq:rule-ctSpawn,eq:rule-ctActivePop} can be applied.

  \item[\Zcref{eq:rule-Empty}:] Has no effect on the set of enabled
    continuations. Thus, it cannot have produced $e'$.
  \item[\Zcref{eq:rule-Suspend}:] Has no effect on the set of enabled
    continuations. Thus, it cannot have produced $e'$.
  \item[\Zcref{eq:rule-SchedTask}] executes a spawned future. We
    restate \zcref{eq:rule-SchedTask} in our translated setting:
    \gcondrule{SchedTask-CT-Trans}{
      \schedTrans{\ctState{\mathit{act}_{\pi(n)}}{t_{\pi(n)}}{q_{G,\pi(n)}}{q_{\pi(n)}}} = \asState{\Idle}{Q} \quad \sigma=\last(\sh_{\pi(n)}) \quad f\in\FId \\
      \lookup{\pv{foo}}=\rust{\kwAsync~\kwFn foo($\many{x}$) $e$} \quad
      \concatTr{\tau}{\contP{f}{e}} \in \valP{f}{\rust{\kwAsync~\kwFn foo($\many{x}$) $e'$}} \\
      \rho\text{ concretizes }\tau \quad
      \wf{\chopTrSem{\schedTrans{\sh_{\pi(n)}}}{\rho(\tau)}}
    }{
      \schedTrans{C_{\pi(n)}} \to
      \chopTrSem{\schedTrans{\sh_{\pi(n)}}}{\rho(\tau)},
      \asState{\left[\contP{f}{e}\right]}{Q}
    }
    Since the evaluation of \rust{\kwAsync~\kwFn foo($\many{x}$) $e$}
    starts with an evaluation event, well-formedness gives us that a
    spawn event for $f$ and \pv{foo} exists in
    $\schedTrans{\sh_{\pi(n)}}$ but no evaluation event for $f$ and
    \pv{foo}.  If $\lastEv\left(\schedTrans{\sh_{\pi(n)}}\right)$ is a
    spawn event, we apply \zcref{eq:rule-ctSpawn} and start our
    argument on its result.

    \zcref[S]{lem:called-fn} gives us that $(\pv{foo},f)$ is in one of
    the scheduler's queues and its distance is greater or equal to
    $(1,1,0,c_{\pi(n)})$. We consider each case of the application
    $C_{\pi(n)}\toCT C_{\pi(n)+1}$.
    \begin{description}
    \item[\Zcref{eq:rule-ctSpawn}] can only be applied to schedule a
      task with a different future identifier. The distance is, hence,
      $\ctDist{(\pv{foo},
        f)}{C_{\pi(n)},c_{\pi(n)}}=(k,i,1,c_{\pi(n)})$ for some
      $k\ge 1$ and $i\in\{0,1\}$ before the step and
      $\ctDist{(\pv{foo},
        f)}{C_{\pi(n)+1},c_{\pi(n)}}=(k,i,0,c_{\pi(n)})$
      afterwards. Thus, the scheduling distance decreases
      (\textsc{Decrease}).
    \item[\Zcref{eq:rule-ctActiveEval}:] Again we have case
      \textsc{Decrease}, as the distance goes from
      $\ctDist{(\pv{foo},
        f)}{C_{\pi(n)},c_{\pi(n)}}=(k,0,0,c_{\pi(n)})$, with $k > 1$,
      to $\ctDist{(\pv{foo}, f)}{C_{\pi(n)+1}}=(k,0,0,c_{\pi(n)}-1)$.
    \item[\Zcref{eq:rule-ctActivePushCall}] evaluates a synchronous
      call. Then, case \textsc{Decrease} holds, as the distance goes
      from
      $\ctDist{(\pv{foo},
        f)}{C_{\pi(n)},c_{\pi(n)}}=(k,0,0,c_{\pi(n)})$, for some
      $k > 1$, to
      $\ctDist{(\pv{foo},
        f)}{C_{\pi(n)+1},c_{\pi(n)}}=(k,0,0,c_{\pi(n)}-1)$.
    \item[\Zcref{eq:rule-ctActivePush}:] Same argument as for
      \zcref{eq:rule-ctActivePushCall}.
    \item[\Zcref{eq:rule-ctActivePop}:] Before the application, we
      have
      $\ctDist{(\pv{foo},
        f)}{C_{\pi(n)},c_{\pi(n)}}=(k,0,0,c_{\pi(n)})$ for some
      $k > 1$. Afterwards,
      $\ctDist{(\pv{foo},
        f)}{C_{\pi(n)+1},c_{\pi(n)+1}}=(k,0,0,c_{\pi(n)}-1)$, which is
      a decrease of the scheduling distance (\textsc{Decrease}).
    \item[\Zcref{eq:rule-ctActiveEmpty}:] Then the scheduling
      distance decreases from
      $\ctDist{(\pv{foo}, f)}{C_{\pi(n)}}=(k,0,0,c_{\pi(n)})$, for
      some $k > 1$, to
      $\ctDist{(\pv{foo},
        f)}{C_{\pi(n)+1}}=(k-1,1,0,c_{\pi(n)}-1)$. We have
      \textsc{Decrease}.
    \item[\Zcref{eq:rule-ctActiveSuspend}:] Same argument as for
      \zcref{eq:rule-ctActiveEmpty}.
    \item[\Zcref{eq:rule-ctIdleInterval}:] Before the application, we
      have
      $\ctDist{(\pv{foo},
        f)}{C_{\pi(n)},c_{\pi(n)}}=(k,1,0,c_{\pi(n)})$ for some
      $k > 1$. Afterwards, we have
      $\ctDist{(\pv{foo}, f)}{C_{\pi(n)+1},c_{\pi(n)}}=(k,0,0,0)$,
      i.e., a decrease of the scheduling distance (\textsc{Decrease}).
    \item[\Zcref{eq:rule-ctIdleIntervalMarker}:] Either
      $\ctDist{(\pv{foo},
        f)}{C_{\pi(n)},c_{\pi(n)}}=(1,1,0,c_{\pi(n)})$, in which case
      the correct continuation is produced (\textsc{Produce}), or the
      distance is $(k,1,0,c_{\pi(n)})$ with $k>1$ and the application
      reduces the scheduling distance (\textsc{Decrease}) to
      $\ctDist{(\pv{foo}, f)}{C_{\pi(n)+1},c_{\pi(n)+1}}=(k,0,0,0)$.
    \item[\Zcref{eq:rule-ctIdleLocalEmpty}:] Same argument as for
      \zcref{eq:rule-ctIdleInterval}.
    \item[\Zcref{eq:rule-ctIdleLocalEmptyMarker}:] Same argument as
      for \zcref{eq:rule-ctIdleIntervalMarker}.
    \item[\Zcref{eq:rule-ctIdleLocal}:] Same as for
      \zcref{eq:rule-ctIdleInterval,eq:rule-ctIdleLocalEmpty}.
    \item[\Zcref{eq:rule-ctIdleLocalMarker}:] Same argument as for
      \zcref{eq:rule-ctIdleIntervalMarker,eq:rule-ctIdleLocalEmptyMarker}.
    \end{description}
  \item[\Zcref{eq:rule-ReschedTask}:] We follow the same argument as
    for \zcref{eq:rule-SchedTask} but now with a task $[\many{k}]$
    rather than a task marker. \qedhere
  \end{description}
\end{proof}


\section{Relation of Quiescent and Tokio Fairness}
\label{app:relat-quiesc-tokio}

\begin{proof}[Proof of \zcref{thm:quiescent-tokio}]
  As established in \zcref{app:quiesc-fairn-curr}, we know that the
  current thread runtime is quiescent fair. We show that the premisses
  of Tokio fairness are stronger than those of quiescent fairness and
  the conclusion follows from that of quiescent fairness. Since the
  first assumption of \zcref{def:tokio-fair} is absent from the
  definition of quiescent fairness, we can add it without altering the
  conclusion.

  The second premiss is stronger than the assumption of locally finite
  computations: For quiescent fairness we only assume that each task
  completes or suspends in a final amount of steps; we do not require
  a fixed upper bound for all tasks.

  Finally, note that the conclusions of both definitions are similar
  except that $\exists n' > m.\ e\in\newExpr{C_{n'}}{C_{n'+1}}$ in
  quiescent fairness is strengthened to
  $\exists n'.\ m < n' \le m + D_\text{max} \wedge
  e\in\newExpr{C_{n'}}{C_{n'+1}}$ in Tokio fairness. Hence, we need to
  show that the $n'$ we find for quiescent fairness is always less
  than $m+D_\text{max}$. The distance function defined in
  \zcref{app:quiesc-fairn-curr} can be used to find $D_\text{max}$.

  We note that, because the local computations must complete within
  $P_\text{max}$ steps, we can make all components of the scheduling
  distance non-negative by adding $P_\text{max}$ to the final
  component. Then, for a scheduling distance $(a,b,c,d)$ we construct
  $D\in\mathbb{N}$ by
  \[
    D:=a\circ b \circ c \circ d'
  \]
  where $\circ$ denotes concatenation of digits and $d'$ is
  $d+P_\text{max}$ extended with leading zeroes to the number of
  digits of $P_\text{max}$. For example, if $P_\text{max}=100$, the
  scheduling distance $(32,1,0,-64)$ is mapped to $D_0 = 3210036$. As
  can be easily seen, this mapping preserves the order over distances.

  By definition of the scheduling distance (\zcref{def:ct-dist}), the
  first component is bounded by the sizes of both queues plus one:
  $0\le a\le |\getConts{q}|+|\getConts{q_G}|+1$; because the number of
  tasks is less than $T_\text{max}$, we have $a\le
  T_\text{max}+1$. The second and third components are either $0$ or
  $1$; the last component is $0\le d\le P_\text{max}$. Thus, all
  distance numbers $D$ for all expressions at all points are less than
  or equal to
  \[
    D_\text{max}:=(T_\text{max}+1) \circ 1 \circ 1 \circ
    P_\text{max}\ . \qedhere
  \]
\end{proof}

As the proof above uses no properties of the current thread runtime
besides its quiescent fairness and the existence of a scheduling
distance, the following corollary follows immediately:

\begin{corollary}\label{col:quiescent-impl-tokio}
  A quiescent fair scheduler, where there is a scheduling distance
  $\ctDist{\contP{f}{e}}{C,c}$ such that, for each continuation marker
  $\contP{f}{e}$ that is in $\enabled{C_i}$ for some configuration in
  an execution, the distance for $\contP{f}{e}$ is either decreased by
  each step or $e$ is produced and the distance is bounded by a
  maximum, is also Tokio fair.
\end{corollary}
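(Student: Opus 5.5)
The plan is to derive the corollary by re-running the argument in the proof of \zcref{thm:quiescent-tokio}, and to check along the way that it uses nothing specific to the current thread runtime. Fix a quiescent fair scheduler together with the assumed scheduling distance $\ctDist{\contP{f}{e}}{C,c}$. Take an infinite execution $C_0\to C_1\to\cdots$ satisfying the premisses of \zcref{def:tokio-fair}, so the number of tasks is bounded by $T_\text{max}$ and local computations are bounded by $P_\text{max}$. Then take $m,e,f$ such that $\contP{f}{e}$ is enabled in every quiescent configuration from $m$ onwards.

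First, by quiescent fairness there is some $n'>m$ at which $e$ is produced, so only the bound $n'\le m+D_\text{max}$ remains. Next, I would use the hypothesis that at every step the distance either strictly decreases or $e$ is produced. Because the distance is bounded by some maximum, I would encode it into $\mathbb{N}$ as in the proof of \zcref{thm:quiescent-tokio}. That means shifting any possibly negative counter component by its bound, here $P_\text{max}$, and then concatenating the components as digits. The encoding is order preserving, so every step before production strictly decreases a natural number lying in $[0,D_\text{max}]$, where $D_\text{max}$ is the encoding of the componentwise maximal distance. Strict descent in $\mathbb{N}$ from a value at most $D_\text{max}$ can last at most $D_\text{max}$ steps, so production happens at some $n'\le m+D_\text{max}$. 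This also gives an independent proof of existence, which makes the appeal to quiescent fairness mostly a sanity check.

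The main obstacle is that the corollary is phrased abstractly. "The distance is bounded by a maximum" must be read as a bound that is uniform over the whole execution and depends only on $T_\text{max}$ and $P_\text{max}$. It must also be read as covering every component, so that the digit encoding has a fixed width and is order preserving; without a fixed width the concatenation does not respect the lexicographic order. I would make this reading explicit and state the bounds componentwise, then note that this is exactly the property used for the current thread instance. A second subtlety is that steps mapped to no-ops, like \zcref{eq:rule-ctSpawn}, still count as steps. The hypothesis that the distance decreases at each step covers these too, so nothing extra is needed, but I would remark on it.
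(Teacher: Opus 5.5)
Your proposal is correct and follows the paper's own route. The paper obtains the corollary by noting that the proof of \zcref{thm:quiescent-tokio} uses only quiescent fairness and a bounded scheduling distance, which is shifted, digit-encoded order-preservingly into $\mathbb{N}$, and bounded by $D_\text{max}$. Your explicit counting step (strict descent from a value at most $D_\text{max}$ lasts at most $D_\text{max}$ steps), your fixed-width requirement on the encoding, and your remark that this already gives existence without quiescent fairness make precise what the paper leaves implicit.
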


It is unknown whether there exist any quiescent fair schedulers that
are not Tokio fair.


\section{Auxiliary Definitions and Theorems for \ac{DL}}
\label{sec:dl-defs}

\begin{definition}[Domain]\label{def:dom}
  We fix the domain $\domain^A$ of each sort $A$ as follows:
  \begin{itemize}
  \item $\domain^\bool=\{\trueSem, \falseSem\}$,
  \item $\domain^\Int=\mathbb{Z}$,
  \item $\domain^\Field$ an infinite set of field identifiers,
  \item
    $\domain^{\Place_A}=\lim_{n\rightarrow\infty}\domain_n^{\Place_A}$ where
    \begin{itemize}
    \item $\domain_0^{\Place_A}=\{\semPlace{\pv{x}}\mid\pv{x}:A\in\PVSym\}$ and
    \item $\domain_{i+1}^{\Place_A}=\domain_i^{\Place_A}\cup\{\semArrPlace{p}{n} \mid p\in\domain_i^{\Place_A}, n\in\mathbb{N}\}\cup\{\semFieldPlace{p}{f} \mid p\in\domain_i^{\Place_A}, f\in\domain^\Field\}$
    \end{itemize}
  \item $\domain^{\RefS{A}}=\{\semRefS{a}\mid a\in\domain^A\}$ for all
    sorts $A$,
  \item
    $\domain^{\RefM{A}}=\{\semRefM{p}\mid p\in\domain^{\Place_A}\}$
    for all sorts $A$,
  \item
    $\domain^{(A_1,\ldots,A_n)}=\domain^{A_1}\times \cdots
    \domain^{A_n}$,
  \item
    $\domain^{[A;n]}=\{\semArray{d_n,\ldots,d_n} \mid
    d_1,\ldots,d_n\in\domain^A\}$,
  \item and domains for each datatype declared in the program.
  \end{itemize}
\end{definition}

\begin{definition}[Correspondence of \RustyDL Domain and LAGC
  Values]\label{def:corr-dom}
  Let $\domain$ be a \RustyDL Domain and $\sigma$ an \ac{LAGC}
  state. We define the binary relation
  $\dlLagcSim_{\sigma}\subseteq\domain\times\Val$ (pronounced
  \enquote{$d$ corresponds to $v$ w.r.t.~$\sigma$}) inductively as
  follows:
  \begin{enumerate}
  \item $n\dlLagcSim_{\sigma} n$ for each $n\in\mathbb{Z}$,
  \item $\trueSem\dlLagcSim_{\sigma}\trueSem$ and
    $\falseSem\dlLagcSim_{\sigma}\falseSem$,
  \item $\semRefS{a}\dlLagcSim_{\sigma}\sRefVal{\pv{x}}{b}$ for each
    $\pv{x}\in\Var$ and for all $b\in\BId$, where
    $a\dlLagcSim_{\sigma} \sigma(\pv{x})$,
  \item $\semRefM{p}\dlLagcSim_{\sigma}\mRefVal{p'}{b}$ for each
    $p\in\domain^{\Place_A},b\in\BId,p'\in\Place$ iff $p$ and $p'$
    correspond,
  \item $(d_1,\ldots,d_n)\dlLagcSim_{\sigma} (v_1,\ldots,v_n)$ iff
    $d_1\dlLagcSim_{\sigma}v_1, \ldots, d_n\dlLagcSim_{\sigma}v_n$,
  \item $\semArray{d_1,\ldots,d_n}\dlLagcSim_{\sigma}[v_1,\ldots,v_n]$
    iff
    $d_1\dlLagcSim_{\sigma}v_1, \ldots, d_n\dlLagcSim_{\sigma}v_n$.
  \end{enumerate}
\end{definition}

\begin{definition}[Correspondence of \RustyDL States and LAGC
  States]\label{def:corr-state}
  Let $\Kstruct=(\domain,\interp,\States,\rho)$ be \RustyDL Kripke
  structure, $s\in\States$ a \RustyDL state, and $\sigma$ an LAGC
  state. We define the binary relation $\dlLagcSim$ (pronounced
  \enquote{corresponds}) as follows: $s\dlLagcSim\sigma$ iff for all
  $\pv{x}: A\in\PVSym$ such that $\pv{x}\in\dom(\sigma)$, we have
  $s(\pv{x})\dlLagcSim_{\sigma}\sigma(\pv{x})$.
\end{definition}

We fix some interpretations of functions and relations:
\begin{itemize}
\item $\interp(+)(n,m)=n + m$ with $n,m\in\domain^\Int$ (similar for
  all integer functions),
\item $\interp(\refSSym)(d)=\semRefS{d}$ with $d\in\domain$,
\item $\interp(\refMSym)(p)=\semRefM{p}$,
\item $\interp(\derefS)(\semRefS{d})=d$,
\item $\interp(\dlfunc{inU32})=\{n\in\mathbb{N} \mid 0 \le n < 2^{32}\}$.
\end{itemize}


\section{Additional Soundness Proofs of Dynamic Logic Calculus Rules}
\label{sec:addit-soundn-proofs}

\begin{theorem}[Soundness of rule
  $\ruleName{ifElseSplit}$]\label{thm:sound-ifElseSplit}
  Let $\Gamma,\Delta\in 2^{\DLFor}$ be sets of formulas,
  $\updateContext\in\Updates$ be an update, $\mathit{se}$ a literal or
  a program variable of type \lstinline|bool|,
  \rust{\{$\mathit{be}_0$\}} and \rust{\{$\mathit{be}_1$\}} be block
  expressions, $\pi~\rust{\kwIf $\mathit{se}$
    \{$\mathit{be}_0$\} \kwElse \{$\mathit{be}_1$\}}~\omega$ be a
  program fragment, and $\phi\in\DLFor$ be a formula. If
  \begin{align*}
    &\models\sequent{\applyUp{\updateContext}{(\mathit{se}\keyeq\keybooltrue)}}{\applyUp{\updateContext}{{\ptstub{\{$\mathit{be}_0$\}}}}}\\
    &\models\sequent{\applyUp{\updateContext}{(\mathit{se}\keyeq\keyboolfalse)}}{\applyUp{\updateContext}{{\ptstub{\{$\mathit{be}_1$\}}}}}
  \end{align*}
  then the following holds:
  \[
    \models\sequent{}{\applyUp{\updateContext}{\ptstub{\kwIf $\mathit{se}$ \{$\mathit{be}_0$\} \kwElse \{$\mathit{be}_1$\}}}}.
  \]
\end{theorem}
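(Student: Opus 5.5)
The proof mirrors that of \zcref{thm:sound-copy}. Fix $\Kstruct$, $s$, $\beta$ and put $s_1=\valS(\updateContext)(s)$. Assume $s\models\Gamma$; otherwise the sequent holds trivially. We then have to show $\validSOne\ptstub{\kwIf $\mathit{se}$ \{$\mathit{be}_0$\} \kwElse \{$\mathit{be}_1$\}}$, or $s\models\Delta$.

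Since $\mathit{se}$ has type \lstinline|bool| and $s_1$ is a state, $\valSOne(\mathit{se})$ is either $\trueSem$ or $\falseSem$. We split on this value and, in each case, use the matching premiss. For instance, if it is $\trueSem$, then $s\models\applyUp{\updateContext}{(\mathit{se}\keyeq\keybooltrue)}$. The first premiss then yields either $s\models\Delta$, and we are done, or $\validSOne\ptstub{\{$\mathit{be}_0$\}}$. The false case is symmetric and uses $\mathit{be}_1$.

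Next, take an \ac{LAGC} state $\sigma_1$ with $s_1\dlLagcSim\sigma_1$. Then $\sigma_1(\mathit{se})$, or $\mathit{se}$ itself if it is a literal, corresponds to a boolean value, and \zcref{def:corr-dom} forces that value to be the same boolean $v$. The key step is to compute the first composition steps from $\mkTr{\sigma_1},\cont{\pi~\rust{\kwIf $\mathit{se}$ \{$\mathit{be}_0$\} \kwElse \{$\mathit{be}_1$\}}~\omega}$. If $\mathit{se}$ is a variable, \zcref{eq:local-hole} together with \zcref{eq:local-pv} first produces $\mkTr{\sigma_1},\cont{\pi~\rust{\kwIf $v$ $\ldots$}~\omega}$. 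After that, \zcref{eq:local-if-else} produces $\mkTr{\sigma_1},\cont{\pi~\rust{\{$\mathit{be}_0$\}}~\omega}$ when $v=\trueSem$, and the analogous configuration with $\mathit{be}_1$ otherwise. Each of these local evaluations is a singleton set and adds no new states. Hence \zcref{eq:rule-sync-base} is the only applicable rule, the chop of the singleton trace leaves $\mkTr{\sigma_1}$ unchanged, and we obtain
\[\traces{\pi~\rust{\kwIf $\mathit{se}$ $\ldots$}~\omega,\sigma_1}=\{\chopTrSem{\mkTr{\sigma_1}}{\shi}\mid\shi\in\traces{\pi~\rust{\{$\mathit{be}_0$\}}~\omega,\sigma_1}\}.\]
This gives the suffix relation with the prefix $\mkTr{\sigma_1}$. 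Applying \zcref{lem:suffix-impl} with $s_2=s_1$ and $\sigma_2=\sigma_1$ then yields the diamond claim for the \lstinline|if| fragment.

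The main obstacle is justifying that this local derivation is really the prefix of every maximal run. Two things must be checked. First, the \lstinline|if| must be the next sub-expression selected by $\nextExprSym$ inside the context $\pi\ldots\omega$. Second, none of \zcref{eq:rule-sync-push,eq:rule-sync-pop} may interfere. For the first point, the argument follows the corresponding step in \zcref{thm:sound-copy}: it relies on the active-expression convention and on the $\nextExprSym$ clause for \lstinline|if| given in \zcref{sec:eval-sub-expr}. For the second, well-formedness excludes both rules, because no invocation event and no computation event is pending. Finally, if the modality is a box rather than a diamond, the same trace-set equality shows that the two fragments have identical final states. The two box semantics then coincide: both fragments are undefined together, or both yield the same $s'$.
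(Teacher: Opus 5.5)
Your proof is correct and takes essentially the same approach as the paper. Both arguments reduce the \lstinline|if| by one or two state-preserving steps of \zcref{eq:rule-sync-base} to the chosen block, obtain the suffix relation with prefix $\mkTr{\sigma_1}$, and combine the matching premiss with \zcref{lem:suffix-impl}; the only differences are refinements the paper leaves implicit, namely splitting on the boolean value of $\mathit{se}$ rather than on its syntactic form, and spelling out the treatment of $\Gamma,\Delta$, the box modality, and why \zcref{eq:rule-sync-push,eq:rule-sync-pop} cannot fire.
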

\begin{proof}
  We assume
  \begin{align}
    &\models\sequent{\applyUp{\updateContext}{(\mathit{se}\keyeq\keybooltrue)}}{\applyUp{\updateContext}{{\ptstub{\{$\mathit{be}_0$\}}}}}\label{eq:ifElseSplit-asm-1}\\
    &\models\sequent{\applyUp{\updateContext}{(\mathit{se}\keyeq\keyboolfalse)}}{\applyUp{\updateContext}{{\ptstub{\{$\mathit{be}_1$\}}}}}\label{eq:ifElseSplit-asm-2}
  \end{align}
  It is sufficient to show
  \begin{align}
    \validSOne\ptstub{\kwIf $\mathit{se}$ \{$\mathit{be}_0$\} \kwElse \{$\mathit{be}_1$\}}.\label{eq:ifElseSplit-goal-1}
  \end{align}
  We infer from the assumptions that
  \begin{align}
    &\validSOne\mathit{se}\keyeq\keybooltrue\rightarrow\ptstub{\{$\mathit{be}_0$\}}\label{eq:ifElseSplit-asm-3}\\
    &\validSOne\mathit{se}\keyeq\keyboolfalse\rightarrow\ptstub{\{$\mathit{be}_1$\}}\label{eq:ifElseSplit-asm-4}
  \end{align}
  
  We know that \rust{\kwIf $\mathit{se}$ \{$\mathit{be}_0$\} \kwElse
    \{$\mathit{be}_1$\}} is the active expression. By the construction
  of the local evaluation rules, we know that it is also the next
  (sub-)expression to be evaluated and added to the trace $\sh$. Thus,
  we can use the local evaluation. We split over the structure of
  $\mathit{se}$.
  \begin{case}[$\mathit{se}\in\Val$]\label{case:ifElseSplit-1}
    Then $\mathit{se}$ is either \lstinline|true| or
    \lstinline|false|. We cover the former case as both are analogous.
    $\mathit{se}=\mbox{\lstinline|true|}$: Then,
    $\valB{\sigma_1}{\rust{\rsTrue}}=\{\concatTr{\mkTr{\sigma_1}}{\cont{\trueSem}}\}$. From
    that we infer
    \begin{align*}
      &\valB{\sigma_1}{\rust{\kwIf~\rsTrue \{$\mathit{be}_0$\} \kwElse \{$\mathit{be}_1$\}}}\\
      &\quad=\{\concatTr{\tau}{\cont{\rust{\kwIf $e$ \{$\mathit{be}_0$\} \kwElse \{$\mathit{be}_1$\}}}}\mid\concatTr{\tau}{\cont{e}}\in\valB{\sigma_1}{\rust{\rsTrue}}\}\\
      &\quad=\{\concatTr{\mkTr{\sigma_1}}{\cont{\rust{\kwIf $\trueSem$ \{$\mathit{be}_0$\} \kwElse \{$\mathit{be}_1$\}}}}\}
    \end{align*}
    Further,
    \(\valB{\sigma_1}{\rust{\kwIf $\trueSem$
        \{$\mathit{be}_0$\} \kwElse
        \{$\mathit{be}_1$\}}}\nonumber
    =\{\concatTr{\mkTr{\sigma_1}}{\cont{\rust{\{$\mathit{be}_0$\}}}}\}
    \). Hence, the only possible composition is
    \begin{align*}
      \mkTr{\sigma_1},\cont{\rust{\kwIf~\rsTrue \{$\mathit{be}_0$\} \kwElse \{$\mathit{be}_1$\}}}\rightarrow\mkTr{\sigma_1},\cont{\rust{\kwIf $\trueSem$ \{$\mathit{be}_0$\} \kwElse \{$\mathit{be}_1$\}}}     \rightarrow\mkTr{\sigma_1},\cont{\rust{\{$\mathit{be}_0$\}}}
    \end{align*}
    which ends in the start configuration of
    $\traces{\pi~\rust{\{$\mathit{be}_0$\}}~\omega, \sigma_1}$, i.e.,
    we have the relation
    $\trSuff{\traces{\rust{\{$\mathit{be}_0$\}},\sigma_1}}{\traces{\rust{\kwIf
          $\mathit{se}$
          \{$\mathit{be}_0$\} \kwElse
          \{$\mathit{be}_1$\}},\sigma_1}}$. Thus, we
    simplify~\eqref{eq:ifElseSplit-goal-1} to
    \begin{align}
      \validSOne\ptstub{\{$\mathit{be}_0$\}}.\label{eq:ifElseSplit-goal-2}
    \end{align}

    Assumption~\eqref{eq:ifElseSplit-asm-3} can be simplified, by the
    definition of the implication, to
    \begin{align}
      \nvalidSOne\mathit{se}\keyeq\keybooltrue\quad\text{or}\quad\validSOne\ptstub{\{$\mathit{be}_0$\}}\label{eq:ifElseSplit-asm-5}
    \end{align}
    Since $\mathit{se}=\rust{\rsTrue}$, we know that
    $\validSOne\mathit{se}\keyeq\keybooltrue$, which means with
    \eqref{eq:ifElseSplit-asm-5} and \zcref{lem:suffix-impl} that the
    goal \eqref{eq:ifElseSplit-goal-2} holds.
  \end{case}
  \begin{case}[$\mathit{se}\in\Var$]\label{case:ifElseSplit-2}
    Let $v=\sigma_1(\mathit{se})$. Since $\mathit{se}$ is not
    symbolic, $v\in\Val$. Then
    $\valB{\sigma_1}{\mathit{se}}=\{\concatTr{\mkTr{\sigma_1}}{\cont{v}}\}$,
    because $\mathit{se}$ is of type \lstinline|bool|, which
    implements \lstinline|Copy|. We present the case for $v=\trueSem$,
    as the case $v=\falseSem$ is similar. Then the local evaluation
    yields
    \begin{align*}
      &\valB{\sigma_1}{\rust{\kwIf $\mathit{se}$ \{$\mathit{be}_0$\} \kwElse \{$\mathit{be}_1$\}}}\\
      &\quad=\{\concatTr{\tau}{\cont{\rust{\kwIf $e$ \{$\mathit{be}_0$\} \kwElse \{$\mathit{be}_1$\}}}}\mid\concatTr{\tau}{\cont{e}}\in\valB{\sigma_1}{\mathit{se}}\}\\
      &\quad=\{\concatTr{\mkTr{\sigma_1}}{\cont{\rust{\kwIf $\trueSem$ \{$\mathit{be}_0$\} \kwElse \{$\mathit{be}_1$\}}}}\}
    \end{align*}
    As in \zcref{case:ifElseSplit-1}, we know that the only possible
    composition is
    \begin{align*}
      \mkTr{\sigma_1},\cont{\rust{\kwIf $\mathit{se}$ \{$\mathit{be}_0$\} \kwElse \{$\mathit{be}_1$\}}}
      \rightarrow\mkTr{\sigma_1},\cont{\rust{\kwIf $\trueSem$ \{$\mathit{be}_0$\} \kwElse \{$\mathit{be}_1$\}}}
      \rightarrow\mkTr{\sigma_1},\cont{\rust{\{$\mathit{be}_0$\}}}\ .
    \end{align*}
    By the same argument as in \zcref{case:ifElseSplit-1},
    goal~\eqref{eq:ifElseSplit-goal-1} holds. \qedhere
  \end{case}
\end{proof}

\[
  \seqRuleC{move}{
    \sequent{}{\applyUp{\updateContext}{\applyUp{\pv{x}\upd\pv{y}\parupd\pv{y}\upd c}{\ptstub{()}}}}
  }{
    \sequent{}{\applyUp{\updateContext}{\ptstub{x = y}}}
  }{$\pv{y}:A$, $A$ is not \lstinline|Copy|, and $c:A$ a fresh constant}
\]

\begin{theorem}[Soundness of Rule
  $\ruleName{move}$]\label{thm:sound-move}
  Let $\Gamma,\Delta\in 2^{\DLFor}$ be sets of formulas,
  $\updateContext\in\Updates$ be an update, \pv{x} be a program
  variable whose type is not \lstinline|Copy| and not a mutable
  reference, \pv{y} be a program variable of the same type as~\pv{x},
  $c$ a constant of \pv{y}'s type that does not occur in
  $\Gamma,\Delta,\phi$, or $\updateContext$, $\pi~\rust{x = y}~\omega$
  be a program fragment, and $\phi\in\DLFor$ be a formula. If
  \[
    \models\sequent{}{\applyUp{\updateContext}{\applyUp{\pv{x}\upd\pv{y}\parupd\pv{y}\upd c}{\ptstub{()}}}}
  \]
  then the following holds:
  \[
    \models\sequent{}{\applyUp{\updateContext}{\ptstub{x = y}}}.
  \]
\end{theorem}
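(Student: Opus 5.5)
The proof follows the scheme of \zcref{thm:sound-copy}. Assume the premiss, fix $\Kstruct,\beta,s$, and set $s_1=\valS(\updateContext)(s)$. Let $s_2=\valSOne(\pv{x}\upd\pv{y}\parupd\pv{y}\upd c)(s_1)$, so that $\validSTwo\ptstub{()}$. The goal is $\validSOne\ptstub{x = y}$.

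Take LAGC states $\sigma_1,\sigma_2$ with $s_1\dlLagcSim\sigma_1$ and $s_2\dlLagcSim\sigma_2$. Since \rust{x = y} is the active expression, the next sub-expression (for context $\Val$) is \pv{y}. Let $v=\sigma_1(\pv{y})\in\Val$, which is concrete because all generated traces are shining. By \zcref{eq:local-hole} and \zcref{eq:local-pv}, the local evaluation of \rust{x = y} yields $\{\concatTr{\mkTr{\sigma_1}}{\cont{\rust{x = $v$}}}\}$. By \zcref{eq:local-assign-pv}, \rust{x = $v$} then yields $\{\concatTr{\consTr{\update{\sigma_1}{\pv{x}}{v}}{\mkTr{\sigma_1}}}{\cont{\rust{()}}}\}$. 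The only possible composition under \zcref{eq:rule-sync-base} is
\[
\mkTr{\sigma_1},\cont{\pi~\rust{x = y}~\omega}\to\mkTr{\sigma_1},\cont{\pi~\rust{x = $v$}~\omega}\to\consTr{\update{\sigma_1}{\pv{x}}{v}}{\mkTr{\sigma_1}},\cont{\pi~\rust{()}~\omega}.
\]
It follows that $\trSuff{\traces{\pi~\rust{()}~\omega,\update{\sigma_1}{\pv{x}}{v}}}{\traces{\pi~\rust{x = y}~\omega,\sigma_1}}$.

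\textbf{Main obstacle.} The semantics does not invalidate \pv{y} after the move: the LAGC state keeps $\pv{y}\mapsto v$. The calculus instead sets \pv{y} to the fresh constant $c$. So $\sigma':=\update{\sigma_1}{\pv{x}}{v}$ corresponds to $s_2'=\valSOne(\pv{x}\upd\pv{y})(s_1)$ rather than to $s_2$. The gap is closed through the freshness of $c$ and Rust's move semantics.
\begin{enumerate}
\item Since $c$ does not occur in $\Gamma,\Delta,\phi,\updateContext$, the premiss holds for every interpretation of $c$.
\item Choose $\interp(c)$ to be the domain element $d$ with $d\dlLagcSim_{\sigma_1}v$, i.e.\ the old value of \pv{y}. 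This is admissible because the premiss is valid in all Kripke structures, including the one interpreting $c$ as $d$.
\item In that structure $s_2=s_2'$, so $s_2\dlLagcSim\sigma'$ and $\validSTwo\ptstub{()}$.
\item \zcref{lem:suffix-impl} then gives $\validSOne\ptstub{x = y}$.
\end{enumerate}
It remains to check that reinterpreting $c$ does not change $s_1$ or the truth of $\Gamma,\Delta$. This holds because $c$ occurs in none of them, so by the coincidence lemma of first-order logic their evaluation is independent of $\interp(c)$.

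An alternative to the reinterpretation argument would be to use the fact that the borrow checker forbids any later read of the moved-from \pv{y} in $\pi\ldots\omega$. The final states would then agree on all variables that matter to $\phi$, up to \pv{y}. This needs an invariance lemma that the paper does not provide, so the reinterpretation route is preferable.

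The side condition excluding mutable references is needed because $\semRefM{p}$ correspondence is about places. With it, the value $v$ copied into \pv{x} corresponds to $s_1(\pv{y})$ directly by \zcref{def:corr-dom}, and no borrow renaming issues arise: any borrow identifiers in $v$ are preserved verbatim.
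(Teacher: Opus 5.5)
Your proposal is correct and follows the paper's proof. You use the same local evaluation of \rust{x = y} via the hole and variable rules, the same two-step composition ending in $\update{\sigma_1}{\pv{x}}{v}$, the same suffix relation and the same appeal to \zcref{lem:suffix-impl}; your reinterpretation of the fresh constant $c$ as the old value $s_1(\pv{y})$, justified by the coincidence lemma, is a rigorous version of the paper's remark that the premiss is stronger than needed because it gives validity of the post-state modality with no information about $\pv{y}$'s value.
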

\begin{proof}
  We assume
  \[
    \models\sequent{}{\applyUp{\updateContext}{\applyUp{\pv{x}\upd\pv{y}\parupd\pv{y}\upd
          c}{\ptstub{()}}}}
  \]

  It is sufficient to show
  \begin{align}
    \validSOne\ptstub{x = y}.\label{eq:move-goal-1}
  \end{align}
  We infer from the assumptions that
  \begin{align}
    \label{eq:move-asm-1}
    \validSOne\applyUp{\pv{x}\upd\pv{y}\parupd\pv{y}\upd
          c}{\ptstub{()}}
  \end{align}
  Let $s_2=\valS(\pv{x}\upd\pv{y}\parupd\pv{y}\upd c)(s_1)$. Then we
  simplify \zcref{eq:move-asm-1} to
  \begin{align}
    \label{eq:move-asm-2}
    \validSTwo\ptstub{()}
  \end{align}
  Let $v=\sigma_1(\pv{y})$. Following \zcref{sec:eval-sub-expr} and
  then \zcref{eq:local-pv} we have the local evaluation
  \begin{align}
    \label{eq:move-local}
    \valB{\sigma_1}{\rust{x = y}}=\{\concatTr{\tau}{\cont{\rust{x = $e$}}} \mid \concatTr{\tau}{\cont{e}}\in\valB{\sigma_1}{\pv{y}}\}
    = \{\concatTr{\mkTr{\sigma_1}}{\cont{\rust{x = $v$}}}\}
  \end{align}
  The local evaluation \zcref{eq:local-assign-pv} further yields
  $\valB{\sigma_1}{\rust{x =
      $v$}}=\{\concatTr{\consTr{\update{\sigma_1}{\pv{x}}{v}}{\mkTr{\sigma_1}}}{\cont{\rust{()}}}\}$.
  Therefore, the only possible composition is
  \begin{align*}
    \mkTr{\sigma_1},\cont{\pi~\rust{x = y}~\omega}
    \to\mkTr{\sigma_1},\cont{\pi~\rust{x = $v$}~\omega}
    \to\consTr{\update{\sigma_1}{\pv{x}}{v}}{\mkTr{\sigma_1}},\cont{\pi~\rust{()}~\omega}
  \end{align*}
  Which implies that
  $\trSuff{\traces{\pi~\rust{()}~\omega,
      \update{\sigma_1}{\pv{x}}{v}}}{\traces{\pi~\rust{x =
        y}~\omega,\sigma_1}}$. By \zcref{lem:suffix-impl}, we know
  that goal \zcref[noname]{eq:move-goal-1} holds if
  $\validSThree\ptstub{()}$ holds, with
  $s_3=\valSTwo(\pv{x}\upd
  v)(s_1)$. Assumption~\zcref[noname]{eq:move-asm-2} gives us a
  stronger formula than this: It gives us validity of $\ptstub{()}$
  with no information about $\pv{y}'s$ value. Thus, by
  \zcref[noname]{eq:move-asm-2}, we are done.
\end{proof}

\[
  \seqRule{assignAddU32}{
    \sequent{\applyUp{\updateContext}{\dlfunc{inU32}(\mathit{se}_1+\mathit{se}_2)}}{\applyUp{\updateContext}{\applyUp{\pv{x}\upd\mathit{se}_1+\mathit{se}_2}{\ptstub{()}}}}\\
    \sequent{\applyUp{\updateContext}{\neg\dlfunc{inU32}(\mathit{se}_1+\mathit{se}_2)}}{\applyUp{\updateContext\ptstub{\panic()}}}
  }{
    \sequent{}{\applyUp{\updateContext}{\ptstub{x = $\mathit{se}_1$ + $\mathit{se}_2$}}}
  }
\]

\begin{theorem}[Soundness of Rule
  $\ruleName{assignAddU32}$]\label{thm:sound-assignAddU32}
  Let $\Gamma,\Delta\in 2^{\DLFor}$ be sets of formulas,
  $\updateContext\in\Updates$ be an update, \pv{x} be a program
  variable of type \lstinline|u32|, $\mathit{se}_1,\mathit{se}_2$
  either literals or program variables of type \lstinline|u32|,
  $\pi~\rust{x = $\mathit{se}_1$ + $\mathit{se}_2$}~\omega$ be a
  program fragment, and $\phi\in\DLFor$ be a formula. If
  \begin{align*}
    &\models\sequent{\applyUp{\updateContext}{\dlfunc{inU32}(\mathit{se}_1+\mathit{se}_2)}}{\applyUp{\updateContext}{\applyUp{\pv{x}\upd\mathit{se}_1+\mathit{se}_2}{\ptstub{()}}}}\\
    &\models\sequent{\applyUp{\updateContext}{\neg\dlfunc{inU32}(\mathit{se}_1+\mathit{se}_2)}}{\applyUp{\updateContext\ptstub{\panic()}}}
  \end{align*}
  then the following holds:
  \[
    \models\sequent{}{\applyUp{\updateContext}{\ptstub{x = $\mathit{se}_1$ + $\mathit{se}_2$}}}.
  \]
\end{theorem}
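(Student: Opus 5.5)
We follow the schema of the proofs of \zcref{thm:sound-copy,thm:sound-ifElseSplit}. Assume both premisses, fix $\Kstruct,\beta,s$, let $s_1=\valS(\updateContext)(s)$, and let $\sigma_1$ be an \ac{LAGC} state with $s_1\dlLagcSim\sigma_1$. It suffices to show $\validSOne\ptstub{x = $\mathit{se}_1$ + $\mathit{se}_2$}$. From the premisses we obtain two facts:
\begin{itemize}
\item $\validSOne\dlfunc{inU32}(\mathit{se}_1+\mathit{se}_2)\rightarrow\validSTwo\ptstub{()}$, where $s_2=\valSOne(\pv{x}\upd\mathit{se}_1+\mathit{se}_2)(s_1)$;
\item $\validSOne\neg\dlfunc{inU32}(\mathit{se}_1+\mathit{se}_2)\rightarrow\validSOne\ptstub{\panic()}$.
\end{itemize}

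First we compute the local evaluation. Let $v_i$ be the value of $\mathit{se}_i$ in $\sigma_1$. If $\mathit{se}_i$ is a literal, $v_i=\mathit{se}_i$. If it is a variable, $v_i=\sigma_1(\mathit{se}_i)$, which is concrete because all generated traces are shining. By the definition of $\nextExprSym$ for assignments and addition, the right-hand side is evaluated first, operands left to right. Each variable operand costs one application of \zcref{eq:local-hole} with \zcref{eq:local-pv}, and each such step only repeats $\sigma_1$. So the unique composition reaches $\mkTr{\sigma_1},\cont{\pi~\rust{x = $v_1$ + $v_2$}~\omega}$. Correspondence gives $\valSOne(\mathit{se}_i)=v_i$ as integers, hence $\valSOne(\mathit{se}_1+\mathit{se}_2)=v_1\concrPlus v_2$.

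We then split on $\dlfunc{inU32}$, using the \lstinline|u32| analogue of \zcref{eq:local-add} with bound $2^{32}-1$.

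\emph{Case $v_1\concrPlus v_2<2^{32}$.} The addition steps to $\cont{\pi~\rust{x = $v_1\concrPlus v_2$}~\omega}$. \zcref[S]{eq:local-assign-pv} then yields the trace $\consTr{\update{\sigma_1}{\pv{x}}{v_1\concrPlus v_2}}{\mkTr{\sigma_1}}$ with continuation $\cont{\pi~\rust{()}~\omega}$. Each step has exactly one local trace, so all these compositions are forced. Since $\update{\sigma_1}{\pv{x}}{v_1\concrPlus v_2}\dlLagcSim s_2$, we get $\trSuff{\traces{\pi~\rust{()}~\omega,\update{\sigma_1}{\pv{x}}{v_1\concrPlus v_2}}}{\traces{\pi~\rust{x = $\mathit{se}_1$ + $\mathit{se}_2$}~\omega,\sigma_1}}$. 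Here $\validSOne\dlfunc{inU32}(\ldots)$ holds, so the first fact gives $\validSTwo\ptstub{()}$, and \zcref{lem:suffix-impl} concludes.

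\emph{Case $v_1\concrPlus v_2\ge 2^{32}$.} The addition steps to $\cont{\pi~\rust{x = \panic()}~\omega}$ with no state change. The panic occurs in the right-hand side, which is evaluated first, so the first active expression is \rust{\panic()}. The semantics is therefore driven by the panic alone, and $\traces{\pi~\rust{\panic()}~\omega,\sigma_1}$ is a suffix set of the original trace set with a trivial prefix. The second fact gives $\validSOne\ptstub{\panic()}$, and \zcref{lem:suffix-impl} concludes.

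\textbf{Main obstacle.} The overflow case needs care. \zcref[S]{lem:suffix-impl} compares $\pi~e_1~\omega$ with $\pi~e_2~\omega$, but here the panic sits inside the assignment. We therefore need a small auxiliary fact: once a \textit{panicEv} is emitted, no further step is well-formed (\zcref{ex:panic}), so the surrounding context \rust{x = $\exprHole$} is irrelevant. Concretely, we would show that the two trace sets coincide after their common prefix up to the panic event, by comparing the two derivations step by step.

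If $\phi$ is understood under the diamond, the panic premiss is vacuous: $\rho$ is undefined, so both sides of the comparison fail identically. Under the box, it holds trivially. Either way the suffix argument goes through.
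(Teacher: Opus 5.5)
Your proposal is correct and follows the paper's proof: you use the same case split on $\dlfunc{inU32}$, the same local-evaluation computations (the paper writes out only the case $\mathit{se}_1,\mathit{se}_2\in\Val$), and the same appeal to \zcref{lem:suffix-impl} in both cases. In the overflow case the paper merely asserts the suffix relation with $\traces{\pi~\rust{\panic()}~\omega,\sigma_1}$, so your step-by-step comparison fills a step it glosses over; strictly, though, the claim should be that no step can extend the trace after \textit{panicEv}, not that no step is well-formed, since trace-neutral steps such as \zcref{eq:local-block-expr-stmt} still satisfy $\mathit{wf}$ because $\chopTrSem{\sh}{\mkTr{\last(\sh)}}=\sh$, and this weaker fact is all you need for the two trace sets to coincide.
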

\begin{proof}
  We only present the case where
  $\mathit{se}_1,\mathit{se}_2\in\Val$. The other two cases are
  handled as in the above proofs.

  We need to show
  \begin{align}
    \validSOne\ptstub{x =
        $\mathit{se}_1$ + $\mathit{se}_2$}.\label{eq:assignAddU32-goal-1}
  \end{align}
  We infer from the assumptions that
  \begin{align}
    &\validSOne\dlfunc{inU32}(\mathit{se}_1+\mathit{se}_2) \rightarrow \applyUp{\pv{x}\upd\mathit{se}_1+\mathit{se}_2}{\ptstub{()}}\label{eq:assignAddU32-asm-2}\\
    &\validSOne\neg\dlfunc{inU32}(\mathit{se}_1+\mathit{se}_2)\rightarrow \ptstub{\panic()}\label{eq:assignAddU32-asm-3}
  \end{align}
  Let $\sigma_1$ be an \ac{LAGC} state such that
  $s_1\dlLagcSim\sigma_1$. Further, let
  $n=\valSOne(\mathit{se}_1)+\valSOne(\mathit{se}_2)$. By the
  semantics if $\dlfunc{inU32}$, we know that
  $\validSOne\dlfunc{inU32}(\mathit{se}_1+\mathit{se}_2)$ iff
  $0 \le n < 2^{32}$.
  \begin{case}[$\dlfunc{inU32}(n)$ holds]
    Then the first case of \zcref{eq:local-add} holds and the
    evaluation yields
    \[\valB{\sigma_1}{\rust{$\mathit{se}_1$ + $\mathit{se}_2$}}=\{\concatTr{\mkTr{\sigma_1}}{\cont{n}}\}\]
    which makes the local evaluation of the assignment
    \begin{align}
      \valB{\sigma_1}{\rust{x = $\mathit{se}_1$ + $\mathit{se}_2$}}&=\{\concatTr{\mkTr{\sigma_1}}{\cont{\rust{x = $n$}}}\}\\
      \valB{\sigma_1}{\rust{x = $n$}}&=\{\concatTr{\consTr{\update{\sigma_1}{\pv{x}}{n}}{\mkTr{\sigma_1}}}{\cont{\rust{()}}}\}
    \end{align}
    Hence, the only possible composition is
    \begin{align*}
      \mkTr{\sigma_1},\cont{\rust{x = $\mathit{se}_1$ + $\mathit{se}_2$}}
      \to\mkTr{\sigma_1},\cont{\rust{x = $n$}}
      \to\consTr{\update{\sigma_1}{\pv{x}}{n}}{\mkTr{\sigma_1}},\cont{\rust{()}}
    \end{align*}
    Since $s_2\dlLagcSim\sigma_2$,
    $\trSuff{\traces{\pi~\rust{()}~\omega,\sigma_2}}{\traces{\pi~\rust{x
          =
          $\mathit{se}_1$ + $\mathit{se}_2$}~\omega,\sigma_1}}$. Hence,
    with \zcref{lem:suffix-impl}, we simplify the goal to
    \begin{equation}
      \label{eq:assignAddU32-goal-2}
      \validSTwo\ptstub{()}
    \end{equation}
    As $\validSOne\dlfunc{inU32}(\mathit{se}_1+\mathit{se}_2)$, we can
    simplify assumption~\eqref{eq:assignAddU32-asm-2} to
    \eqref{eq:assignAddU32-goal-2}
  \end{case}
  \begin{case}[$\dlfunc{inU32}(n)$ does not hold]
    Then the second case of \zcref{eq:local-add} holds and the
    evaluation yields
    \[\valB{\sigma_1}{\rust{$\mathit{se}_1$ + $\mathit{se}_2$}}=\{\concatTr{\mkTr{\sigma_1}}{\cont{\rust{\panic()}}}\}\]
    And
    $\valB{\sigma_1}{\rust{\panic()}}=\{\concatTr{\consTr{\sigma_1}{\consTr{\panicEv}{\mkTr{\sigma_1}}}}{\cont{()}}\}$,
    which means that we have the suffix relation
    $\trSuff{\traces{\pi~\rust{\panic()}~\omega,\sigma_1}}{\traces{\pi~\rust{$\mathit{se}_1$
          + $\mathit{se}_2$}~\omega}}$. With
    $\validSOne\neg\dlfunc{inU32}(n)$, we simplify
    assumption~\eqref{eq:assignAddU32-asm-3} to
    \begin{equation}
      \label{eq:assignAddU32-asm-4}
      \validSOne\ptstub{\panic()}
    \end{equation}
    which, by \zcref{lem:suffix-impl}, gives us
    \eqref{eq:assignAddU32-goal-1}.
  \end{case}
\end{proof}

\[
  \seqRule{borrowShared}{
    \sequent{}{\applyUp{\updateContext}{\applyUp{\pv{x}\upd\refS{\pv{y}}}{\ptstub{()}}}}
  }{
    \sequent{}{\applyUp{\updateContext}{\ptstub{x = \&y}}}
  }
\]

\begin{theorem}[Soundness of Rule
  $\ruleName{borrowShared}$]\label{thm:sound-borrowShared}
  Let $\Gamma,\Delta\in 2^{\DLFor}$ be sets of formulas,
  $\updateContext\in\Updates$ be an update, \pv{x} be a program
  variable of some immutable reference type, \pv{y} a program
  variable, $\pi~\rust{x = \&y}~\omega$ be a program fragment, and
  $\phi\in\DLFor$ be a formula. If
  \begin{align*}
    \models\sequent{}{\applyUp{\updateContext}{\applyUp{\pv{x}\upd\refS{\pv{y}}}{\ptstub{()}}}}
  \end{align*}
  then the following holds:
  \[
    \models\sequent{}{\applyUp{\updateContext}{\ptstub{x = \&y}}}.
  \]
\end{theorem}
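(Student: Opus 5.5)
We follow the same pattern as the soundness proofs for $\ruleName{copy}$ and $\ruleName{move}$: track the unique composition for the active expression, exhibit a suffix relation, and invoke \zcref{lem:suffix-impl}. Assume the premiss and fix $\Kstruct,\beta,s$, with $s_1=\valS(\updateContext)(s)$. It then suffices to show $\validSOne\ptstub{x = \&y}$. From the premiss we obtain $\validSTwo\ptstub{()}$, where $s_2=\valSOne(\pv{x}\upd\refS{\pv{y}})(s_1)$. Let $\sigma_1$ be an \ac{LAGC} state with $s_1\dlLagcSim\sigma_1$.

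Next we compute the local evaluation. Since $\pv{y}$ is already a place, $\nextExpr{\rust{x = \&y}}{\Val}$ yields the sub-expression \rust{\&y}. By \zcref{eq:local-hole} and \zcref{eq:local-borrow-sh}, this gives $\{\concatTr{\mkTr{\sigma_1}}{\cont{\rust{x = $\sRefVal{\pv{y}}{b}$}}}\mid b~\text{fresh}\}$. Then \zcref{eq:local-assign-pv} yields $\consTr{\update{\sigma_1}{\pv{x}}{\sRefVal{\pv{y}}{b}}}{\mkTr{\sigma_1}}$ with continuation \rust{()}. Unlike the earlier proofs, the composition is unique only up to the choice of $b$. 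For each $b$ we get a derivation
\[
  \mkTr{\sigma_1},\cont{\pi~\rust{x = \&y}~\omega}\to\mkTr{\sigma_1},\cont{\pi~\rust{x = $\sRefVal{\pv{y}}{b}$}~\omega}\to\consTr{\sigma_2^b}{\mkTr{\sigma_1}},\cont{\pi~\rust{()}~\omega}
\]
with $\sigma_2^b=\update{\sigma_1}{\pv{x}}{\sRefVal{\pv{y}}{b}}$.

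The step we expect to be the main obstacle is relating $\sigma_2^b$ to $s_2$ and absorbing the choice of $b$. By \zcref{def:corr-dom}, $\semRefS{a}\dlLagcSim_{\sigma}\sRefVal{\pv{y}}{b}$ holds for \emph{every} $b$ whenever $a\dlLagcSim_\sigma\sigma(\pv{y})$. Since $s_2(\pv{x})=\valSOne(\refS{\pv{y}})=\semRefS{s_1(\pv{y})}$ and $s_1(\pv{y})\dlLagcSim\sigma_1(\pv{y})$, we get $s_2\dlLagcSim\sigma_2^b$ for all $b$. There is one subtlety: correspondence of the other variables is taken w.r.t.\ the new state. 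It is unaffected because only $\pv{x}$ changed, provided the type of $\pv{y}$ does not make it a reference into $\pv{x}$; this is guaranteed by well-typedness. Hence $\traces{\pi~\rust{x = \&y}~\omega,\sigma_1}$ is the union over $b$ of $\{\chopTrSem{\consTr{\sigma_2^b}{\mkTr{\sigma_1}}}{\shi}\mid\shi\in\traces{\pi~\rust{()}~\omega,\sigma_2^b}\}$. \zcref{def:tr-suffix} requires a single prefix, so we would either fix one $b$ and argue via \zcref{lem:determ} that all members of the union agree modulo renaming, or apply \zcref{lem:suffix-impl} per $b$.

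To conclude, we argue that \zcref{lem:suffix-impl} only uses final states and \zcref{lem:determ}. All traces in the union therefore end in states that are equal modulo renaming relative to $\dom(\sigma_1)$, differing at most in borrow identifiers, and these correspond to the same \RustyDL state by \zcref{def:corr-state}. It thus suffices to apply \zcref{lem:suffix-impl} to one fixed $b$, using $\validSTwo\ptstub{()}$ with $s_2\dlLagcSim\sigma_2^b$, which yields $\validSOne\ptstub{x = \&y}$.
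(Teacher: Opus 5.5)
Your proposal is correct and follows the paper's proof. Both reduce the goal to the premiss evaluated in $s_2$, compute the local evaluation of \rust{x = \&y} via \zcref{eq:local-hole,eq:local-borrow-sh,eq:local-assign-pv}, exhibit the two-step composition ending in $\update{\sigma_1}{\pv{x}}{\sRefVal{\pv{y}}{b}}$, and conclude with the suffix relation and \zcref{lem:suffix-impl}. You are more explicit than the paper, which simply identifies this state with $\sigma_2$ and asserts the suffix relation without addressing the dependence on $b$ or why $s_2\dlLagcSim\update{\sigma_1}{\pv{x}}{\sRefVal{\pv{y}}{b}}$ holds; your one soft spot is the appeal to well-typedness, since correspondence of the \emph{other} variables can break whenever some variable's value contains a shared reference to $\pv{x}$ (e.g.\ a borrow of $\pv{x}$ that is no longer live), which typing does not exclude and which the paper's proof also silently assumes away.
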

\begin{proof}
  We assume
  \begin{align}
    \models\sequent{}{\applyUp{\updateContext}{\applyUp{\pv{x}\upd\refS{\pv{y}}}{\ptstub{()}}}}\label{eq:borrowShared-asm-1}
  \end{align}
  We only need to show
  \begin{align}
    \validSOne\ptstub{x = \&y}.\label{eq:borrowShared-goal-1}
  \end{align}
  We infer from the assumptions that
  \begin{align}
    \validSOne\applyUp{\pv{x}\upd\refS{\pv{y}}}{\ptstub{()}}\label{eq:borrowShared-asm-2}
  \end{align}
  Let $s_2=\valSOne(\pv{x}\upd\refS{\pv{y}})(s_1)$. Then we can transform \eqref{eq:borrowShared-asm-2} to
  \begin{align}
    \validSTwo\ptstub{()}\label{eq:borrowShared-asm-3}
  \end{align}

  We know that \rust{x = \&y} is the active expression. By the
  construction of the local evaluation rules, we know that it is also
  a (sub-)expression to be evaluated and added to the
  trace~$\sh$. Thus we can use the local evaluation.

  Let $v=\sigma_1(\pv{y})$. We know
  $\valB{\sigma_1}{\pv{y}}=\{\concatTr{\mkTr{\sigma_1}}{\cont{v}}\}$. Further,
  \begin{align*}
    \valB{\sigma_1}{\rust{\&y}}&=\{\concatTr{\mkTr{\sigma_1}}{\cont{\sRefVal{\pv{y}}{b}}}\mid b\in\BId\}\\
    \valB{\sigma_1}{\rust{x = \&y}}&=\{\concatTr{\tau}{\cont{\rust{x = $e$}}}\mid\concatTr{\tau}{\cont{e}}\in\valB{\sigma_1}{\rust{\&y}}\}\\
                          &=\{\concatTr{\mkTr{\sigma_1}}{\cont{\rust{x = $\sRefVal{\pv{y}}{b}$}}} \mid b\in\BId\}\\
    \valB{\sigma_1}{\rust{x = $\sRefVal{\pv{y}}{b}$}}&=\{\concatTr{\consTr{\update{\sigma_1}{\pv{x}}{\sRefVal{\pv{y}}{b}}}{\mkTr{\sigma_1}}}{\cont{\rust{()}}} \mid b\in\BId\}
  \end{align*}
  Hence, we know that the only possible compositions starting in state
  $\sigma_1$ have the shape
  \begin{align*}
    \mkTr{\sigma_1},\cont{\pi~\rust{x = \&y}~\omega}
    \to\mkTr{\sigma_1},\cont{\rust{x = $\sRefVal{\pv{y}}{b_1}$}}
    \to\consTr{\update{\sigma_1}{\pv{x}}{\sRefVal{\pv{y}}{b_1}}}{\mkTr{\sigma_1}},\cont{()}
  \end{align*}
  for $b_1\in\BId$ concrete, fresh borrow identifiers.

  We know that
  $\update{\sigma_1}{\pv{x}}{\sRefVal{\pv{y}}{b}}=\sigma_2$, so we
  know that the suffix relation
  $\trSuff{\traces{\pi~\rust{()}~\omega,
      \sigma_2}}{\traces{\pi~\rust{x =
        \&y}~\omega,\sigma_1}}$. Therefore, by \zcref{lem:suffix-impl}
  and assumption~\eqref{eq:borrowShared-asm-3}, we have
  \eqref{eq:borrowShared-goal-1}.
\end{proof}

\[
  \seqRuleC{derefShared}{
    \sequent{}{\applyUp{\updateContext}{\applyUp{\pv{x}\upd\derefS(\pv{y})}{\ptstub{()}}}}
  }{ \sequent{}{\applyUp{\updateContext}{\ptstub{x = *y}}}
  }{$\pv{x}:A,\pv{y}:\RefS{A}$, $A$ is \lstinline|Copy|}
\]

\begin{theorem}[Soundness of Rule
  $\ruleName{derefShared}$]\label{thm:sound-derefShared}
  Let $\Gamma,\Delta\in 2^{\DLFor}$ be sets of formulas,
  $\updateContext\in\Updates$ be an update, \pv{x} be a program
  variable of type \lstinline|T|, \pv{y} a program variable of type
  \lstinline|&T|, $\pi~\rust{x = *y}~\omega$ be a program fragment,
  and $\phi\in\DLFor$ be a formula. If
  \begin{align*}
    \models\sequent{}{\applyUp{\updateContext}{\applyUp{\pv{x}\upd\derefS(\pv{y})}{\ptstub{()}}}}
  \end{align*}
  then the following holds:
  \[
    \models\sequent{}{\applyUp{\updateContext}{\ptstub{x = *y}}}.
  \]
\end{theorem}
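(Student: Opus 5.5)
I will follow the same scheme as the proofs of \zcref{thm:sound-copy} and \zcref{thm:sound-borrowShared}. Assume $\models\sequent{}{\applyUp{\updateContext}{\applyUp{\pv{x}\upd\derefS(\pv{y})}{\ptstub{()}}}}$ and fix $\Kstruct,s,\beta$. Let $s_1=\valS(\updateContext)(s)$; it then suffices to show $\validSOne\ptstub{x = *y}$. Unfolding the update in the assumption gives $\validSTwo\ptstub{()}$ with $s_2=\valSOne(\pv{x}\upd\derefS(\pv{y}))(s_1)$. Next I fix an \ac{LAGC} state $\sigma_1$ with $s_1\dlLagcSim\sigma_1$. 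Because $\sigma_1$ is concrete and \pv{y} has type \lstinline|&T|, $\sigma_1(\pv{y})=\sRefVal{p}{b}$ for some place $p$ and borrow identifier $b$.

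Next I compute the local evaluations, using the rule of \zcref{sec:eval-sub-expr} twice. Assignments evaluate their right-hand side first, and \rust{*$e$} evaluates its operand to a value. This gives $\valB{\sigma_1}{\rust{x = *y}}=\{\concatTr{\mkTr{\sigma_1}}{\cont{\rust{x = *$\sRefVal{p}{b}$}}}\}$ via \zcref{eq:local-pv}. Since the operand is a pointer type and $A$ is \lstinline|Copy|, \zcref{eq:local-deref} applies rather than \zcref{eq:local-deref-trait}, and the continuation becomes \rust{x = $\sigma_1(p)$}. Finally \zcref{eq:local-assign-pv} yields $\consTr{\update{\sigma_1}{\pv{x}}{\sigma_1(p)}}{\mkTr{\sigma_1}}$ with continuation \rust{()}. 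No events or fresh choices occur, so \zcref{eq:rule-sync-base} admits only the composition
\[\mkTr{\sigma_1},\cont{\pi~\rust{x = *y}~\omega}\to\mkTr{\sigma_1},\cont{\pi~\rust{x = *$\sRefVal{p}{b}$}~\omega}\to\mkTr{\sigma_1},\cont{\pi~\rust{x = $\sigma_1(p)$}~\omega}\to\consTr{\sigma_2}{\mkTr{\sigma_1}},\cont{\pi~\rust{()}~\omega},\]
where $\sigma_2:=\update{\sigma_1}{\pv{x}}{\sigma_1(p)}$. Hence $\trSuff{\traces{\pi~\rust{()}~\omega,\sigma_2}}{\traces{\pi~\rust{x = *y}~\omega,\sigma_1}}$.

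The main obstacle is showing $s_2\dlLagcSim\sigma_2$, which is needed to apply \zcref{lem:suffix-impl} with the assumption $\validSTwo\ptstub{()}$. By \zcref{def:corr-dom}, shared references correspond as $\semRefS{a}\dlLagcSim_{\sigma_1}\sRefVal{\pv{z}}{b}$ with $a\dlLagcSim_{\sigma_1}\sigma_1(\pv{z})$. With the interpretation $\interp(\derefS)(\semRefS{d})=d$, this gives $\valSOne(\derefS(\pv{y}))=a\dlLagcSim_{\sigma_1}\sigma_1(p)$. So the two states agree at \pv{x}, and they agree elsewhere because neither side changes any other variable. One subtlety is that \zcref{def:corr-dom} states the shared-reference case only for variable places \pv{z}, not for general places $p$. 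I would first handle $p=\pv{z}$ and then extend the argument by induction on the structure of places, using the analogous clauses for arrays and tuples. A second subtlety is that correspondence is relative to $\sigma$, so I must check that moving from $\sigma_1$ to $\sigma_2$ preserves the correspondence of the other variables' values. This could fail only if some reference in them pointed to \pv{x}. Since \pv{x} is assigned while being shared-borrowed is ruled out by the borrow checker, which the paper takes for granted, I would appeal to that here. With $s_2\dlLagcSim\sigma_2$ established, \zcref{lem:suffix-impl} gives the goal, and since the sequent context is arbitrary, the rule is sound.
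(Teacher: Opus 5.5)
Your proposal is correct and follows the paper's proof essentially step for step. It uses the same reduction of the premise to the post-update state $s_2$, the same local evaluations (\zcref{eq:local-hole,eq:local-pv,eq:local-deref,eq:local-assign-pv}) giving the unique four-configuration composition, the same correspondence argument through $\interp(\derefS)(\semRefS{d})=d$ and the shared-reference clause of \zcref{def:corr-dom}, and the same closing appeal to \zcref{lem:suffix-impl}.

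Your detour through general places $p$ is unnecessary. That clause is the only one relating an element $\semRefS{a}$ to an \ac{LAGC} value, so $s_1\dlLagcSim\sigma_1$ already forces $\sigma_1(\pv{y})=\sRefVal{\pv{z}}{b}$ for a program variable $\pv{z}$, which is exactly what the paper assumes. The relativization issue you raise is one the paper simply passes over by asserting $s_2\dlLagcSim\sigma_2$ without comment. Your borrow-checker appeal there covers only \emph{live} shared borrows of $\pv{x}$, not stale references still sitting in the state, so it is no more watertight than the paper at this point.
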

\begin{proof}
  We assume
  \begin{align}
    \models\sequent{}{\applyUp{\updateContext}{\applyUp{\pv{x}\upd\derefS(\pv{y})}{\ptstub{()}}}}\label{eq:derefShared-asm-1}
  \end{align}
  It is sufficient to show
  \begin{align}
    \validSOne\ptstub{x = \&y}.\label{eq:derefShared-goal-1}
  \end{align}
  We infer from the assumptions that
  \begin{align}
    \validSOne\applyUp{\pv{x}\upd\derefS(\pv{y})}{\ptstub{()}}\label{eq:derefShared-asm-2}
  \end{align}
  Let $s_2=\valSOne(\pv{x}\upd\derefS(\pv{y}))(s_1)$. Then we can transform \eqref{eq:derefShared-asm-2} to
  \begin{align}
    \validSTwo\ptstub{()}\label{eq:derefShared-asm-3}
  \end{align}

  We know that \rust{x = *y} is the active expression. By the
  construction of the local evaluation rules, we know that it is also
  a (sub-)expression to be evaluated and added to the trace
  $\sh$. Thus we can use the local evaluation.

  Let $\sigma_1$ be an LAGC state such that $s_1\sim\sigma_1$, let
  $\sRefVal{\pv{z}}{b}=\sigma_1(\pv{y})$ with some program
  variable~\pv{z} and some borrow Id~$b$, and let
  $v:=\sigma_1(\pv{z})$. We know that
  $\valB{\sigma_1}{\pv{y}}=\{\concatTr{\mkTr{\sigma_1}}{\cont{\sRefVal{\pv{z}}{b}}}\}$.
  Further,
  \begin{align*}
    \valB{\sigma_1}{\rust{*y}}&=\{\concatTr{\tau}{\cont{\rust{*$e$}}}\mid\concatTr{\tau}{\cont{e}}\in\valB{\sigma_1}{\pv{y}}\}\\
                              &=\{\concatTr{\mkTr{\sigma_1}}{\cont{\rust{*$\sRefVal{\pv{z}}{b}$}}}\}\\
    \valB{\sigma_1}{\rust{*$\sRefVal{\pv{z}}{b}$}}&=\{\concatTr{\mkTr{\sigma_1}}{\cont{v}}\}\\
    \valB{\sigma_1}{\rust{x = *y}}&=\{\concatTr{\tau}{\cont{\rust{x = $e$}}}\mid\concatTr{\tau}{\cont{e}}\in\valB{\sigma_1}{\rust{*y}}\}\\
                              &=\{\concatTr{\mkTr{\sigma_1}}{\cont{\rust{x = *$\sRefVal{\pv{z}}{b}$}}}\}\\
    \valB{\sigma_1}{\rust{x = *$\sRefVal{\pv{z}}{b}$}}&=\{\concatTr{\tau}{\cont{\rust{x = $e$}}}\mid\concatTr{\tau}{\cont{e}}\in\valB{\sigma_1}{\rust{*$\sRefVal{\pv{z}}{b}$}}\}\\
                              &=\{\concatTr{\mkTr{\sigma_1}}{\cont{\rust{x = $v$}}}\}\\
    \valB{\sigma_1}{\rust{x = $v$}}&=\{\concatTr{\consTr{\update{\sigma_1}{\pv{x}}{v}}{\mkTr{\sigma_1}}}{\cont{\rust{()}}}\}
  \end{align*}

  Hence, we know that the only possible composition starting in state
  $\sigma_1$ is
  \begin{align*}
    &\mkTr{\sigma_1},\cont{\pi~\rust{x = *y}~\omega}
    \to\mkTr{\sigma_1},\cont{\pi~\rust{x = *$\sRefVal{\pv{z}}{b}$}~\omega}\\
    \to&\mkTr{\sigma_1},\cont{\pi~\rust{x = $v$}~\omega}
    \to\consTr{\update{\sigma_1}{\pv{x}}{v}}{\mkTr{\sigma_1}},\cont{\pi~\rust{()}~\omega}\ .
  \end{align*}
  Let $\semRefS{w}=s_1(\pv{y})$. We know that
  $\valSOne(\derefS(\pv{y}))=w$, which implies $s_2(\pv{x})=w$. Since
  $s_1\sim\sigma_1$, $\semRefS{w}$ must correspond to
  $\sRefVal{\pv{z}}{b}$, which implies $w\sim_{\sigma_1} v$. Let
  $\sigma_2:=\update{\sigma_1}{\pv{x}}{v}$. Then, $s_2\sim\sigma_2$.

  We know that
  $\trSuff{\traces{\pi~\rust{()}~\omega,\sigma_2}}{\traces{\pi~\rust{x
        = *y}~\omega,\sigma_1}}$. Therefore, by
  \zcref{lem:suffix-impl} and assumption~\eqref{eq:derefShared-asm-3},
  we have \eqref{eq:derefShared-goal-1}.
\end{proof}

\[
  \rwRule{derefOfSharedRef}{\derefS(\refS{t})}{t}
\]

\begin{theorem}[Soundness and Completeness of Rule
  $\ruleName{derefOfSharedRef}$]\label{thm:sound-derefOfSharedRef}
  Let $\Kstruct=(\domain,\interp,\States,\rho)$ be a Kripke structure,
  $s\in\States$ a state, and $\beta$ a variable
  assignment. Then \[\valS(\derefS(\refS{t}))=\valS(t)\]
\end{theorem}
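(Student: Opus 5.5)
The statement is a purely first-order fact about term evaluation; no program modality is involved. So I will not use the \ac{LAGC} traces at all. I will only unfold the evaluation function $\valS$ of \zcref{def:dl-sem} together with the fixed interpretations of $\refSSym$ and $\derefS$ listed after \zcref{def:corr-state}.

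First, let $t$ be a term of some sort $A$, so that $\refS{t}$ has sort $\RefS{A}$. Since $\valS$ is defined compositionally exactly as in first-order logic, evaluating the function application gives
\[
  \valS(\derefS(\refS{t}))=\interp(\derefS)\big(\valS(\refS{t})\big)
  \quad\text{and}\quad
  \valS(\refS{t})=\interp(\refSSym)\big(\valS(t)\big).
\]
Second, let $d:=\valS(t)\in\domain^A$. The fixed interpretation $\interp(\refSSym)(d)=\semRefS{d}$ gives $\valS(\refS{t})=\semRefS{d}$. This element lies in $\domain^{\RefS{A}}$ by \zcref{def:dom}, so $\interp(\derefS)$ is defined on it. Third, the fixed interpretation $\interp(\derefS)(\semRefS{d})=d$ yields $\valS(\derefS(\refS{t}))=d=\valS(t)$.

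Soundness and completeness of the rewrite rule $\ruleName{derefOfSharedRef}$ then follow in the standard way. The two terms have equal value in every Kripke structure, state and assignment $\beta$. By induction on formula and term structure, replacing one by the other inside any context preserves the valuation. Under an update $\applyUp{u}{\cdot}$ the same argument applies in the state $\valS(u)(s)$, and under a modality it applies in the successor state. Hence the premiss and the conclusion of any application of the rule are equivalent.

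The only step needing care is well-definedness. I have to check that $\derefS$ is interpreted as a total inverse of $\refSSym$ on $\domain^{\RefS{A}}$, that is, that $\semRefS{\cdot}$ is injective. This is immediate because $\domain^{\RefS{A}}$ is defined as $\{\semRefS{a}\mid a\in\domain^A\}$ with $\semRefS{\cdot}$ acting as a constructor. I expect no real obstacle beyond making this explicit.
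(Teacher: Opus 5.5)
Your proposal is correct and follows the paper's proof. Both are the same direct unfolding of $\valS$ through $\interp(\derefS)$ and $\interp(\refSSym)$, ending with $\interp(\derefS)(\semRefS{\valS(t)})=\valS(t)$; your remarks on extending the equality to update and modality contexts and on the injectivity of $\semRefS{\cdot}$ are additions the paper leaves implicit.
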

\begin{proof}
  \begin{align*}
    &\valS(\derefS(\refS{t}))\\
    =&\interp(\derefS)(\valS(\refS{t}))\\
    =&\interp(\derefS)(\interp(\refSSym)(\valS(t)))\\
    =&\interp(\derefS)(\semRefS{\valS(t)})\\
    =&\valS(t)
  \end{align*}
\end{proof}

\[
  \seqRule{borrowMut}%
  {\sequent{}{\applyUp{\updateContext}{\applyUp{\pv{x}\upd\refM{\place{\pv{y}}}}{\ptstub{()}}}}}%
  {\sequent{}{\applyUp{\updateContext}{\ptstub{x = \&\kwMut y}}}}
\]

\begin{theorem}[Soundness of Rule
  $\ruleName{borrowMut}$]\label{thm:sound-borrowMut}
  Let $\Gamma,\Delta\in 2^{\DLFor}$ be sets of formulas,
  $\updateContext\in\Updates$ be an update, \pv{x} be a program
  variable of some mutable reference type, \pv{y} a program variable,
  $\pi~\rust{x = \&\kwMut y}~\omega$ be a program fragment, and
  $\phi\in\DLFor$ be a formula. If
  \begin{align*}
    \models\sequent{}{\applyUp{\updateContext}{\applyUp{\pv{x}\upd\refM{\place{\pv{y}}}}{\ptstub{()}}}}
  \end{align*}
  then the following holds:
  \[
    \models\sequent{}{\applyUp{\updateContext}{\ptstub{x = \&\kwMut y}}}.
  \]
\end{theorem}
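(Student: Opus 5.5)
The argument follows the pattern of the proof of \zcref{thm:sound-borrowShared}, with one extra step to show that the correspondence of states survives the mutable reference. We assume the premiss, fix $\Kstruct,s,\beta$ and set $s_1=\valS(\updateContext)(s)$. It then suffices to show $\validSOne\ptstub{x = \&\kwMut y}$. From the premiss we obtain $\validSTwo\ptstub{()}$ with $s_2=\valSOne(\pv{x}\upd\refM{\place{\pv{y}}})(s_1)$. By \zcref{def:upd-sem}, $s_2$ agrees with $s_1$ except that $s_2(\pv{x})=\semRefM{\semPlace{\pv{y}}}$. We then fix an \ac{LAGC} state $\sigma_1$ with $s_1\dlLagcSim\sigma_1$.

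Next we compute the local evaluation. Since \rust{x = \&\kwMut y} is the active expression, $\nextExprSym$ in assignment context selects \rust{\&\kwMut y}. This is because $\pv{y}\in\Place$, so the reference case of $\nextExprSym$ does not descend further. \zcref[S]{eq:local-hole,eq:local-borrow-mut} give $\valB{\sigma_1}{\rust{x = \&\kwMut y}}=\{\concatTr{\mkTr{\sigma_1}}{\cont{\rust{x = $\mRefVal{\pv{y}}{b}$}}}\mid b\in\BId, b~\text{fresh}\}$. \zcref[S]{eq:local-assign-pv} then yields $\{\concatTr{\consTr{\update{\sigma_1}{\pv{x}}{\mRefVal{\pv{y}}{b}}}{\mkTr{\sigma_1}}}{\cont{\rust{()}}}\}$. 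No events are emitted, so $\mathit{wf}$ imposes nothing, and concretization is trivial because the states are concrete. Consequently every composition via \zcref{eq:rule-sync-base} has the shape $\mkTr{\sigma_1},\cont{\pi~\rust{x = \&\kwMut y}~\omega}\to\mkTr{\sigma_1},\cont{\pi~\rust{x = $\mRefVal{\pv{y}}{b}$}~\omega}\to\consTr{\sigma_2^b}{\mkTr{\sigma_1}},\cont{\pi~\rust{()}~\omega}$ with $\sigma_2^b:=\update{\sigma_1}{\pv{x}}{\mRefVal{\pv{y}}{b}}$.

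The main obstacle is that the borrow identifier $b$ is not determined: different traces use different $b$. The suffix relation of \zcref{def:tr-suffix}, however, needs a single prefix and a single start state. The plan is to fix one $b$ and apply \zcref{lem:suffix-impl} to the subset of traces that pick this $b$. Because \zcref{lem:determ} makes all traces equal modulo renaming relative to $\dom(\sigma_1)$, and the lemma in any case only uses final states, this choice is harmless. Here the renaming $\iota$ of borrow identifiers swaps $b$ with any other choice, and since $\pv{x}$ is in the domain its value agrees up to this renaming. Concretely, $\traces{\pi~\rust{x = \&\kwMut y}~\omega,\sigma_1}$ restricted to that $b$ equals $\{\chopTrSem{(\consTr{\sigma_2^b}{\mkTr{\sigma_1}})}{\sh'}\mid\sh'\in\traces{\pi~\rust{()}~\omega,\sigma_2^b}\}$. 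Thus $\trSuff{\traces{\pi~\rust{()}~\omega,\sigma_2^b}}{\cdot}$ holds.

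It remains to check that $s_2\dlLagcSim\sigma_2^b$. For variables other than $\pv{x}$ this is inherited from $s_1\dlLagcSim\sigma_1$, with one caveat: the correspondence $\dlLagcSim_\sigma$ depends on $\sigma$ only through shared-reference targets, and those are unchanged except possibly at $\pv{x}$. A shared reference to $\pv{x}$ cannot coexist with a mutable write to $\pv{x}$ under borrow checking, which we take for granted as the paper does. For $\pv{x}$ itself, clause~4 of \zcref{def:corr-dom} gives $\semRefM{\semPlace{\pv{y}}}\dlLagcSim_{\sigma_2^b}\mRefVal{\pv{y}}{b}$ for every $b$, since the places correspond. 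With $\validSTwo\ptstub{()}$, \zcref{lem:suffix-impl} then yields $\validSOne\ptstub{x = \&\kwMut y}$, which is the claim.
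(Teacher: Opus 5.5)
Your proposal is correct and follows the paper's proof: the same local evaluation via \zcref{eq:local-hole}, \zcref{eq:local-borrow-mut} and \zcref{eq:local-assign-pv}, the same two-step composition, and the same appeal to the suffix relation and \zcref{lem:suffix-impl}; your explicit handling of the borrow identifier $b$, and your check of $s_2\dlLagcSim\sigma_2^b$ via clause~4 of \zcref{def:corr-dom}, spell out what the paper only asserts (the paper even writes $\sRefVal{\pv{y}}{b}$ instead of $\mRefVal{\pv{y}}{b}$ for $\sigma_2$). One small correction to your side remark: fixing $b$ is harmless because \zcref{def:rho} needs only one non-panicking trace and $\dlLagcSim$ ignores borrow identifiers, not because of \zcref{lem:determ}, since \zcref{def:eq-mod-ren-state} applies the renaming only outside $V=\dom(\sigma_1)$ and requires exact equality on $\pv{x}\in V$, so it does not identify final states that differ only in the borrow identifier stored in $\pv{x}$.
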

\begin{proof}
  It suffices to show
  \begin{align}
    \validSOne\ptstub{x = \&\kwMut y}.\label{eq:borrowMut-goal-1}
  \end{align}
  We infer from the assumptions that
  \begin{align}
    \validSOne\applyUp{\pv{x}\upd\refM{\place{\pv{y}}}}{\ptstub{()}}\label{eq:borrowMut-asm-2}
  \end{align}
  Let $s_2=\valSOne(\pv{x}\upd\refM{\place{\pv{y}}})(s_1)$. Then we
  can transform \eqref{eq:borrowMut-asm-2} to
  \begin{align}
    \validSTwo\ptstub{()}\label{eq:borrowMut-asm-3}
  \end{align}

  We know that \rust{x = \&\kwMut y} is the active expression. By the
  construction of the local evaluation rules, we know that it is also
  a (sub-)expression to be evaluated and added to the trace
  $\sh$. Thus we can use the local evaluation.

  Let $\sigma_1$ be an \ac{LAGC} state such that
  $s_1\sim\sigma_1$. Let $v=\sigma_1(\pv{y})$. We know
  $\valB{\sigma_1}{\pv{y}}=\{\concatTr{\mkTr{\sigma_1}}{\cont{v}}\}$. Further,
  \begin{align*}
    \valB{\sigma_1}{\rust{\&\kwMut y}}&=\{\concatTr{\mkTr{\sigma_1}}{\cont{\mRefVal{\pv{y}}{b}}}\mid b\in\BId\}\\
    \valB{\sigma_1}{\rust{x = \&\kwMut y}}&=\{\concatTr{\tau}{\cont{\rust{x = $e$}}}\mid\concatTr{\tau}{\cont{e}}\in\valB{\sigma_1}{\rust{\&\kwMut y}}\}\\
                                   &=\{\concatTr{\mkTr{\sigma_1}}{\cont{\rust{x = $\mRefVal{\pv{y}}{b}$}}} \mid b\in\BId\}\\
    \valB{\sigma_1}{\rust{x = $\mRefVal{\pv{y}}{b}$}}&=\{\concatTr{\consTr{\update{\sigma_1}{\pv{x}}{\mRefVal{\pv{y}}{b}}}{\mkTr{\sigma_1}}}{\cont{\rust{()}}}\}
  \end{align*}
  Hence, we know that the only possible composition starting in state
  $\sigma_1$ is
  \begin{align*}
    \mkTr{\sigma_1},\cont{\pi~\rust{x = \&\kwMut y}~\omega}
    \to\mkTr{\sigma_1},\cont{\rust{x = $\mRefVal{\pv{y}}{b}$}}
    \to\consTr{\update{\sigma_1}{\pv{x}}{\mRefVal{\pv{y}}{b}}}{\mkTr{\sigma_1}},\cont{()}\ .
  \end{align*}
  Let
  $\sigma_2:=\update{\sigma_1}{\pv{x}}{\sRefVal{\pv{y}}{b}}$. Then,
  $\trSuff{\traces{\pi~\rust{()}~\omega,\sigma_2}}{\traces{\pi~\rust{x
        = \&y}~\omega},\sigma_1}$.  We know
  $s_2\sim\sigma_2$. Therefore, by \zcref{lem:suffix-impl} and
  assumption~\eqref{eq:borrowMut-asm-3}, we have
  \eqref{eq:borrowMut-goal-1}.
\end{proof}

\[
  \seqRule{mutate}{
    \sequent{}{\applyUp{\updateContext}{\applyUp{\pv{x}\dupd\mathit{se}}{\ptstub{()}}}}
  }{
    \sequent{}{\applyUp{\updateContext}{\ptstub{*x = $\mathit{se}$}}}
  }
\]

\begin{theorem}[Soundness of Rule
  $\ruleName{mutate}$]\label{thm:sound-mutate}
  Let $\Gamma,\Delta\in 2^{\DLFor}$ be sets of formulas,
  $\updateContext\in\Updates$ be an update, \pv{x} be a program
  variable of type \lstinline|&mut T|, $\mathit{se}$ a literal or a
  program variable of type \lstinline|T|,
  $\pi~\rust{*x = $\mathit{se}$}~\omega$ be a program fragment, and
  $\phi\in\DLFor$ be a formula. If
  \begin{align*}
    \models\sequent{}{\applyUp{\updateContext}{\applyUp{\pv{x}\dupd\mathit{se}}{\ptstub{()}}}}
  \end{align*}
  then the following holds:
  \[
    \models\sequent{}{\applyUp{\updateContext}{\ptstub{*x = $\mathit{se}$}}}.
  \]
\end{theorem}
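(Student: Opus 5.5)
I will follow the template of the soundness proofs for $\ruleName{copy}$ and $\ruleName{borrowMut}$. First I reduce the goal to a single state: it suffices to show $\validSOne\ptstub{*x = $\mathit{se}$}$ for all $\Kstruct,s,\beta$ with $s_1=\valS(\updateContext)(s)$. From the premiss I obtain $\validSTwo\ptstub{()}$ with $s_2=\valSOne(\pv{x}\dupd\mathit{se})(s_1)$. Next I fix an \ac{LAGC} state $\sigma_1$ with $s_1\dlLagcSim\sigma_1$. Since $\pv{x}$ has type \lstinline|&mut T| and the state is concrete, $\sigma_1(\pv{x})=\mRefVal{p}{b}$ for some place $p$ and borrow identifier $b$. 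Following the earlier proofs, I then case split on whether $\mathit{se}$ is a value or a variable; in the variable case I put $v:=\sigma_1(\mathit{se})$, and otherwise $v:=\mathit{se}$.

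Then I compute the local evaluation. The assignment \rust{*x = $\mathit{se}$} is the active expression. By $\nextExprSym$ for assignments, the right-hand side is evaluated first, giving $\mkTr{\sigma_1},\cont{\rust{*x = $v$}}$ via \zcref{eq:local-hole} and \zcref{eq:local-pv}. The left-hand side is then evaluated in the context $\AssignVal$. Here $\nextExprSym$ must reduce the place expression \rust{*x} to a dereference of a reference value, so that the continuation becomes \rust{*$\mRefVal{p}{b}$ = $v$}. By \zcref{eq:local-assign-deref}, this yields the single trace $\concatTr{\consTr{\update{\sigma_1}{p}{v}}{\mkTr{\sigma_1}}}{\cont{\rust{()}}}$. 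Each of these sets is a singleton, so \zcref{eq:rule-sync-base} admits exactly one composition. Setting $\sigma_2:=\update{\sigma_1}{p}{v}$, this gives $\trSuff{\traces{\pi~\rust{()}~\omega,\sigma_2}}{\traces{\pi~\rust{*x = $\mathit{se}$}~\omega,\sigma_1}}$, and \zcref{lem:suffix-impl} closes the goal once $s_2\dlLagcSim\sigma_2$ is established.

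The main obstacle is this correspondence $s_2\dlLagcSim\sigma_2$. By \zcref{def:corr-dom}, $s_1(\pv{x})=\semRefM{p_D}$ with $p_D$ corresponding to $p$. By \zcref{def:upd-sem}, the mutable update changes exactly those $\pv{b}$ with $\valS(\pv{x})=\valS(\refM{\place{\pv{b}}})$, i.e.\ the variable whose place is $p_D$, setting it to $\valSOne(\mathit{se})\dlLagcSim_{\sigma_1} v$. I would first handle the case $p=\pv{y}$, where this matches $\update{\sigma_1}{\pv{y}}{v}$ directly. For nested places $p[n]$ and $p.\mathit{identifier}$, I would argue by induction on the place using the place-update sugar. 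This induction is where the update semantics as stated (which only compares whole-variable places) may need to be read generously.

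A second subtlety is that $\dlLagcSim_\sigma$ depends on $\sigma$ through shared references. I would argue that the mutation of $p$ does not break correspondence of other variables because safe \Rust forbids live shared references to $p$ while $\pv{x}$ is a live mutable borrow. This aliasing assumption is exactly what the paper takes for granted by not modeling the borrow checker, so I would state it as an explicit assumption.
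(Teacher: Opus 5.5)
Your proposal is essentially the paper's proof: reduce to $s_1$, transport the premiss to $s_2$, compute the unique composition, obtain the suffix relation, apply \zcref{lem:suffix-impl}, and check $s_2\dlLagcSim\sigma_2$. The paper differs only in that it fixes the borrowed place to be a variable ($\sigma_1(\pv{x})=\mRefVal{\pv{y}}{b}$), treats only the literal case, rewrites \rust{*x} directly to the place $\pv{y}$ and applies \zcref{eq:local-assign-pv} instead of going through \zcref{eq:local-assign-deref} as you do (arguably more faithful to the stated rules), and never mentions your aliasing subtlety.

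Do not attempt the induction over nested places. Under \zcref{def:upd-sem} as written, $\pv{x}\dupd\mathit{se}$ changes no variable when $p$ is not a plain variable, so $s_2=s_1$ while $\sigma_2$ has $p$ overwritten, and $s_2\dlLagcSim\sigma_2$ then fails in general; that case cannot be proved from the given definitions, so restrict to $p\in\Var$ as the paper implicitly does. Also, since non-lexical lifetimes allow dead shared references into $p$ to remain in $\sigma_1$, state your aliasing assumption as a condition on $\sigma_1$ rather than on liveness.
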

\begin{proof}
  It suffices to show
  \begin{align}
    \validSOne\ptstub{*x = $\mathit{se}$}.\label{eq:mutate-goal-1}
  \end{align}
  We infer from the assumptions that
  \begin{align}
    \validSOne\applyUp{\pv{x}\dupd\mathit{se}}{\ptstub{()}}\label{eq:mutate-asm-2}
  \end{align}
  Let $s_2=\valSOne(\pv{x}\dupd\mathit{se})(s_1)$. Then we can
  transform \eqref{eq:mutate-asm-2} to
  \begin{align}
    \validSTwo\ptstub{()}\label{eq:mutate-asm-3}
  \end{align}

  We know that \rust{*x = $\mathit{se}$} is the active expression. By
  the construction of the local evaluation rules, we know that it is
  also a (sub-)expression to be evaluated and added to the trace
  $\sh$. Thus we can use the local evaluation.

  We only show the case where $\mathit{se}$ is a value. The case where
  it is a variable is analogous.

  Let $\sigma_1$ be an LAGC state such that $s_1\sim\sigma_1$. Let $b$
  and $\pv{y}$ be such that $\sigma_1(\pv{x})=\mRefVal{\pv{y}}{b}$. We
  know
  \begin{align*}
    \valB{\sigma_1}{\rust{*x}} &= \{\concatTr{\mkTr{\sigma_1}}{\cont{\pv{y}}}\}\\
    \valB{\sigma_1}{\rust{*x = $\mathit{se}$}}&=\{\concatTr{\tau}{\cont{\rust{$\mathit{pe}$ = $\mathit{se}$}}}\mid\concatTr{\tau}{\cont{\mathit{pe}}}\in\valB{\sigma_1}{\rust{*x}}\}=\{\concatTr{\mkTr{\sigma_1}}{\cont{\rust{y = $\mathit{se}$}}}\}\\
    \valB{\sigma_1}{\rust{y = $\mathit{se}$}}&=\{\concatTr{\consTr{\update{\sigma_1}{\pv{y}}{\mathit{se}}}{\mkTr{\sigma_1}}}{\cont{\rust{()}}}\}
  \end{align*}
  Therefore, the only possible composition is
  \begin{align*}
    \mkTr{\sigma_1},\cont{\rust{*x = $\mathit{se}$}}
    \to\mkTr{\sigma_1},\cont{\rust{y = $\mathit{se}$}}
    \to\consTr{\update{\sigma_1}{\pv{y}}{\mathit{se}}}{\mkTr{\sigma_1}},\cont{\rust{()}}
  \end{align*}
  Let $\sigma_2:=\update{\sigma_1}{\pv{y}}{\mathit{se}}$. Since
  $\valSOne(\pv{x}\dupd\mathit{se})(s_1)(\pv{y})=\mathit{se}$ and is
  $s_1(\pv{z})$ for all program variables other than $\pv{y}$, we have
  $s_2\sim\sigma_2$.

  Then,
  $\trSuff{\traces{\pi~\rust{()}~\omega,
      \sigma_2}}{\traces{\pi~\rust{*x =
        $\mathit{se}$}~\omega, \sigma_1}}$, which with
  \zcref{lem:suffix-impl} and
  assumption~\zcref[noname]{eq:mutate-asm-3}, gives us
  goal~\zcref[noname]{eq:mutate-goal-1}.
\end{proof}

\subsection{Loop Rules}
\label{sec:loop-rules}

The remaining rules deal with loops using \emph{loop
  scopes}.~\autocite{SW2017}. The intuitive semantics of a loop scope
\rust{\loopScope{x}{$e$}} with index variable~\pv{x} and body~$e$, is
that $e$ is evaluated. If the evaluation encounters an expression that
exits the loop, such as a \lstinline{break} or \lstinline|return|,
\pv{x} is set to \lstinline|true|; the \lstinline|break|'s value
becomes the loop scope's value. Otherwise, \pv{x} is set to
\lstinline|false|. Thereby, it models exactly one iteration of the
loop and \pv{x} models whether we exit the loop.

In terms of \ac{LAGC} semantics, we define local rules that evaluate
the body~$e$ as for normal blocks, except when the next sub-expression
is a relevant expression, i.e., \lstinline|break|,
\lstinline|continue|, or \lstinline|return|.

When encountering an expression like \rust{\kwBreak $e$} in a loop
scope, we have:
\evalrule{ls-break}{
  \valD{\rust{\loopScope{x}{\{\kwBreak $v$; $\many{s}$ $e$\}}}}&=&\{\concatTr{\mkTr{\sigma}}{\cont{\rust{\{x = \rsTrue; $e$\}}}}\}
}

The local evaluation for \lstinline{continue} sets the value of \pv{x}
to \lstinline|false| and emits a \emph{continue event.}
\evalrule{ls-continue}{
  \valD{\rust{\loopScope{x}{\{\kwContinue; $\many{s}$ $e_2$\}}}}&=&\{\concatTr{\consTr{\update{\sigma}{\pv{x}}{\falseSem}}{\consTr{\continueEv}{\mkTr{\sigma}}}}{\cont{\rust{()}}}\}
}

We must also adapt the global composition rules to stop execution
after a \emph{continue event,} as the evaluation should not proceed
with the rest of the program after a \lstinline|continue| in a loop
scope. \zcref[S]{eq:rule-compLoopScope} is the changed version of
\zcref{eq:rule-sync-base}. All others are analogous.

\gcondrule{compLoopScope}{
  \sigma=\last(\sh) \quad \lastEv(\sh)\ne\panicEv \quad \lastEv(\sh)\ne\continueEv\\
  \concatTr{\tau}{\cont{e'}}\in\valD{e} \quad  \rho\text{ concretizes }\tau
  \quad \wf{\chopTrSem{\sh}{\rho(\tau)}} }{ \sh,
  \cont{e}\rightarrow\chopTrSem{\sh}{\rho(\tau)},\cont{e'} }

With these adapted rules, we can proceed with proving the relevant
\RustyDL calculus rules sound.

\[
  \seqRuleW{loopScopeInvBox}{
    \sequent{}{\applyUp{\updateContext}{\inv}}\\
    \sequent{}{\upl\updateContext\upr\upl\mathcal{V}\upr\Bigl(\inv \rightarrow \dlboxf{$\pi~$\loopScope{x}{$e$; \kwContinue}$~\omega$}{\bigl((\pv{x}\keyeq\keybooltrue \rightarrow \phi) \wedge (\pv{x}\keyeq\keyboolfalse \rightarrow \inv)\bigr)}\Bigr)}
  }{
    \sequent{}{\applyUp{\updateContext}{\dlboxf{$\pi~$\kwLoop $e~\omega$}{\phi}}}
  }
\]
where $\mathcal{V}$ is an \emph{anonymizing} update that overwrites
every variable written to or mutated in $e$,
see~\autocite{ABBHSU2016,DH2026TR} for details.

\begin{theorem}[Soundness of Rule
  $\ruleName{loopScopeInvBox}$]\label{thm:sound-loopScopeInvBox}
  Let $\Gamma,\Delta\in 2^{\DLFor}$ be sets of formulas,
  $\updateContext\in\Updates$ be an update,
  $\pv{x}: \mbox{\lstinline|bool|}$ be a fresh program variable,
  $\pi~\rust{\kwLoop $e$}~\omega$ be a program fragment, and
  $\phi,\inv\in\DLFor$ be formulas, and
  \[ \mathcal{V}=\pv{b}_1\upd b_1\parupd \cdots \parupd\pv{b}_n\upd
    b_n\parupd \pv{d}_1\dupd d_1\parupd\cdots\parupd\pv{d}_m\dupd d_m
  \]
  the \emph{anonymizing update,} where $\pv{b}_1,\ldots,\pv{b}_n$ are
  all variables written to in $e$, $\pv{d}_1,\ldots,\pv{d}_m$ are all
  mutable references mutated in $e$, and
  $b_1,\ldots,b_n,d_1,\ldots,d_m$ are fresh constants of appropriate
  sorts. If
  \begin{align*}
    &\models\sequent{}{\applyUp{\updateContext}{\inv}}\\
    &\models\begin{aligned}
      &\Gamma \turnstyle \upl\updateContext\upr\upl\mathcal{V}\upr\Bigl(\inv \rightarrow \dlboxf{$\pi~$\loopScope{x}{$e$; \kwContinue}$~\omega$}{\bigl(\\
      &\quad(\pv{x}\keyeq\keybooltrue \rightarrow \phi) \wedge (\pv{x}\keyeq\keyboolfalse \rightarrow \inv)\bigr)}\Bigr),\Delta
    \end{aligned}
  \end{align*}
  then the following holds:
  \[
    \models\sequent{}{\applyUp{\updateContext}{\dlboxf{$\pi~$\kwLoop $e~\omega$}{\phi}}}.
  \]
\end{theorem}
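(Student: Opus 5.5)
The plan is to follow the standard soundness argument for loop-invariant rules and replace the usual induction over the loop's operational semantics with an induction over the \ac{LAGC} trace composition. We fix $\Kstruct$, $\beta$ and $s$, set $s_1=\valS(\updateContext)(s)$, and take an \ac{LAGC} state $\sigma_1$ with $s_1\dlLagcSim\sigma_1$. It then suffices to show $\validSOne\dlboxf{$\pi~$\kwLoop $e~\omega$}{\phi}$. By \zcref{def:dl-sem}, if $\rho(\pi~\rust{\kwLoop $e$}~\omega)(s_1)$ is undefined there is nothing to prove, because of nontermination or a panic. So we may assume, by \zcref{lem:determ}, that all traces in $\traces{\pi~\rust{\kwLoop $e$}~\omega,\sigma_1}$ are finite, share a final state, and contain no panic.

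First I will decompose such a finite trace along the loop frame. By \zcref{eq:local-loop} the loop becomes $\loopFrame{l}{e}$. Each pass of the body then ends in one of three ways:
\begin{itemize}
\item \zcref{eq:local-loop-step} or \zcref{eq:local-continue}, which reinsert $\getloop{'l}$;
\item \zcref{eq:local-break}, which leaves the frame;
\item a \lstinline|return| (or a break to an outer label), which leaves the frame and lets $\omega$ or the stack handle the rest.
\end{itemize}
This splits the trace into a prefix $\sh_0\cdot\sh_1\cdots\sh_k$ of full iterations, each ending in a state $\sigma^{(j)}$ at a \enquote{continue point}, followed by one final exiting iteration and then the evaluation of $\omega$.

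The key step is a simulation lemma between one frame iteration and the loop scope \rust{\loopScope{x}{$e$; \kwContinue}}. Starting from a state $\sigma^{(j)}$, the loop-scope program under \zcref{eq:local-ls-break}, \zcref{eq:local-ls-continue} and \zcref{eq:rule-compLoopScope} produces a trace that agrees with the $j$-th loop iteration modulo renaming (\zcref{def:eq-mod-ren-trace}) up to the assignment to the fresh \pv{x}. The two possible endings behave as follows:
\begin{itemize}
\item It stops with a continue event and $\pv{x}=\falseSem$ exactly when the original iteration re-enters the loop.
\item It sets $\pv{x}=\trueSem$ and continues into $\omega$ exactly when the original exits.
\end{itemize}
In the exiting case we obtain a suffix relation $\trSuff{\cdot}{\cdot}$ as in \zcref{def:tr-suffix}.

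With that lemma, I induct on $j$ to show that $s^{(j)}\models\inv$ for each continue-point state $s^{(j)}\dlLagcSim\sigma^{(j)}$. The base case is the first premiss. For the step, $s^{(j)}$ differs from $s_1$ only on variables written or mutated in $e$. Hence $s^{(j)}=\valS(\mathcal{V})(s_1)$ for a suitable $\beta'$ that differs from $\beta$ only on the fresh constants $b_i,d_i$, and freshness makes $\updateContext,\inv,\phi,\Gamma,\Delta$ insensitive to that change. The second premiss then yields the box over the loop scope at $s^{(j)}$. Because the loop-scope trace ends with $\pv{x}=\falseSem$, we get $\inv$ at $s^{(j+1)}$; since \pv{x} is fresh, dropping it does not affect $\inv$. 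For the final iteration, the same instantiation gives $\pv{x}=\trueSem$, so $\phi$ holds in the final state of the loop-scope execution. By the suffix relation and a box analogue of \zcref{lem:suffix-impl}, proved identically, $\phi$ then holds for the original loop.

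I expect the main obstacle to be the simulation lemma. The loop frame carries the scoped block $\rib{\many{\pv{x}}}{\cdots}$ and renamed fresh locals from \zcref{eq:local-let}, and it handles labelled breaks and continues targeting outer loops and \lstinline|return| events inside nested calls. The loop scope, by contrast, uses \pv{x} and the continue event. I would first prove the lemma for unlabelled, innermost-target control flow and only then extend it by a case split on the label in \zcref{eq:local-break}, with the exact bookkeeping done by equality modulo renaming relative to $\dom(\sigma^{(j)})$. A secondary difficulty is justifying that the anonymizing update covers every reachable $s^{(j)}$. This needs a syntactic frame lemma: evaluating $e$ changes only the variables in $\mathcal{V}$, plus block-local variables that are dropped by \zcref{eq:local-block-expr}. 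I would establish it by induction over the local evaluation rules.
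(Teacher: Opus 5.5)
Your proposal is correct and takes the paper's approach. You induct over the iterations to get $\inv$ at each re-entry state from the anonymized second premiss, then use the exiting iteration (where $\pv{x}=\trueSem$) and the suffix lemma to obtain $\phi$. Your simulation lemma, frame lemma, box variant of \zcref{lem:suffix-impl}, and explicit handling of $\Gamma,\Delta$ and of the fresh constants make rigorous what the paper only asserts, for instance its bare remark that $s_2$ and $s^{(i+1)}$ agree outside the anonymized locations.

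One small fix is needed. $b_i,d_i$ are constant symbols interpreted by $\interp$, not logical variables. So the re-instantiation that gives $s^{(j)}=\valS(\mathcal{V})(s_1)$ must vary the Kripke structure on these symbols rather than $\beta$. This is equally legitimate, because the premisses hold in every Kripke structure.
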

\begin{proof}
  We assume
  \begin{align}
    &\models\sequent{}{\applyUp{\updateContext}{\inv}}\label{eq:loop-asm-1}\\
    &\models\begin{aligned}
      &\Gamma \turnstyle \upl\updateContext\upr\upl\mathcal{V}\upr\Bigl(\inv \rightarrow \dlboxf{$\pi~$\loopScope{x}{$e$; \kwContinue}$~\omega$}{\bigl(\\
      &\quad(\pv{x}\keyeq\keybooltrue \rightarrow \phi) \wedge (\pv{x}\keyeq\keyboolfalse \rightarrow \inv)\bigr)}\Bigr)
    \end{aligned}\label{eq:loop-asm-2}
  \end{align}  

  Let $\Kstruct=(\domain,\interp,\States,\rho)$ be a random but
  arbitrary Kripke structure, $s\in\States$ a state, and $\beta$ a
  variable assignment. We want to show that
  \[
    \valid\sequent{}{\applyUp{\updateContext}{\dlboxf{$\pi~$\kwLoop $e~\omega$}{\phi}}}.
  \]
  Let $s_1=\valS(\updateContext)(s)$. Then we need to show
  \begin{align}
    \validSOne\dlboxf{$\pi~$\kwLoop $e~\omega$}{\phi}.\label{eq:loop-goal-1}
  \end{align}
  We infer from the assumptions that
  \begin{align}
    &\validSOne\inv\label{eq:loop-asm-3}\\
    &\label{eq:loop-asm-4}\validSOne\begin{aligned}
      &\upl\mathcal{V}\upr\Bigl(\inv \rightarrow \dlboxf{$\pi~$\loopScope{x}{$e$; \kwContinue}$~\omega$}{\bigl(\\
      &\quad(\pv{x}\keyeq\keybooltrue \rightarrow \phi) \wedge (\pv{x}\keyeq\keyboolfalse \rightarrow \inv)\bigr)}\Bigr)
    \end{aligned}
  \end{align}
  Let $s_2=\valSOne(\mathcal{V})(s_1)$. Then we can transform
  \eqref{eq:loop-asm-4} to
  \begin{equation}
    \begin{aligned}
      \label{eq:loop-asm-5}&\validSTwo \\
                           &\Bigl(\inv \rightarrow \dlboxf{$\pi~$\loopScope{x}{$e$; \kwContinue}$~\omega$}{\bigl((\pv{x}\keyeq\keybooltrue \rightarrow \phi) \wedge (\pv{x}\keyeq\keyboolfalse \rightarrow \inv)\bigr)}\Bigr)
    \end{aligned}
  \end{equation}
  If the loop does not terminate when started in $s_1$, then the above
  holds trivially. Hence, we assume that the loop is exited after $k$
  iterations. We proceed with an inductive proof to show that the
  formula $\inv$ holds before all iterations. Let $s^{(i)}$ denote the
  state before the $i$-th iteration.
  \begin{case}[Before first iteration ($i=1$)]
    Then, $s^{(i)}=s^{(1)}=s_1$. By assumption
    \zcref[noname]{eq:loop-asm-3}, we have $\validSOne\inv$, hence,
    the hypothesis holds.
  \end{case}
  \begin{case}[Induction step: Before iteration $i+1$]
    We assume $(\Kstruct,s^{(i)},\beta)\models\inv$. We execute the
    loop body~$e$ once in state $s^{(i)}$. Either iteration $i$ exits
    the loop or we continue with the loop's execution. In the former
    case, there is no iteration $i+1$ and we are done. In the latter
    case, the semantics of the loop scope and the continue expression
    (\zcref{eq:local-ls-continue}) mean that
    $(\Kstruct,s^{(i+1)},\beta)\models\inv$ holds iff the following
    holds:
    \[
      (\Kstruct,s^{(i)},\beta)\models\Bigl(\inv \rightarrow \dlboxf{$\pi$\loopScope{x}{$e$; \kwContinue}$\omega$}{\bigl((\pv{x}\keyeq\keybooltrue \rightarrow \phi) \wedge (\pv{x}\keyeq\keyboolfalse \rightarrow \inv)\bigr)}\Bigr)
    \]
    Any changes the body $e$ can affect are to
    $\pv{b}_1,\ldots,\pv{b}_n$ and the places borrowed by
    $\pv{d}_1,\ldots,\pv{d}_m$. Because these are overwritten by the
    anonymizing update~$\mathcal{V}$, and for all other program
    variables we have $s_2(\pv{y})=s^{(i+1)}(\pv{y})$, the above
    follows from assumption~\zcref[noname]{eq:loop-asm-5}. Hence,
    $(\Kstruct,s^{(i+1)},\beta)\models\inv$.
  \end{case}
  Now we know that iteration $k$ exits the loop and that
  $(\Kstruct,s^{(k)},\beta)\models\inv$. From this, it follows that
  \[
    (\Kstruct,s^{(k)},\beta)\models\Bigl(\inv \rightarrow \dlboxf{$\pi~$\loopScope{x}{$e$; \kwContinue}$~\omega$}{\bigl((\pv{x}\keyeq\keybooltrue \rightarrow \phi) \wedge (\pv{x}\keyeq\keyboolfalse \rightarrow \inv)\bigr)}\Bigr)
  \]
  Because we exit the loop in iteration~$k$, we simplify this to
  \[
    (\Kstruct,s^{(k)},\beta)\models\bigl(\inv \rightarrow
    \dlboxf{$\pi\omega$}{(\pv{x}\keyeq\keybooltrue \rightarrow
      \phi)}\bigr)
  \] which we further simplify to
  \begin{equation}
    \label{eq:loop-asm-6}
    (\Kstruct,s^{(k)},\beta)\models\dlboxf{$\pi\omega$}{\phi}
  \end{equation}
  Because we assume the loop's termination, we have
  \[\trSuff{\traces{\pi\omega,\sigma^{(k)}}}{\traces{\pi~\rust{\kwLoop
          $e$}~\omega,\sigma_1}}.\]
  By \zcref{eq:loop-asm-6,lem:suffix-impl} we have
  goal~\zcref[noname]{eq:loop-goal-1}.
\end{proof}

\[
  \seqRule{breakValue}{
    \sequent{}{\applyUp{\updateContext}{\ptstub{\{x = \rsTrue; $\mathit{se}$\}}}}
  }{\sequent{}{\applyUp{\updateContext}{\dia{\pi~\rust{\loopScope{x}{\{\kwBreak $\mathit{se}$; $\many{s}$ $e$\}}}~\omega}{\phi}}}}
\]

\begin{theorem}[Soundness of Rule
  $\ruleName{breakValue}$]\label{thm:sound-breakValue}
  Let $\Gamma,\Delta\in 2^{\DLFor}$ be sets of formulas,
  $\updateContext\in\Updates$ be an update,
  $\pv{x}: \mbox{\lstinline|bool|}$ be a program variable,
  $\pi~\rust{\loopScope{x}{\kwBreak $\mathit{se}$
      $\many{s}$ $e$}}~\omega$ be a program fragment, $\many{s}$ a
  series of statements, $e$ an expression, and $\phi\in\DLFor$ a
  formula. If
  \begin{equation*}
    \models\sequent{}{\applyUp{\updateContext}{\ptstub{\{x = \rsTrue; $\mathit{se}$\}}}}
  \end{equation*}
  then the following holds:
  \[
    \models\sequent{}{\applyUp{\updateContext}{\dia{\pi~\rust{\loopScope{x}{\{\kwBreak $\mathit{se}$; $\many{s}$ $e$\}}}~\omega}{\phi}}}.
  \]
\end{theorem}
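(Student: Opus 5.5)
We follow the same schema as the soundness proofs for $\ruleName{copy}$ and $\ruleName{ifElseSplit}$. Fix an arbitrary Kripke structure $\Kstruct$, state $s$ and assignment $\beta$. Let $s_1=\valS(\updateContext)(s)$ and let $\sigma_1$ be an \ac{LAGC} state with $s_1\dlLagcSim\sigma_1$. From the premiss we obtain
\[
  \validSOne\dia{\pi~\rust{\{x = \rsTrue; $\mathit{se}$\}}~\omega}{\phi},
\]
after which it suffices to show
\[
  \validSOne\dia{\pi~\rust{\loopScope{x}{\{\kwBreak $\mathit{se}$; $\many{s}$ $e$\}}}~\omega}{\phi}.
\]
The plan is to compute the local evaluation of the loop scope and establish the suffix relation
\[
  \trSuff{\traces{\pi~\rust{\{x = \rsTrue; $\mathit{se}$\}}~\omega,\sigma_1}}{\traces{\pi~\rust{\loopScope{x}{\{\kwBreak $\mathit{se}$; $\many{s}$ $e$\}}}~\omega,\sigma_1}}.
\]
Once we have it, \zcref{lem:suffix-impl} yields the goal directly, with $s_2=s_1$ and $\sigma_2=\sigma_1$.

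We split on the shape of $\mathit{se}$, as in the earlier proofs. If $\mathit{se}=v\in\Val$, then the loop scope is the active expression and \zcref{eq:local-ls-break} applies directly:
\[
  \valB{\sigma_1}{\rust{\loopScope{x}{\{\kwBreak $v$; $\many{s}$ $e$\}}}}=\{\concatTr{\mkTr{\sigma_1}}{\cont{\rust{\{x = \rsTrue; $\cdots$\}}}}\}.
\]
This is a singleton, and its trace adds no new state. \zcref[S]{eq:rule-compLoopScope} is applicable because the last event is neither a panic nor a continue event. Hence the only composition step is $\mkTr{\sigma_1},\cont{\pi~\rust{\loopScope{x}{$\ldots$}}~\omega}\to\mkTr{\sigma_1},\cont{\pi~\rust{\{x = \rsTrue; $v$\}}~\omega}$. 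That step gives the suffix relation with prefix $\mkTr{\sigma_1}$, since chopping with a singleton trace is the identity.

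If $\mathit{se}\in\Var$, we first use \zcref{eq:local-hole} together with \zcref{eq:local-pv} to rewrite $\mathit{se}$ to $v=\sigma_1(\mathit{se})$ without changing the state. We then observe that the right-hand side \rust{\{x = \rsTrue; $\mathit{se}$\}} evaluates $\mathit{se}$ to the same $v$ in the same state, because the assignment to the fresh variable \pv{x} does not affect $\mathit{se}$. The two trace sets therefore share the prefix $\mkTr{\sigma_1}$ and reach the same continuation $\rust{\{x = \rsTrue; $v$\}}$.

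The main obstacle is a mismatch in the rule's continuation. \zcref[S]{eq:local-ls-break} as printed continues with \rust{\{x = \rsTrue; $e$\}}, where $e$ is the trailing block expression, whereas the calculus rule continues with $\mathit{se}$, the break's value. The text says ``the break's value becomes the loop scope's value'', so I would read the local rule as intended to produce \rust{\{x = \rsTrue; $v$\}}, state this reading explicitly, and argue that it agrees with the loop-scope semantics.

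A second subtlety concerns the variable case. There $\nextExprSym$ must descend into the break's operand inside the loop scope, which requires defining $\nextExprSym$ for loop scopes analogously to blocks. I would state this as an assumption consistent with \zcref{sec:add-def-next}. Finally, determinism via \zcref{lem:determ} ensures that the single derivation determines the whole trace set, which is what the suffix relation requires.
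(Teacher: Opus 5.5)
Your proposal is correct and follows essentially the paper's proof. You compute the loop scope's local evaluation by \zcref{eq:local-ls-break}, reading it (as the paper's proof also silently does) as continuing with \rust{\{x = \rsTrue; $\mathit{se}$\}}; you get the suffix relation with prefix $\mkTr{\sigma_1}$; and you conclude by \zcref{lem:suffix-impl}.

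Your variable case, which the paper only calls analogous, is fine but relies on $\pv{x}\neq\mathit{se}$. You call $\pv{x}$ fresh, which holds when it is introduced by $\ruleName{loopScopeInvBox}$ but is not among the theorem's hypotheses. Under that assumption, the premiss program and $\pi~\rust{\{x = \rsTrue; $v$\}}~\omega$ generate identical traces from $\sigma_1$; that is the precise form of your \enquote{same continuation} claim.
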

\begin{proof}
  We assume
  \begin{equation}\label{eq:break-asm-1}
    \models\sequent{}{\applyUp{\updateContext}{\ptstub{\{x = \rsTrue; $\mathit{se}$\}}}}
  \end{equation}

  It suffices to show
  \begin{equation}\label{eq:break-goal-1}
    \validSOne\dia{\pi~\rust{\loopScope{x}{\{\kwBreak $\mathit{se}$; $\many{s}$ $e$\}}}~\omega}{\phi}.
  \end{equation}
  We infer from the assumption~\zcref[noname]{eq:break-asm-1} that
  \begin{equation}\label{eq:break-asm-2}
    \validSOne\ptstub{\{x = \rsTrue; $\mathit{se}$\}}
  \end{equation}

  We only consider the case where $\mathit{se}\in\Val$. The case where
  $\mathit{se}\in\Var$ is analogous.

  Let $\sigma_1$ be an \ac{LAGC} state with $s_1\dlLagcSim\sigma_1$ By
  \zcref{eq:local-ls-break}, we have
  \begin{equation}
    \valB{\sigma_1}{\rust{\loopScope{x}{\{\kwBreak $\mathit{se}$;
          $\many{s}$
          $e$\}}}}=\{\concatTr{\mkTr{\sigma_1}}{\cont{\rust{\{x = \rsTrue;
          $\mathit{se}$\}}}}\}
  \end{equation}
  Clearly, then,
  $\trSuff{\traces{\rust{\{x = \rsTrue;
        $\mathit{se}$\}},\sigma_1}}{\traces{\rust{\loopScope{x}{\{\kwBreak
          $\mathit{se}$; $\many{s}$ $e$\}}}, \sigma_1}}$. By
  \zcref{lem:suffix-impl}, this implies that the
  goal~\zcref[noname]{eq:break-goal-1} holds.
\end{proof}

\[
  \seqRule{continue}{
    \sequent{}{\applyUp{\updateContext}{\dlf{x = \rsFalse}{\phi}}}
  }{
    \sequent{}{\applyUp{\updateContext}{\dia{\pi~\rust{\loopScope{x}{\{\kwContinue $\many{s}$ $e$\}}}~\omega}{\phi}}}
  }
\]

\begin{theorem}[Soundness of Rule
  $\ruleName{continue}$]\label{thm:sound-continue}
  Let $\Gamma,\Delta\in 2^{\DLFor}$ be sets of formulas,
  $\updateContext\in\Updates$ be an update,
  $\pv{x}: \mbox{\lstinline|bool|}$ be a program variable,
  $\pi~\rust{\loopScope{x}{\kwContinue; $\many{s}$ $e$}}~\omega$ be a
  program fragment, $\many{s}$ a series of statements, $e$ an
  expression, and $\phi\in\DLFor$ a formula. If
  \begin{equation*}
    \models\sequent{}{\applyUp{\updateContext}{\dlf{x = \rsFalse}{\phi}}}
  \end{equation*}
  then the following holds:
  \[
    \models\sequent{}{\applyUp{\updateContext}{\dia{\pi~\rust{\loopScope{x}{\{\kwContinue $\many{s}$ $e$\}}}~\omega}{\phi}}}.
  \]
\end{theorem}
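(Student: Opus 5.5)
We follow the same schema as the soundness proofs for $\ruleName{breakValue}$ and $\ruleName{copy}$. Assume
\[\models\sequent{}{\applyUp{\updateContext}{\dlf{x = \rsFalse}{\phi}}}.\]
Fix an arbitrary Kripke structure $\Kstruct$, a state $s$ and a variable assignment $\beta$, and let $s_1=\valS(\updateContext)(s)$. It then suffices to show
\[\validSOne\dia{\pi~\rust{\loopScope{x}{\{\kwContinue $\many{s}$ $e$\}}}~\omega}{\phi}.\]
From the assumption we obtain $\validSOne\dlf{x = \rsFalse}{\phi}$. By the case for $\ruleName{copy}$ (\zcref{thm:sound-copy}, \zcref{case:copy-1}), this yields $\validSTwo\phi$ with $s_2=\valSOne(\pv{x}\upd\keyboolfalse)(s_1)$. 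Let $\sigma_1$ be an \ac{LAGC} state with $s_1\dlLagcSim\sigma_1$, and set $\sigma_2:=\update{\sigma_1}{\pv{x}}{\falseSem}$. Since the Boolean values correspond directly, $s_2\dlLagcSim\sigma_2$.

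Next we compute the local evaluation. The loop scope with a leading \lstinline|continue| is the active expression. By \zcref{eq:local-ls-continue},
\[\valB{\sigma_1}{\rust{\loopScope{x}{\{\kwContinue; $\many{s}$ $e$\}}}}=\{\concatTr{\consTr{\sigma_2}{\consTr{\continueEv}{\mkTr{\sigma_1}}}}{\cont{\rust{()}}}\}.\]
This is a singleton, and there are no symbolic variables, so only the empty concretization applies. Hence the only composition step, by \zcref{eq:rule-compLoopScope}, is
\[\mkTr{\sigma_1},\cont{\pi~\rust{\loopScope{x}{\{\kwContinue; $\many{s}$ $e$\}}}~\omega}\to\consTr{\sigma_2}{\consTr{\continueEv}{\mkTr{\sigma_1}}},\cont{\pi~\rust{()}~\omega}.\]
The modified composition rule forbids any step after a $\continueEv$. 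Therefore this configuration is maximal, and $\traces{\ldots,\sigma_1}$ consists of traces whose final state is $\sigma_2$ and whose last event is not $\panicEv$. By \zcref{lem:determ}, this holds uniformly up to renaming.

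The main obstacle is this conclusion step. \zcref{lem:suffix-impl} does not apply verbatim: its conclusion concerns a diamond over the remaining program, while here the continuation is stuck. We plan to argue directly from \zcref{def:rho} instead. Its first case is witnessed by $\sigma_2$ with $s_2\dlLagcSim\sigma_2$, a finite trace ending in $\sigma_2$, and last event $\continueEv\ne\panicEv$. This gives $\rho(p)(s_1)=s_2$ for the loop-scope fragment $p$. Combined with $\validSTwo\phi$, \zcref{def:dl-sem} yields the goal.

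This relies on reading a $\continueEv$-terminated trace as normal termination of the loop-scope fragment, which is exactly the intended semantics of loop scopes (one iteration, with $\pv{x}$ recording the non-exit). We would state this reading explicitly. If a stricter reading is preferred, one alternative is to compare with $\traces{\rust{x = \rsFalse},\sigma_1}$. That set ends in the same state $\sigma_2$, so both fragments induce the same transition $s_1\mapsto s_2$, and the argument of \zcref{lem:suffix-impl} (equal final states) goes through.
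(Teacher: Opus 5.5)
Your proposal is correct and follows the paper's proof. The paper likewise applies \zcref{eq:local-ls-continue} and uses the $\continueEv$ guard in \zcref{eq:rule-compLoopScope} to see that every trace ends in $\sigma_2=\update{\sigma_1}{\pv{x}}{\falseSem}$; it then observes that $\traces{\rust{x=\rsFalse},\sigma_1}$ ends in the same state and concludes directly from \zcref{def:rho} rather than through \zcref{lem:suffix-impl}, with your explicit remark that a $\continueEv$-terminated trace counts as normal termination (since \zcref{def:rho} excludes only $\panicEv$) spelling out a step the paper leaves implicit.
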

\begin{proof}
  We assume
  \begin{equation}\label{eq:continue-asm-1}
    \models\sequent{}{\applyUp{\updateContext}{\dlf{x = \rsFalse}{\phi}}}
  \end{equation}

  It suffices to show
  \begin{equation}\label{eq:continue-goal-1}
    \validSOne\dia{\pi~\rust{\loopScope{x}{\{\kwContinue $\many{s}$ $e$\}}}~\omega}{\phi}.
  \end{equation}
  We infer from the assumption~\zcref[noname]{eq:break-asm-1} that
  \begin{equation}\label{eq:continue-asm-2}
    \validSOne\dlf{x = \rsFalse}{\phi}
  \end{equation}

  Let $\sigma_1$ be an \ac{LAGC} state with $s_1\dlLagcSim\sigma_1$ By
  \zcref{eq:local-ls-continue}, we have
  \begin{equation}
    \valB{\sigma_1}{\rust{\loopScope{x}{\{\kwContinue
          $\many{s}$
          $e$\}}}}=\{\concatTr{\consTr{\update{\sigma_1}{\pv{x}}{\falseSem}}{\consTr{\continueEv}{\mkTr{\sigma_1}}}}{\cont{\rust{()}}}\}
  \end{equation}

  By \zcref{eq:rule-compLoopScope}, we know that after the continue
  event, the composition cannot continue. Hence, we infer that all
  traces in $\traces{\rust{\loopScope{x}{\{\kwContinue
        $\many{s}$ $e$\}}},\sigma_1}$ are of the form
  $\concatTr{\sh}{\consTr{\update{\sigma_1}{\pv{x}}{\falseSem}}{\consTr{\continueEv}{\mkTr{\sigma_1}}}}$
  for some shining trace~$\sh$. Let
  $\sigma_2:=\update{\sigma_1}{\pv{x}}{\falseSem}$. It is trivial to
  derive that
  $\traces{\rust{x=\rsFalse},\sigma_1}=\{\consTr{\sigma_2}{\mkTr{\sigma_1}}\}$. Since
  all traces in both sets end in state $\sigma_2$, \zcref{def:rho}
  implies the goal~\zcref[noname]{eq:continue-goal-1} holds.
\end{proof}


\end{document}